\let\hat\widehat
\let\tilde\widetilde
\newtheorem{theorem}{Theorem}
\newtheorem{theorem2}{Theorem}
\newtheorem{lemma}[theorem]{Lemma}
\newtheorem{proposition}[theorem]{Proposition}
\newtheorem{assumption}[theorem2]{Assumption}
\newenvironment{proof}{{\bf Proof.}}{$\Box$}
\newcommand{\E}{\mbox{$\mathbb{E}$}}
\newcommand{\R}{\mbox{$\mathbb{R}$}}
\newcommand{\blind}{1}
\begin{document}

	\def\spacingset#1{\renewcommand{\baselinestretch}%
		{#1}\small\normalsize} \spacingset{1}

	\if1\blind
	{
		\title{\bf A statistical framework for analyzing activity pattern from GPS data}
	\author{Haoyang Wu,
			Yen-Chi Chen\thanks{
				Contact: \url{yenchic@uw.edu}}, Adrian Dobra\thanks{
				The authors gratefully acknowledge the support of \textit{NSF DMS-2310578, DMS-2141808 and NIH U24-AG072122.}\hspace{.2cm}}\\
			%    Yen-Chi Chen\\
			%    Department of Statistics, University of Washington\\
			%       Adrian Dobra\\
			%   Department of Statistics, University of Washington
		}
		\maketitle
	} \fi
	
	\if0\blind
	{
		\bigskip
		\bigskip
		\bigskip
		\begin{center}
			{\LARGE\bf A statistical framework for analyzing activity pattern from GPS data}
		\end{center}
		\medskip
	} \fi
	
	\bigskip
	\begin{abstract}
		We introduce a novel statistical framework for analyzing the GPS data of a single individual. Our approach models daily GPS observations as noisy measurements of an underlying random trajectory, enabling the definition of meaningful concepts such as the average GPS density function. We propose estimators for this density function and establish their asymptotic properties.
		To study human activity patterns using GPS data, we develop a simple movement model based on mixture models for generating random trajectories. Building on this framework, we introduce several analytical tools to explore activity spaces and mobility patterns. We demonstrate the effectiveness of our approach through applications to both simulated and real-world GPS data, uncovering insightful mobility trends.
		
	\end{abstract}
	
	\noindent%
	{\it Keywords:}  GPS, nonparametric statistics, kernel density estimation, random trajectory, activity space, mobility pattern
	\vfill
	
	%% HY-final 1:
	%% My tentative keywords?: Global positioning systems (GPS), human mobility, kernel density estimation, spatial-temporal analysis
	
	\newpage
	\spacingset{1.9} % DON'T change the spacing!
	
	\section{Introduction}	\label{sec::intro}

	The rapid advancement of Global Positioning System (GPS) technology has revolutionized various domains, ranging from transportation and urban planning to environmental studies and public health \citep{papoutsis2021relaxing, yang2022identifying, mao2023temporal}. GPS data provides high-resolution spatial and temporal information, enabling precise tracking of movement and location patterns. With the increasing availability of GPS-equipped devices, such as smartphones, vehicles, and wearable deices, vast amounts of trajectory data are generated every day.
	%data are continuously being generated. This influx of geospatial data presents significant opportunities for scientific research, policy-making, and commercial applications.
	
	In scientific research, GPS data play a crucial role in understanding human mobility, tracing
	activity spaces, and analyzing travel behaviors \citep{barnett2020inferring, andrade2020identifying, elmasri2023predictive}. Researchers leverage GPS data to study commuting patterns, urban dynamics, wildlife tracking, and disaster response. By examining spatial-temporal GPS records, scientists gain insights into mobility patterns of individuals and the dynamics of population movements. Despite the rich information embedded in the GPS data, little is known about how to properly analyze the data from a statistical perspective. 
	
	%transportation efficiency, social interactions, and epidemiological spread. Despite the wealth of information embedded in GPS datasets, effective analysis remains a challenge due to the complexity, high dimensionality, and noise inherent in the data.
	
	While various methods exist for processing GPS data, a comprehensive statistical framework for analysis is still lacking. 
	Traditional approaches often rely on heuristic methods or machine learning techniques \citep{newson2009hidden, early2020smoothing, chabert2023auto, burzacchi2024generating}
	that may lack rigorous statistical foundations. 
	A major challenge in modeling GPS data is that records cannot be viewed as independently and identically distributed (IID)
	random variables. 
	Thus, a conventional statistical model does not work for GPS records.
	The absence of a reliable statistical framework hinders cross-disciplinary applications and reduces the potential for scientific applications.
	%This gap limits the reproducibility, interpretability, and reliability of findings derived from GPS data. 
	%The absence of a standardized analytical framework hinders cross-disciplinary applications and reduces the potential for robust statistical inference in mobility research.
	
	The objective of this paper is to introduce a general statistical framework for analyzing GPS data.
	We introduce the concept of random trajectories and model GPS records as a noisy measurement of points on a trajectory. 
	Specifically, we assume that  on each day, the individual has a true trajectory that is generated from an IID process.
	Within each day, the GPS data are  locations on the trajectory with additive measurement noises recorded at particular timestamps.
	The independence across different days allow us to use daily-average information to infer 
	the overall patterns of the individual.
	
	% statistical modeling, spatial-temporal methods, and probabilistic techniques, we propose a systematic approach to extract meaningful insights from GPS trajectories. Our framework aims to enhance the accuracy, interpretability, and applicability of GPS-based research across multiple disciplines. Through this work, we seek to bridge the methodological gap in GPS data analysis and provide a foundation for future statistical advancements in geospatial studies.

	\subsection{Related work}

	Our procedure for estimating activity utilizes a weighted kernel density estimator (KDE; \citealt{chacon2018multivariate, wasserman2006all, silverman2018density, scott2015multivariate}). The weights assigned to each observation account for the timestamp distribution while adjusting for the cyclic nature of time, establishing connections to directional statistics \citep{mardia2009directional, meilan2021nonparametric, pewsey2021recent, ley2017modern}.
	
	Our analysis of activity space builds on the concept of density level sets \citep{chacon2015population, tsybakov1997nonparametric, cadre2006kernel, chen2017density} and level sets indexed by probability \citep{polonik1997minimum, 10.1214/19-AOAS1311, chen2019generalized}. Additionally, the random trajectory model introduced in this paper is related to topological and functional data analysis \citep{chazal2021introduction, wang2016functional, wasserman2018topological}.
	
	Recent statistical research on GPS data primarily focuses on recovering latent trajectories \citep{early2020smoothing, duong2024using, huang2023reconstructing, burzacchi2024generating} or analyzing transportation patterns \citep{mao2023temporal, elmasri2023predictive}. In contrast, our work aims to develop a statistical framework for identifying an individual's activity space and anchor locations \citep{10.1214/19-AOAS1311}. Activity spaces are the spatial areas an individual visits during their activities of daily living. Anchor locations represent mobility hubs for an individual: they are locations the individual spends a significant amount of their time that serve as origin and destination for many of the routes they follow \citep{schonfelder-axhausen-2004}.
	
	%\cite{duong2024using} uses flow data (not GPS but the entire trajectory) and propose statistical method
	%
	%\cite{early2020smoothing} spline method to recover latent trajectory
	%
	%\cite{burzacchi2024generating} analyze GPS using functional data analysis

	%- stochastic/computational geometry: random trajectory model
	
	%
	%
	%- activity space paper
	%
	%\cite{barnett2020inferring} uses gps data + missing data tech to recover mobility pattern
	%
	%\cite{andrade2020identifying} uses gps data to find anchor points
	%
	%\cite{network_with_entity} uses density as a measure of human activity
	%
	%\cite{lou2009map-matching} studies the level of measurement errors
	%
	%\cite{specht_2021} studies the distribution of measurement errors
	%
	%\cite{newson2009hidden} uses HMM + GPS to find the road of the movement
	%
	%\cite{huang2023reconstructing} uses GPS to recover the trajectory
	%
	%\cite{chabert2023auto} use auto-encoder/decoders on gps -- fully ML approach
	%
	%- transportation
	%
	%\cite{papoutsis2021relaxing} uses gps to reduce carpool service cost
	%
	%\cite{yang2022identifying} uses gps data from truck to find point of interest and study transportation
	%
	%\cite{mao2023temporal} uses gps data to study traffic patterns
	%
	%\cite{elmasri2023predictive} uses GPS + road network to study transportation properties
	%
	%
	
	\subsection{Outline}
	
	This paper is organized as follows. 
	In Section \ref{sec::gps}, we formally introduce our statistical framework for modeling GPS data. 
	This framework describes the underlying statistical model on how GPS data are generated. 
	Section \ref{sec::estimation} describes estimators
	of the key parameters of interest introduced in Section \ref{sec::gps}. Section \ref{sec::asymp} presents the associated asymptotic theory. To study human activity patterns, we introduce a simple movement model in Section \ref{sec::smm}
	and present some statistical analysis based on this model. 
	We apply our methodology to a real GPS data in Section \ref{sec:real-data} and explain the key insights about human mobility gained from our analysis.
	All codes and data are available on \url{https://github.com/wuhy0902/Mobility_analysis}.
	
	\section{Statistical framework of  GPS data}	\label{sec::gps}
	
	\begin{figure}
		\center
		\includegraphics[width=1.5in]{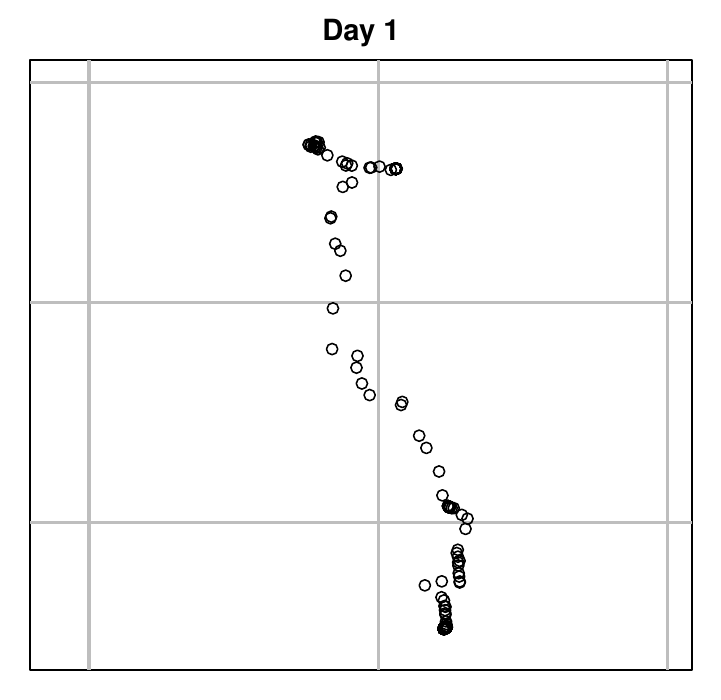}
		\includegraphics[width=1.5in]{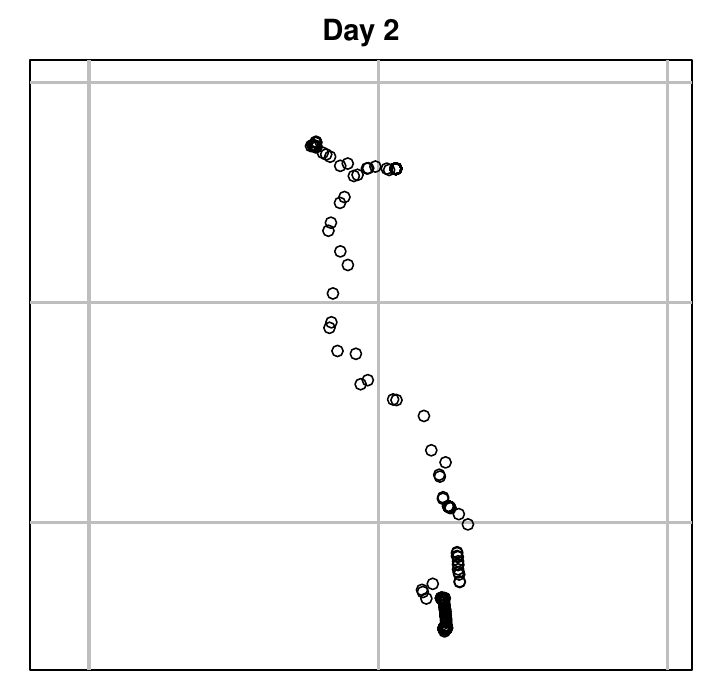}
		\includegraphics[width=1.5in]{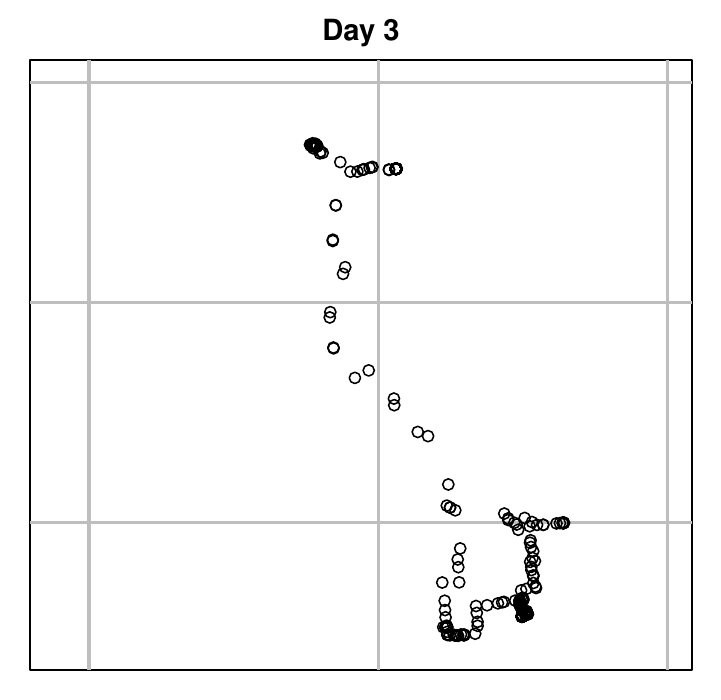}
		\includegraphics[width=1.5in]{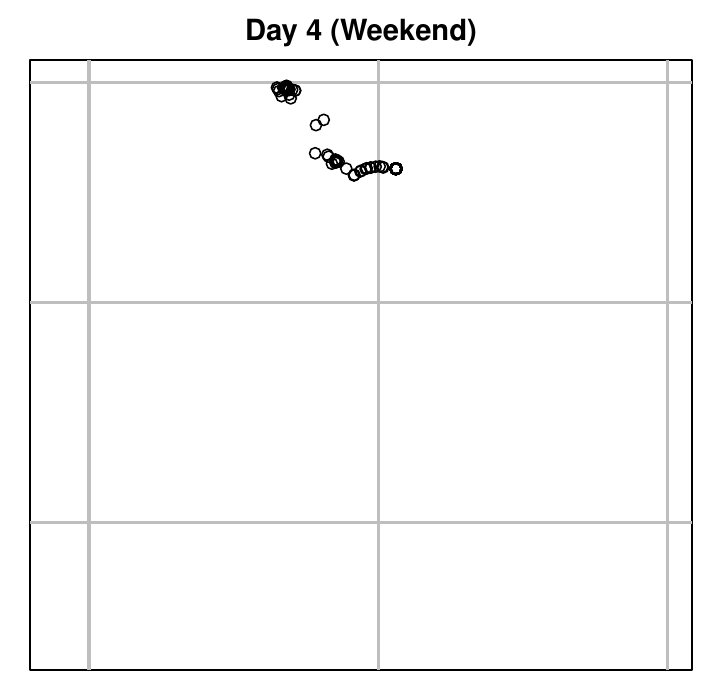}
		\caption{An example of real GPS data from an example individual (without timestamps).
			The four panels show the GPS records of this individual in four different days.
			The first three panels show weekdays whereas the last panel shows a weekend day. 
			This is part of the GPS data that 
			we analyze in Section~\ref{sec:real-data}.}
		\label{fig::example}
	\end{figure}
	
	We consider GPS data recorded from a single individual over several non-overlapping time periods of the same length. In what follows we will consider a time period to be one day, but our results extend to any other similar time periods that can be longer (e.g., weeks, months, weekdays, weekends) or shorter (e.g., daytime hours between 7:00am and 5:00pm).

	The GPS data can be represented as random variables
	$$
	(X_{i,j}, t_{i,j}) \in \mathbb{R}^2 \times [0,1],
	$$
	where $i=1,\ldots, n$ denotes the day in which the observation was recorded and 
	$j=1,\ldots, m_i$ denotes the sequence of observations for $i$-th day. Here $X_{i,j}$ stands for the spatial location of this GPS observation (latitude and longitude), and $t_{ij}$ represents the time when the observation was recorded.
	Note that  for each day $i$,
	$
	t_{i,1}<t_{i,2}<\ldots<t_{i,m_i}.
	$
	Here $m_i$ is the total number of observations in day $i$.
	We assume that the timestamps $t_{i,j}$ are non-random (similar to a fixed design). 
	Figure~\ref{fig::example} provides an
	example of real GPS data recorded from an individual in four different days (ignoring timestamps).

	%Our goal is to investigate the individual's activity pattern using the GPS data. 
	
	A challenge in modeling GPS data is that observations within the same day are \emph{dependent}
	since each observation can be viewed as a realization of the individual's trajectory 
	on discrete time point (with some measurement errors).  Thus, it is a non-trivial task to formulate an adequate statistical model that accurately captures the nature of GPS data.

	\subsection{Latent trajectory model for GPS records}
	
	To construct a statistical framework for GPS records, we consider a random trajectory model 
	as follows.
	For $i$-th day, 
	there exists a continuous mapping
	$
	S_i: [0,1]\rightarrow \mathbb{R}^2
	$
	such that $S_i(t)$ is the actual location of the individual on day $i$ and time $t$.
	We call $S_i$ the (latent) trajectory of $i$-th day.

	The observed data is 
	$
	X_{i,j} = S_i(t_{i,j}) + \epsilon_{i,j},
	$
	where $\epsilon_{i,j}$ is independent noises (independent across both $i$ and $j$)
	with a bounded PDF that are often assumed to be Gaussian.
	We also write $X_{R,i,j} = S_i(t_{i,j})$  to denote the actual location of the individual at time $t_{i,j}$.
	
	We assume that the latent trajectories 
	$
	S_1,\ldots, S_n\sim P_S,
	$
	where $P_S$ is a distribution of random trajectory.
	Namely, on each day, the individual follows
	a random trajectory independent of all other trajectories
	from $P_S$.
	In Section \ref{sec::smm}, we provide a realistic example
	on how to generate a random trajectory.

	Figure~\ref{fig::DGP}
	presents an example of the latent trajectory model. 
	The first column presents a map of important locations such as home and office,
	and the corresponding road networks inside this region. 
	The second column shows random trajectories of two days.
	The actual trajectory is shown in black lines. The red arrows indicate the direction of the trajectory. 
	Note that the individual might remain at the same location for an extended period of time as indicated as the solid black dots  in the second column. The third column presents the observed locations of the GPS records.
	We note that GPS records mostly follow the individual's trajectory but may occasionally deviate from it due to measurement errors.

	\begin{figure}
		\center
		\includegraphics[height=1.5in]{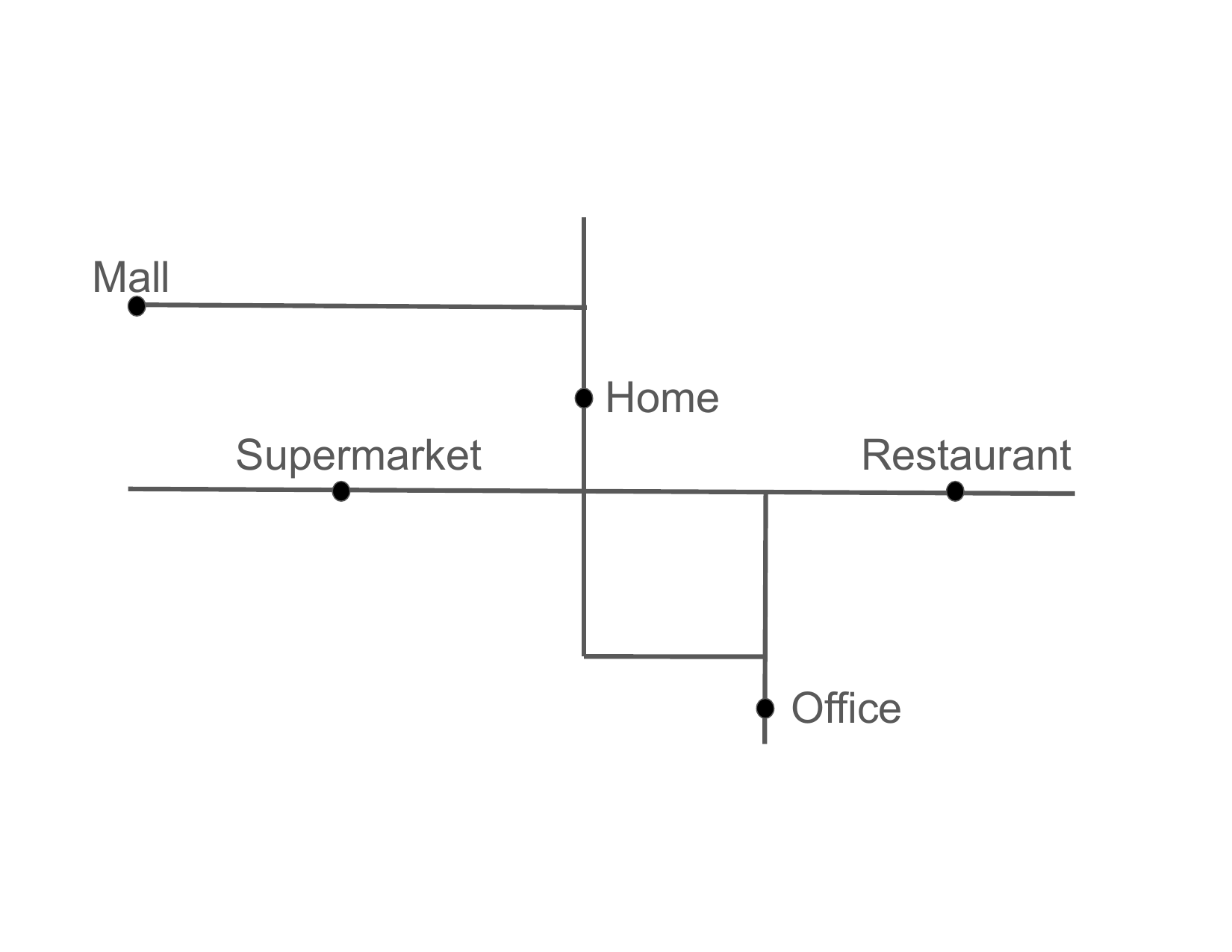}
		\includegraphics[height=1.5in]{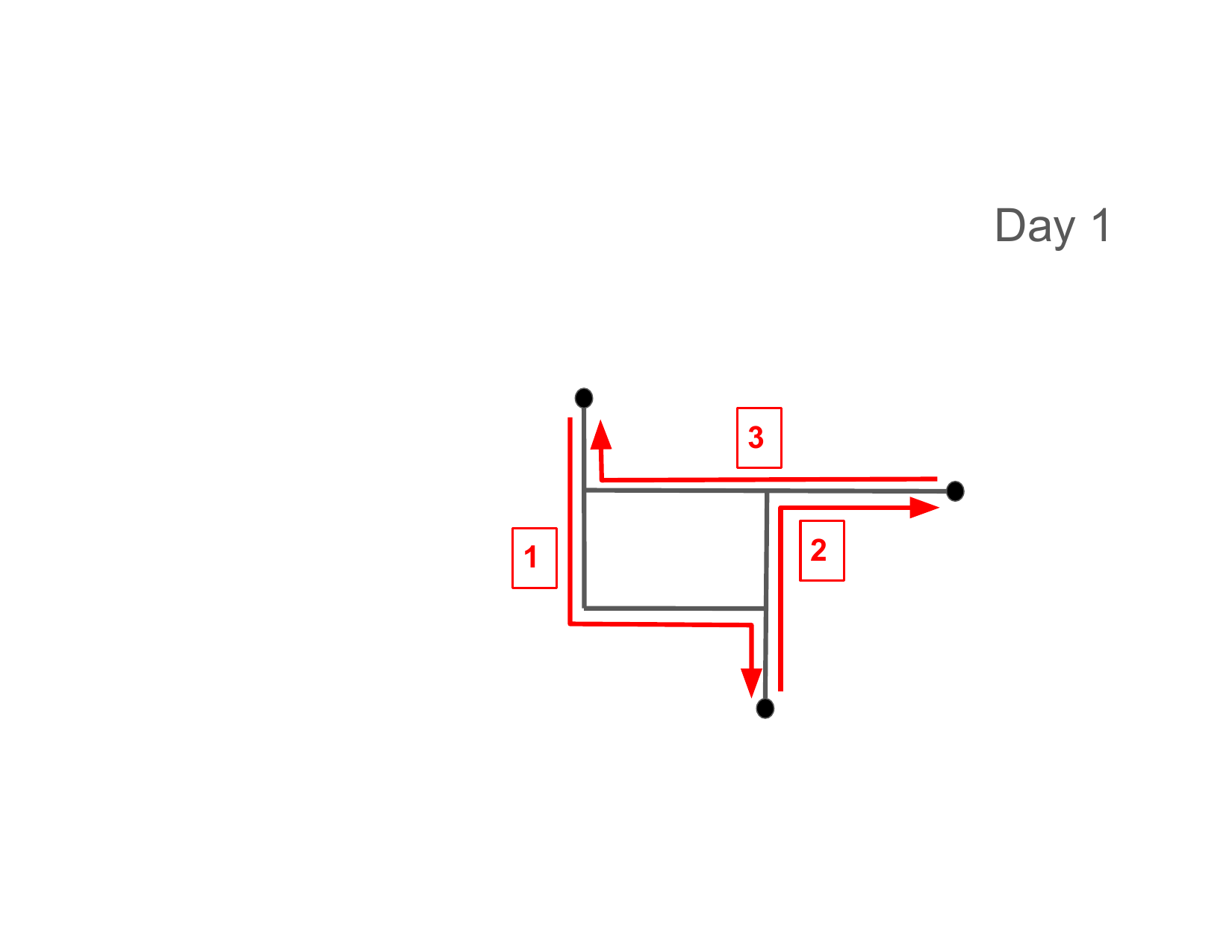}
		\includegraphics[height=1.5in]{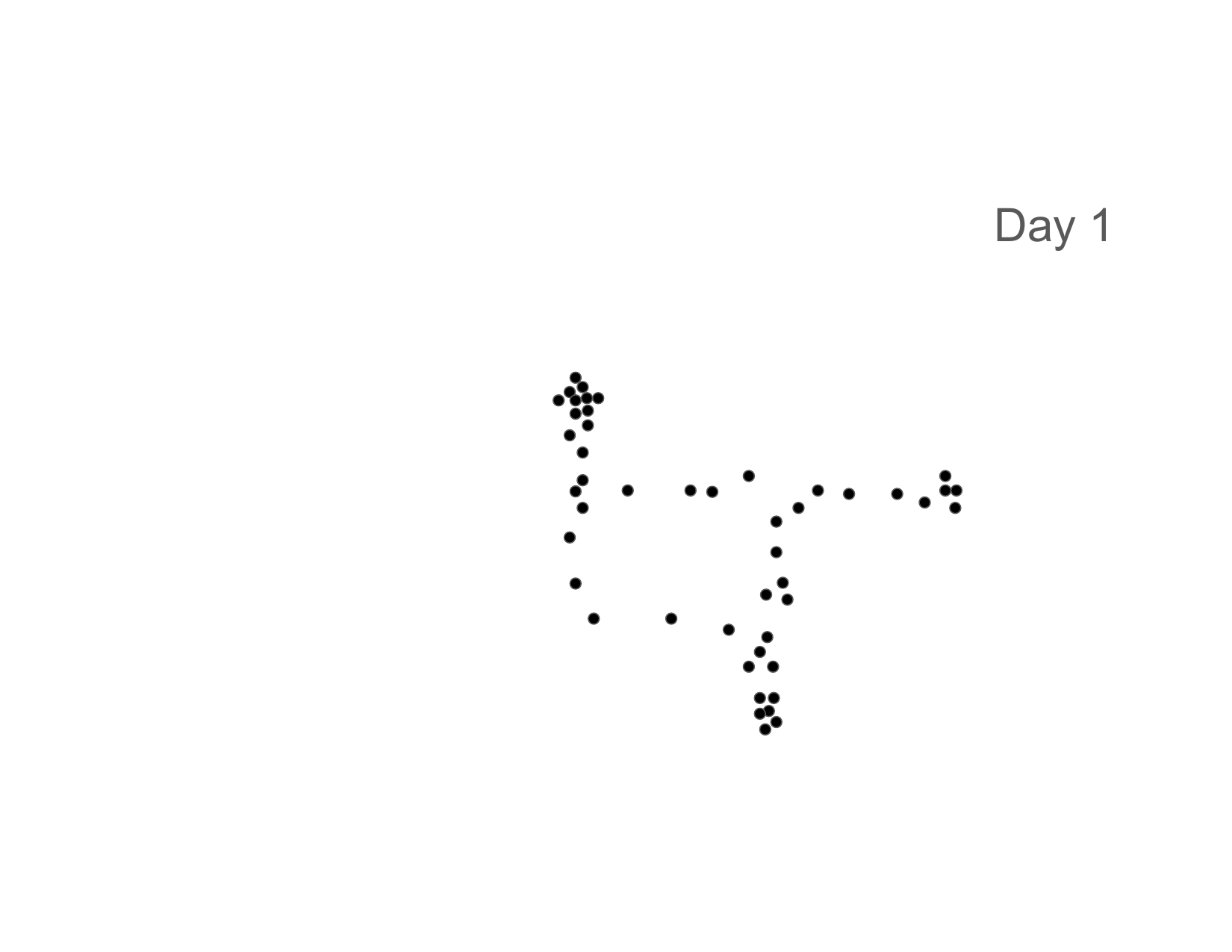}\\
		\includegraphics[height=1.5in]{figures/GPS_ex1_1.pdf}
		\includegraphics[height=1.5in]{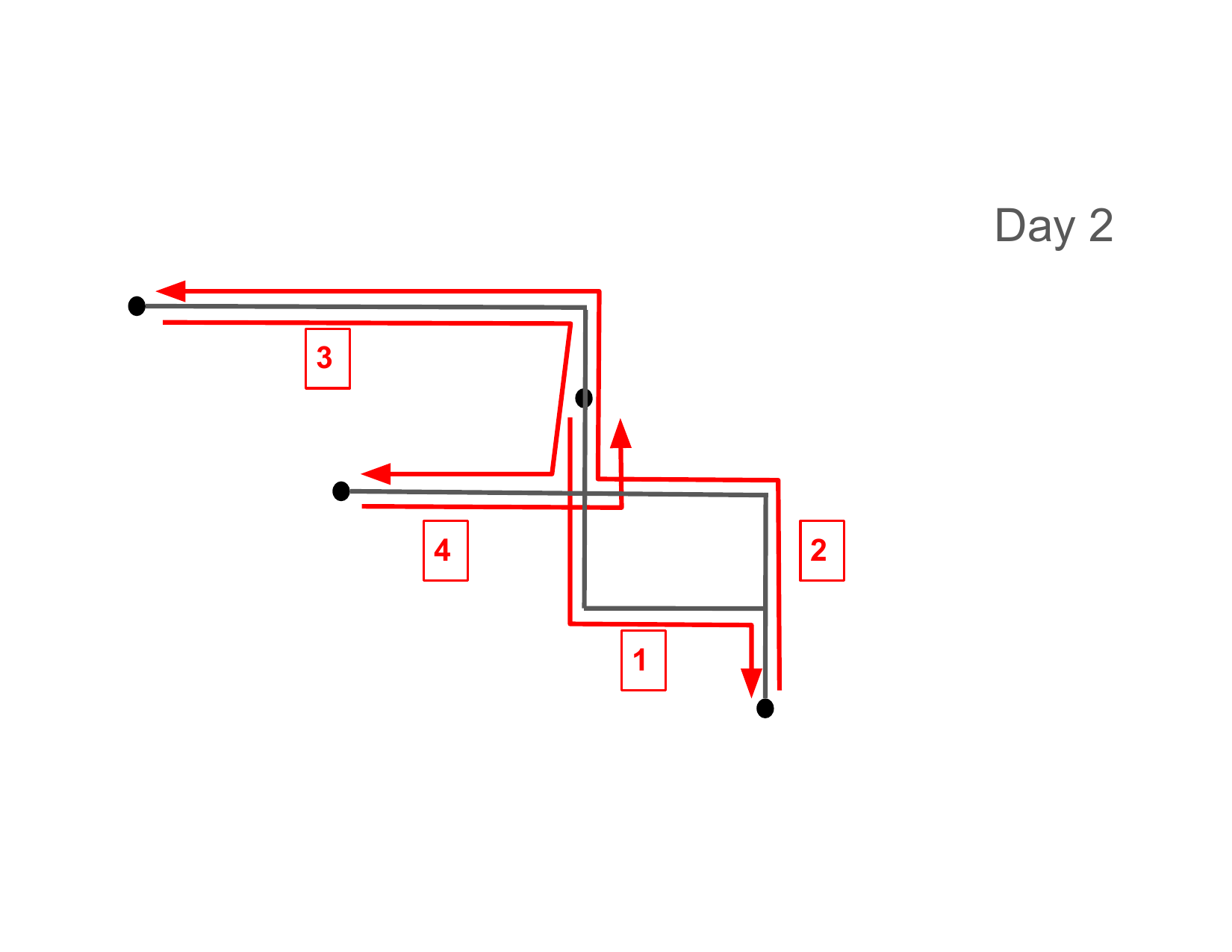}
		\includegraphics[height=1.5in]{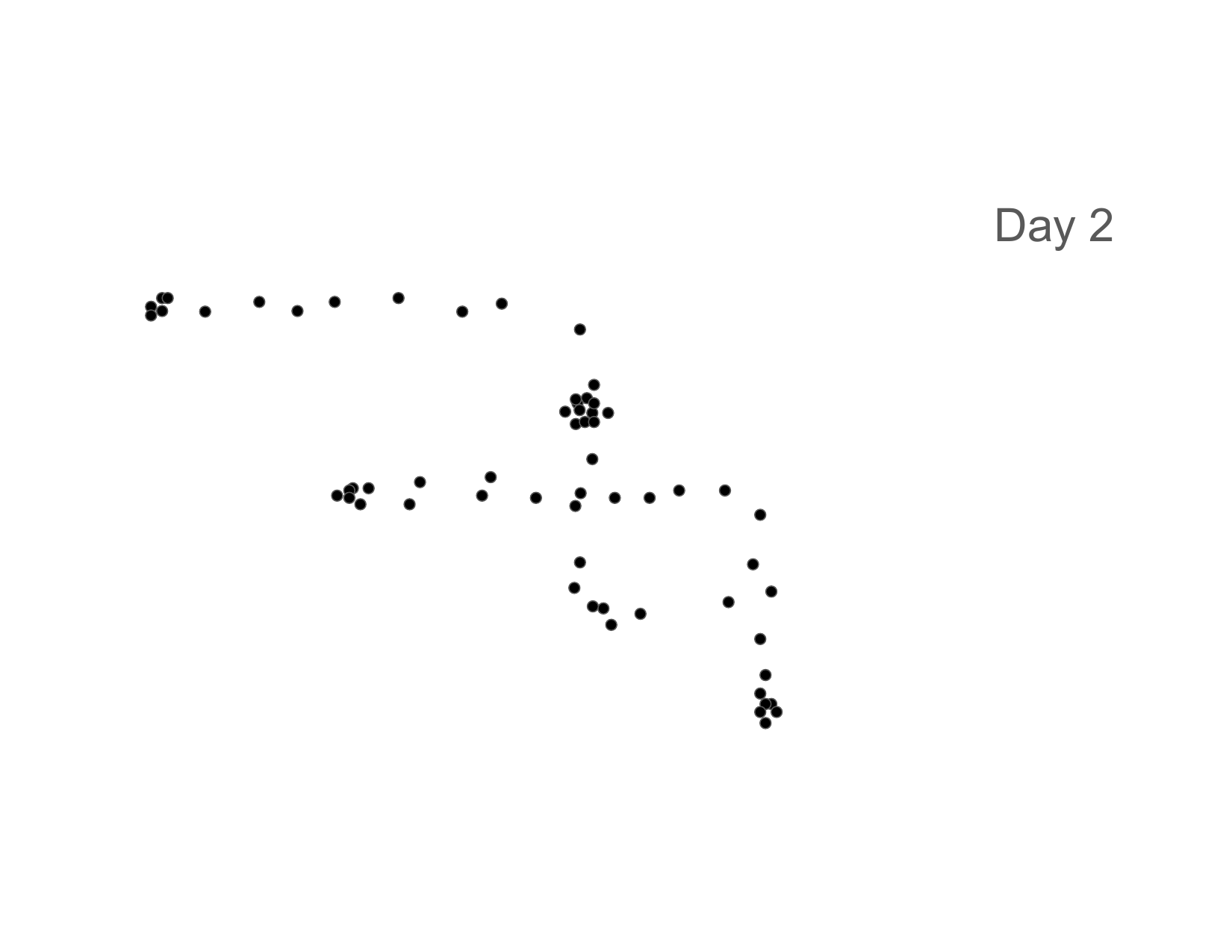}
		\caption{An example of data generating process for GPS records. The first column shows the map of some important locations and the roads visited by an individual during their activities of daily living. The second column shows two possible trajectories. Black lines represent the true trajectory followed by the individual. Red arrows display movement along these trajectory. The third  column shows the observed GPS data for each day.}
		\label{fig::DGP}
	\end{figure}
	
	\subsection{Activity patterns}
	
	The individual's activity pattern is
	characterized by the distribution of the random trajectory $S(t)$.
	Let 
	$A\subset \R^2$
	be a spatial region.
	The probability
	$
	P(S(t)\in A)
	$
	is the probability that the individual is in the region $A$ at time $T=t$.
	%Essentially, $P(S(t)\in A)$ informs us the chance of observing the individual within the region $A$
	%at time $T=t$. 
	The manner in which distribution of $S(t)$ changes over time provides us with information about the dynamics of the most likely locations visited by this individual. Let $U\sim {\sf Uniform}[0,1]$  be a random variable independent of $S(t)$.
	The probability
	$$
	P(S(U) \in A) =  \int_0^1 P(S(t)\in A)dt.
	$$
	can be interpreted the \emph{proportion} of time that the individual spends in the region $A$. 
	Thus, the distribution of $S(U)$ tells us the overall activity patterns of the individual.

	%Therefore, when there is no measurement errors, the GPS densities
	%we have defined are highly associated with the individual's activity patterns.

	\subsection{GPS densities}
	
	The distributions of $S(t)$ or $S(U)$
	are difficult to estimate
	from the GPS data
	because of the measurement error $\epsilon$.
	This is known as the deconvolution problem
	and is a notoriously difficult task.
	Thus, our idea is to 
	use the distribution of $X_t = S(t)+\epsilon$
	and $X^* = S(U)+\epsilon$
	as surrogates.

	%a non-trivial problem
	%because it is a distribution of `trajectories' $S(t)$, 
	%which are infinite dimensional objects.
	%Moreover, the measurement error $\epsilon_{i,j}$ 
	%makes the $P_S$ very hard to estimate
	%(this is known as the deconvolution problem).
	%Thus, our idea is to 
	%use a proper distribution of the GPS locations in $\R^2$
	%as a characterization of the overall activity pattern.

	Naively, one may consider the 
	the marginal PDF of $X_{i,j}$
	\begin{equation}
		f_X(x) = \lim_{r\rightarrow 0} \frac{P(X_{i,j} \in B(x,r))}{\pi r^2}
		\label{def:GPS density}
	\end{equation}
	as an estimate of the distribution of $X^*$.
	This geometric approach \citep{Introduction_geo} is a convenient way to define density in this case. 
	However, this PDF is NOT the PDF of $X^*$
	because
	it depends on the distribution of the timestamps $T_{i,j}$.
	Even if the individual's latent trajectory remains
	the same, changing the distribution of timestamps $\{T_{i,j}: j=1,\ldots, m_i\}$
	will change the distribution of $f_X$.

	%Note that here we use a geometric approach to define the PDF.
	
	{\bf Average GPS density.}
	To solve this problem, 
	we consider a time-uniform density of the observation.
	Let $U\sim [0,1]$ be a uniform random variable for the time interval $[0,1]$. Recall that
	$$
	X^* = S(U) + \epsilon,
	$$
	where $S\sim P_S$ and $\epsilon \sim P_\epsilon$ are independent random quantities.
	$X^*$ can be viewed as the random latent position $S(U)$ corrupted by the measurement error $\epsilon$.
	%a hypothetical random GPS record at a random day and a uniformly random time.
	The average behavior of $X^*$
	describes the expectation of a random trajectory (with measurement error)
	that is invariant to the timestamps distribution. 
	With this, we define
	the \emph{average} GPS density is the PDF of $X^*$:
	\begin{equation}
		f_{GPS}(x) =  \lim_{r\rightarrow 0} \frac{P(X^* \in B(x,r))}{\pi r^2}.
	\end{equation}
	The quantity $f_{GPS}$ describes the expectation of a random trajectory with consideration to the measurement error
	and is invariant to the distribution of timestamps.
	What influences $f_{GPS}$ is the distribution of the latent trajectory $P_S$
	and the measurement errors (which are generally small). 
	Thus, $f_{GPS}(x)$ is a good quantity to characterize the activity pattern based on GPS records.

	{\bf Interval-specific GPS density.}
	Sometimes we may be interested in the activity pattern that takes place within a specific time-interval, e.g., the morning or evening rush hours. Let $A\subset [0,1]$ be an interval of interest.
	The activity patterns within $A$ can be described by the following interval-specific GPS density
	\begin{equation}
		\begin{aligned}
			f_{GPS, A}(x) &= \lim_{r\rightarrow 0} \frac{P(X^*_A \in B(x,r))}{\pi r^2},\\
			X^*_A &= S(U_A)+\epsilon,\\
			U_A&\sim {\sf Uniform}(A),
		\end{aligned}
		\label{def:GPS density in area}
	\end{equation}
	where ${\sf Uniform}(A)$ is the uniform distribution over the interval $A\subset[0,1]$.

	{\bf Conditional GPS density.}
	When the interval $A$ is shrinking to a specific point $t$, 
	we obtain the \emph{conditional GPS density}:
	\begin{equation}
		\begin{aligned}
			f_{GPS}(x|t) &= \lim_{r\rightarrow 0} \frac{P(X^*_t\in B(x,r))}{\pi r^2},\\
			X^*_t &= S(t)+\epsilon.
		\end{aligned}
	\end{equation}
	The conditional GPS can be used  for predicting individual's location
	at time $t$ when we have no information about the individual's latent trajectory.
	The conditional GPS density and the previous two densities are linked via the following proposition.
	
	\begin{proposition}
		\label{prop::GPS}
		The above three GPS densities are linked by the following equalities:
		\begin{align*}
			f_{GPS}(x) = \int_0^1 f_{GPS}(x|t)dt,
			\qquad f_{GPS,A}(x) = \int_A f_{GPS}(x|t)dt.
		\end{align*}
	\end{proposition}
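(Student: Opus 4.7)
The plan is to prove both identities by conditioning on the uniform time variable, using Fubini to turn a probability into an integral of conditional probabilities, and then exchanging the limit in $r$ with the integral over $t$.

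First I would write $P(X^* \in B(x,r)) = P(S(U)+\epsilon \in B(x,r))$ and condition on $U$. Since $U \sim \mathsf{Uniform}[0,1]$ is independent of $(S,\epsilon)$, the tower property gives
\begin{equation*}
P(X^* \in B(x,r)) = \mathbb{E}\bigl[P(S(U)+\epsilon \in B(x,r) \mid U)\bigr] = \int_0^1 P(X^*_t \in B(x,r))\, dt.
\end{equation*}
Dividing by $\pi r^2$ yields
\begin{equation*}
\frac{P(X^* \in B(x,r))}{\pi r^2} = \int_0^1 \frac{P(X^*_t \in B(x,r))}{\pi r^2}\, dt,
\end{equation*}
and the pointwise limit of the integrand as $r \to 0$ is $f_{GPS}(x\mid t)$ by definition. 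The first identity then follows once the limit is passed under the integral. The second identity is identical after replacing $U$ by $U_A \sim \mathsf{Uniform}(A)$, whose density is $\mathbf{1}_A(t)/|A|$, and multiplying through by $|A|$ (or equivalently writing the expectation as $|A|^{-1}\int_A \cdot \, dt$ and absorbing the factor).

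The one nontrivial step is the interchange of limit and integral. I would justify it by dominated convergence: because the additive noise $\epsilon$ has a bounded density $p_\epsilon$, the law of $X^*_t = S(t) + \epsilon$ is absolutely continuous with a density that is uniformly bounded in $t$ by $\|p_\epsilon\|_\infty$ (convolution with a bounded density is bounded). Consequently $P(X^*_t \in B(x,r))/(\pi r^2) \le \|p_\epsilon\|_\infty$ for all $t$ and all small $r$, which is an integrable dominating function on $[0,1]$ (or on $A$). Dominated convergence then validates the exchange and completes both equalities.

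The main obstacle, and really the only place where care is needed, is this domination step; everything else is bookkeeping with the tower property and Fubini. If one did not want to invoke boundedness of $p_\epsilon$, an alternative would be to derive the identities at the level of densities directly by writing $f_{GPS}(x) = \int p_\epsilon(x - s)\, dP_{S(U)}(s)$ and $f_{GPS}(x\mid t) = \int p_\epsilon(x-s)\, dP_{S(t)}(s)$ and then applying Fubini to $\int_0^1 \int p_\epsilon(x-s)\, dP_{S(t)}(s)\, dt$, recognizing the inner mixture as the law of $S(U)$; this avoids the geometric-limit formulation entirely but relies on the same absolute continuity.
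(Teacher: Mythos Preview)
Your proposal is correct and follows essentially the same route as the paper: condition on the uniform time variable via the tower property, divide by $\pi r^2$, and pass the limit under the integral by dominated convergence. The only difference is that you supply an explicit dominating function (the bound $\|p_\epsilon\|_\infty$ coming from the bounded noise density), whereas the paper simply invokes dominated convergence without naming the dominant; your version is in that sense more complete.
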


	%{\bf Interpretation of $f_{GPS}(x)$ and $f_{GPS}(x|t)$ when there is no measurement error.}
	
	%This quantity describes the individual's activity pattern at the time $T=t$. 

	\subsection{Identification of activity patterns from GPS densities}

	Even if there are measurement errors, the GPS densities still provide
	useful bound on the distribution of $S(U)$.
	For simplicity, we assume that the measurement error $\epsilon\sim N(0, \sigma^2\mathbb{I}_2)$.
	Namely, the measurement error is an isotropic Gaussian with variance $\sigma^2$.
	Let $F_{\chi^2_2}(t)$ be the CDF of the $\chi^2_2$ distribution.
	
	\begin{proposition}[Anchor point identification]
		\label{prop::AP}
		Let $U\sim {\sf Uniform}[0,1]$ be a uniform random variable.
		Suppose $a\in \R^2$ is a location such that 
		$
		P(S(U)=a) \geq \rho_a.
		$
		Then 
		$
		f_{GPS}(a) \geq   \frac{\rho_a}{2\pi\sigma^2}
		$
		and 
		$
		\int_{B(a, r)} f_{GPS}(x)dx \geq \rho_a \cdot F_{\chi^2_2}\left(\frac{r^2}{\sigma^2}\right).
		$

	\end{proposition}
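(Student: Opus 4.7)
The plan is to decompose the distribution of $S(U)$ into its atomic mass at $a$ and the remainder, then propagate that decomposition through the convolution with the Gaussian noise. Concretely, let $\rho^* = P(S(U)=a) \geq \rho_a$ and write the law of $S(U)$ as $\rho^* \delta_a + (1-\rho^*) Q$ for some sub-probability measure $Q$ on $\mathbb{R}^2$ (the conditional law of $S(U)$ given $S(U)\neq a$). Since $X^* = S(U) + \epsilon$ with $\epsilon \sim N(0,\sigma^2 \mathbb{I}_2)$ independent of $S(U)$, the density of $X^*$ becomes
\begin{equation*}
f_{GPS}(x) \;=\; \rho^* \, \phi_\sigma(x - a) \;+\; (1-\rho^*)\, (\phi_\sigma * Q)(x),
\end{equation*}
where $\phi_\sigma$ denotes the isotropic Gaussian density with covariance $\sigma^2 \mathbb{I}_2$. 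The second term is nonnegative, so it can be discarded in all subsequent lower bounds.

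For the first inequality, I would evaluate the displayed identity at $x=a$ and use $\phi_\sigma(0) = 1/(2\pi\sigma^2)$, which together with $\rho^* \geq \rho_a$ immediately yields $f_{GPS}(a) \geq \rho_a/(2\pi\sigma^2)$. For the second inequality, I would integrate the identity over $B(a,r)$ and recognize
\begin{equation*}
\int_{B(a,r)} \phi_\sigma(x-a)\, dx \;=\; P(\|\epsilon\| \leq r) \;=\; P\!\left(\tfrac{\|\epsilon\|^2}{\sigma^2} \leq \tfrac{r^2}{\sigma^2}\right) \;=\; F_{\chi^2_2}\!\left(\tfrac{r^2}{\sigma^2}\right),
\end{equation*}
where the last equality uses the fact that $\|\epsilon\|^2/\sigma^2$ is a sum of two squared independent standard normals, hence $\chi^2_2$-distributed. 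Combined with $\rho^* \geq \rho_a$ and the nonnegativity of the remainder term, this gives the claimed bound.

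There is essentially no technical obstacle here; the proof is a one-line mixture decomposition followed by two standard facts about the 2D isotropic Gaussian. The only subtlety worth flagging is that the assumption $P(S(U)=a) \geq \rho_a$ requires $S(U)$ to place genuine point mass at $a$ (so that the delta-component in the decomposition is meaningful), which is natural at anchor locations where the individual dwells for nonzero time. Writing $\rho^*$ for the exact mass and carrying the inequality $\rho^* \geq \rho_a$ through to the final step makes the argument slightly cleaner than working with $\rho_a$ throughout.
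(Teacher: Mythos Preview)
Your proposal is correct and follows essentially the same approach as the paper: both isolate the point-mass contribution of $S(U)$ at $a$, use independence of $\epsilon$ from $S(U)$, and then invoke the value $\phi_\sigma(0)=1/(2\pi\sigma^2)$ for the first claim and $P(\|\epsilon\|\le r)=F_{\chi^2_2}(r^2/\sigma^2)$ for the second. The only cosmetic difference is that the paper phrases the argument via conditioning on the event $\{S(U)=a\}$ and the limit definition of $f_{GPS}$, whereas you write the same thing as an explicit mixture/convolution decomposition of the density.
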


	Proposition \ref{prop::AP} shows that if the location $a$ has a probability mass,
	the average GPS density $f_{GPS}$ will also be a non-trivial number. 
	A location with a high probability mass of $S(U)$ 
	means that on average, the individual has spent a substantial amount of 
	time staying there.
	This is generally a location of interest and
	is called an anchor location if the individual has spent a lot of time there on
	a daily basis.
	The home and work place are good examples of anchor locations.

	Proposition \ref{prop::AP} can be generalized to a region. 
	Let $C\oplus r = \{x\in\R^2: d(x,C) \leq r\}$
	be the $r$-neighboring region around the set $C$.
	Then we have the following result.

	\begin{proposition}[High activity region]
		\label{prop::HA}
		Let $U\sim {\sf Uniform}[0,1]$ be a uniform random variable.
		Suppose $C\subset \R^2$ is a regions where
		$
		P(S(U)\in C) \geq \rho_C.
		$
		Then 
		$
		\int_{C\oplus r} f_{GPS}(x)dx \geq \rho_C \cdot F_{\chi^2_2}\left(\frac{r^2}{\sigma^2}\right).
		$
		
	\end{proposition}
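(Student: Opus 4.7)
The plan is to translate the statement about an integral of $f_{GPS}$ into a statement about the probability that the random variable $X^* = S(U) + \epsilon$ lies in the enlarged set $C \oplus r$, then exploit the independence of $S(U)$ and $\epsilon$ together with the definition of the Minkowski enlargement.

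First I would observe that by the definition of $f_{GPS}$ as the PDF of $X^*$, we have
\begin{equation*}
\int_{C \oplus r} f_{GPS}(x)\,dx = P(X^* \in C \oplus r) = P(S(U) + \epsilon \in C \oplus r).
\end{equation*}
Next I would use the key geometric observation: since $C \oplus r = \{x \in \R^2 : d(x,C) \leq r\}$, whenever $S(U) \in C$ and $\|\epsilon\| \leq r$ we automatically have $S(U) + \epsilon \in C \oplus r$. This gives the event inclusion
\begin{equation*}
\{S(U) \in C\} \cap \{\|\epsilon\| \leq r\} \subset \{X^* \in C \oplus r\}.
\end{equation*}

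Then I would invoke the independence of $S$ and $\epsilon$ (and hence of $S(U)$ and $\epsilon$, since $U$ is independent of both) to factor
\begin{equation*}
P(S(U) \in C,\ \|\epsilon\| \leq r) = P(S(U) \in C)\cdot P(\|\epsilon\| \leq r) \geq \rho_C \cdot P(\|\epsilon\| \leq r).
\end{equation*}
Finally, using $\epsilon \sim N(0,\sigma^2 \mathbb{I}_2)$, we have $\|\epsilon\|^2/\sigma^2 \sim \chi^2_2$, so $P(\|\epsilon\| \leq r) = F_{\chi^2_2}(r^2/\sigma^2)$, yielding the claimed bound. Proposition \ref{prop::AP} is then recovered as the degenerate case $C = \{a\}$, for which $C \oplus r = B(a,r)$.

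There is not really a technical obstacle here: the argument is a one-line inclusion plus independence. The only thing to be careful about is to state explicitly that $U$, $S$, and $\epsilon$ are mutually independent so that the joint event factors, and to note that the bound is tight in the right direction because passing from the event $\{S(U)\in C,\|\epsilon\|\le r\}$ to $\{X^*\in C\oplus r\}$ can only enlarge probability. No regularity assumption on $C$ is needed beyond measurability, since $C \oplus r$ is automatically closed and measurable.
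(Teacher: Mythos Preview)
Your proof is correct and follows essentially the same approach as the paper: rewrite the integral as $P(S(U)+\epsilon\in C\oplus r)$, bound it below by $P(S(U)\in C,\ \epsilon\in B(0,r))$ via the set inclusion, and factor using independence together with the $\chi^2_2$ distribution of $\|\epsilon\|^2/\sigma^2$. The paper inserts one extra intermediate inequality ($P(S(U)\in C,\ S(U)+\epsilon\in C\oplus r)\ge P(S(U)\in C,\ S(U)+\epsilon\in B(S(U),r))$), but this is just a more explicit way of stating the same event inclusion you used.
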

	
	%% HY-final 2:
	%%  Every $F_{\chi^2_2}\left(\frac{r}{\sigma}\right)$ should be replaced by $F_{\chi^2_2}\left(\frac{r^2}{\sigma^2}\right)?$
	Proposition \ref{prop::HA} shows that if the random trajectories $P_S$ 
	had spent some time in the region $C$,
	the average GPS density will also be a non-small number around $C$.
	The CDF of a $\chi^2_2$ distribution is due to the measurement error being an
	isotropic two-dimensional Gaussian.
	If the measurement error has a different distribution,
	this quantity has to be modified.

	Finally, a region with a high average GPS probability will imply its neighborhood has 
	more activities.
	
	\begin{proposition}[GPS density to activity]
		\label{prop::GA}
		Let $U\sim {\sf Uniform}[0,1]$ be a uniform random variable.

		Suppose $C\subset \R^2$ is a regions where
		$
		\int_{C} f_{GPS}(x)dx  \geq G_C.
		$
		Then 
		$$
		P(S(U)\in C\oplus r) \geq \frac{G_C - (1- F_{\chi^2_2}(\frac{r^2}{\sigma^2}))}{F_{\chi^2_2}(\frac{r^2}{\sigma^2})} = 1- \frac{1-G_C}{F_{\chi^2_2}(\frac{r^2}{\sigma^2})}
		$$
		for any $r>0$.
		%where $F_{\chi^2_2}(t)$ is the CDF of the $\chi^2_2$ distribution
		%and $C\oplus r = \{x\in\R^2: d(x,C) \leq r\}$
		%is the $r$-neighboring region around the set $C$.
		
	\end{proposition}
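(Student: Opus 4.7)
The plan is to relate the event $\{X^*\in C\}$ to $\{S(U)\in C\oplus r\}$ via the measurement error $\epsilon$, and exploit independence of $S$, $U$, $\epsilon$ to get a multiplicative bound rather than the additive union bound. The key observation is the containment
\[
\{S(U)\notin C\oplus r\}\cap\{\|\epsilon\|\le r\}\ \subseteq\ \{X^*\notin C\},
\]
which is a one-line triangle-inequality argument: if $S(U)\notin C\oplus r$ then $d(S(U),C)>r$, so every point of $C$ is at distance greater than $r$ from $S(U)$, whereas $X^*=S(U)+\epsilon$ lies within distance $r$ of $S(U)$, forcing $X^*\notin C$.

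Next I would take probabilities on both sides of this containment. Since the latent trajectory $S$, the timestamp $U\sim {\sf Uniform}[0,1]$, and the noise $\epsilon$ are mutually independent, the two events on the left-hand side are independent, giving
\[
P(X^*\notin C)\ \ge\ P(S(U)\notin C\oplus r)\cdot P(\|\epsilon\|\le r).
\]
Because $\epsilon\sim N(0,\sigma^2 \mathbb{I}_2)$, we have $\|\epsilon\|^2/\sigma^2\sim \chi^2_2$, so $P(\|\epsilon\|\le r)=F_{\chi^2_2}(r^2/\sigma^2)$.

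Finally, the hypothesis $\int_C f_{GPS}(x)\,dx\ge G_C$ reads $P(X^*\in C)\ge G_C$, hence $P(X^*\notin C)\le 1-G_C$. Substituting into the previous display and solving for $P(S(U)\in C\oplus r)=1-P(S(U)\notin C\oplus r)$ gives exactly the stated bound $1-(1-G_C)/F_{\chi^2_2}(r^2/\sigma^2)$, and the equivalent form is algebra.

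I do not anticipate a real obstacle here: the containment step is the only nontrivial idea, and the advantage over the naive union-bound estimate $G_C-(1-F_{\chi^2_2}(r^2/\sigma^2))$ comes entirely from using independence multiplicatively on the complement event rather than additively on $\{X^*\in C\}$. The main thing to be careful about is that the containment is stated on complements (it is \emph{not} true that $\{S(U)\in C\oplus r\}\cap\{\|\epsilon\|\le r\}\subseteq\{X^*\in C\}$), so I would state it in the form above and then pass to probabilities directly without trying to bound $P(X^*\in C)$ from above on the event $\{S(U)\in C\oplus r\}$.
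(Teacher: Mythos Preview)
Your argument is correct and is essentially the same as the paper's proof. The paper decomposes $P(X^*\in C)$ according to $\{\|\epsilon\|\le r\}$ versus $\{\|\epsilon\|>r\}$ and then uses the implication ``$\|\epsilon\|\le r$ and $X^*\in C$ $\Rightarrow$ $S(U)\in C\oplus r$'' together with independence; your containment $\{S(U)\notin C\oplus r\}\cap\{\|\epsilon\|\le r\}\subseteq\{X^*\notin C\}$ is exactly the contrapositive of that implication, and taking probabilities on complements collapses the paper's two-step decomposition into one line, but the content is identical.
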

	The lower bound in Proposition \ref{prop::AP} is meaningful only if $F_{\chi^2_2}(\frac{r^2}{\sigma^2})\geq 1-G_C$. 
	This proposition shows that if the density $f_{GPS}(x)$ is high,
	we will expect some high probability around of $X(U)$ around $x$.

	Due to Propositions \ref{prop::AP}, \ref{prop::HA}, and \ref{prop::GA},
	the average density $f_{GPS}$ captures
	high density regions of the latent trajectory $S(t)$.
	As a result, inference on $f_{GPS}$
	provides us valuable information about the individual's actual activities.

	\section{Estimating the GPS densities}	\label{sec::estimation}
	
	In this section, we study the problem of estimating the GPS densities.
	Recall that our data are 
	$
	(X_{i,j}, t_{i,j}) \in \mathbb{R}^2 \times [0,1],
	$
	where $i=1,\ldots, n$ is the indicator of different days and
	$j=1,\ldots, m_i$ is the indicator of the time at each day. 
	Our method is based on the idea of kernel smoothing. 
	We will show that after properly reweighting the observations according to the timestamp,
	a 2D kernel density estimator (KDE)
	can provide a consistent estimator of the GPS densities. 
	
	Naively, one may drop the timestamps
	and apply the kernel density estimation, which leads to 
	\begin{equation}
		\hat f_{naive}(x) = \frac{1}{N_n h^2}\sum_{i=1}^n \sum_{j=1}^{m_i} K\left(\frac{X_{i,j}-x}{h}\right),
		\label{eq::naive}
	\end{equation}
	where $h>0$ is the smoothing bandwidth
	and $K(\cdot)$ is the kernel function such as a Gaussian
	and $N_n = \sum_{i=1}^n m_i$ is the total number of observations.
	One can easily see that $\hat f_{naive}(x)$
	is a consistent estimator of $f_X$ (marginal density of $X$), not $f_{GPS}$.
	So this estimator is in general inconsistent.

	\subsection{Time-weighted estimator}
	
	To estimate $f_{GPS}$, we consider a very simple estimator by weighting 
	each observation by its time difference. 
	For observation $X_{i,j}, t_{i,j}$, 
	we use the two mid-points to the before and after timestamps: $\frac{t_{i,j-1}+t_{i,j}}{2}$ and $\frac{t_{i,j}+t_{i,j+1}}{2}$. 
	The time difference between these two mid-points is 
	$
	W_{i,j} = \frac{t_{i,j+1} - t_{i,j-1}}{2},
	$
	which will be used as the weight of observation $X_{i,j}$. 
	
	For the first time $t_{i,1}$ and the last time $t_{i,m_i}$,
	we use the fact that the time at $t=0$ and $t=1$ are the same.
	So we set $t_{i,0} = t_{i, m_i}-1$ as if the last timestamp occurs before the first timestamp
	and $t_{i,m_i+1} = 1+t_{i,1}$ as if the first timestamp occurs after the last timestamp. 
	This also leads to the summation $\sum_{j=1}^{m_i} W_{i,j} = 1$. 
	%
	%
	%Note that we define $t_{i,0} = 0$ and $t_{i, m_i+1} = 1$
	%so that $W_{i,j}$ is well-defined for all $j$. 
	%Let 
	%$$
	%T_{i} = \sum_{j=1}^{n_i} W_{i,j} = 1-\frac{t_{i,1}+ t_{i,m_i}}{2}
	%$$
	%be the summation of weights at day $i$.
	We then define the time-weighted estimator as 
	\begin{equation}
		\begin{aligned}
			\hat f_{w}(x) &= \frac{1}{nh^2}\sum_{i=1}^n \sum_{j=1}^{m_i}  W_{i,j} \cdot K\left(\frac{X_{i,j}-x}{h}\right).
			%\hat f_{w}(x) &= \frac{1}{nh^2}\sum_{i=1}^n \frac{1}{T_i}\sum_{j=1}^{m_i}  W_{i,j} \cdot K\left(\frac{X_{i,j}-x}{h}\right).
			%&=\frac{1}{nh^2}\sum_{i=1}^n \sum_{j=1}^{m_i}  W^*_{i,j} \cdot K\left(\frac{X_{i,j}-x}{h}\right),
		\end{aligned}
		\label{eq::naive}
	\end{equation}
	This is essentially a weighted 2D KDE with the weight $W_{i,j}$
	representing the amount of time around $t_{i,j}$.
	While this estimator is just simply weighting each observation by the time,
	it is a consistent estimator of $f_{GPS}$ (Theorem~\ref{thm::TW}). 
	
	%where 
	%$$
	%W^*_{i,j} = W_{i,j}/T_i  = \frac{t_{i,j+1} - t_{i,j-1}}{2T_i}
	%$$
	%is the weight after adjusting for the time-interval coverage within each day.
	%While this estimator is just simply weighting each observation by the time,
	%it is a consistent estimator of $f_{GPS}$. 

	If we want to estimate the interval-specific GPS density,
	we can restrict observations only to the those within the time interval of interest
	and adjust the weights accordingly.

	\subsection{Conditional GPS density estimator}
	
	To estimate the conditional GPS density, we 
	can simply use the conditional kernel density estimator (KDE):
	\begin{equation}
		\hat f(x|t) = \frac{\hat f(x,t)}{\hat f(t)} = \frac{\sum_{i=1}^n\frac{1}{m_i} \sum_{j=1}^{m_i} K\left(\frac{X_{i,j}-x}{h_X}\right)K_T\left(\frac{d_T(t_{i,j},t)}{h_T}\right)}{
			h_X^2 \cdot \sum_{i'=1}^n \frac{1}{m_{i'}}\sum_{j'=1}^{m_{i'}} K_T\left(\frac{d_T(t_{i',j'},t)}{h_T}\right)},
		\label{eq::ckde}
	\end{equation}
	where $h_X>0$ and $h_T>0$ are the smoothing bandwidth and $K_T$ is a kernel function for the time
	and 
	$$
	d_{T}(t_1,t_2) = \min\left\{|t_1-t_2|, 1-|t_1-t_2|\right\}
	$$
	measures the time difference correcting for the fact that the time $t=0$ and $t=1$
	are the same time.
	%
	%
	%The idea of using $d_{T}(t_1,t_2)$
	%is related to the concept of warping for estimating a periodic function \citep{nguyen2023adaptive}. 
	The time interval $[0,1]$ is not just simply an interval but its two ends $t=0$ and $t=1$ are connected,
	making it a cyclic structure. 
	Thus, $d_T$ measures the distance in terms of time.
	Alternatively, one may map the time $[0,1]$ into a ``degree" in $[0, 2\pi]$ as if it is on a clock
	and use a directional kernel to preserve the cyclic structure of the time \citep{nguyen2023adaptive, mardia2009directional,meilan2021nonparametric}.

	While this estimator seem to be similar to the naive estimator $\hat f_{naive}$, 
	it turns out that this estimator is consistent for $f_{GPS}(x|t)$. 
	Moreover, the conditional KDE can be used to estimate the average GPS density 
	by integrating over $t$:
	\begin{equation}
		\begin{aligned}
			\hat f_c(x) = \int_0^1 \hat f(x|t) dt &= \frac{1}{nh^2}\sum_{i=1}^n \sum_{j=1}^{m_i}  \tilde{W}_{i,j} \cdot K\left(\frac{X_{i,j}-x}{h}\right),\\
			\tilde{W}_{i,j} &=n\cdot  \int_0^1\frac{K_T\left(\frac{d_T(t_{i,j},t)}{h_T}\right)}{\sum_{i'=1}^n\frac{1}{m_{i'}}\sum_{j'=1}^{m_{i'}}K_T\left(\frac{d_T(t_{i',j'},t)}{h_T}\right)}dt.
		\end{aligned}
		\label{eq::int_ckde}
	\end{equation}
	Similar to $\hat f_w$, the estimator $\hat f_c$ is also 
	a weighted 2D KDE.
	This integrated conditional KDE $\hat f_c$ is a consistent estimator of $f_{GPS}$.
	Note that we can convert $\hat f(x|t)$ to
	an interval-specific GPS density estimator by
	integrating over the interval of interest. 
	In Appendix \ref{sec::ckde::comp},
	we discuss some numerical techniques for quickly computing the estimator $\hat f_c$.

	{\bf Remark (daily average estimator).}
	Here is an alternative estimator for the conditional GPS by averaging
	daily conditional GPS density estimators.
	Let 
	$$
	\tilde f_i(x|t) =  \frac{ \sum_{j=1}^{m_i} K\left(\frac{X_{i,j}-x}{h_X}\right)K\left(\frac{d_T(t_{i,j},t)}{h_T}\right)}{
		h^2 \cdot \sum_{j'=1}^{m_{i'}} K\left(\frac{d_T(t_{i',j'},t)}{h_T}\right)}
	$$
	be a conditional KDE for day $i$.
	One can see that when $m_i\rightarrow\infty$ and $h_X,h_T$ are properly shrinking to $0$, 
	this estimator will be a consistent estimator of conditional GPS density at day $i$.
	We can then simply average over all days to get an estimator of $f_{GPS}(x|t):$
	$
	\tilde f(x|t) = \frac{1}{n}\sum_{i=1}^n \tilde f_i(x|t).
	$
	This estimator is also consistent 
	but we experiments show that its performance is not as good as $\hat f(x|t)$.

	\section{Asymptotic theory}	\label{sec::asymp}
	
	In this section, we study the asymptotic theory of these estimators. 
	Recall that our data consist of the collections
	$
	(X_{i,j}, t_{i,j})
	$
	for $i=1,\ldots, n$ and $j=1,\ldots, m_i$
	such that 
	$
	X_{i,j} = S_i(t_{i,j})+\epsilon_{i,j},
	$
	where $S_1,\ldots, S_n \sim F_S$
	are IID random trajectories and $\epsilon_{i,j}$
	are IID mean $0$ noises.
	We assume a fixed design scenario where the timestamps
	$\{t_{i,j}\}$
	are non-random. All relevant assumptions are given in Appendix \ref{sec::assum}.

	For simplicity, we assume that the number of observation of each day $m_i$
	is a fixed and non-random quantity.
	Since random trajectories are IID across different days, 
	the observations in day $i_1$ are
	independent to the observations in day $i_2$.

	{\bf Special case: even-spacing design.}
	When comparing our results to a conventional nonparametric estimation rate,
	we will consider an even-spacing design such that $m_i = m$ 
	and
	$
	t_{i,j} = \frac{2j-1}{2m}.
	$
	Namely, all timestamps are evenly spacing on the interval $[0,1]$. Under this scenario we will re-write our convergence rate in terms
	of $n,m,h$, making it easy to compare to a conventional rate.
	The analysis under the even-spacing design allows us to 
	investigate the effect of smoothing bandwidth in
	a clearer way.

	\subsection{Time-weighted estimator}
	
	We first derive the convergence rate of the time-weighted estimator $\hat f_w(x)$.

	\begin{theorem}
		\label{thm::TW}
		Under Assumptions \ref{ass:K-basic}, \ref{ass:XR-2-derivative}, and \ref{ass:measurement deri bounded}. For any $x\in \R^2$, we have
		\begin{align*}
			%    \hat f_{w}(x)-f_{GPS}(x) =& O(h^2)+  O\left(\frac{1}{nh^3}\sum_{k=1}^{n}\sum_{j=1}^{m_i}(t_{i,j+1}-t_{i,j})^2\right)+O\left(\frac{1}{n}\sum_{i=1}^{n}(1-T_i)\right)+ \\
			%    & O_P\left(\frac{1}{nh}
			%    \sqrt{\sum_{i=1}^{n}\sum_{j=1}^{m_i} W_{i,j}^2}\right)+O_P\left(n^{-\frac{1}{2}}\right)
			\hat f_{w}(x)-f_{GPS}(x) =& O(h^2)+  O\left(\frac{1}{nh^3}\sum_{i=1}^{n}\sum_{j=0}^{m_i}(t_{i,j+1}-t_{i,j})^2\right)+ O_P\left(\frac{1}{nh}
			\sqrt{\sum_{i=1}^{n}\sum_{j=1}^{m_i} W_{i,j}^2}\right)+O_P\left(n^{-\frac{1}{2}}\right)
		\end{align*} 
		as $h\rightarrow 0, \max_{i,j}W_{i,j}\rightarrow0$ and $n\rightarrow \infty$.
	\end{theorem}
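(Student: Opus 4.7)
The plan is to split the error $\hat f_w(x) - f_{GPS}(x)$ into four pieces by inserting two ``oracle'' intermediates: the conditional kernel mean and its time-integral against the random trajectory. Letting $\phi_\epsilon$ denote the density of $\epsilon$, set
$$
\bar f_i(x\mid t) \;=\; \int K(v)\,\phi_\epsilon(x+hv-S_i(t))\,dv, \qquad J_i(x) \;=\; \int_0^1 \bar f_i(x\mid t)\,dt,
$$
and decompose $\hat f_w(x) - f_{GPS}(x) = T_1+T_2+T_3+T_4$, where $T_1 = \hat f_w(x) - \tfrac{1}{n}\sum_{i,j} W_{i,j}\,\bar f_i(x\mid t_{i,j})$ is the $\epsilon$-noise fluctuation conditional on the trajectories, $T_2 = \tfrac{1}{n}\sum_i\{\sum_j W_{i,j}\bar f_i(x\mid t_{i,j}) - J_i(x)\}$ is the per-day Riemann-sum error, $T_3 = \tfrac{1}{n}\sum_i\{J_i(x) - E J_1(x)\}$ is the fluctuation across the IID days, and $T_4 = E J_1(x) - f_{GPS}(x)$ is the kernel-smoothing bias. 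The four pieces correspond one-to-one with the four summands in the conclusion.

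Next I would bound the four terms separately. Term $T_4$ reduces to a second-order Taylor expansion of $y\mapsto f_{GPS}(y\mid t)$ around $x$ together with the symmetry of $K$ (Assumption \ref{ass:K-basic}), giving $E J_1(x) = \int K(v)f_{GPS}(x+hv)\,dv = f_{GPS}(x) + O(h^2)$ via the bounded-Hessian hypotheses in Assumptions \ref{ass:XR-2-derivative}--\ref{ass:measurement deri bounded}. Term $T_3$ is $O_P(n^{-1/2})$ by Chebyshev applied to the IID bounded random variables $J_i(x)$. For $T_1$, conditioning on $(S_1,\ldots,S_n)$ makes the summands independent mean-zero $\epsilon$-driven increments; the direct estimate $E_\epsilon K^2((X_{i,j}-x)/h) = O(h^2)$ gives conditional variance of order $\sum_{i,j} W_{i,j}^2 / (n^2 h^2)$, so Chebyshev followed by the usual unconditional Markov bound delivers the advertised $O_P\bigl(\sqrt{\sum W_{i,j}^2}/(nh)\bigr)$ rate.

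The delicate piece is $T_2$. Under the cyclic convention the weights $W_{i,j}=(t_{i,j+1}-t_{i,j-1})/2$ turn $\sum_j W_{i,j}\bar f_i(x\mid t_{i,j})$ into the trapezoidal-rule approximation of $J_i(x)$ on the circle, whence the standard quadrature inequality gives
$$
\Bigl|\sum_{j=1}^{m_i} W_{i,j}\bar f_i(x\mid t_{i,j}) - J_i(x)\Bigr| \;\le\; C\,\sup_t|\partial_t\bar f_i(x\mid t)|\cdot\sum_{j=0}^{m_i}(t_{i,j+1}-t_{i,j})^2.
$$
To bound the supremum I would rewrite $\bar f_i(x\mid t) = (K_h\ast\phi_\epsilon)(x-S_i(t))$ with $K_h(\cdot)=h^{-2}K(\cdot/h)$, differentiate in $t$, push the gradient onto $K_h$, and invoke Young's inequality in the form $\|\nabla K_h\|_\infty\cdot\|\phi_\epsilon\|_{L^1} = O(h^{-3})$; combined with the uniform bound on $\|S_i'\|_\infty$ from Assumption \ref{ass:XR-2-derivative} this yields $\sup_t|\partial_t\bar f_i(x\mid t)|=O(h^{-3})$, and averaging over $i$ gives $T_2 = O\bigl(\tfrac{1}{nh^3}\sum_{i,j}(t_{i,j+1}-t_{i,j})^2\bigr)$. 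The triangle inequality then assembles the four bounds into the theorem.

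The main obstacle is $T_2$: a naive Taylor expansion of $f_{GPS}(x\mid t)$ in $t$ would appear to give only $O(\sum\Delta^2)$ and miss the $h^{-3}$ factor entirely. That factor only emerges if one keeps the gradient of the mollified kernel at scale $h$ rather than collapsing the convolution first, and one must pick the right Young pairing ($L^\infty$ on $K_h$ against $L^1$ on $\phi_\epsilon$); reversing the pairing would either lose the gain or impose unstated smoothness on $\phi_\epsilon$.
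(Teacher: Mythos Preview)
Your proposal is correct and follows essentially the same approach as the paper's proof. The paper uses a three-term split (smoothing bias, Riemann-sum bias via a lemma that applies the Lipschitz bound on $K$ directly inside the $\epsilon$-integral, and stochastic error handled by the law of total variance $\operatorname{Var}=\E[\operatorname{Var}(\cdot\mid S)]+\operatorname{Var}[\E(\cdot\mid S)]$), whereas you organize the same content into four terms by explicitly separating the $\epsilon$- and $S$-fluctuations and phrase the $O(h^{-3})$ Lipschitz bound for $T_2$ through Young's inequality on $K_h*\phi_\epsilon$; the resulting estimates are identical.
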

	%% HY-final 3: drop all T_i in main text and proves
	Theorem \ref{thm::TW} describes the convergence rate of the time-weighted estimator. 
	The first term $O(h^2)$ is the conventional smoothing bias.
	The second term $O\left(\frac{1}{nh^3}\sum_{i=1}^{n}\sum_{j=1}^{m_i}(t_{i,j+1}-t_{i,j})^2\right)$
	is a bias due to the timestamps differences.
	The third quantity 
	$O_p\left(\frac{1}{nh}\sqrt{\sum_{j=1}^{n}\sum_{j=1}^{m_i} W_{i,j}^2}\right)$
	is a stochastic errors due to the fluctuations in weights $W_{i,j} = \frac{t_{i,j+1} - t_{i,j-1}}{2}$.
	The last quantity $O_p\left(n^{-\frac{1}{2}}\right)$
	is the usual convergence rate due to averaging over $n$ independent day's data.

	While Theorem~\ref{thm::TW} shows a pointwise convergence rate,
	it also implies the rate for mean integrated square error (MISE): 
	\begin{equation}
		\begin{aligned}
			\int\E\left[\left(    \hat f_{w}(x)-f_{GPS}(x)\right)^2\right]dx& = O(h^4)+  O\left(\frac{1}{n^2h^6}\left(\sum_{i=1}^{n}\sum_{j=1}^{m_i}(t_{i,j+1}-t_{i,j})^2\right)^2\right)\\
			%&+O\left(\frac{1}{n^2}\left(\sum_{i=1}^{n}(1-T_i)\right)^2\right)+
			&+     O\left(\frac{1}{n^2h^2}{\sum_{i=1}^{n}\sum_{j=1}^{m_i} W_{i,j}^2}\right)+O\left(n^{-1}\right)
		\end{aligned}
	\end{equation}
	when the support of $X$ is a compact set.
	We show the connection between this convergence rate and the rate of usual nonparametric estimators
	in the even-spacing design scenario.

	\subsubsection{Even-spacing design and bandwidth selection} 
	To make a meaningful comparison to conventional rate
	of a KDE, we consider the even-spacing design with $m_i = m$. 
	In this case, 
	$$
	t_{i,j+1}-t_{i,j} = \frac{1}{m},\qquad W_{i,j} = \frac{1}{m}.
	$$
	Thus,
	\begin{equation}
		\begin{aligned}
			\hat f_{w}(x)-f_{GPS}(x) &= O(h^2) + O\left(\frac{1}{mh^3}\right)  + O_P\left(\sqrt{\frac{1}{nmh^2}}\right)+O_p\left(n^{-\frac{1}{2}}\right)\\
			&= O(h^2) + O\left(\frac{1}{mh^3}\right) + O_P\left(\sqrt{\frac{1}{nmh^2}}\right).
		\end{aligned}
		\label{eq::TW::rate}
	\end{equation}
	In the case of MISE, 
	equation \eqref{eq::TW::rate}
	becomes
	\begin{equation}
		\int\E\left[\left(    \hat f_{w}(x)-f_{GPS}(x)\right)^2\right]dx = 
		O(h^4) + O\left(\frac{1}{m^2h^6}\right) + O\left(\frac{1}{nmh^2}\right).
		\label{eq::TW::mise}
	\end{equation}

	This is a reasonable rate.
	The first quantity $O(h^4)$
	is the conventional smoothing bias in a KDE.
	The second quantity $O\left(\frac{1}{m^2h^6}\right)$
	is the bias due to the `resolution' of each day's observation. 
	The last term $O\left(\frac{1}{nmh^2}\right)$
	is the variance (stochastic error) of a 2D KDE when the effective sample size is $nm$.
	In our estimator, we are averaging over $n$ days and each day has $m$ observations,
	so the total number of random elements we are averaging is $n m$, which is the effective sample size here.

	Choosing the optimal bandwidth is a non-trivial problem
	because $h=h_{n,m}$ and there are three terms in equation \eqref{eq::TW::mise}.
	We denote
	\begin{align*}
		(B) = O\left(\frac{1}{m^2h^6}\right),\qquad
		(S) = O\left(\frac{1}{nmh^2}\right),
	\end{align*}
	so that the term (B) represents the temporal resolution bias
	and the term (S) represents the variance.

	\begin{proposition}
		\label{prop::TW::bw}
		The optimal smoothing bandwidth for equation \eqref{eq::TW::mise}
		is 
		$$
		h^*_{n,m} \asymp \begin{cases}
			m^{-1/5},\qquad &\mbox{if  } m^{1/5}=  o(n)\\
			(nm)^{-1/6},\qquad &\mbox{if  } n= o (m^{1/5})
		\end{cases}
		$$
		and the optimal rate under equation \eqref{eq::TW::rate} is:
		$$
		\int\E\left[\left(    \hat f_{w}(x)-f_{GPS}(x)\right)^2\right]dx= \begin{cases}
			O\left(m^{-4/5}\right) ,\qquad &\mbox{if  } m^{1/5}=  o(n)\\
			O\left((nm)^{-2/3}\right),\qquad &\mbox{if  } n= o (m^{1/5})
		\end{cases}
		$$
	\end{proposition}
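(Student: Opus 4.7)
The plan is to minimize over $h>0$ the MISE expression from equation \eqref{eq::TW::mise}, which up to absolute constants reads
$$
M(h) \;=\; h^4 \;+\; \frac{1}{m^2 h^6} \;+\; \frac{1}{nm h^2} \;=:\; A(h) + B(h) + S(h).
$$
Since $A$ is strictly increasing in $h$ while both $B$ and $S$ are strictly decreasing, any minimizer must balance $A$ against whichever of $B$ or $S$ is larger at the optimal scale. This yields two natural candidates: equating $A(h)=B(h)$ gives $h_1 \asymp m^{-1/5}$ with value $m^{-4/5}$, and equating $A(h)=S(h)$ gives $h_2 \asymp (nm)^{-1/6}$ with value $(nm)^{-2/3}$.

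The first step is to identify, for each candidate, the regime in which the remaining (unbalanced) third term is negligible. A direct comparison gives $S(h_1)=(nm^{3/5})^{-1}=o(m^{-4/5})$ precisely when $n\gg m^{1/5}$, and $B(h_2)=n/m=o((nm)^{-2/3})$ precisely when $n\ll m^{1/5}$. These two regimes partition the parameter space (up to the boundary $n\asymp m^{1/5}$), matching the two cases in the proposition statement and yielding the claimed rates.

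The second step is to verify that the valid candidate in each regime is actually optimal, i.e.\ no other bandwidth does asymptotically better. I would do this by directly comparing $M(h_1)$ and $M(h_2)$ inside each regime: in the case $n\gg m^{1/5}$, $B(h_2)=n/m\gg m^{-4/5}$, so $M(h_2)$ is strictly worse than $M(h_1)$; symmetrically, when $n\ll m^{1/5}$, $S(h_1)\gg (nm)^{-2/3}$, so $M(h_1)$ is strictly worse than $M(h_2)$. A cleaner alternative is to set $M'(h)=0$, which reduces to the algebraic identity $2h^{10} = 3m^{-2} + (nm)^{-1} h^4$; its asymptotic root is $h_1$ when $m^{-2}$ dominates the right-hand side and $h_2$ when $(nm)^{-1}h^4$ does, and the crossover is exactly $n\asymp m^{1/5}$.

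The main obstacle is merely bookkeeping the three pairwise rate comparisons and confirming that $n$ versus $m^{1/5}$ is the correct separating threshold; the underlying calculus is entirely elementary. Some extra care is needed at the boundary $n\asymp m^{1/5}$, where $A$, $B$, and $S$ are simultaneously comparable and the two stated rates collapse to the single common value $m^{-4/5}\asymp(nm)^{-2/3}$.
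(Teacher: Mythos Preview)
Your proposal is correct and follows essentially the same balancing argument as the paper: compare the resolution bias $B(h)$ and the stochastic error $S(h)$, balance the dominant one against the smoothing bias $h^4$, and identify the crossover at $n\asymp m^{1/5}$. Your treatment is in fact slightly more careful than the paper's, since you explicitly verify that the unbalanced third term is negligible in each regime and offer the derivative route as an alternative, but the substance is identical.
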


	Proposition \ref{prop::TW::bw}
	shows that the critical rate to balance between the errors 
	is $n\asymp m^{1/5}$. 
	When $m^{1/5}=o(n)$, 
	the temporal resolution bias (B) is larger than
	the stochastic error (S)
	and we obtain a convergence rate of $O_P\left(m^{-2/5}\right)$,
	which is dominated by  the resolution bias. 
	On the other hand, $n=o(m^{1/5})$
	is the case where (S) dominates (B),
	so the convergence rate is $O_P\left((nm)^{-1/3}\right)$,
	which is the stochastic error.

	\subsection{Conditional GPS density estimator}
	
	In this section, we study the convergence of the conditional GPS density estimator $\hat f(x|t)$.

	\begin{theorem}
		\label{thm::ckde}
		Under Assumptions \ref{ass:K-basic} to \ref{ass:K1,K2}, 
		for any $(x,t)\in \R^2\times [0,1]$, we have
		\begin{align*}
			\hat f(x|t) - f(x|t) = &O(h_X^2)+O\left(\sum_{i=1}^{n}\alpha_i(t)\sum_{j=1}^{m_i} \frac{K\left(\frac{d_T(t_{i,j},t)}{h_T}\right)}{\sum_{j'=1}^{m_i} K_T\left(\frac{d_T(t_{i,j'},t)}{h_T}\right)}d_T(t_{i,j},t)\right)\\
			&+O_p\left( \frac{1}{h_X}\sqrt{\sum_{i=1}^{n} \alpha_i(t)^2\sum_{j=1}^{m_i} \left[\frac{K\left(\frac{d_T(t_{i,j},t)}{h_T}\right)}{\sum_{j'=1}^{m_i} K_T\left(\frac{d_T(t_{i,j'},t)}{h_T}\right)}\right]^2}\right)
			+O_p\left(\sqrt{\sum_{i=1}^{n} \alpha_i(t)^2}\right),
		\end{align*}
		where $\alpha_i(t)$ is the kernel weight of day $i$ for the time $t$:
		$$
		\alpha_i(t) = \frac{\hat p_{T,i}(t)}{ \sum_{i'=1}^n \hat p_{T,i'}(t)} = \frac{\frac{1}{m_i}\sum_{j=1}^{m_i} K\left(\frac{d_T(t_{i,j},t)}{h_T}\right)}{
			\sum_{i'=1}^n \frac{1}{m_{i'}}\sum_{j'=1}^{m_{i'}} K\left(\frac{d_T(t_{i',j'},t)}{h_T}\right)}.
		$$
		
	\end{theorem}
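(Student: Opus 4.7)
Because the timestamps are taken to be non-random, both $\alpha_i(t)$ and the within-day weights $w_{ij}(t):=K_T(d_T(t_{i,j},t)/h_T)/\sum_{j'}K_T(d_T(t_{i,j'},t)/h_T)$ are deterministic and each sums to $1$. The plan is therefore to rewrite
$$\hat f(x|t)-f(x|t) = \sum_{i=1}^n \alpha_i(t)\sum_{j=1}^{m_i} w_{ij}(t)\left[h_X^{-2}K\!\left(\frac{X_{i,j}-x}{h_X}\right)-f(x|t)\right],$$
split each bracket into its expectation and a centered fluctuation, and analyze the resulting bias and stochastic pieces separately. The four error terms in the statement will correspond exactly to (spatial smoothing bias), (temporal smoothing bias), (within-day kernel-noise variance), (between-day trajectory variance).

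For the bias, the standard 2D kernel calculation under Assumption~\ref{ass:K-basic} gives $\E[h_X^{-2}K((X_{i,j}-x)/h_X)]=f(x|t_{i,j})+O(h_X^2)$ uniformly in $i,j$, where the $O(h_X^2)$ uses the smoothness of $f_{GPS}(x|\cdot)$ inherited from the assumptions on the latent trajectory and the noise density. Since the weights sum to $1$, this $O(h_X^2)$ piece aggregates into the first term of the theorem. What remains is $\sum_i\alpha_i(t)\sum_j w_{ij}(t)[f(x|t_{i,j})-f(x|t)]$; a first-order Taylor expansion in $t$ along the cyclic metric $d_T$ bounds it by a constant times $\sum_i\alpha_i(t)\sum_j w_{ij}(t)\,d_T(t_{i,j},t)$, which is precisely the second term in the statement.

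For the stochastic part I would condition on the latent trajectories $S_1,\dots,S_n$ and invoke the law of total variance. The $n$ days are independent, so variances across $i$ assemble with weights $\alpha_i(t)^2$. Inside day $i$, conditional on $S_i$ the variables $X_{i,j}$ are independent (only the noises $\epsilon_{i,j}$ remain random), so the conditional variance of the within-day sum is $\sum_j w_{ij}(t)^2$ multiplied by a standard KDE variance of order $h_X^{-2}$. Chebyshev then produces the term of magnitude $\frac{1}{h_X}\sqrt{\sum_i\alpha_i(t)^2\sum_j w_{ij}(t)^2}$. The remaining between-trajectory variance $\mathrm{Var}(\E[\,\cdot\mid S_i])$ involves an $O(1)$ (in $h_X$) functional of $S_i$ averaged across $j$, so it contributes at scale $\sqrt{\sum_i\alpha_i(t)^2}$, producing the final term.

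The main obstacle is the within-day dependence: the $X_{i,j}$ for a fixed $i$ are correlated because they share the latent curve $S_i$, so a direct variance calculation that treats them as independent would collapse the two stochastic terms into one with the wrong rate. Conditioning on $S_i$ and applying the law of total variance is the crucial device---it cleanly peels off the high-frequency kernel-noise fluctuation (which carries the diverging $h_X^{-2}$ factor and therefore must be controlled through the $w_{ij}(t)^2$ weights) from the low-frequency trajectory fluctuation (which is $O(1)$ in $h_X$ and is averaged only across the $n$ independent days). This separation is exactly what yields the two distinct stochastic rates in the statement.
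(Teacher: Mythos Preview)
Your proposal is correct and follows essentially the same route as the paper: rewrite $\hat f(x|t)=\sum_i\alpha_i(t)\hat f_i(x|t)$ with deterministic weights, obtain the $O(h_X^2)$ spatial bias via the standard kernel expansion, control the temporal bias through the Lipschitz bound $|f(x|t_{i,j})-f(x|t)|\le C\,d_T(t_{i,j},t)$ (the paper makes this explicit as $|f_\epsilon(x-s(t_{i,j}))-f_\epsilon(x-s(t))|\le L_\epsilon\bar v\,d_T(t_{i,j},t)$ conditional on the trajectory, then integrates), and split the variance by conditioning on the latent trajectories and applying the law of total variance to separate the $h_X^{-2}\sum_j w_{ij}^2$ noise contribution from the $O(1)$ trajectory contribution. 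The only cosmetic difference is that the paper packages the single-day bias and variance bounds into a preliminary lemma before aggregating over $i$.
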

	
	The corresponding MISE rate will be the square of the rate in Theorem \ref{thm::ckde}.
	Each term in Theorem~\ref{thm::ckde}
	has a correspondence to the term in Theorem~\ref{thm::TW}.
	The first term is the smoothing bias.
	The second term is the resolution bias due to the fact that observed $t_{i,j}$
	may be away from the time point of interest $t$. 
	A high level idea on how the linear term $d_T(t_{i,j}, t)$ occurs is that the difference $\|S_i(t_{i,j})-S_i(t)\| =  O(d_T(t_{i,j}, t))$
	when the velocity of the trajectory is bounded from the above.
	The third term is the stochastic error of the estimator. 
	The last term is the stochastic error due to different weights of each day;this quantity corresponds to 
	the term $O_P(n^{-1/2})$.

	\subsubsection{Even-spacing design and bandwidth selection} 
	
	We now investigate the scenario of even-spacing, which provides us insights
	into the convergence rate and bandwidth selection problem.
	When $m_i=m$ is the same and the timestamps are evenly spacing, 
	we immediate obtain
	%\begin{equation}
	$\alpha_i(t) = \frac{1}{n}$
	%\label{eq::ckde::1}
	%\end{equation}
	for every $i=1,\ldots, n$.
	
	Moreover, 
	the summation
	$
	\sum_{j'=1}^{m_{i'}} K\left(\frac{d_T(t_{i',j'},t)}{h_T}\right)
	$
	is associated with the KDE of the time $t$:
	$$
	\hat p_{T,i}(t) = \frac{1}{mh_T}\sum_{j'=1}^{m} K\left(\frac{d_T(t_{i',j'},t)}{h_T}\right) = p_{T,i}(t) +\Delta_{i, 1}(t) = 1 + \Delta_{i,1}(t),
	$$
	where 
	$
	\Delta_{i,1}(t) = O(h_T^2) + O\left(\sqrt{\frac{1}{mh_T}}\right)
	$ 
	is the approximation error. 
	%The density of $\hat p_{T, i}$ is asymptotic toward $1$ because we are working on an evenly spacing design over the interval $[0,1]$.
	This implies that 
	\begin{equation}
		\sum_{j'=1}^{m_{i'}} K\left(\frac{d_T(t_{i',j'},t)}{h_T}\right) = m_i\cdot  h_T ( 1 + \Delta_{i,1}(t)). 
		\label{eq::ckde::2}
	\end{equation}
	Similarly, 
	the summation in the numerator of the second term of equation \eqref{eq::ckde}
	\begin{equation}
		\begin{aligned}
			\sum_{j=1}^{m_i}K\left(\frac{d_T(t_{i,j},t)}{h_T}\right)d_T(t_{i,j},t)
			& =  m_i \cdot \frac{1}{m_i }\sum_{j=1}^{m_i}K\left(\frac{d_T(t_{i,j},t)}{h_T}\right)d_T(t_{i,j},t)\\
			& = m_i \cdot \E\left(K\left(\frac{U-t}{h_T}\right)|U-t|\right) +O\left(\sqrt{\frac{1}{mh_T}}\right)\\
			&= c_K \cdot m_i \cdot h^2_T+\Delta_{i,2}(t),
		\end{aligned}
		\label{eq::ckde::3}
	\end{equation}
	where $U$ is a uniform random variable over $[0,1]$
	and $c_K= \int |t| K(t)dt$ is a constant
	and 
	$
	\Delta_{i,2}(t)= O(h_T^2) + O\left(\sqrt{\frac{1}{mh_T}}\right)
	$
	is another approximation error.
	Note that in the second equality of equation \eqref{eq::ckde::3},
	we replace $d_T(t_{i,j},t)$ with $|t_{i,j}-t|$ because when $h\rightarrow0$,
	only those $d_T(t_{i,j},t)\approx0$ matters and in this case, $d_T(t_{i,j},t) = |t_{i,j}-t|$.

	Now, putting equations \eqref{eq::ckde::2} and \eqref{eq::ckde::3}
	into Theorem~\ref{thm::ckde}, we obtain 
	$$
	\hat f(x|t) - f(x|t) = O(h_X^2) + O(h_T) + O\left(\sqrt{\frac{1}{mh_T}}\right)+ O_P\left(\sqrt{\frac{1}{ nm h_X^2}}\right) + O_P\left(n^{-1/2}\right)
	$$
	and the corresponding MISE rate is 
	\begin{equation}
		\int\E\left[\left(    \hat f(x|t)-f_{GPS}(x|t)\right)^2\right]dx 
		= O(h_X^4) + O(h_T^2) + O\left({\frac{1}{mh_T}}\right)+ O\left({\frac{1}{ nm h_X^2}}\right) + O\left(n^{-1}\right).
		\label{eq::ckde::mise}
	\end{equation}
	The first quantity is the conventional smoothing bias.
	The second bias $O(h_T^2)$ is the resolution bias. 
	In contrast to the time-weighted estimator,
	this resolution bias is in a different form
	and is due to the fact that observations with a high weights
	will be within $O(h_T)$ neighborhood of $t$ in terms of time.
	In the worst case, the actual location on the trajectory can differ
	by the same order, leading to this rate. 
	The third quantity is the bias when we approximate a uniform integral with a dense even-spacing grid. 
	The fourth term is the conventional variance part of the estimator
	and  the last one is just the variance due to $n$ distinct days.

	%{\bf TODO: need more investigation on the difference in the resolution bias.}
	
	Equation \eqref{eq::ckde::mise} also 
	provides the convergence rate of the estimator $\hat f_c(x)$
	because  $\hat f_c(x) = \int_0^1 \hat f(x|t)dt$
	is a finite integral. 
	So the MISE convergence rate of $\hat f_c$
	is the same as in equation \eqref{eq::ckde::mise}.
	
	The bandwidth selection under equation \eqref{eq::ckde::mise}
	is straightforward.
	The optimal choice will be of the rate
	\begin{equation}
		\begin{aligned}
			h^*_T = m^{-1/3},\qquad
			h^*_X = (nm)^{-1/6},
		\end{aligned}
	\end{equation}
	which corresponds to the convergence rate 
	$$
	\int\E\left[\left(    \hat f(x|t)-f_{GPS}(x|t)\right)^2\right]dx  = O\left((nm)^{-2/3}\right)+O\left(m^{-2/3}\right).
	$$
	One can clearly see that the convergence rate is dominated by the second term $O\left(m^{-1/3}\right)$
	because of the resolution bias $O(h^2_T)$ and the approximation bias $O(\frac{1}{mh_T})$ at each day.
	Since both quantity are not stochastic errors, increasing the number of days $n$ will not affect them.

	Under the optimal smoothing bandwidth, 
	we also have 
	$$
	\int\E\left[\left(    \hat f_c(x)-f_{GPS}(x)\right)^2\right]dx  = O\left((nm)^{-2/3}\right)+O\left(m^{-2/3}\right) = O\left(m^{-2/3}\right).
	$$
	Comparing this to the time-weighted estimator in Proposition \ref{prop::TW::bw}, 
	the two rates are the same when the number of days $n$ is not much larger than the number of observation per day $m$,
	which is the typical case. 
	 When we have $m^{1/5} = o(n) $, 
	$\hat f_w(x)$ has a MISE of rate $O\left(m^{-4/5}\right)$ when $n$ is large,
	which is slightly faster than $\hat f_c(x)$.
	This is because the time-weighted estimator does not require an estimator on the conditional density,
	so there is no smoothing on the time, making the estimator more efficient.

	\section{Simple movement model}
	\label{sec::smm}
	
	All the above analysis can be applied to GPS data recorded from a human being but also from other living creatures whose spatial trajectories are studied in the literature (e.g., elk, whales, turtles, fish). However, human activity patterns are very different from rest because they are generally very regular \citep{hagerstrand-1963,hagerstrand-1970}.
	To tailor our analysis on humans, we study a special class of random trajectories that we call 
	\emph{simple movement model (SMM)}. In spite of its simplicity, it provides a model that generates a trajectory similar
	to an actual trajectory of a person. 
	%The latent trajectories are modeled as IID random elements from the distribution $P_S$.
	%Here we provide a model that generates a realistic random trajectory.
	Our simulation studies are all based on this model.
	
	Let
	$
	\mathcal{A} = \{a_1,\ldots, a_K\} \subset \R^2
	$
	be a collection of locations that an individual visits very often. 
	The set $\mathcal{A}$ includes the so-called anchor locations such as the home, office location
	as well as other places such as restaurants and grocery stores.
	Let $B$ be a random element that takes the form of 
	$$
	B = (A_1,R_1,A_2,R_2,\ldots, A_{k}, R_{k}, A_{k+1})
	$$
	such that $A_1,\ldots, A_{k+1} \in \mathcal{A}$
	are the location where the individual stay stationary. 
	The vector $B$ is called an action vector since it describes the actions being taken in a single day.
	The quantity $R_j$ is a road/path connecting $A_{j} $ to $A_{j+1}$,
	which describes the road that the individual takes to move from location
	$A_{j}$ to $A_{j+1}$.

	\begin{figure}
		\center
		\includegraphics[width=5in]{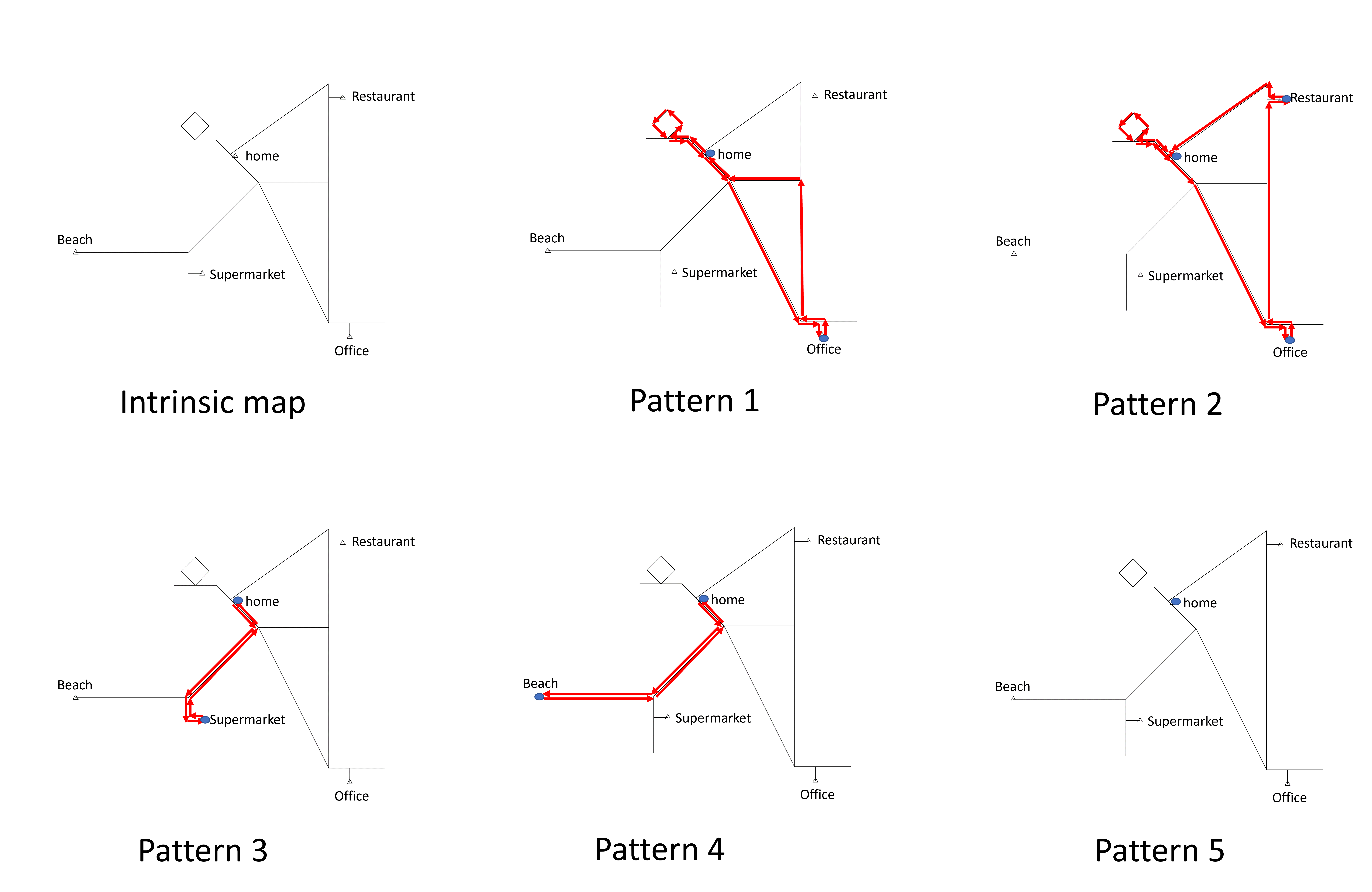}
		\caption{An example of the SMM. The top left panel shows
			the possible places that the individual visits regularly, as well as the road networks connecting these locations. The other five panels show the possible states that
			the action vector can be. More details of the actions
			can be found in Table \ref{Table:pattern_detail} in the appendix.}
		\label{fig::map_pattern}
	\end{figure}

	Figure~\ref{fig::map_pattern}
	provides an example of the SMM.
	In this case, there are five possible patterns that the action vector $B$ can be.
	The first pattern is the following action vector 
	$$
	B = b_1 = \{\texttt{home}, R_{1,1}, \texttt{office}, R_{1,2}, \texttt{home}, R_{1,3}, \texttt{home}\}. 
	$$
	The road $R_{1,1}$ is from home directly to the office and $R_{1,2}$
	is a road from office to home.
	The road $R_{1,3}$ is where the individual goes to the park in the top region
	and then goes home without staying.
	Pattern 3 is the simplest one as it involves only a trip to the local grocery store:
	$$
	B = b_3 = \{\texttt{home}, R_{31}, \texttt{supermarket}, R_{3,2}, \texttt{home}\}. 
	$$
	
	%The road $R_{1,2}$ is the path that go straight north first and then to back to home.
	%The second pattern corresponds to the action vector
	%$$
	%B = b_2 = \{\texttt{home}, R_{2,1}, \texttt{office}, R_{2,2}, \texttt{Restaurant},R_{2,3}, \texttt{home}\}. 
	%$$
	%The individual goes to the restaurant before heading home in this case. 
	%Pattern 3 and 4 are the case where the individual take a single trip to the supermarket or beach.
	%Pattern 5 is the scenario where the individual just stays at home all day.
	
	%
	%For instance, suppose an individual has $\mathcal{A}=\{\texttt{home}, \texttt{office}, \texttt{TacoBell}\}$.
	%A possible realization of $B$ is 
	%$$
	%(\texttt{home}, R_1, \texttt{office}, R_2, \texttt{TacoBell}, R_3, \texttt{office}, R_4, \texttt{home}),
	%$$
	%which corresponds to the case where the individual leave home and go to work and have a lunch at Taco Bell.
	%Another possible realization is 
	%$$
	%(\texttt{home}, R_1, \texttt{office}, R_2, \texttt{home})
	%$$
	%which corresponds to the case that the individual has lunch in the office.
	%Because each element inside $B$ describes the actions of the individual in a single day,
	%we call it an \emph{action vector}.
	%The possible realizations of an action vector is called patterns. 
	%Figure~\ref{Table: pattern detail} provides an example 
	%of the SMM that we use for simulating a GPS data.

	The randomness of the action vector $B$ reflects that the individual
	may follow different activity patterns in a day.
	Also, even if all the locations $A_j$ are the same, the individual may take a different road to move between them. 
	In this construction, we would assume that all possible outcomes of $B$
	is a finite set, meaning that the the distribution of $B$ 
	can be described by a probability mass function.

	The action vector $B$ only describes the actions (stationary at a location or moving from one to another) of the individual.
	It does not include any information about the time. 
	Thus, we introduce another random variable 
	$Z$ conditioned on $B$ such that 
	$Z = (Z_1,\ldots, Z_{2k+1})$ is a random partition of the interval $[0,1]$
	such that for $j=1,\ldots, k+1$, $Z_{2j-1}$ is the amount of time that the individual spend on $A_j$
	while $Z_{2j}$ is the amount of time that the individual is on the road $R_j$.
	One can think of $Z$ as a (2k+1)-simplex. 
	When the individual is on the road, we assume that the individual moves between locations with constant velocity.  
	One may use the Dirichlet distribution or truncated Gaussian distribution to generate $Z$. %HY-final 4: Should we mention we use truncated Gaussian distribution in simulation?
	
	To illustrate the idea, we consider again the example in Figure~\ref{fig::map_pattern}. 
	When $B=b_1$, there are a total of $7$ elements in the action vector. 
	So $Z|B=b_1$ is a 7-simplex. 
	Suppose $Z = (\frac{9}{24}, \frac{1}{24}, \frac{8}{24}, \frac{1}{24}, \frac{2}{24}, \frac{1}{24}, \frac{2}{24})$.
	This means that the individual leaves home at 9 am and spends 1 hour on the road to office.
	The individual then spends 8 hours in the office and drives home for 1 hour.
	After a 2-hour stay at home, the individual goes to the park and takes a round trip back home for 1-hour
	and stays at home for the rest of the day.
	Note that the vector $Z|B=b_3$ will be a 5-simplex since pattern 3 only has 5 possible actions.

	Once we have randomly generate $B,Z$, we can easily obtain the corresponding random trajectory $S(t)$.
	Since this model consider a simplified movement of an individual, we 
	call it \emph{simple movement model}.
	Although it is a simplified model, it does capture the primary activity of an individual
	that can be captured by the GPS data. 
	Under SMM, the distribution $P_S$ can be characterized by 
	a PMF of $B$ and the PDF of $Z|B$. 
	Since the action vector $B$ only have finite number of possible outcome,
	the implied trajectories can be viewed as generating from
	a mixture model where the action vector $B$
	acts as the indicator of the component.

	\subsection{Recovering anchor locations}
	\label{sec::AL}
	
	In the SMM, 
	we may define the
	the anchor locations as 
	$\mathcal{A}_\lambda = \{a\in\mathcal{A}: P(S(U) = a)>\lambda\}$
	for some given threshold $\lambda$.
	The threshold $\lambda$ is the proportion of the time that the individual spend at the location $a$ in a day. 
	For instance, the threshold $\lambda = \frac{8}{24}$ should give us the home location
	as we expect people to spend more than 8 hours in the home in a day.
	$\lambda = \frac{4}{24}$ may correspond to both the home and the work place.

	Suppose that the measurement noise is isotropic Gaussian with variance $\sigma^2$ that is known to us, 
	Proposition \ref{prop::AP} show that 
	for any $a$ with 
	$P(S(U)=a) \geq \rho_a,$
	we have
	$
	f_{GPS}(a) \geq   \frac{\rho_a}{2\pi\sigma^2}.
	$
	This implies that any anchor location $a \in \mathcal{A}_\lambda$
	will satisfy 
	$
	f_{GPS}(a) \geq   \frac{\lambda}{2\pi\sigma^2}.
	$
	
	Let $\hat f(x)$ be an estimated GPS density, either from the time-weighted approach
	or the conditional GPS method. 
	We may use the estimated upper level set 
	\begin{equation}
		\hat{\mathcal{L}}_\lambda = \left\{x: \hat f(x)\geq \frac{\lambda}{2\pi\sigma^2} \right\}
		\label{eq::level}
	\end{equation}
	as an estimator for possible locations of $\mathcal{A}_\lambda$.
	As a result,
	the level sets of $\hat f(x)$
	provide useful information on the anchor locations.

	In addition to the level sets, the density local modes
	of $\hat f(x)$
	can be used as an estimator of the anchor locations. 
	Under SMM, anchor locations are where there are probability mass
	of $S(U)$. 
	On a regular road $R_i$, there is no probability mass (only a 1D probability density
	along the road).
	This property implies that the GPS density $f_{GPS}(x)$
	will have  local modes around each anchor point (when the measurement noise is not very large).
	Thus, the local modes within  $\hat{\mathcal{L}}_\lambda$
	can be used to recover the anchor location:
	$$
	\hat{\mathcal{M}}_\lambda =\{x\in \hat{\mathcal{L}}_\lambda: \nabla \hat f(x) =0, \,\, \lambda_1(\nabla\nabla \hat f(x)) <0\},
	$$
	where $\lambda_1(M)$ is the largest eigenvalue of the matrix $M$ and $\nabla\nabla f(x)$ is the Hessian matrix of $f(x)$.

	\subsection{Detecting activity spaces}
	\label{sec::AS}
	
	The activity space is a region where the individual spends most of her/his time
	in a regular day.
	This region is unique to every individual
	and can provide key information such as how the built-environment influences the individual \citep{sherman-et-2005}. We show how GPS data can be used to determine an individual's acvity space.

	Suppose we know the true GPS density $f_{GPS(x)}$, 
	a simple approach to quantify the activity space is via
	the level set indexed by the probability \citep{polonik1997minimum},
	also known as the density ranking \citep{10.1214/19-AOAS1311,chen2019generalized}:
	\begin{equation}
		\begin{aligned}
			Q_\rho &= \min_{\lambda}\left\{ \mathcal{L}^*_\lambda: \int_{x\in \mathcal{L}^*_\lambda}f_{GPS}(x)dx \geq \rho \right\}\\
			\mathcal{L}^*_\lambda& = \left\{x:  f_{GPS}(x)\geq\lambda \right\}.
		\end{aligned}
		\label{eq::Q1}
	\end{equation}
	Because the level set $ \mathcal{L}^*_\lambda$ are nested when changing $\lambda$,
	$Q_\rho$ can be recovered by varying $\lambda$ until we have an exactly $\rho$ amount of
	probability within the level set.
	
	The set $Q_\rho$ has a nice interpretation: it is the smallest region where the individual
	spends at least a proportion of $\rho$ time in it. 
	As a concrete example, $Q_{0.9}$ is where the individual has spent $90\%$
	of the time on average. 
	Thus, $Q_\rho$ 
	can be used as statistical entity of the activity space.
	Interestingly, $Q_\rho$
	can also be interpreted as
	a prediction region of the individual.

	\subsubsection{Estimating the activity space}
	Estimating $Q_\rho$
	from the data can be done easily
	by the plug-in estimator approach. 
	A straightforward idea is to use 
	\begin{equation*}
		\begin{aligned}
			\hat Q^*_\rho &= \min_{\lambda}\left\{ \hat{\mathcal{L}}^*_\lambda: \int_{x\in \hat{\mathcal{L}}^*_\lambda}\hat f(x)dx \geq \rho \right\},\qquad
			\hat{ \mathcal{L}}^*_\lambda& = \left\{x: \hat {f}(x)\geq\lambda \right\}.
		\end{aligned}
	\end{equation*}
	While this estimator is valid, it may be expansive to compute it
	because of the integral in the estimator. 
	
	Here is an alternative approach to estimating $Q_\rho$ without the evaluation of the integral. 
	First, the integral $ f_{GPS}(x)dx = dF_{GPS}(x)$, where 
	$
	F_{GPS}(x) = P(X(U)+\epsilon\leq x)
	$
	is the corresponding CDF of $f_{GPS}$. 
	When the timestamps are even-spacing, $F_{GPS}$
	can be estimated by the empirical distribution of $X_{i,j}$. 
	When the timestamps are not even-spacing,
	we can simply weight each observation by the idea of time-weighting estimator. 
	In more details, a time-weighted estimator of $F_{GPS}$ is
	$$
	\hat F_{w} (x) = \frac{1}{n}\sum_{i=1}^n \sum_{j=1}^{m_i} W^\dagger_{i,j} I(X_{i,j}\leq x),
	$$
	where 
	$
	W^\dagger_{i,j} = \frac{t_{i,j+1} - t_{i,j-1}}{2T_i},
	$
	if we are using $\hat f_w$
	and 
	$$
	\qquad W^\dagger_{i,j} = \int_0^1\frac{K_T\left(\frac{d_T(t_{i,j},t)}{h_T}\right)}{\sum_{i'=1}^n\frac{1}{m_{i'}}\sum_{j'=1}^{m_{i'}}K_T\left(\frac{d_T(t_{i',j'},t)}{h_T}\right)}dt
	$$
	if we are using $\hat f_c$.
	$\hat F_{w} (x)$ can be viewed as a time-weighted EDF. 
	%Note that here we use $W^\dagger_{i,j} = W_{i,j}/T_i$ to simplify the notation a bit.
	A feature of $\hat F_w(x)$ is that 
	when integrating over any region $A$, we have 
	$$
	\int I(x\in A) d\hat F_w(x) = \frac{1}{n}\sum_{i=1}^n\sum_{j=1}^{m_i} W^\dagger_{i,j} I(X_{i,j}\in A).
	$$
	It is the time-weighted proportion of $X_{i,j}$ inside the region $A$. 
	With this, 
	we obtain the estimator
	\begin{equation}
		\begin{aligned}
			\hat Q_\rho &= \min_{\lambda}\left\{ \hat{ \mathcal{L}}^*_\lambda:  \frac{1}{n}\sum_{i=1}^n \sum_{j=1}^{m_i} W^\dagger_{i,j} I(X_{i,j}\in  \hat{ \mathcal{L}}^*_\lambda ) \geq \rho \right\},\\
			\hat{ \mathcal{L}}^*_\lambda& = \left\{x: \hat {f}(x)\geq\lambda \right\};
		\end{aligned}
		\label{eq::Q2}
	\end{equation}
	note that $\hat f$ can be either $\hat f_c$ or $\hat f_w$.
	We provide a fast numerical method for computing $\hat Q_\rho$ in Appendix \ref{sec::num::as}.

	\subsection{Clustering of trajectories}
	
	In the SMM,
	the action vector $B$ is assumed to take a finite number of possible outcomes.
	While the actual time of each movement under this model may vary a bit according to the random vector $Z$,
	resulting in a slightly different trajectory with the same $B$,
	the main structure of the trajectory remain similar. 
	This implies that the individual's trajectories
	can be clustered into groups according to the action vector $B$. 
	
	While we do not directly observe $B$,
	we may use the estimated GPS densities at each day 
	as a noisy representation of $B$
	and cluster different days according to the estimated GPS densities.
	
	Here we describe a simple approach to cluster trajectories.
	% via the spectral clustering \citep{von2007tutorial}.
	Let 
	$$
	\hat f_{c,i}(x) = \frac{1}{nh^2} \sum_{j=1}^{m_i} \tilde{ W}_{i,j} \cdot K\left(\frac{X_{i,j}-x}{h}\right)
	$$
	be the estimated time-weighted GPS density of day $i$,
	where $\tilde{ W}_{i,j}$ is defined in equation \eqref{eq::int_ckde}.
	%We then compute the pairwise distance matrix among different days $\delta\in\R^{n \times n}$
	%\begin{equation}
	%D_{a,b} = \int (\hat f_{w,a}(x) - \hat f_{w,b}(x))^2dx.
	%\label{eq::clu1}
	%\end{equation}
	While we may use the pairwise distance matrix among different days $\int (\hat f_{w,a}(x) - \hat f_{w,b}(x))^2dx$
	as a distance measure, 
	we recommend using a scaled log-density, i.e.,
	\begin{equation}
		D_{a,b} = \int \left[\log (\hat f_{c,a}(x)+\xi)- \log(\hat f_{c,b}(x)+\xi)\right]^2dx,
		\label{eq::clu2}
	\end{equation}
	for some small constant $\xi$ to stabilize the log-density.
	Our empirical studies find that equation \eqref{eq::clu2}  works much better in practice
	because the GPS density is often very skew due to the fact that
	when the individual is not moving, the density is highly concentrated at a location.
	
	With the distance matrix $D$,
	we can perform cluster analysis based on hierarchical cluster or spectral clustering \citep{von2007tutorial}
	to obtain a cluster label for every day.
	We may also detect outlier activity by examining the distance matrix.

	\subsubsection{Recovering dynamics after clustering}

	Once we have done clustering of the data,
	we obtain a label $G_i$ for every day, indicating
	what cluster the day belongs to.
	In SMM, trajectories in the same cluster are likely to be
	from the same action vector $B$. 
	If we focus on the data of a cluster,
	we may perform inference on anchor locations or activity space
	(Sections \ref{sec::AL} and \ref{sec::AS})
	specifically for the corresponding action vector.

	%{\bf Group-wise conditional density.}
	Under SMM,
	trajectories in the same cluster are likely to have the same action vector $B$
	(but different temporal distribution $Z$).
	We will expect that 
	the conditional density on the days from the same clusters 
	to be similar.
	The average behavior of these conditional densities in the same cluster
	provides us useful information about the latent action vector.
	
	Let  $G_i\in\{1,\ldots, M\}$ for every $i=1,\ldots, n$ be the cluster label. 
	For a cluster $g \in \{1,\ldots, M\}$, we 
	compute the cluster conditional GPS density
	\begin{equation}
		\hat f_{g}(x|t) =  \frac{\sum_{i: G_i=g}\frac{1}{m_i} \sum_{j=1}^{m_i} K\left(\frac{X_{i,j}-x}{h_X}\right)K_T\left(\frac{d_T(t_{i,j},t)}{h_T}\right)}{
			h_X^2\sum_{i': G_{i'}=g} \frac{1}{m_{i'}}\sum_{j'=1}^{m_{i'}} K_T\left(\frac{d_T(t_{i',j'},t)}{h_T}\right)}.
		\label{eq::g::cond}
	\end{equation}
	How the conditional distribution $\hat f_{g}(x|t)$ changes over time $t$
	informs us the movement of this individual under this action vector.

	{\bf Estimating the conditional center.}
	Since trajectories in the same cluster are similar,
	we expect the cluster conditional GPS density $\hat f_{g}(x|t)$
	to be uni-modal and the mean of this density will be a good representation of its center.
	Thus, a simple way to track how this distribution moves over time is to study 
	the movement of the mean location, which can be written as
	\begin{equation}
		\hat \mu_{g}(t) =  \frac{\sum_{i: G_i=g}\frac{1}{m_i} \sum_{j=1}^{m_i} X_{i,j}K_T\left(\frac{d_T(t_{i,j},t)}{h_T}\right)}{
			\sum_{i': G_{i'}=g} \frac{1}{m_{i'}}\sum_{j'=1}^{m_{i'}} K_T\left(\frac{d_T(t_{i',j'},t)}{h_T}\right)} \in \R^2.
		\label{eq::g::mu}
	\end{equation}
	Note that this is like a kernel regression with two `outcome variables' (coordinates).
	For two time points $t_1<t_2$, the difference in the center 
	$
	\hat \mu_g(t_2) - \hat \mu_g(t_1)
	$
	can be interpreted as the average movement in the distribution $\hat f_g(x|t)$
	within the time interval $[t_1,t_2]$.
	Thus, the function $\hat \mu_g(t)$
	describes the overall dynamics of the trajectories.
	Note that we may use a local polynomial regression method \citep{fan2018local}
	to estimate the velocity as well.

	\section{Real data analysis}\label{sec:real-data}
	
	%- description of the data

We analyze GPS data recorded for a single individual from a longitudinal study of exposure to a cognitive-impairing substance. Participants in this study have been recruited from several cities in a major metropolitan area in the US. The  survey data collected comprise one month of GPS tracking with a GPS-enabled smartphone carried by each study participant. The protocol of the study required that the smartphones recorded point locations (latitude, longitude, timestamp) on the spatiotemporal trajectories followed by study participants every minute using a commercially available software. The study participant whose GPS locations are used  to illustrate our proposed methodology has been selected at random from the study cohort which comprises more than 150 individuals with complete survey data. In order to preserve any disclosure of private, sensitive information related to this study participant, the latitude and longitude coordinates have been shifted and scaled. The timestamps have been mapped to the unit interval.

\begin{figure}
	\centering
	\includegraphics[width=0.25\linewidth]{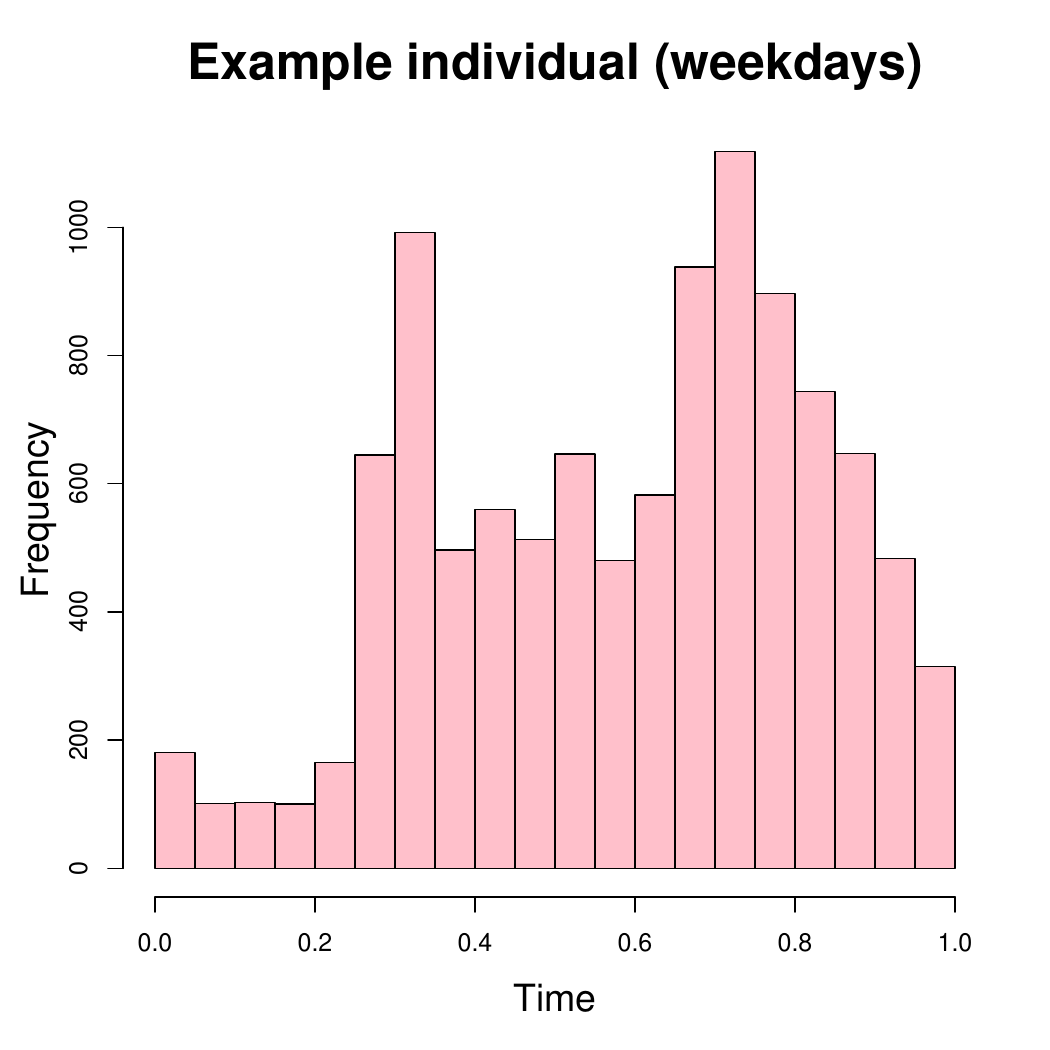}
	\includegraphics[width=0.25\linewidth]{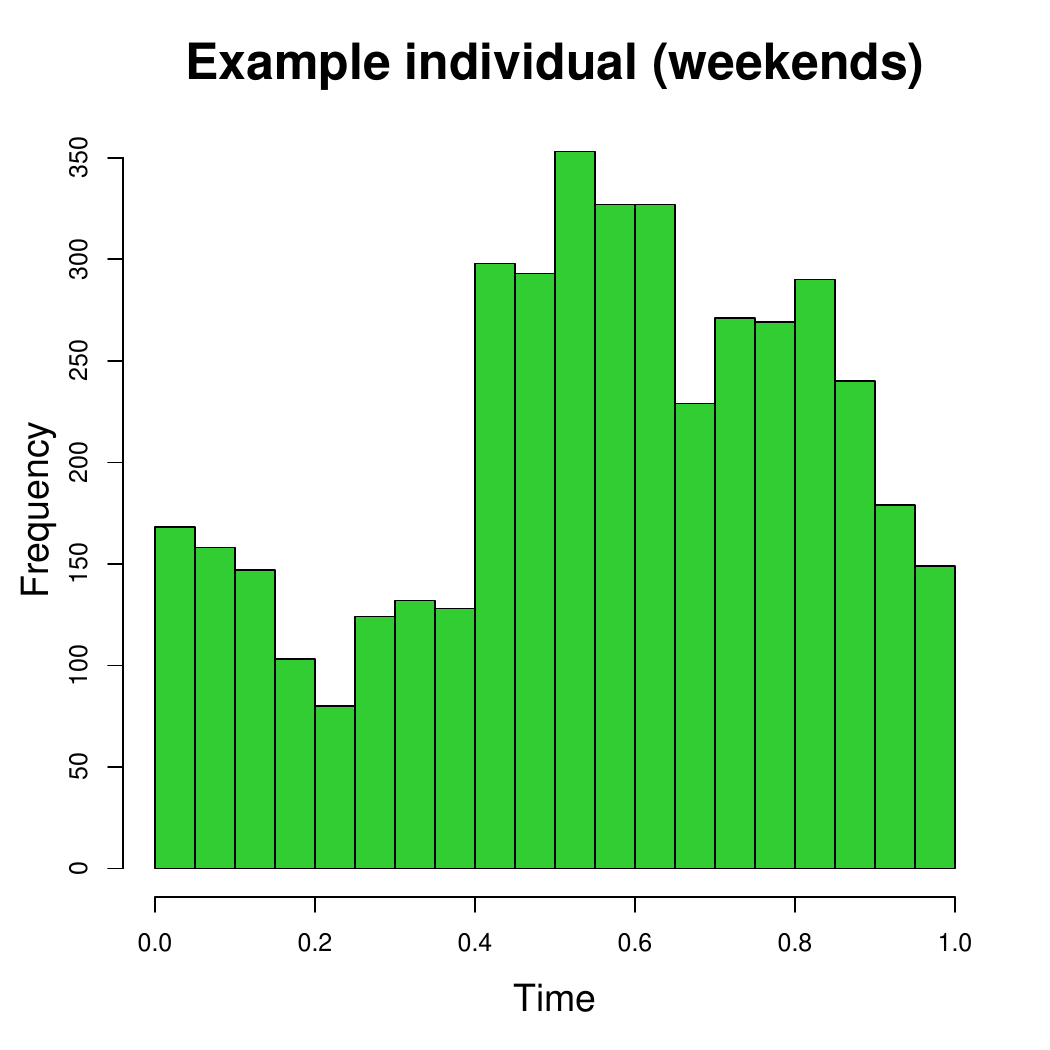}
	\includegraphics[width=0.25\linewidth]{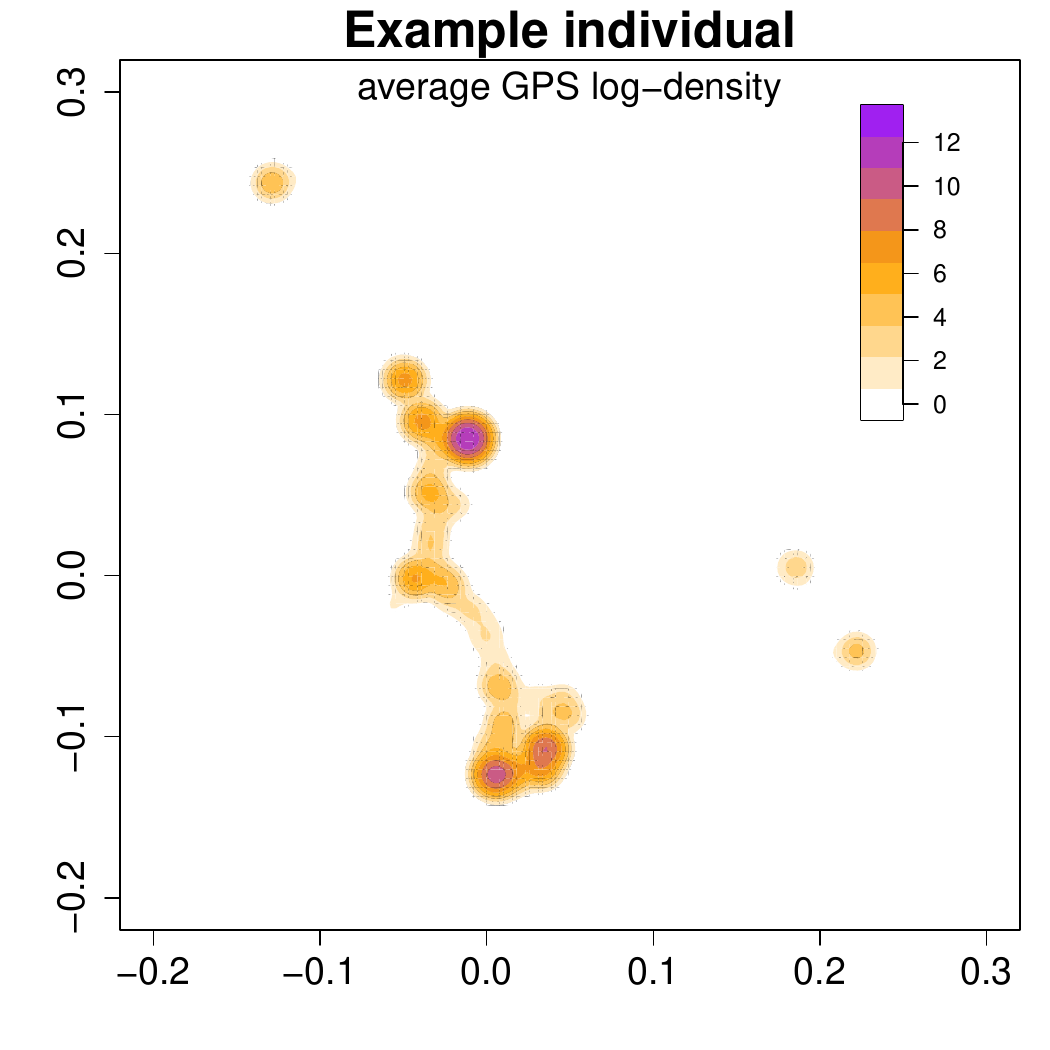}
	\caption{Summary information of example individual. The first two panels display
		the distribution of timestamps in weekdays versus weekends. The right panel display
		the average GPS log-density.}
	\label{fig:p125_summary}
\end{figure}

	\begin{figure}
		\centering
		\includegraphics[width=0.25\linewidth]{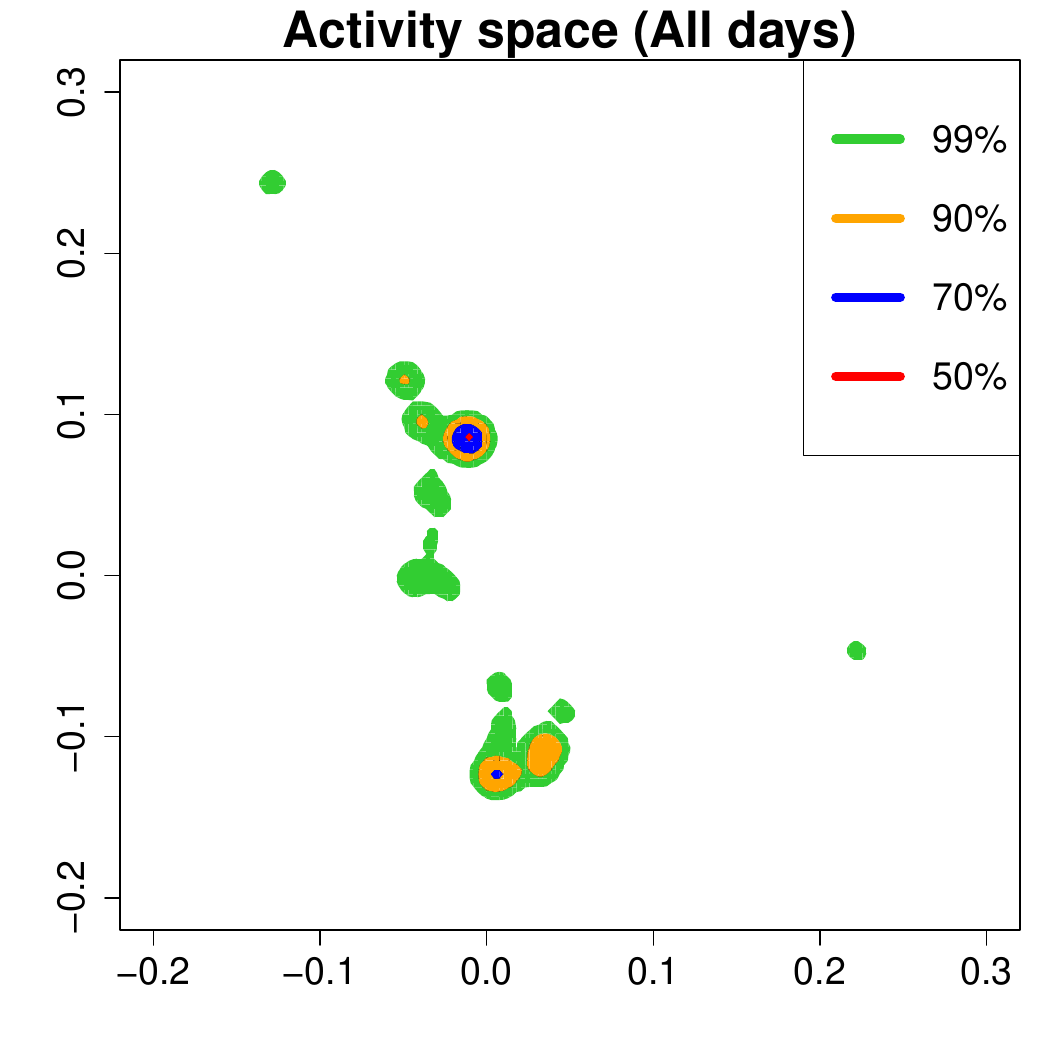}
		\includegraphics[width=0.25\linewidth]{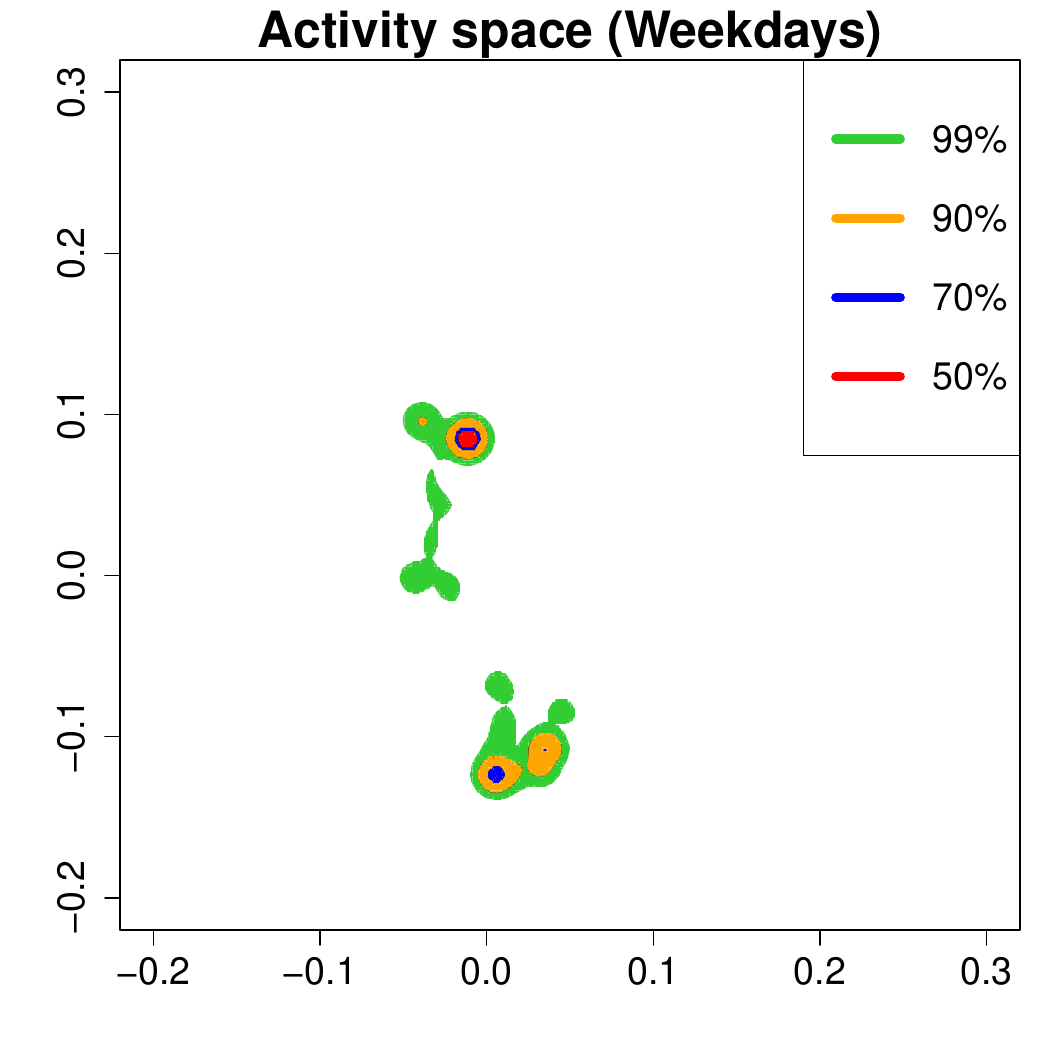}
		\includegraphics[width=0.25\linewidth]{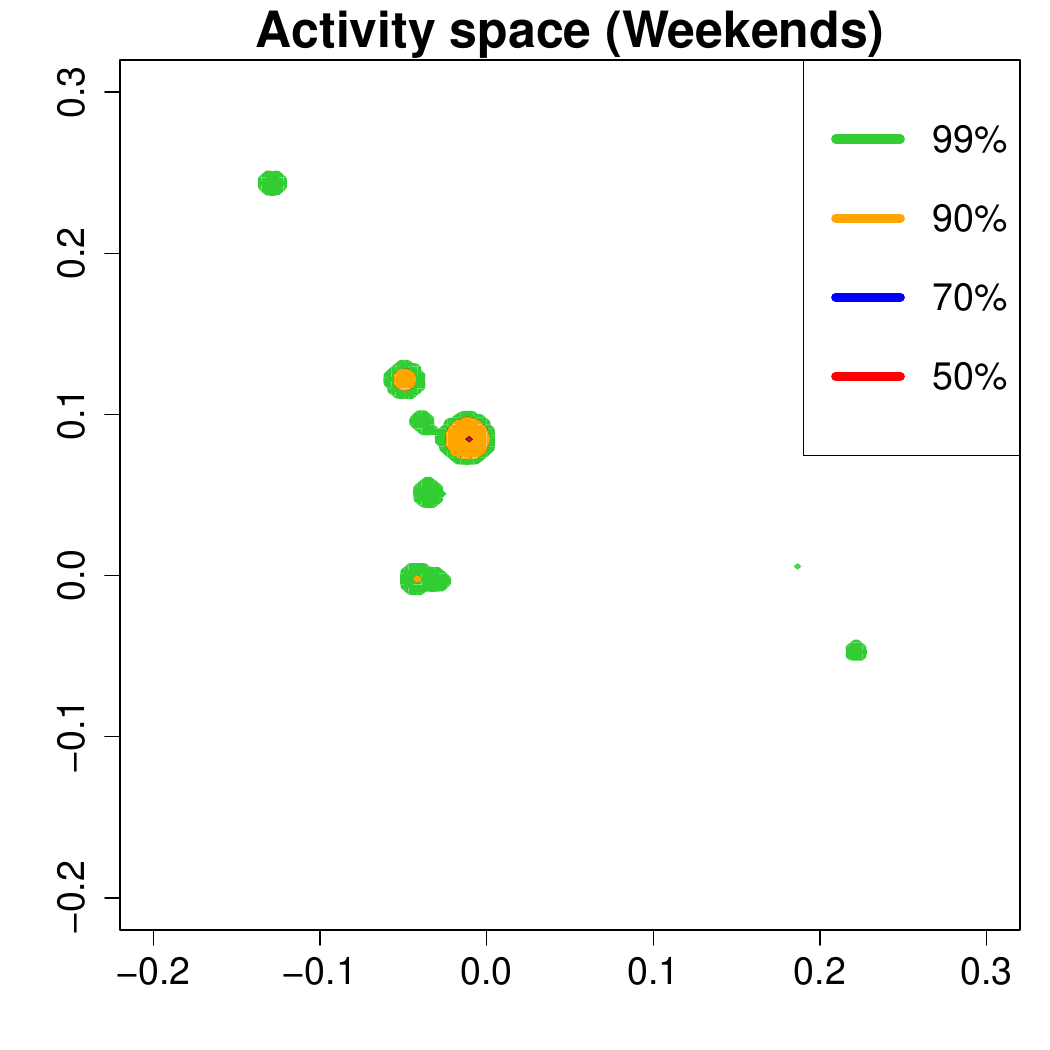}
		\caption{Activity space of the example individual. The first panel shows the activity space across from all days. 
			Second panel uses only the weekday's data and third panel focuses on the weekend. 
			The 90\% region means that this region covers 90\% of the time that the individual spends in a day (on average). }
		\label{fig:p125AS}
	\end{figure}

The GPS data for the selected example individual consists of  a total of 14,972 observations recorded over $n=30$ days, averaging approximately $m=500$  
	observations per day. Although the study protocol specified that a new GPS observation should be recorded every minute, the actual recorded timestamps are not regularly distributed over the entire observation window, as illustrated in the first two panels of  Figure~\ref{fig:p125_summary}. Consequently, a naive estimation approach would be biased, so we should use either
	$\hat f_w$ or $\hat f_c$
	for density estimation. We choose $\hat f_c$ because it allows for conditional density estimation at any given time.
	The smoothing bandwidths are set to $h_X = 0.005$ (longitude/latitude degrees) and $h_T = 0.02$ (approximately 30 minutes). 
	The third panel in Figure~\ref{fig:p125_summary} displays the GPS log-density $\log (\hat f_c(x)+1)$.
	Several high-density regions emerge, suggesting potential anchor locations for the individual.

	{\bf Activity space analysis.}
	Figure~\ref{fig:p125AS} presents the individual's activity space under three scenarios: all days, weekdays, and weekends. The contour region 
	$Q_\rho$  represents the area where the individual spends $\rho$ proportion of their time.
	For instance, 
	$Q_{0.99}$ (green) covers 99\% of their time.
	More than 50\% of the time is spent at home, as it includes nighttime. 
	On weekdays (middle panel), 90\% of the activities (orange contour) are concentrated at home or the office. 
	To further analyze mobility, we separate the data into weekdays and weekends and examine hourly density variations.

	{\bf Analysis on the weekdays patterns.}
	Weekday Patterns
	Figure~\ref{fig:p125} depicts GPS density distributions across four weekday time windows:
	1-2 AM (sleep time),
	7-8 AM (morning rush hour),
	12-1 PM (noon), and
	5-6 PM (evening rush hour).
	The top row shows the logarithm of the hourly GPS density to mitigate skewness. The bottom row presents the activity space contours for 99\% (green), 90\% (orange), 70\% (blue), and 50\% (red) of the time.
%	From Figure~\ref{fig:p125}, we identify three likely anchor locations:
%	home (visible in the first column as a high-density region only at night) and
%	two office locations (identified in the third panel as high-density regions during midday).
%	The second and fourth columns capture rush-hour mobility, illustrating connectivity between home and office.
	%% HY-final 5: weekdays pattern description modification; 3 anchor points -> 5 anchor points.
From Figure~\ref{fig:p125}, we identify five likely anchor locations:
one main living place (highest density region in first column) and other two close locations the person may also visit at night (two high-density regions at bottom in the first column) and
two office locations (identified in the third panel as high-density regions during midday).
By further cluster analysis, we find that there is a main living location (corresponding to the highest density region at night), and an alternative living place close to a park that the example individual sometimes walks inside. 
The second and fourth columns capture rush-hour mobility, illustrating connectivity between anchor locations.
	
		\begin{figure}
		\centering
		\includegraphics[width=0.2\linewidth]{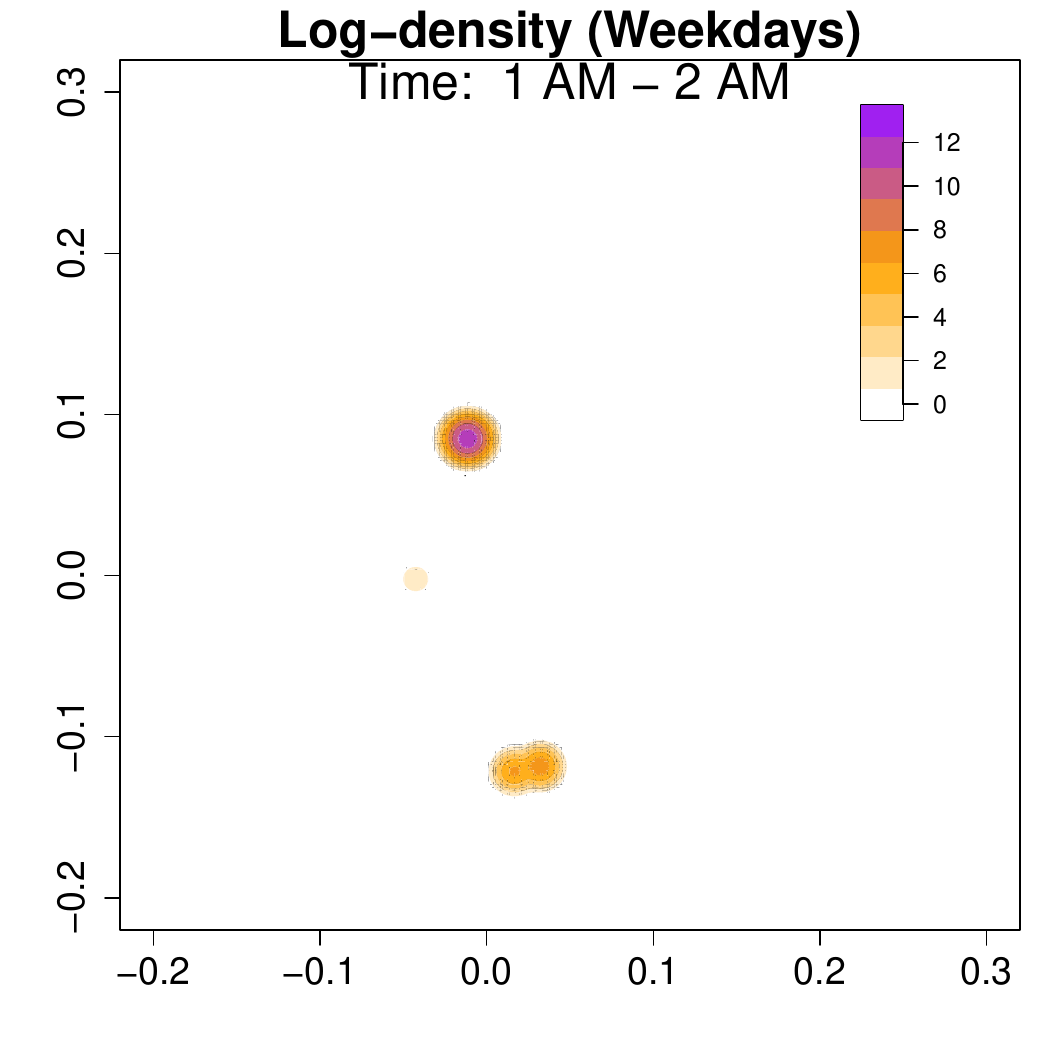}
		\includegraphics[width=0.2\linewidth]{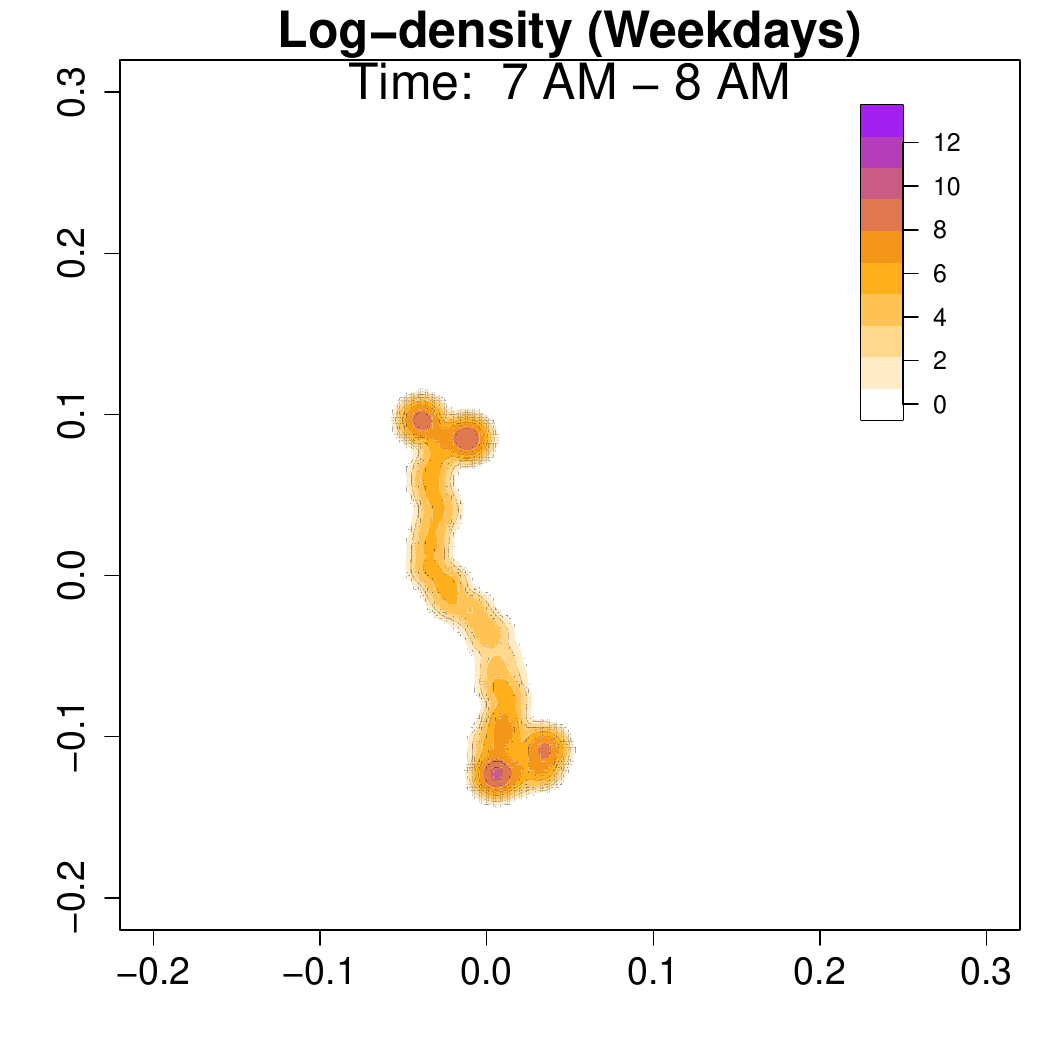}
		\includegraphics[width=0.2\linewidth]{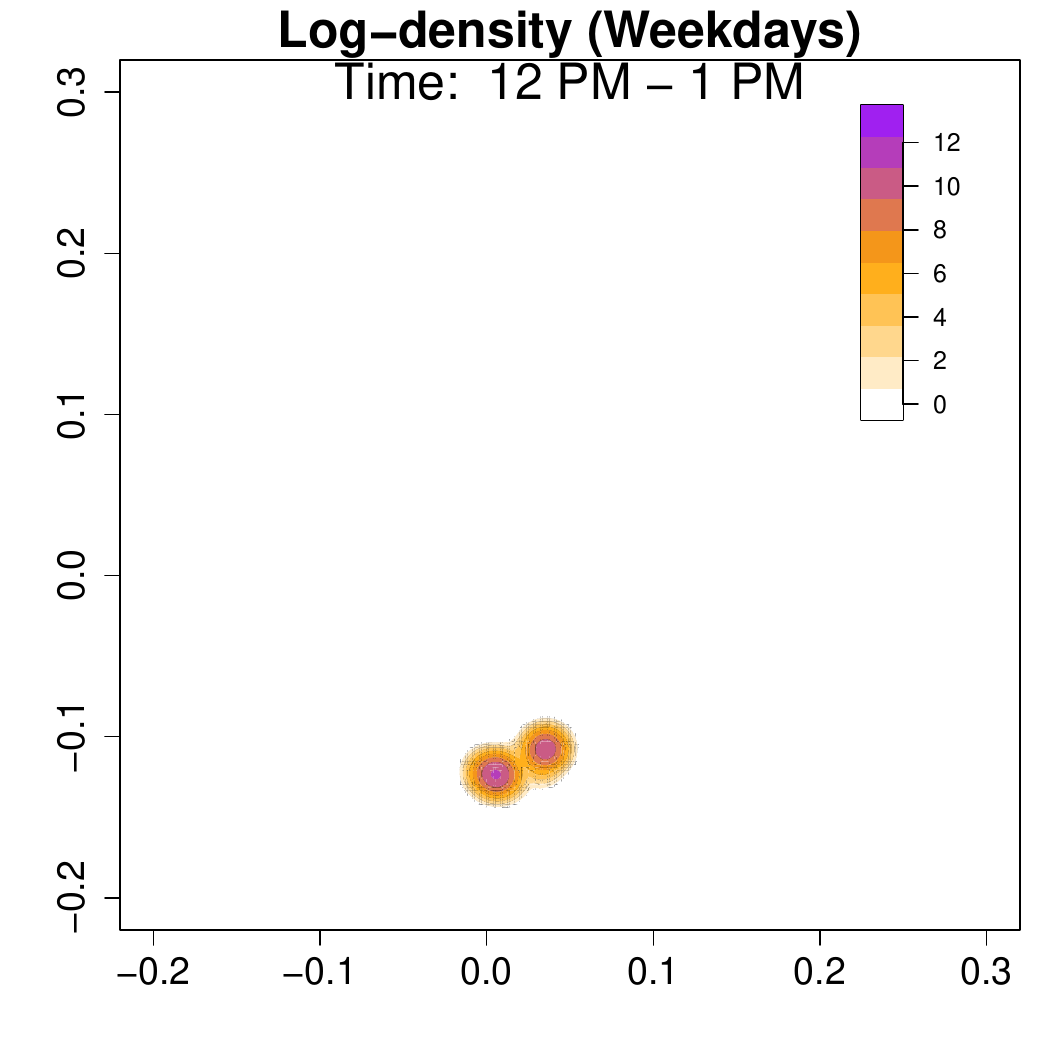}
		\includegraphics[width=0.2\linewidth]{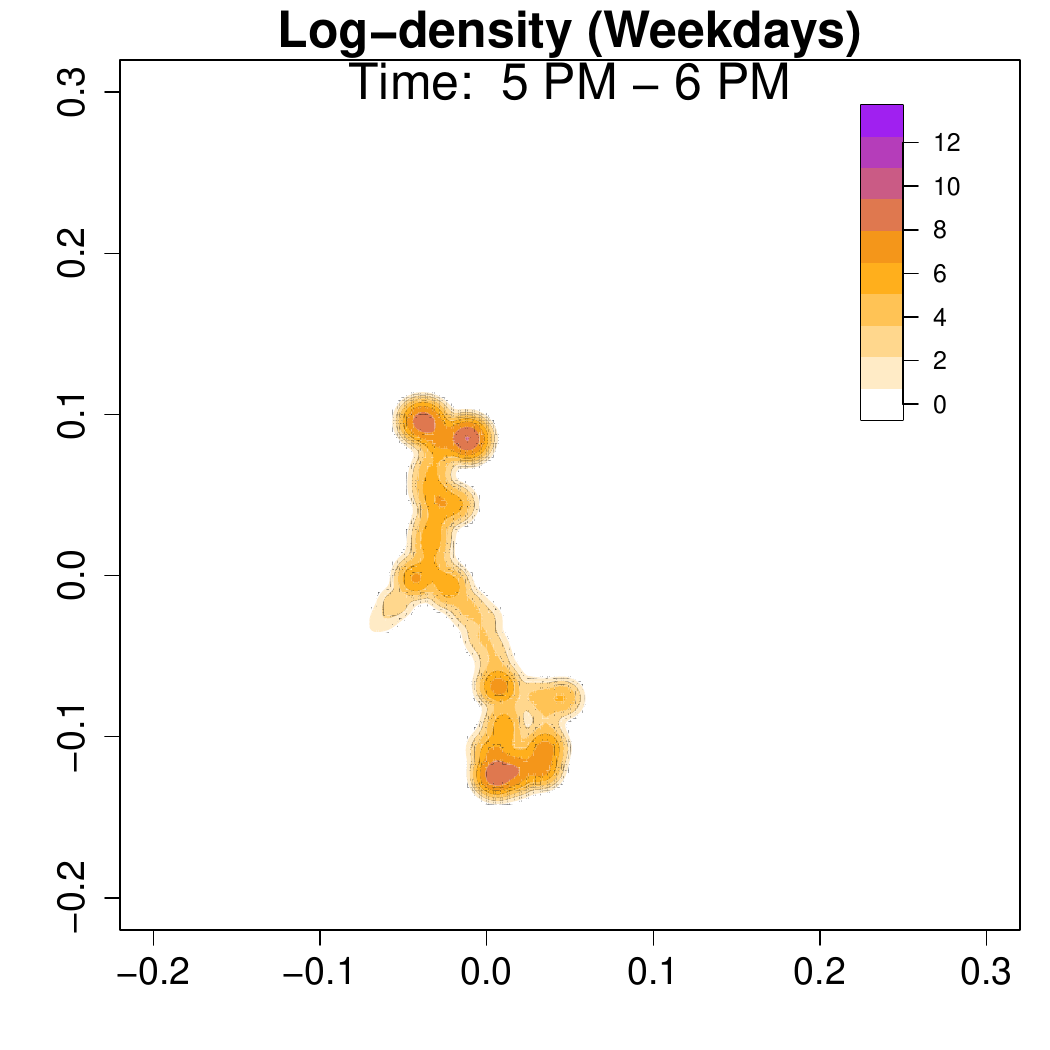}
		\includegraphics[width=0.2\linewidth]{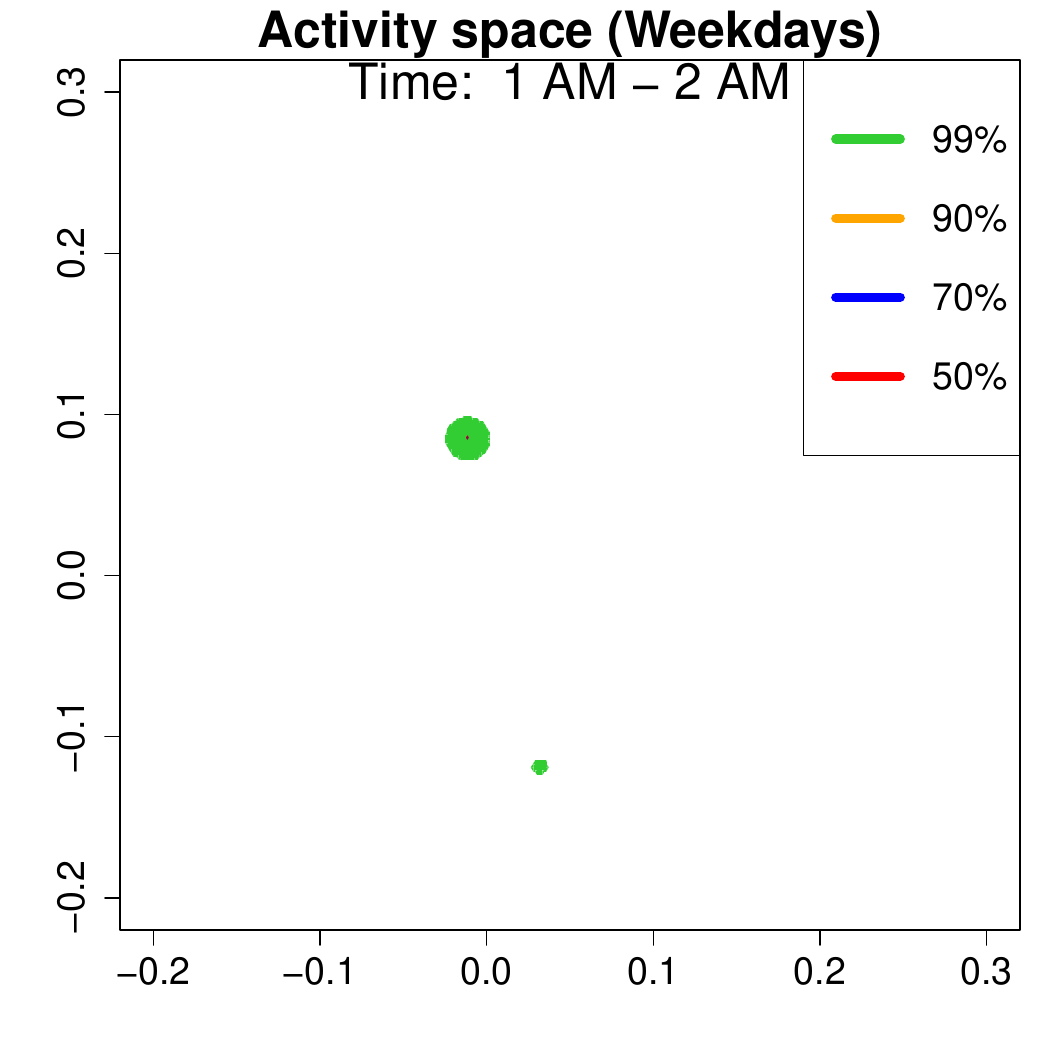}
		\includegraphics[width=0.2\linewidth]{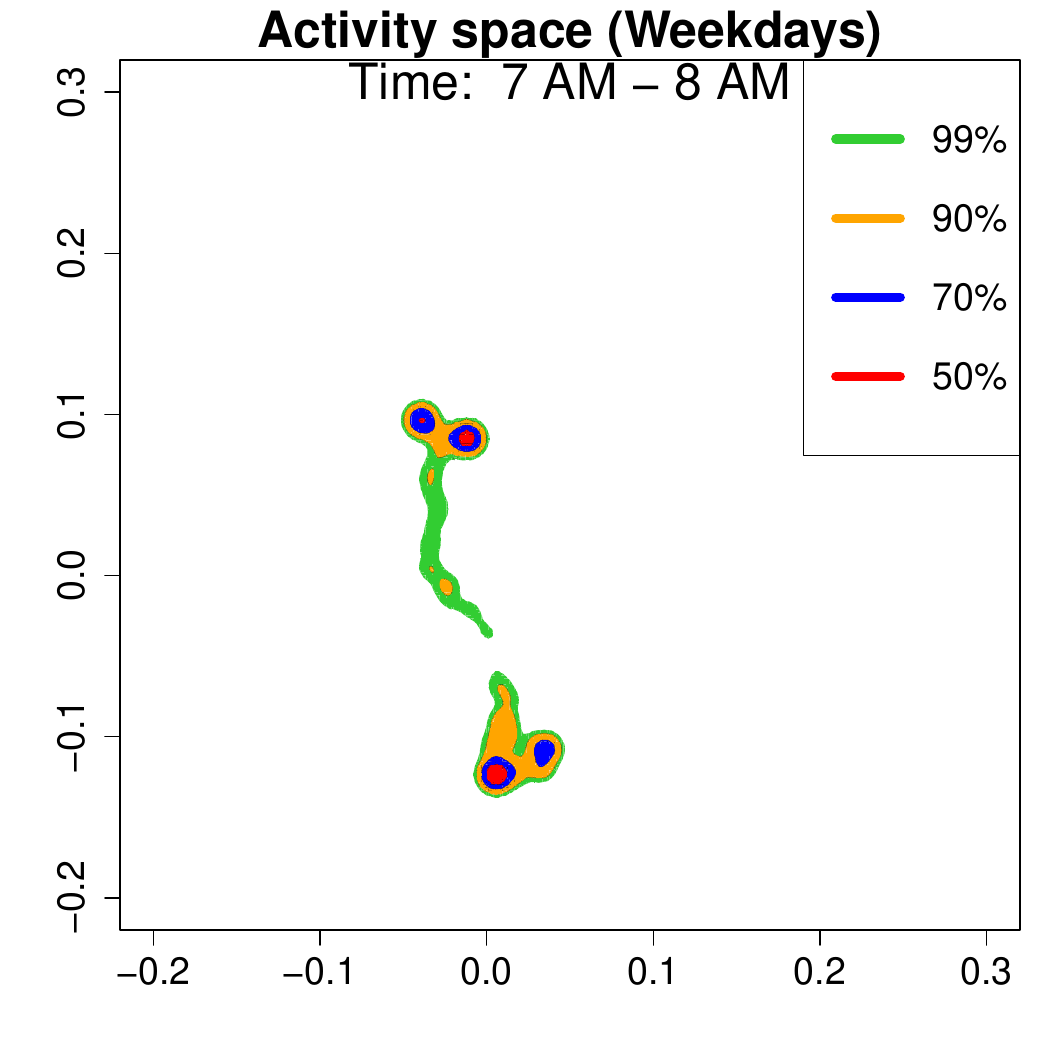}
		\includegraphics[width=0.2\linewidth]{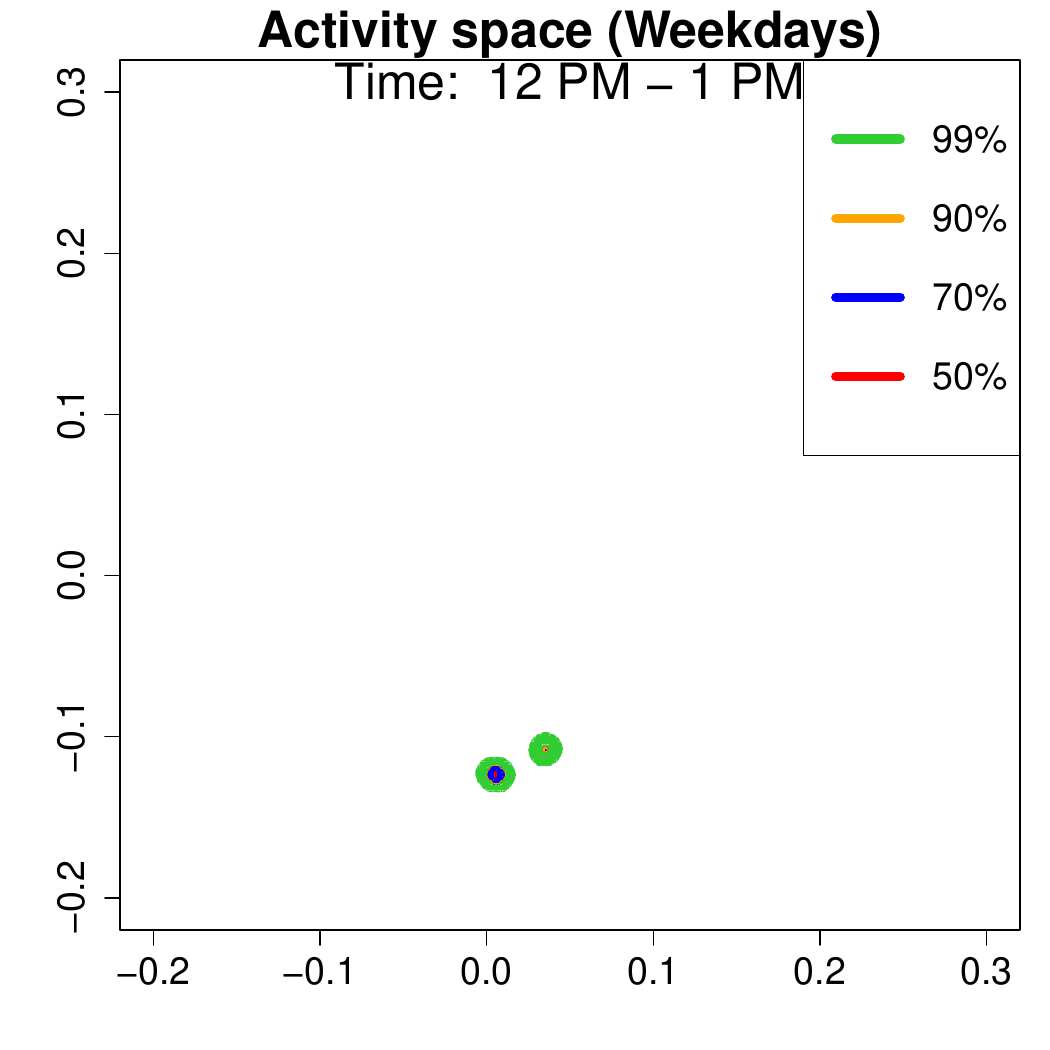}
		\includegraphics[width=0.2\linewidth]{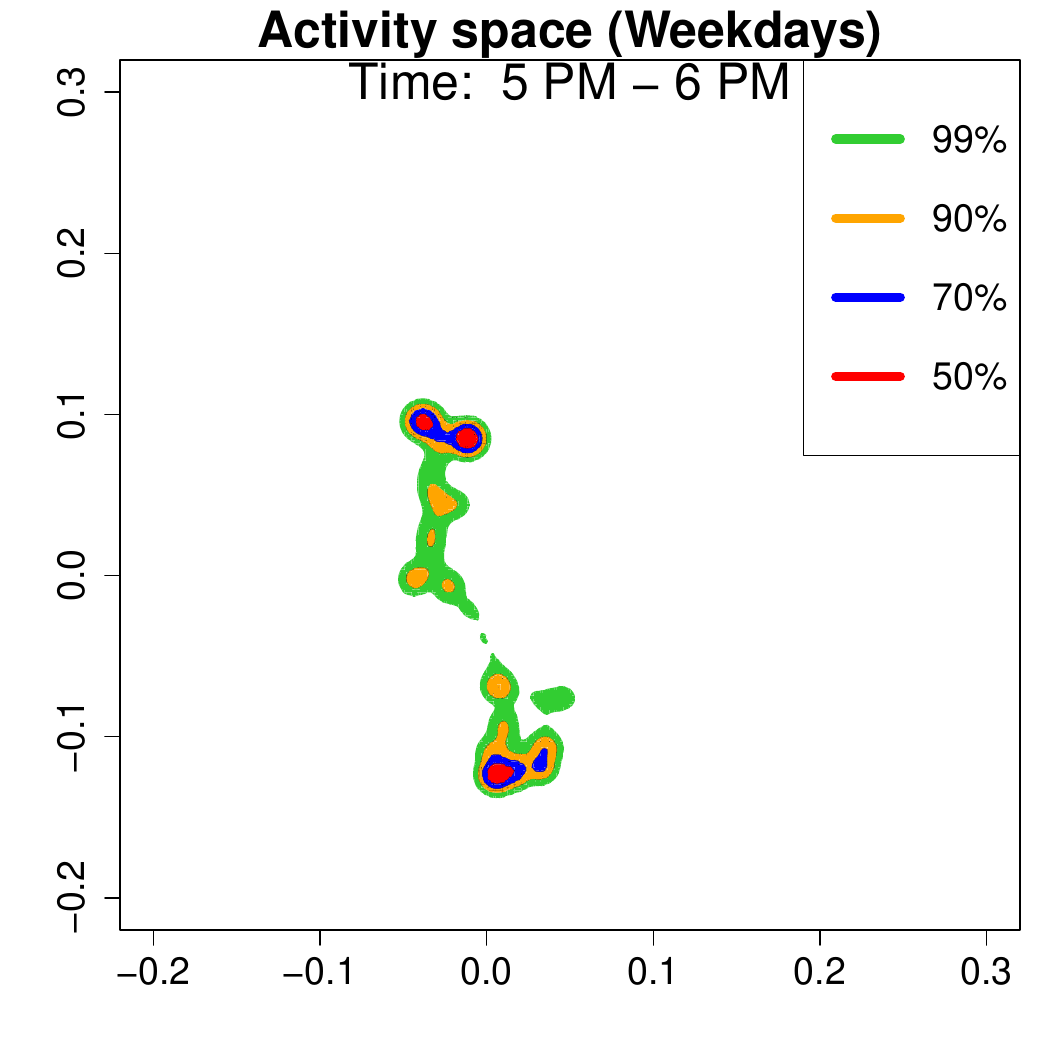}
		\caption{The hourly GPS log-density and activity space of the example individual during the weekdays.
			Top row displays the logarithm of the GPS density and the bottom row shows the corresponding activity space
			of $99\%$ (green), $90\%$ (orange), $70\%$ (blue), and $50\%$ (red) average time. 
			We display the interval-specific GPS density of the individual during 1-2 AM (sleep time), 7-8 AM (morning rush hour), 12-1 PM (workplace/school), 
			and 5-6 PM (evening rush hour).}
		\label{fig:p125}
	\end{figure}
	
	\begin{figure}
		\centering
		\includegraphics[width=0.2\linewidth]{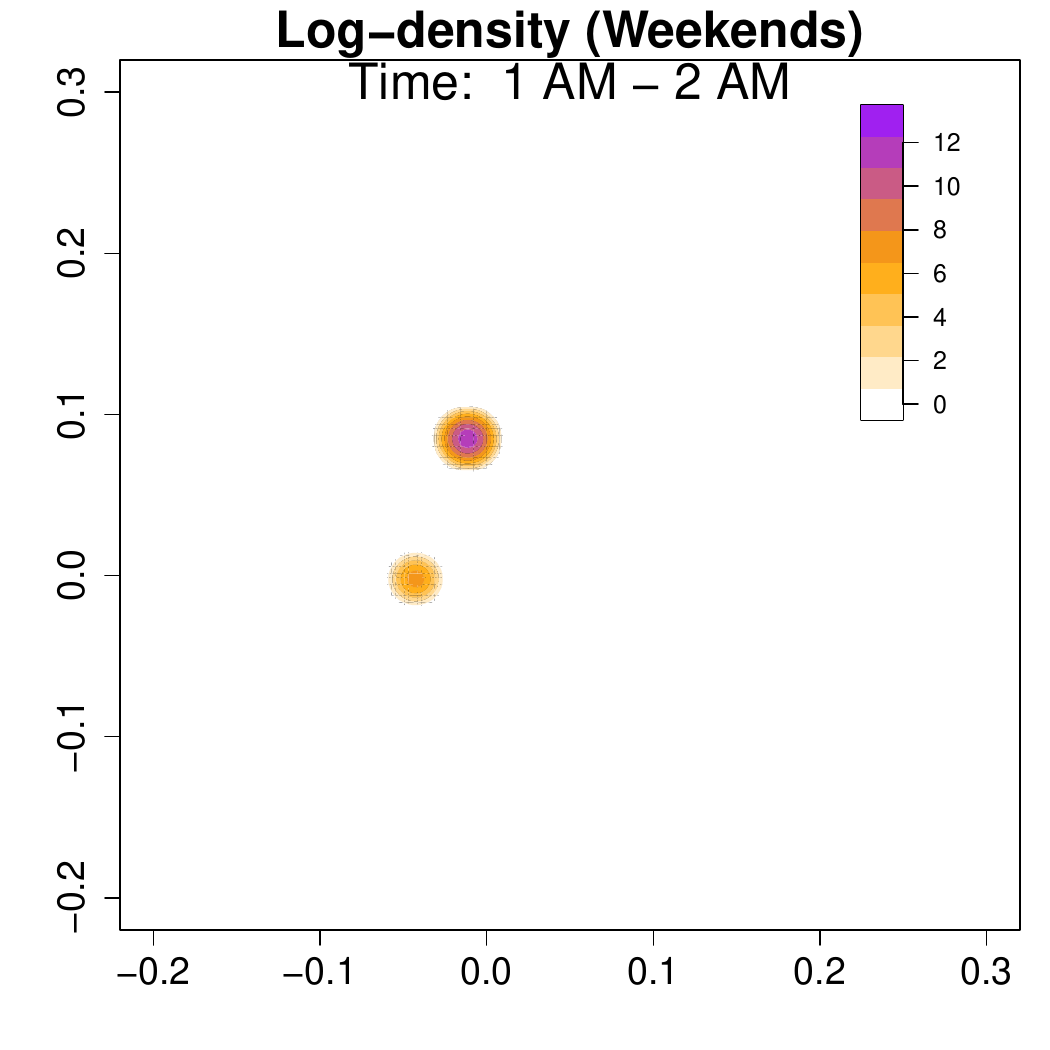}
		\includegraphics[width=0.2\linewidth]{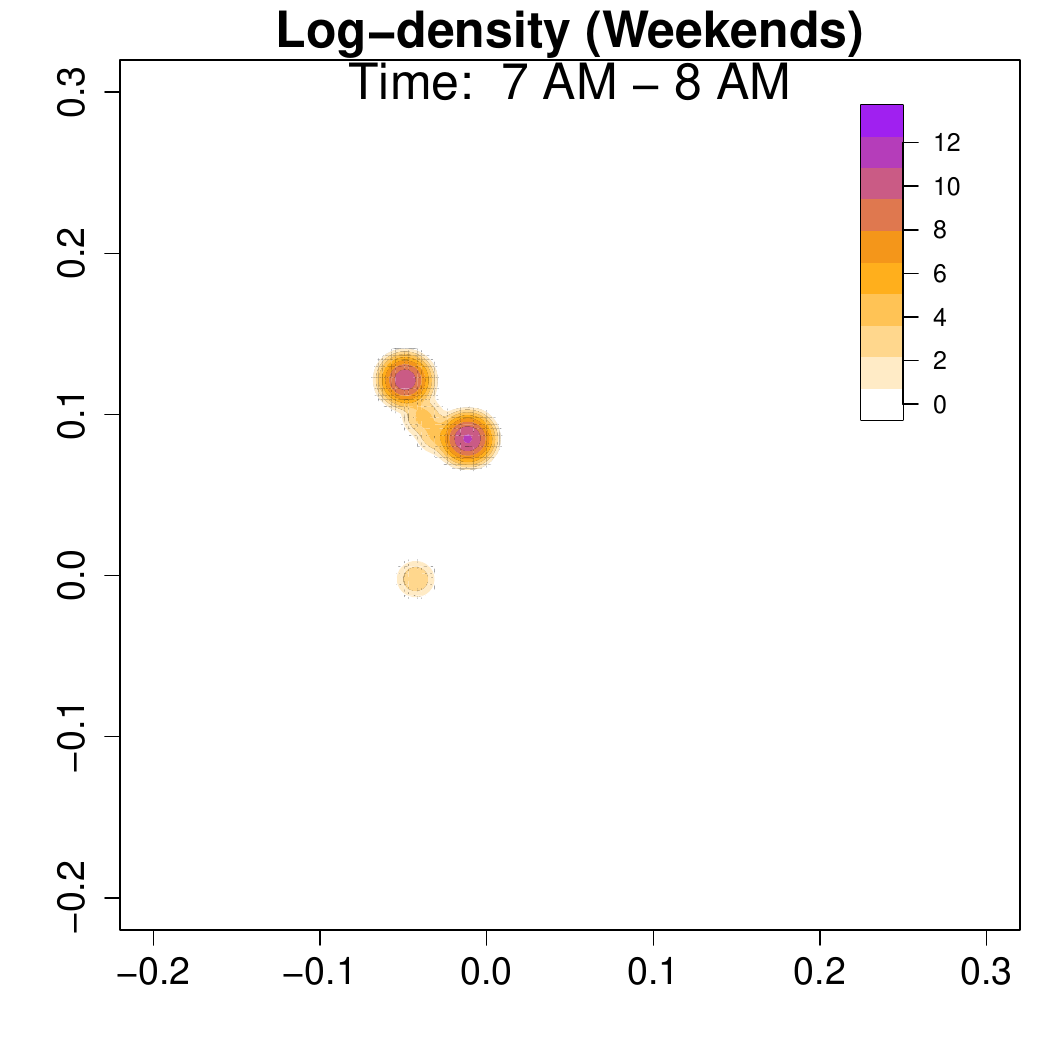}
		\includegraphics[width=0.2\linewidth]{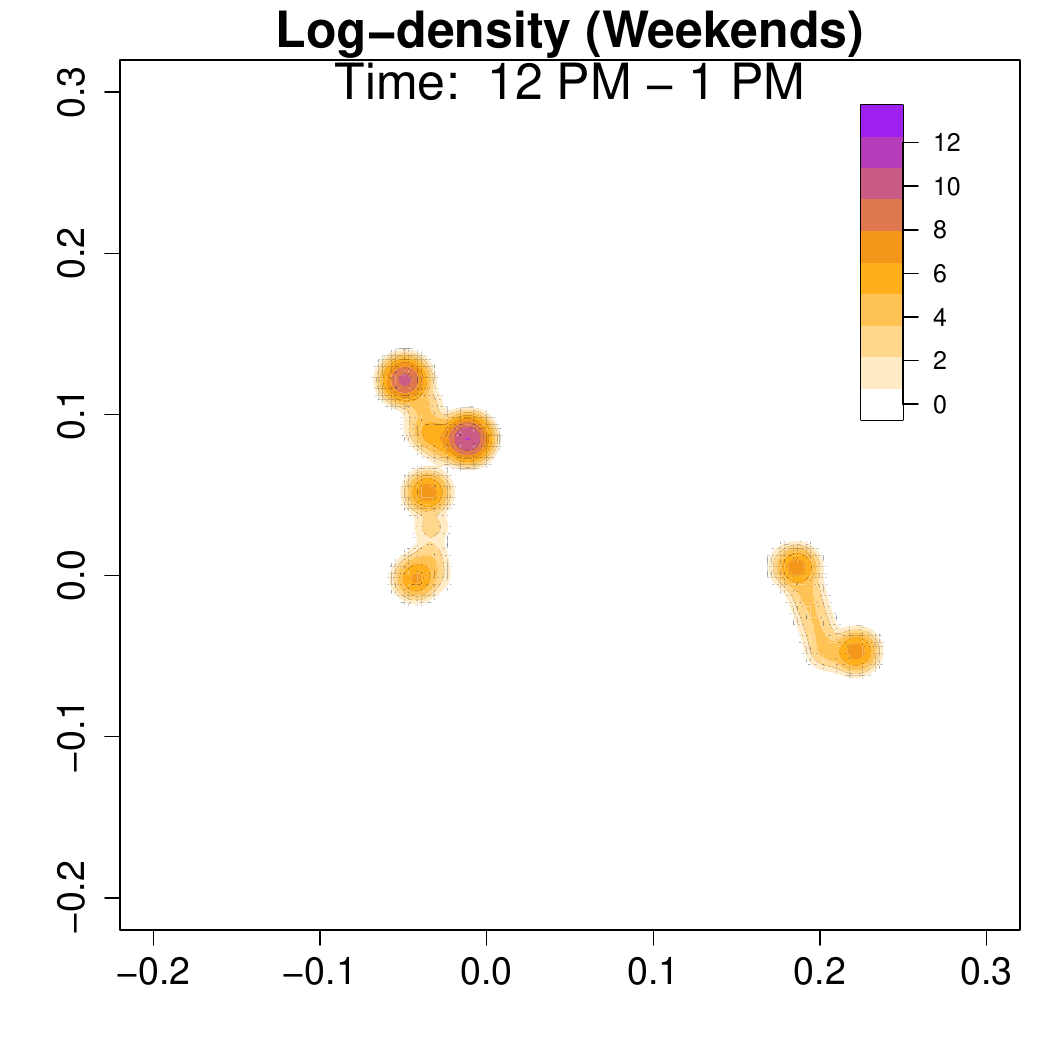}
		\includegraphics[width=0.2\linewidth]{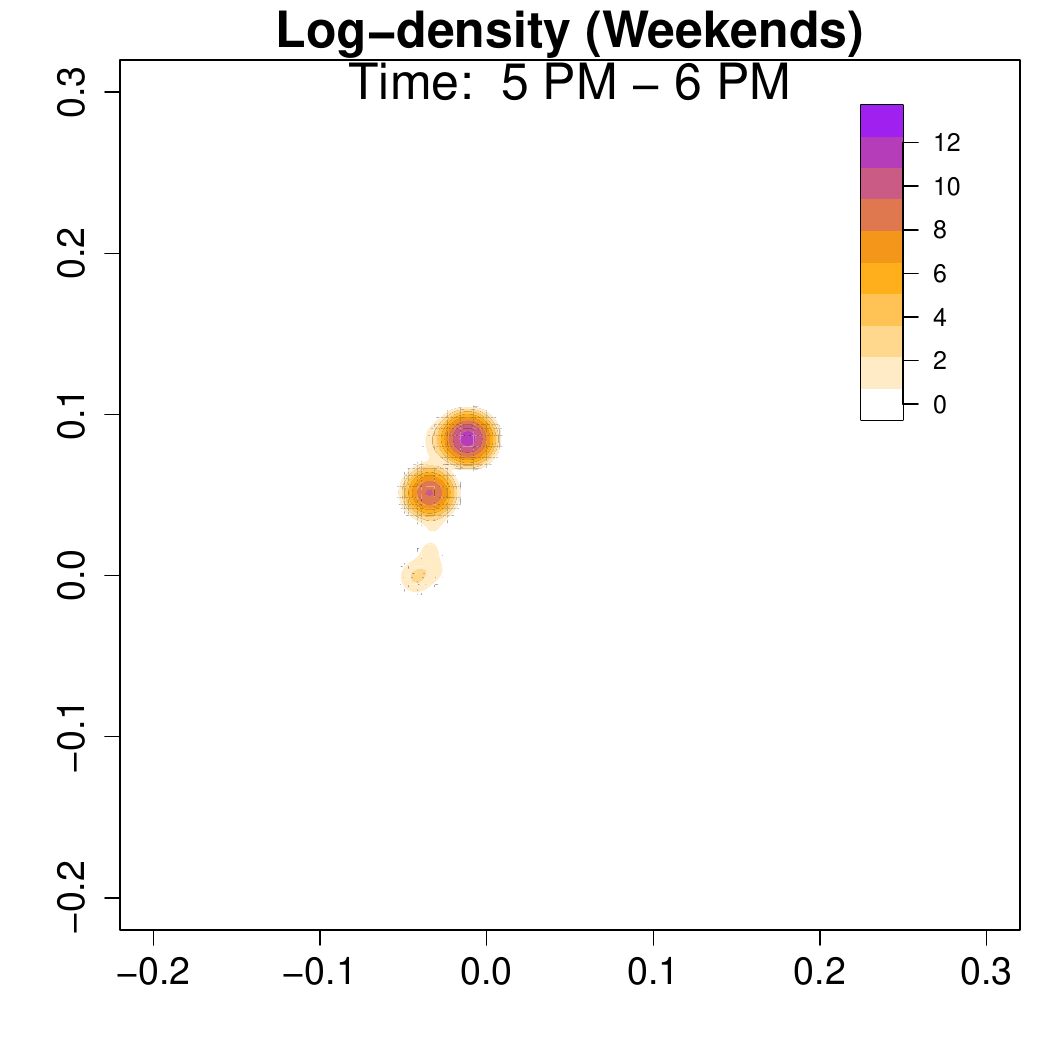}
		\includegraphics[width=0.2\linewidth]{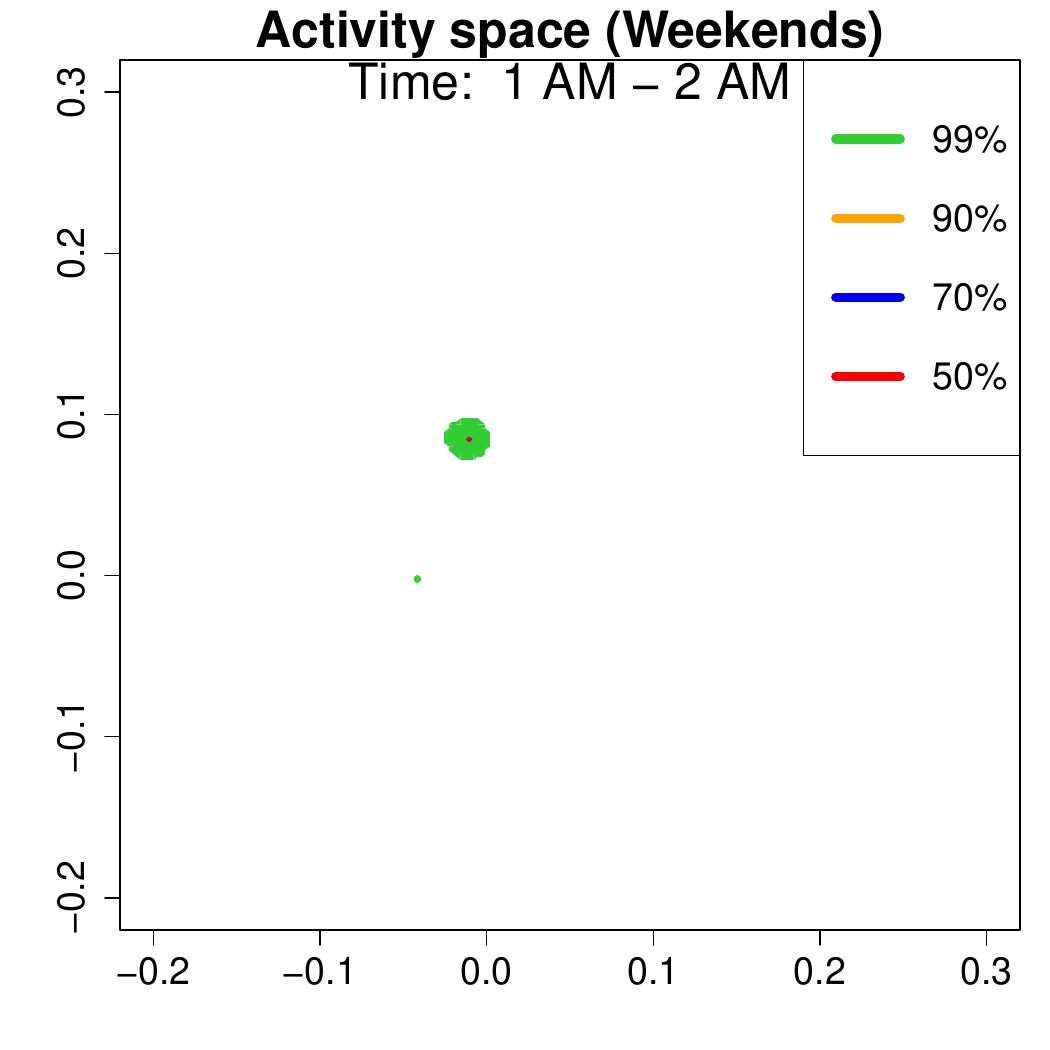}
		\includegraphics[width=0.2\linewidth]{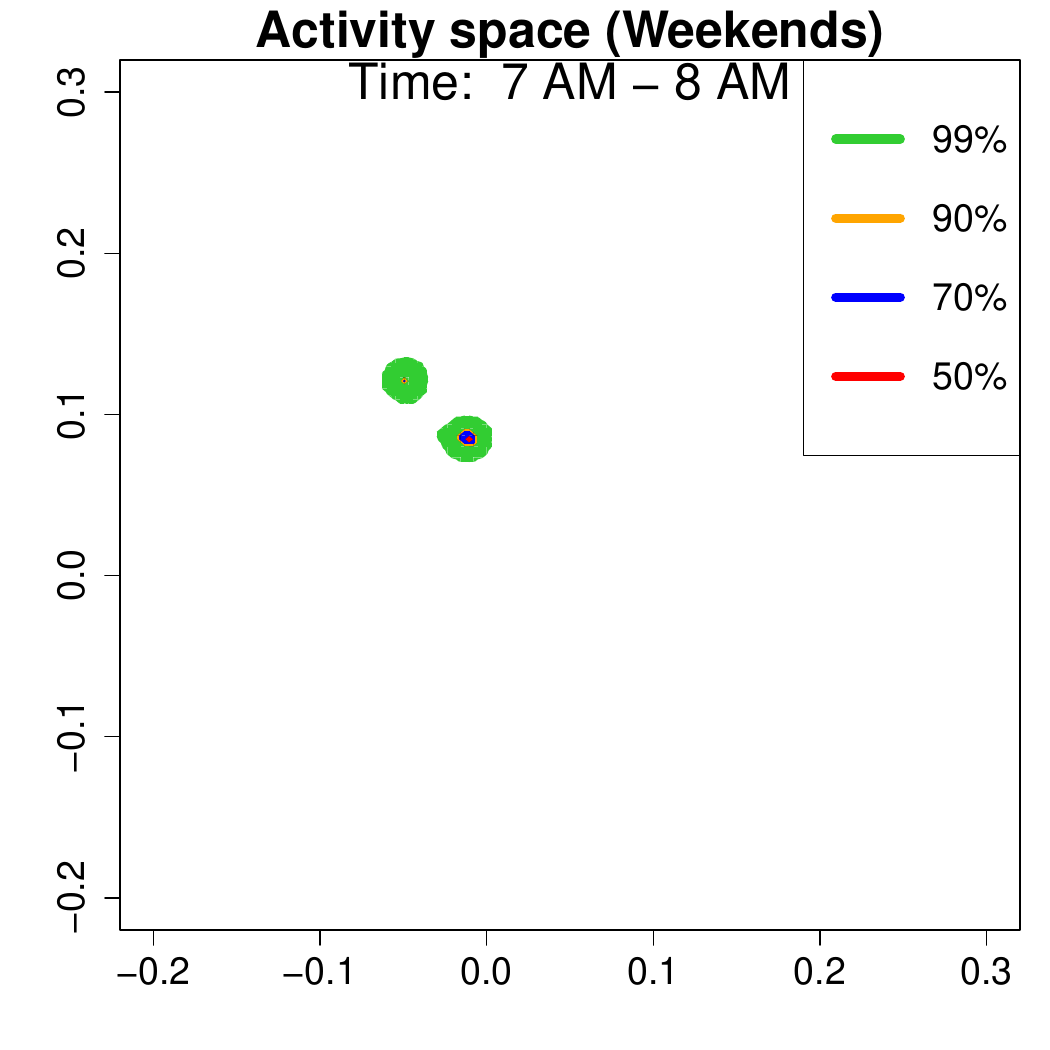}
		\includegraphics[width=0.2\linewidth]{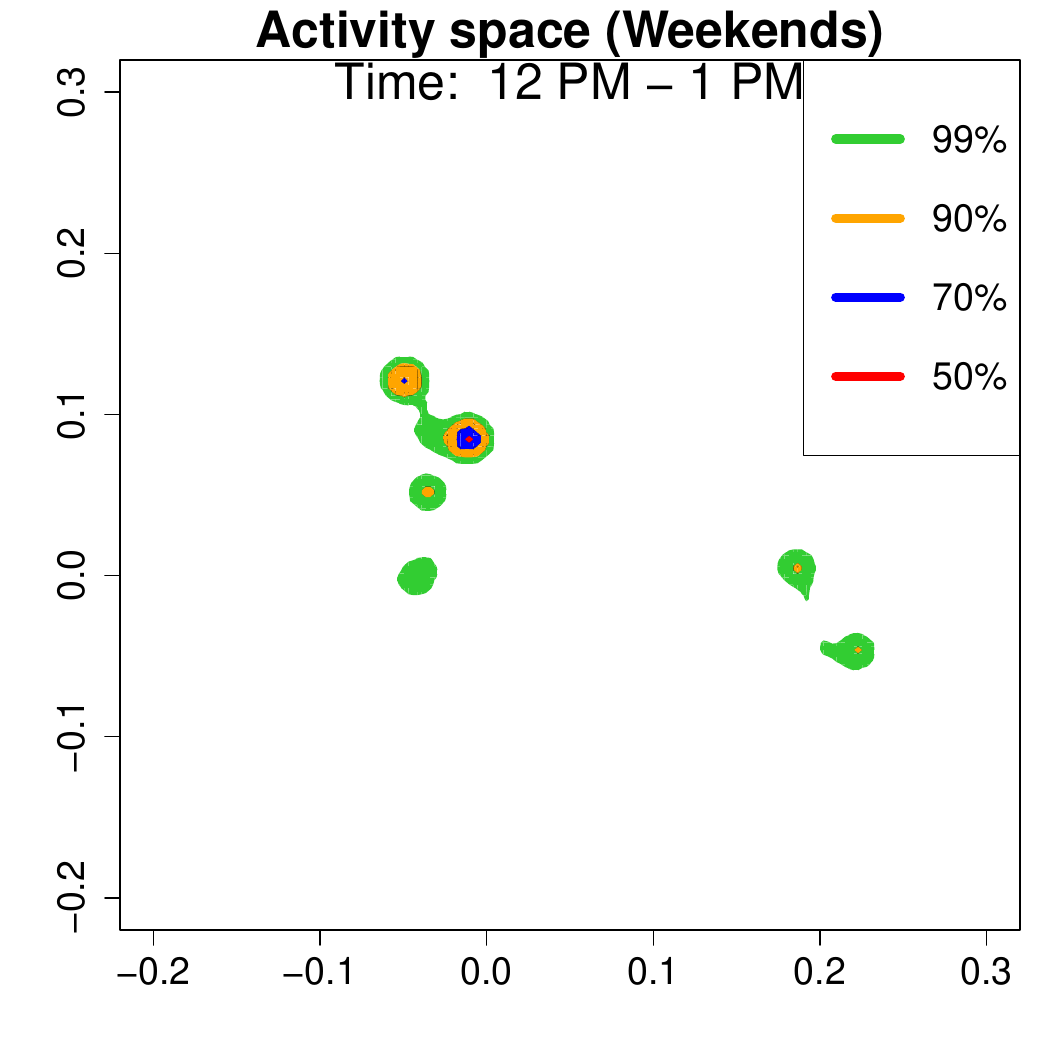}
		\includegraphics[width=0.2\linewidth]{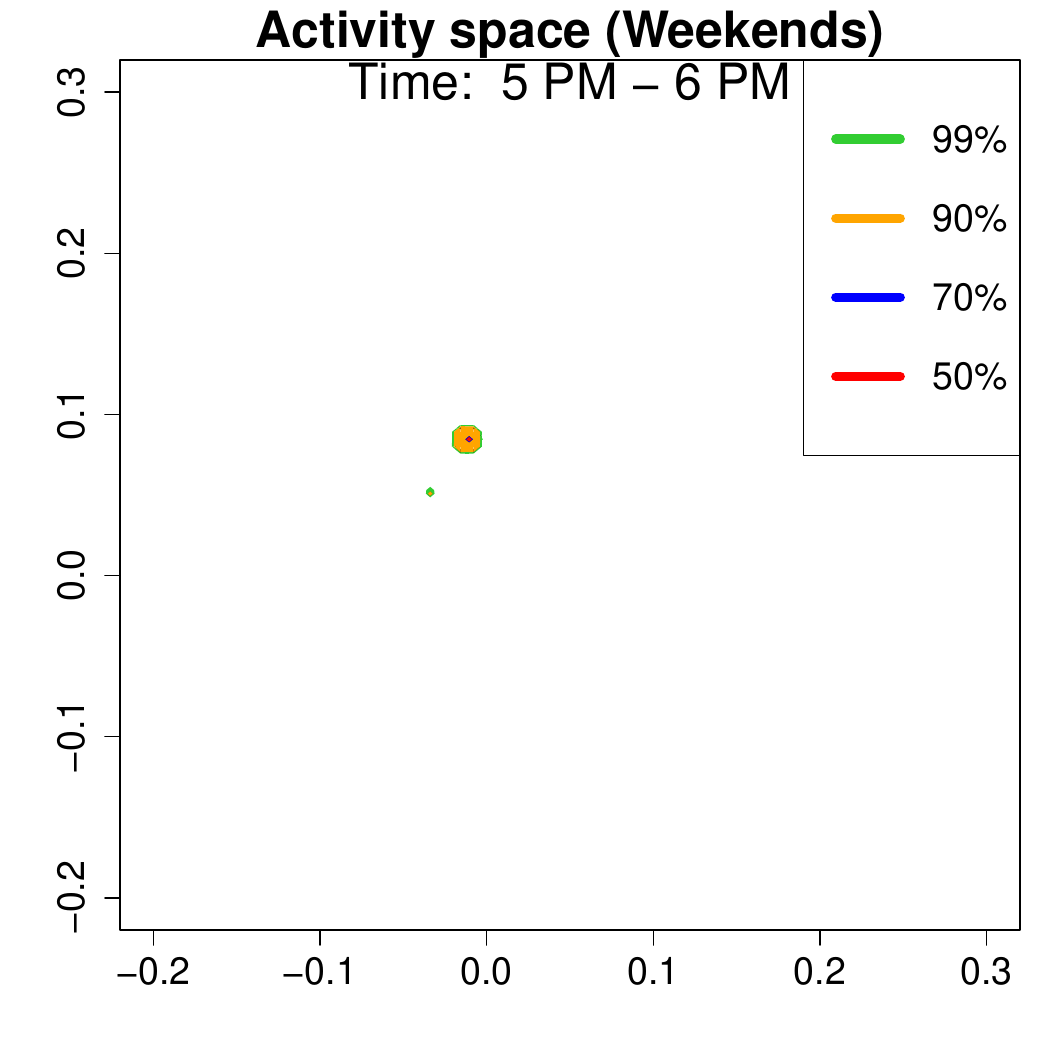}
		\caption{The hourly GPS log-density and activity space of the example individual during the weekends.
			%Top row displays the logarithm of the GPS density and the bottom row shows the corresponding activity space
			%of $99\%$ (green), $90\%$ (orange), $70\%$ (blue), and $50\%$ (red) average time. 
			%We display the interval-specific GPS density of the individual during 1-2 AM (sleep time), 7-8 AM (morning rush hour), 12-1 PM (noon), 
			%and 5-6 PM (evening rush hour).
			}
		\label{fig:p125we}
	\end{figure}
	
	{\bf Analysis on the weekends patterns.}
	Figure~\ref{fig:p125we} shows the GPS density during weekends, revealing distinct mobility patterns compared to weekdays. Notably,
	the office locations (bottom two spots in the third column of Figure~\ref{fig:p125}) show no density.
	Additional high-density spots appear, likely representing leisure locations (e.g., parks, malls, or beaches).
	The fourth column indicates a frequent weekend location near home, which is a plaza containing some restaurants and shops. 
	Comparing Figures~\ref{fig:p125} and~\ref{fig:p125we}, we observe a more dispersed mobility pattern on weekends, 
	with multiple high-density regions, whereas weekdays exhibit a concentrated movement pattern, consistent with regular commuting behavior.

	\begin{figure}
		\centering
		\includegraphics[width=0.25\linewidth]{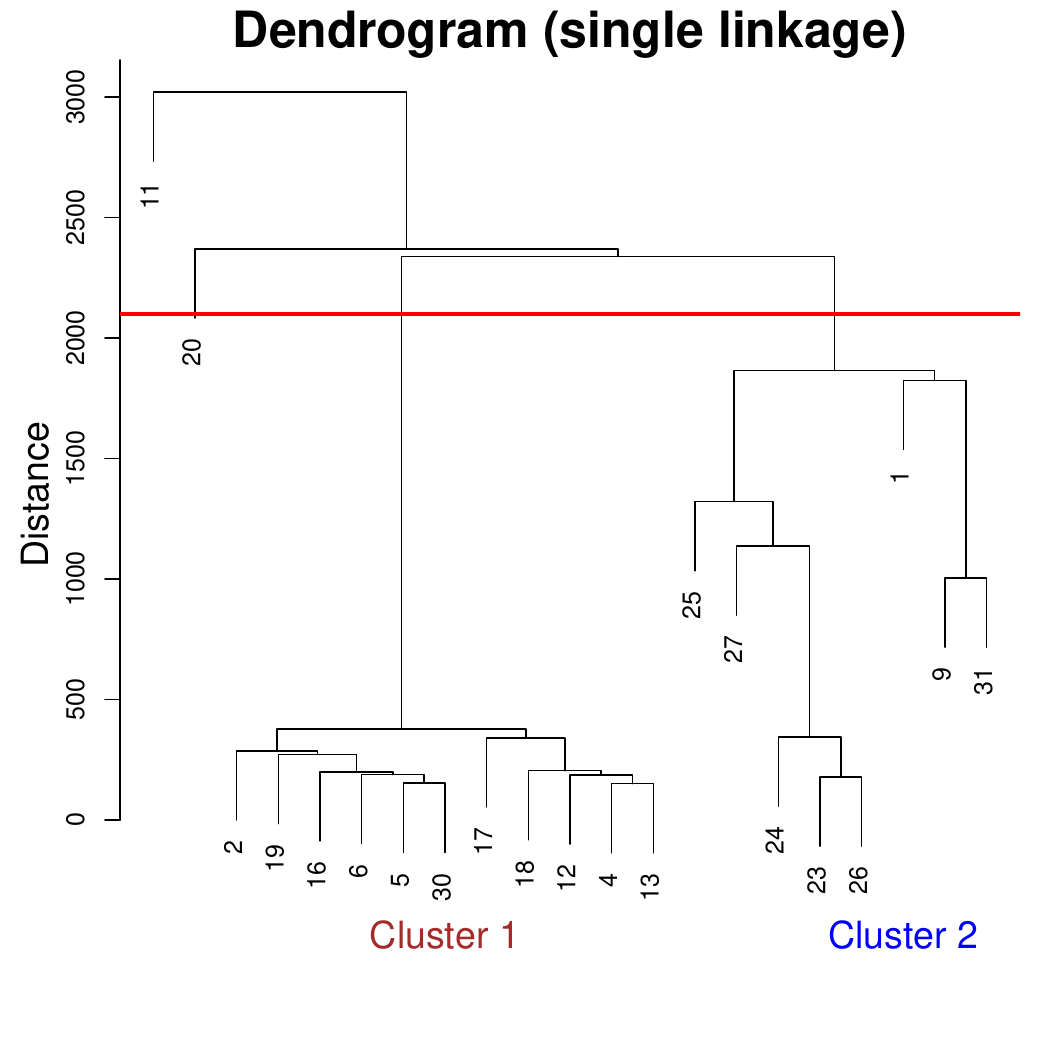}
		\includegraphics[width=0.25\linewidth]{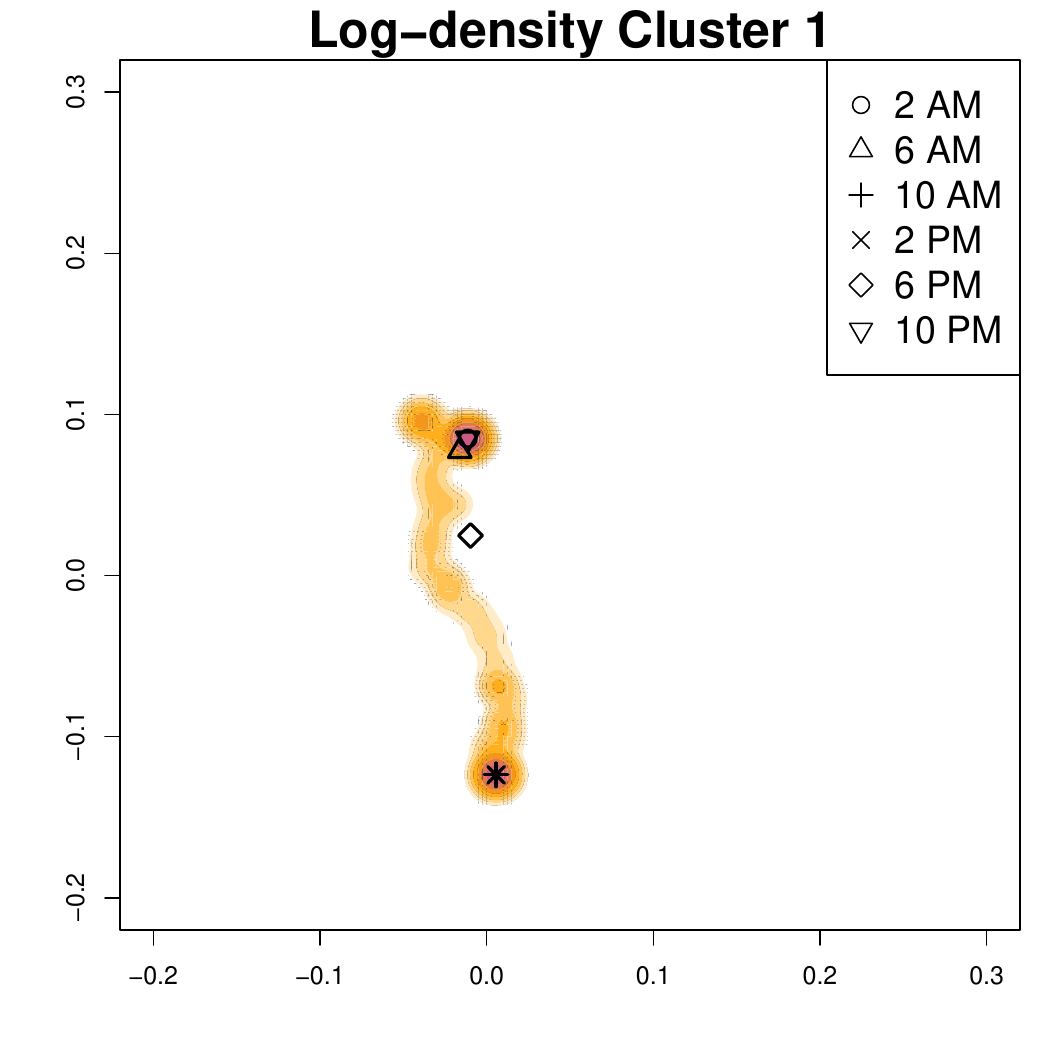}
		\includegraphics[width=0.25\linewidth]{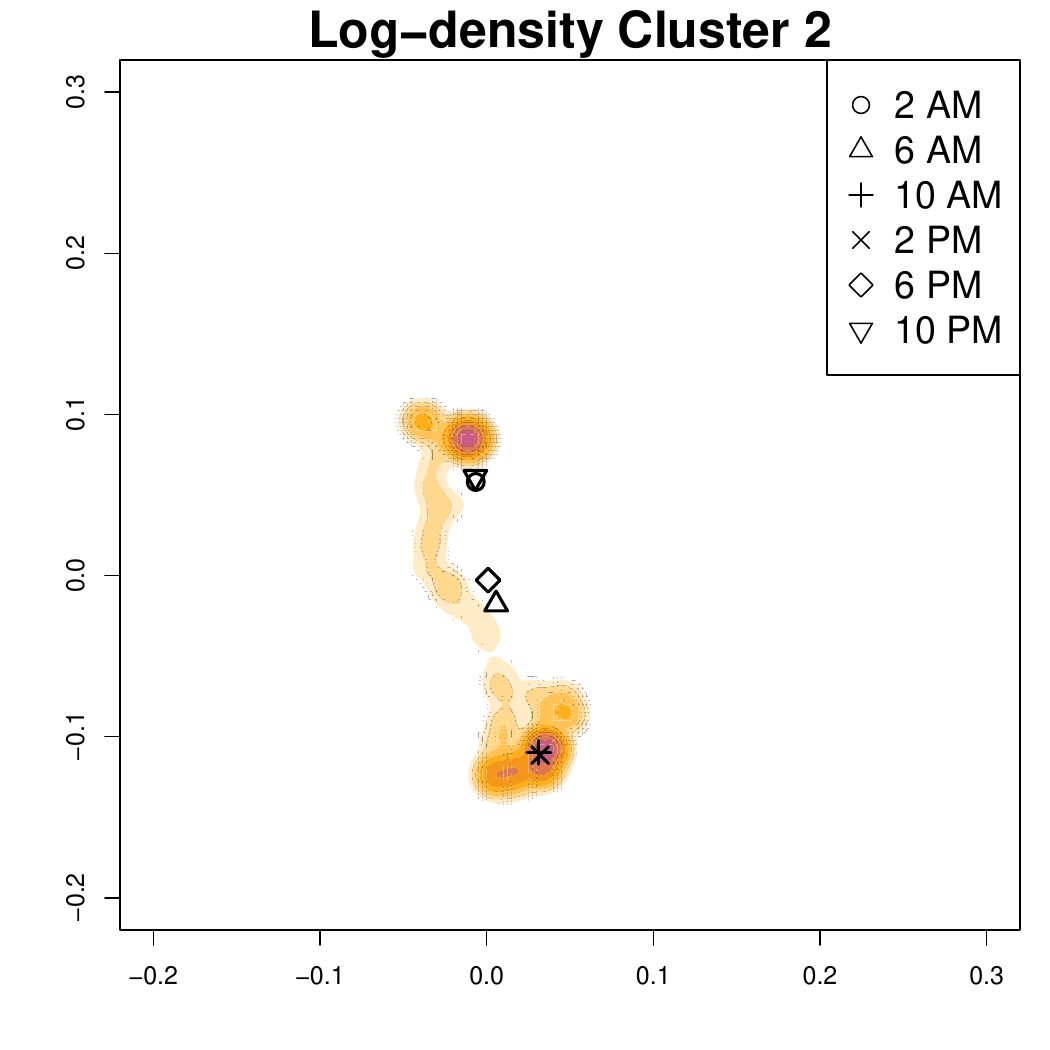}
		\caption{Cluster analysis of the weekdays data. 
			The first panel shows the dendrogram under single linkage clustering.
			We use the threshold as indicated in the red horizontal line to form 4 clusters.
			The third and fourth cluster only contain a single day so we ignore it, focusing on the first two clusters. 
			The middle and right panels display the average GPS density within cluster 1 and cluster 2
			and the corresponding conditional center at different time point.}
		\label{fig:p125clu}
	\end{figure}
	
	{\bf Cluster Analysis on Weekday Mobility.}
	
	Given the structured weekday patterns, we conduct a cluster analysis to further investigate mobility trends. The results are presented in Figure~\ref{fig:p125clu}.
	For each day, we use the same smoothing bandwidths ($h_X$ and $h_T$ ) and compute pairwise distances via the integrated squared difference between log-densities:
	$$
	D_{ij} = \int \left[\log(\hat f_{c,i}(x)+1) - \log(\hat f_{c,j}(x)+1)\right]^2 dx.
	$$
	Applying single-linkage hierarchical clustering, the dendrogram (first panel of Figure~\ref{fig:p125clu}) identifies four clusters, separated by the red horizontal line.	
	Cluster 3 contains Day 11 exclusively. All GPS records on Day 11 are located at home, which corresponds to a wildfire event in the individual's region that forced he or she to remain indoors throughout the day.  So we classify it as an outlier.
	%So we classify it as an outlier (see Section \ref{sec::outlier}). 
	And cluster 4 also consists of one day (Day 20).  On this day, the individual made an additional stop when going back home from office, visiting an entertainment venue that was only recorded once in the entire dataset. 
	 
	 %% HY-final 6: the weekday cluster pattern updates cluster1: home+office; cluster 2: home+office+alternative home
	We focus on clusters 1 and 2, comparing their average GPS densities and conditional centers at six different time points. 
		Although both clusters represent commuting patterns, they exhibit distinct anchor locations around the workplace and different evening behaviors.
		In the lower part of the second and third panels of Figure~\ref{fig:p125clu}, Cluster 1 shows two high-density regions in the evening, primarily at home and near the office, whereas Cluster 2 contains additional high-density area representing an alternative place to stay overnight (to the right of the main office location) and a park nearby. 
		In particular, on days in Cluster 2, the individual sometimes stays in an alternative home overnight after leaving office rather than returning to the primary home, 
		 or visits the park, in the evening (e.g., for a walk) after staying in the alternative home for about 2 hours (and then goes back to the alternative home after visiting the park)
		By contrast, in Cluster 1, the individual tends to commute directly between home and office without these additional stops.
		
	To highlight these differences, Figure~\ref{fig:p125clu3} compares log-density estimates for four specific time intervals:  
	7-8 AM (morning rush hour),
	5-6 PM (evening rush hour),
	7-8 PM (post-work activities), and
	9-10 PM (late evening).
	The top row (Cluster 1) exhibits only 2 high-density regions in the evening, while the bottom row (Cluster 2) shows 4 high-density regions (home, office, park and alternative apartment).
	As a result, Cluster 2's pattern emphasizes extended or alternative routines after work including returning to the alternative home or a visit to the park near the work place.
	The clustering analysis uncovers two distinct weekday mobility patterns:
	Cluster 1: The individual commutes directly between home and office.
	Cluster 2: The individual has two options after working: the individual goes back home, otherwise, the individual goes back to the alternative home (sometimes also containing the visit to park after working).
	This method effectively reveals hidden mobility structures in the data, demonstrating the utility of GPS density estimation in understanding individual movement behavior.

	\begin{figure}
		\centering
		\includegraphics[width=0.2\linewidth]{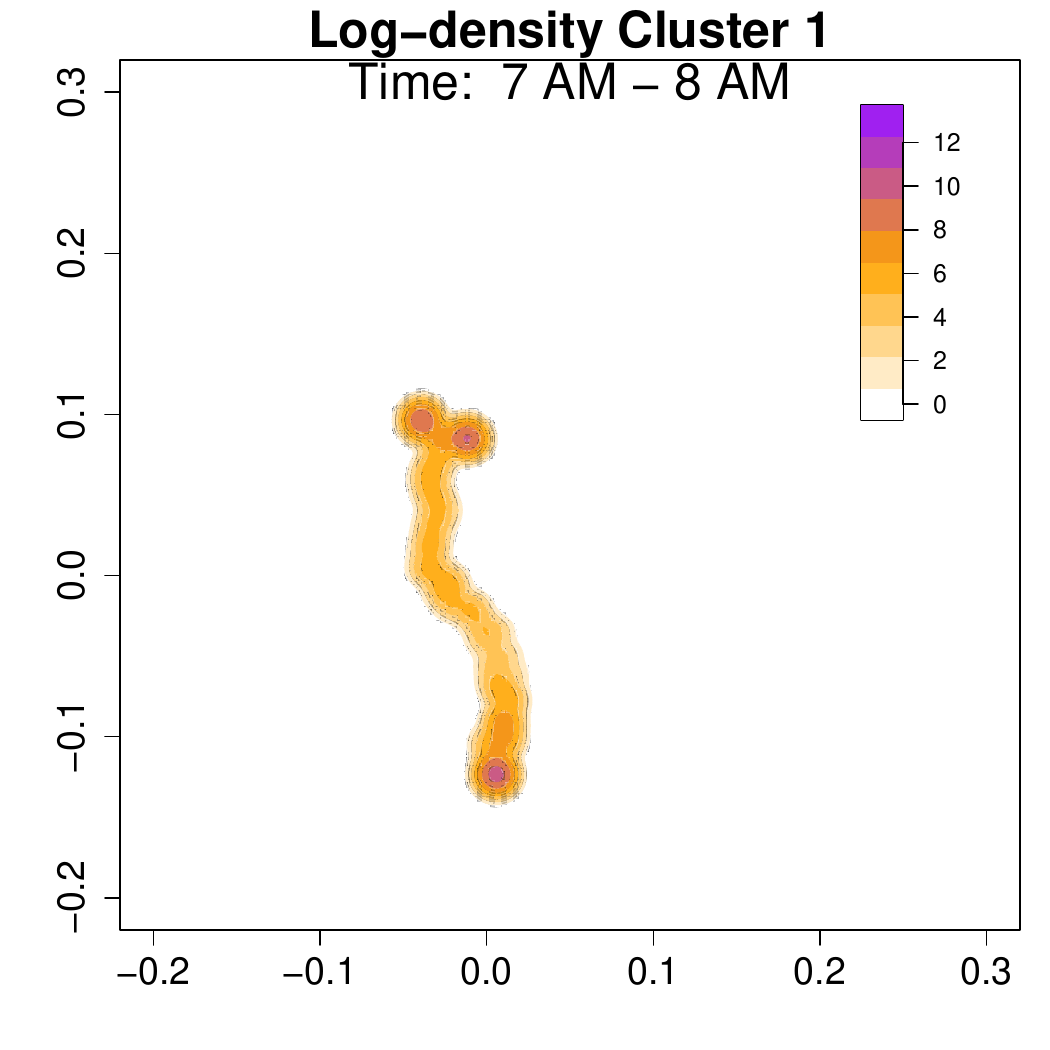}
		\includegraphics[width=0.2\linewidth]{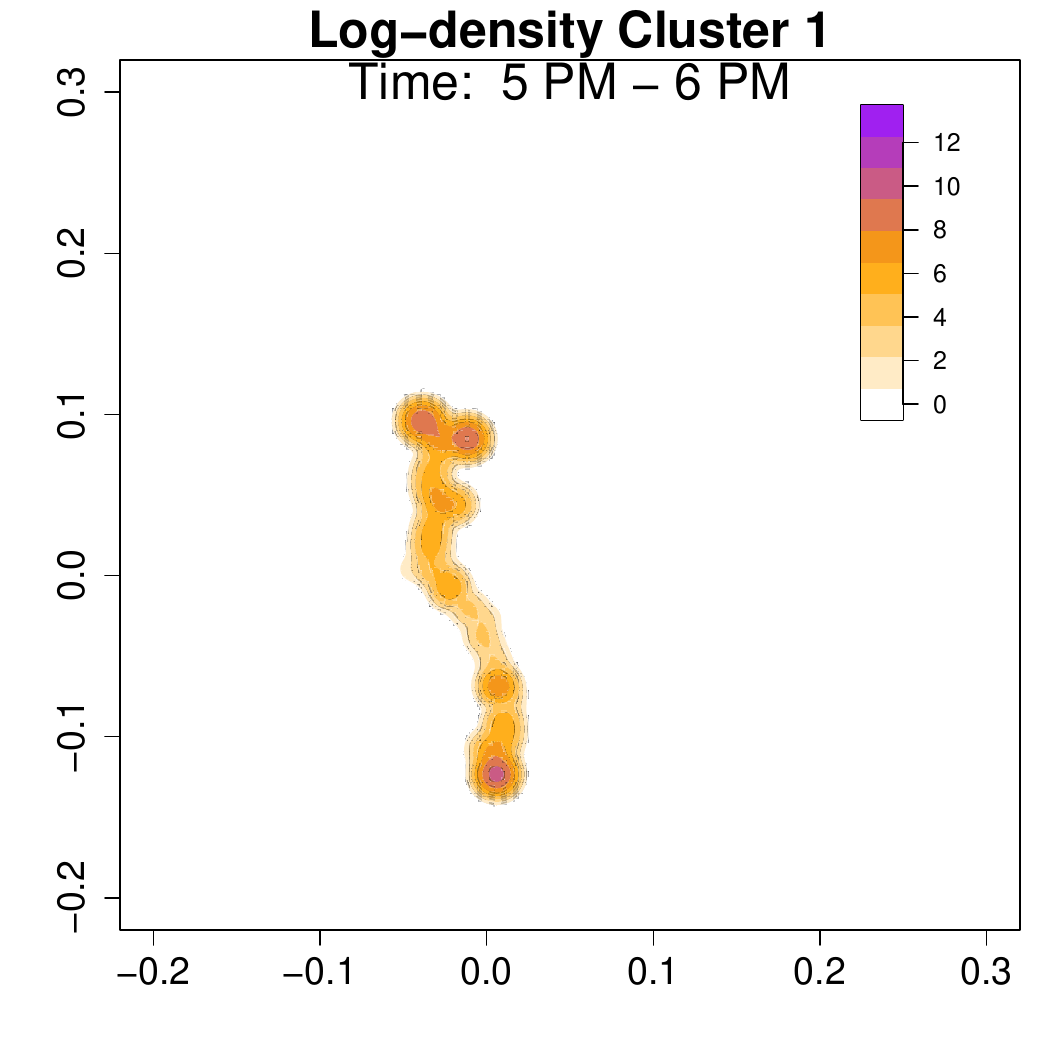}
		\includegraphics[width=0.2\linewidth]{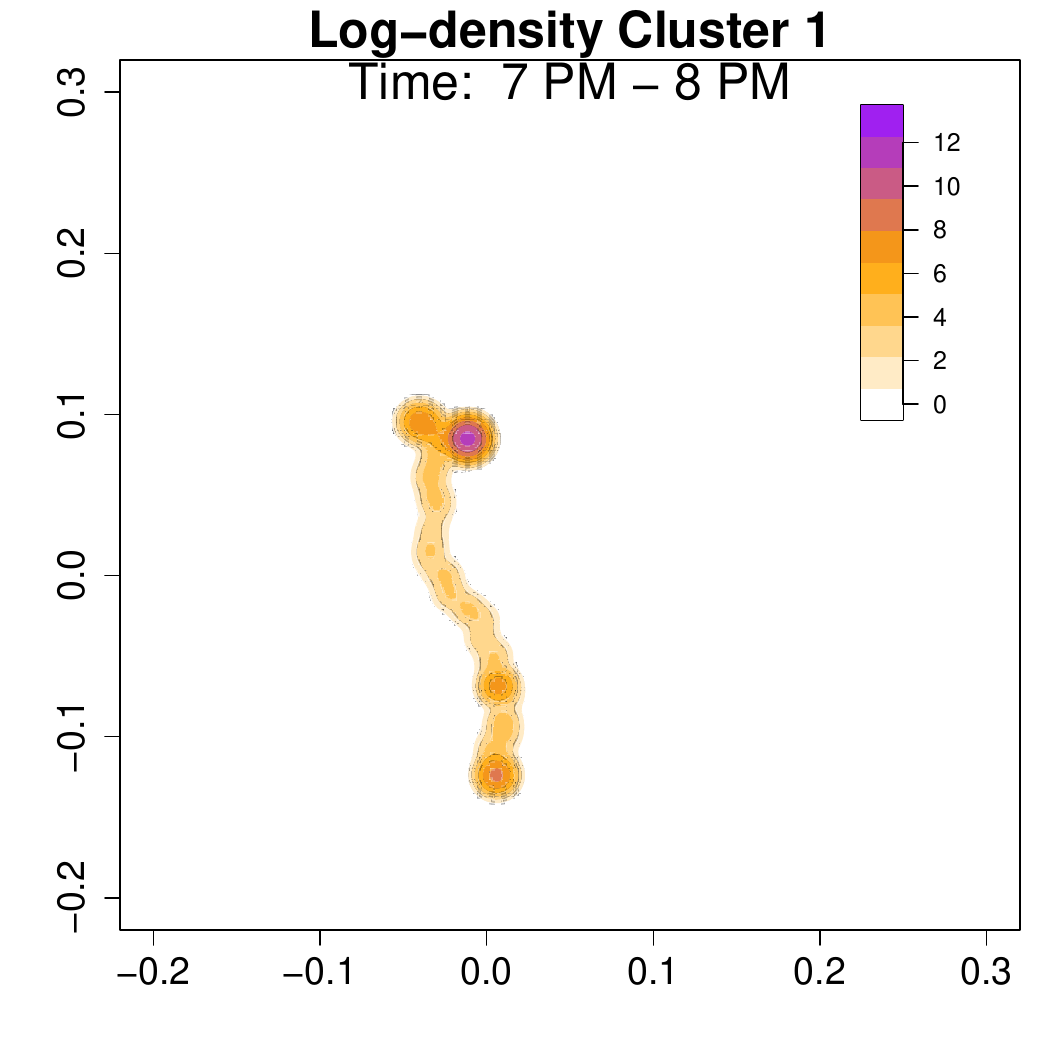}
		\includegraphics[width=0.2\linewidth]{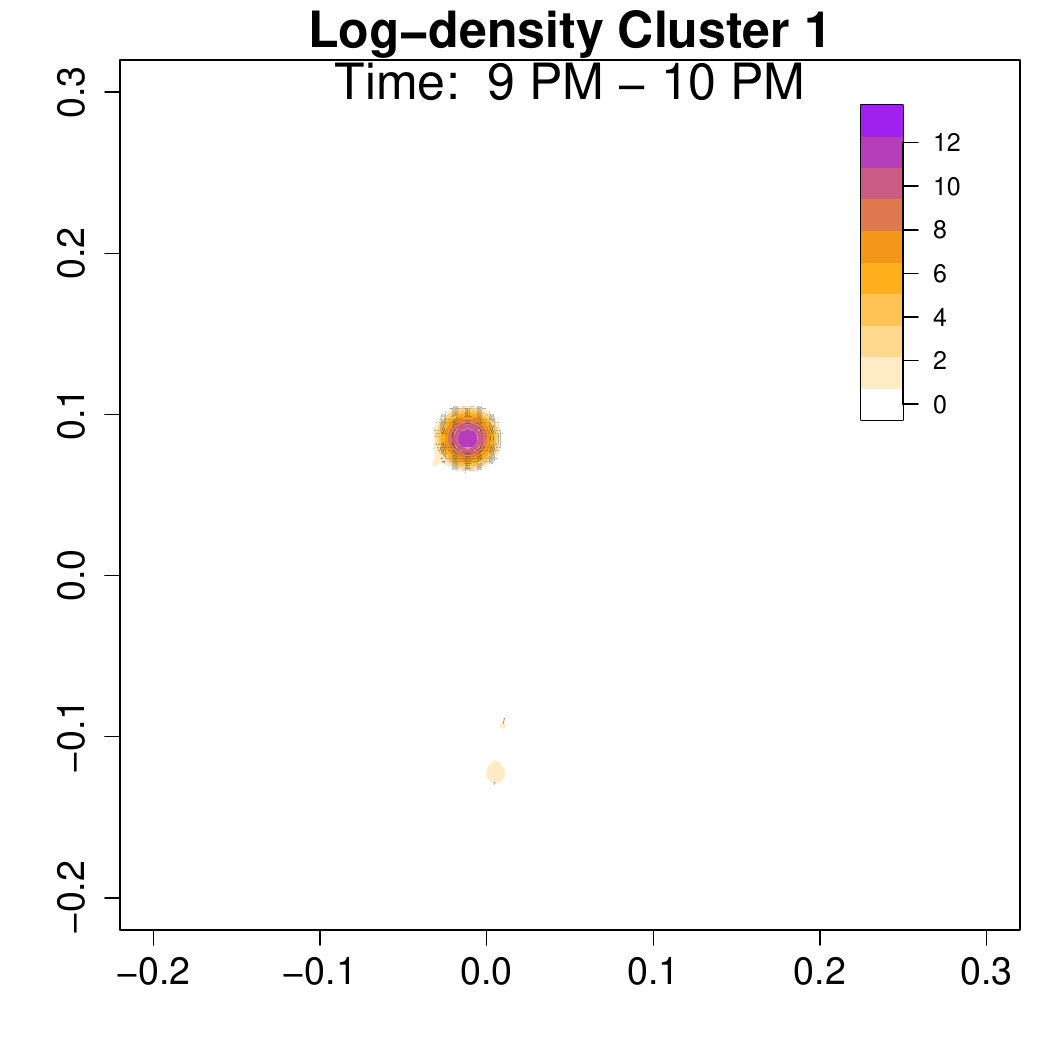}
		\includegraphics[width=0.2\linewidth]{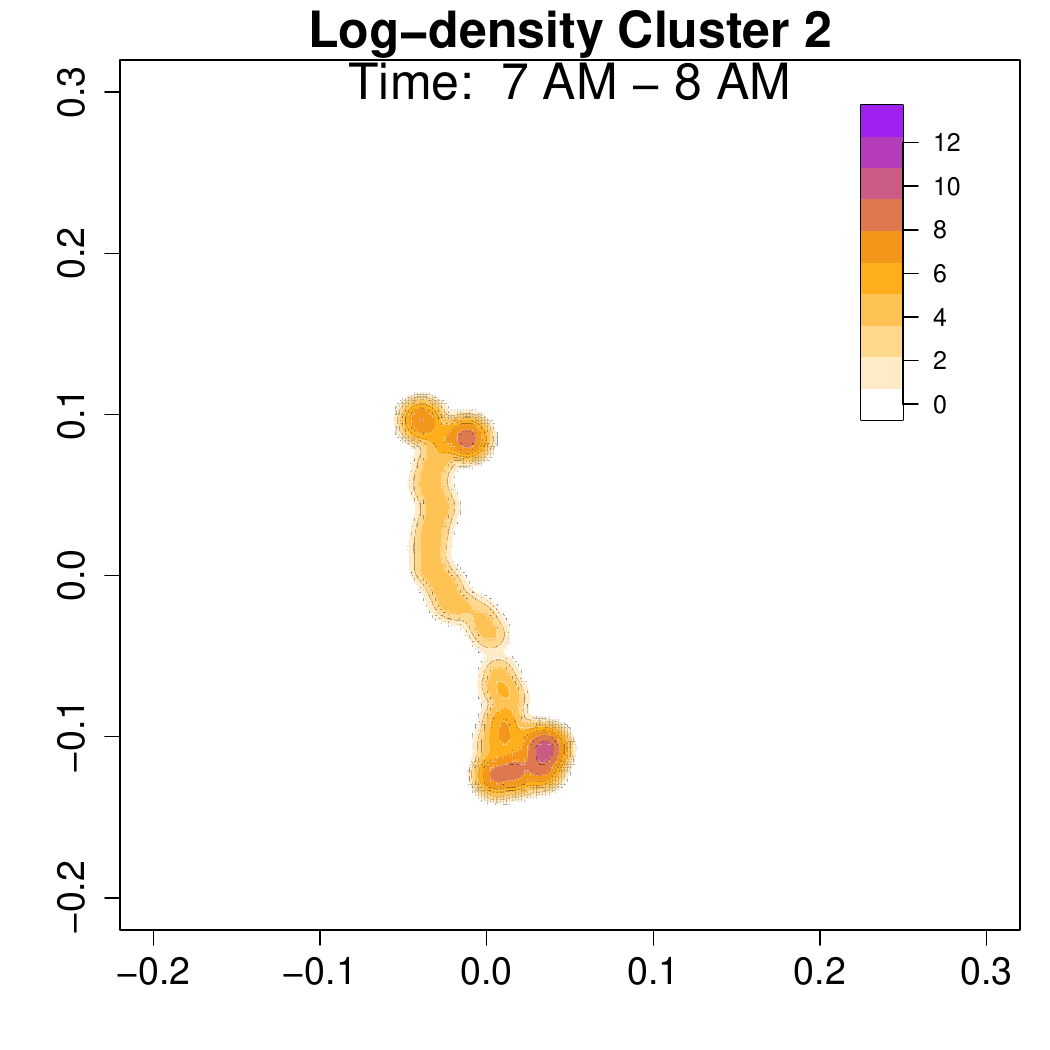}
		\includegraphics[width=0.2\linewidth]{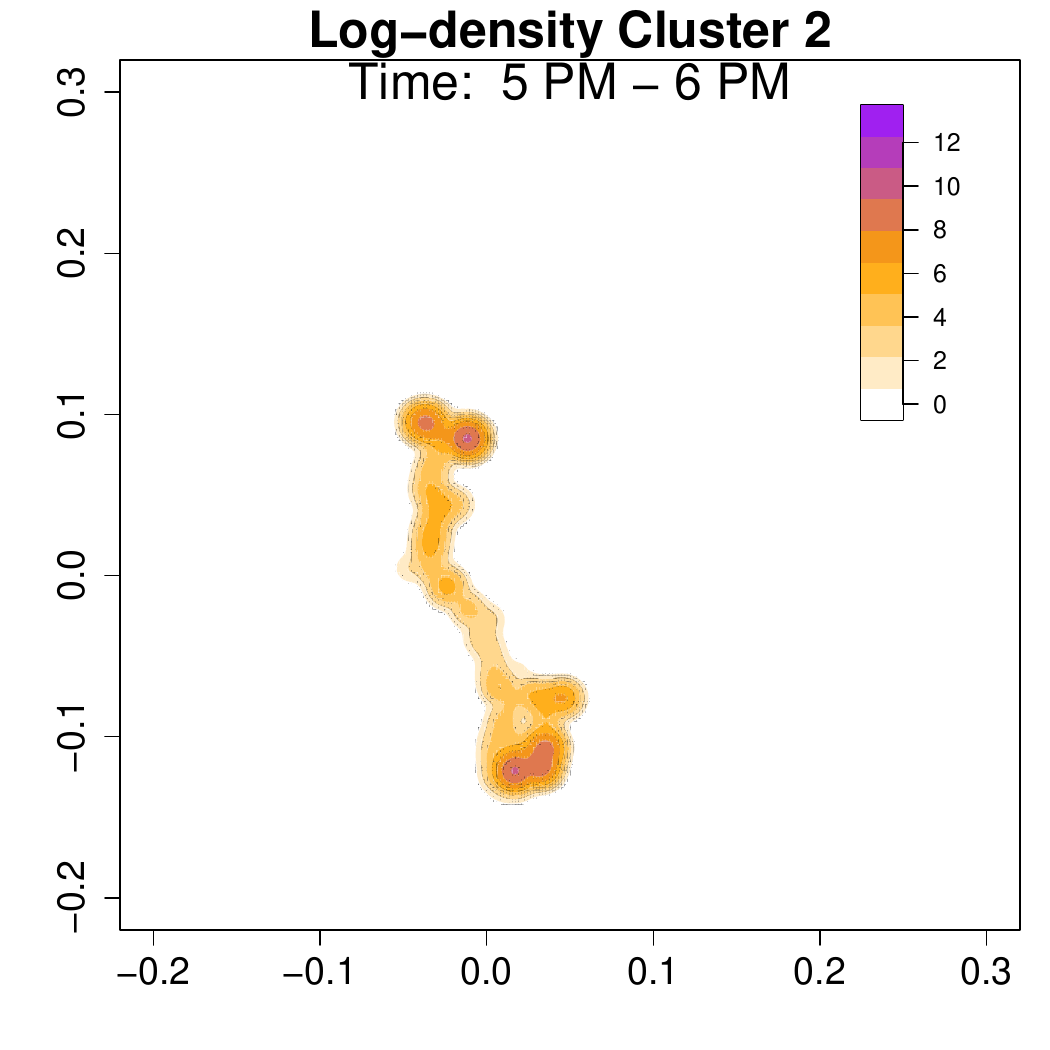}
		\includegraphics[width=0.2\linewidth]{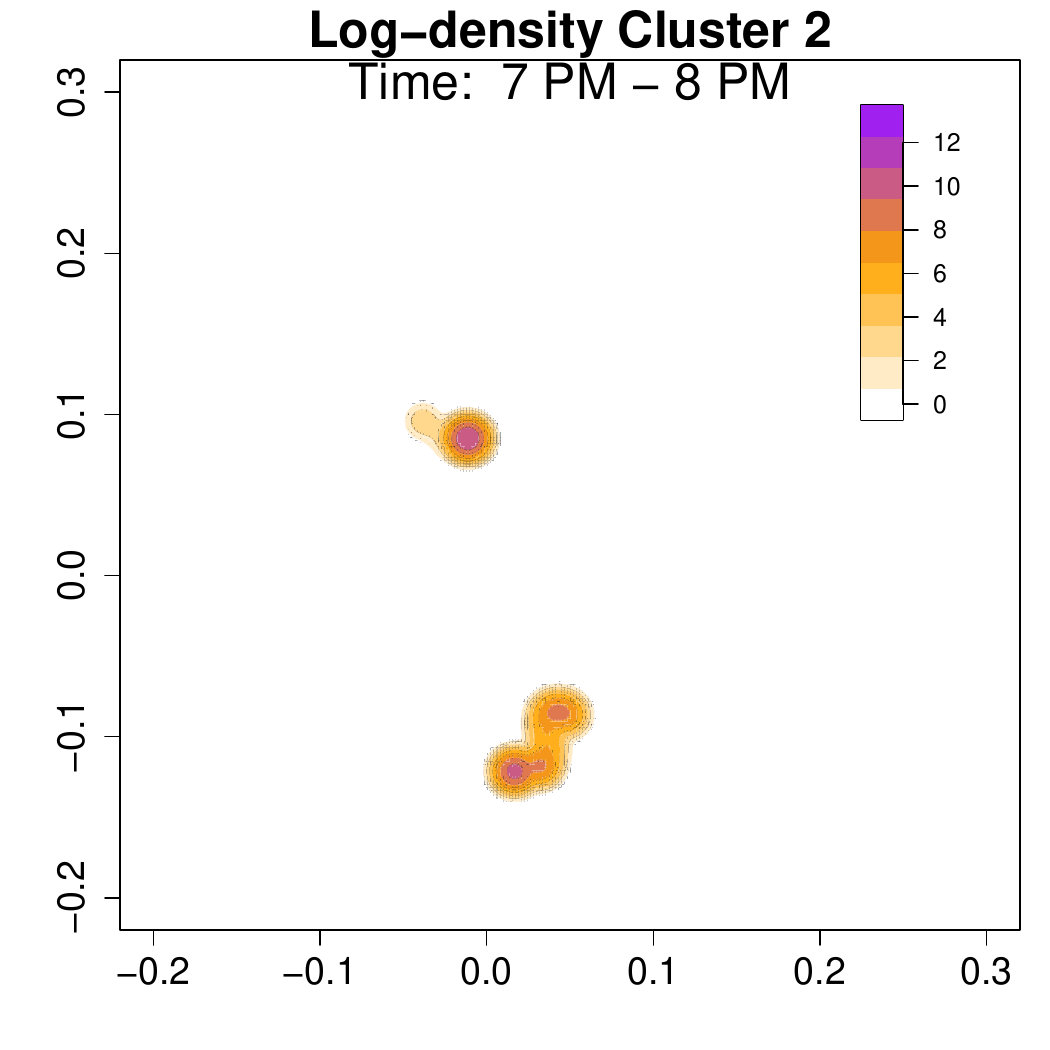}
		\includegraphics[width=0.2\linewidth]{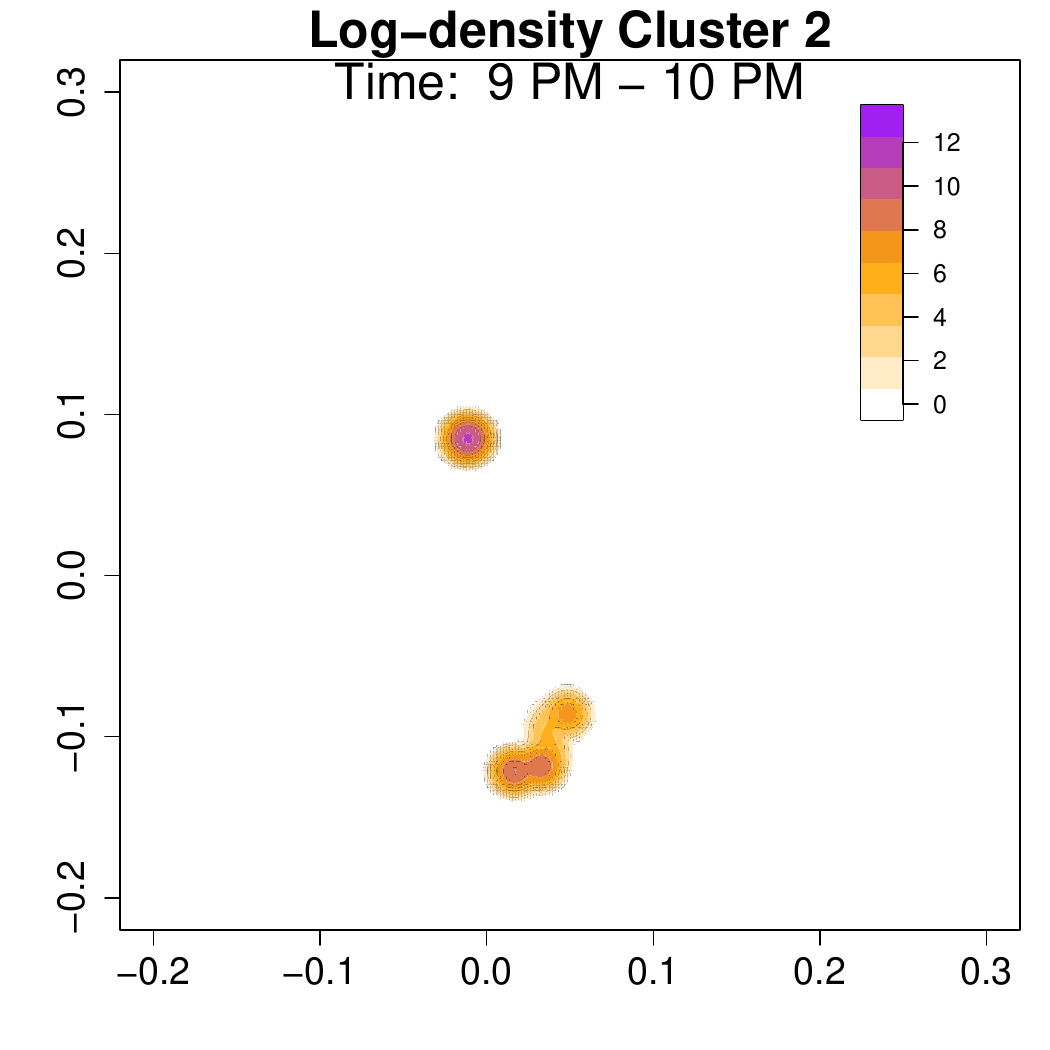}
		\caption{A comparison of the log-density between the two clusters at four specific time intervals: 7-8 AM, 5-6 PM, 7-8 PM, and 9-10 PM.
			In the morning rush hour, the two clusters show a similar pattern. 
			In the evening time (second to the last columns), the two cluster show significantly different patterns.}
		\label{fig:p125clu3}
	\end{figure}

	\section{Discussion and future directions}
	In this paper, we introduced a statistical framework for analyzing GPS data. The general framework presented in Section~\ref{sec::gps} applies to GPS data collected from various entities, extending beyond human mobility. For studying human activity patterns, we developed the SMM in Section~\ref{sec::smm}. Despite its simplicity, the SMM effectively captures major movement patterns within a single day and enables meaningful analyses. Our framework lays the foundation for numerous future research directions.

	{\bf Population study. }
	While this paper primarily focuses on analyzing the activity patterns of a single individual, extending our framework to a population-level analysis remains an open challenge. One key difficulty lies in the variation of anchor locations across individuals, as each person has distinct home and workplace locations. To accommodate population-level studies, modifications to the SMM are necessary to generalize anchor locations and account for heterogeneous mobility patterns. Also, a population-level study will lead to an additional asymptotic regime where the number of individual increases. Together with the two asymptotic regimes of the current framework (number of days and temporal resolution), we encounter a scenario with three asymptotics. The convergence rate and choice of tuning parameters will require adjustments. 

	{\bf Connections to functional data. }
	Our framework transforms GPS data into a density map, establishing interesting connections with functional data analysis \citep{ramsay2002applied, wang2016functional}. In particular, the daily average GPS density 
	$\hat f_{GPS, i}$
	for an individual can be viewed as independent and identically distributed (IID) 2D functional data under the assumption that timestamps are generated in an IID setting. Existing functional data analysis techniques may therefore be applicable, offering new perspectives on GPS-based modeling. Notably, methods introduced in Section~\ref{sec::smm} suggest potential links between structural assumptions in the SMM and those in functional data analysis. 
	In addition, when the sampling frequency is high, the two coordinates in the GPS data can be separated viewed as two functional data with a common but irregular spacing \citep{gervini2009detecting}. Under this point of view, we may directly apply methods from functional data to handle GPS data. 
	This alternative perspective may lead to a different framework for analyzing GPS data. Further exploration of these connections is an important research direction.
	
	{\bf Limiting regime of measurement errors. }
	Our current analysis considers asymptotic limits where the number of days $n\rightarrow\infty$ and
	the temporal resolution $|t_{i,j}-t_{i,j+1}|\rightarrow0$. 
	Another intriguing limit involves the measurement noise level, where $\E(\|\epsilon\|^2)\rightarrow 0$,
	reflecting advancements in GPS technology. However, as measurement errors vanish, the density function
	$f_{GPS}$ 
	becomes ill-defined. In this noiseless scenario, the distribution of a random trajectory point
	$X^* = X(U)$ 
	consists of a mixture of point masses (at anchor locations) and 1D distributions representing movement along trajectories. 
	Addressing the challenges posed by singular measures \citep{10.1214/19-AOAS1311, chen2019generalized} in this regime is a crucial area for future investigation.
	
	{\bf Alternative random trajectory models. }
	While the SMM proposed in Section~\ref{sec::smm} provides practical utility and meaningful insights, it does not precisely model an individual's movement. Specifically, the SMM assumes a constant velocity when an individual is in motion, leading to a step-function velocity profile with abrupt jumps at transitions between movement and stationary states. This assumption deviates from realistic human mobility patterns. Developing more sophisticated models that accurately capture velocity dynamics and movement variability remains an open problem for future research.

	{\bf Topological data analysis. }
	Topological data analysis (TDA; \citealt{wasserman2018topological, chazal2021introduction}) is a data analysis technique to study the shape of features inside data. 
	A common application of TDA is to study the morphology of the probability density function. 
	The TDA can be applied to analyze GPS data in two possible ways.
	The first application is to use TDA to analyze the morphology of
	GPS density functions. This allows us to further understand the activity space.
	The second applications is to use TDA to investigate the latent trajectories generated the GPS data. 
	The common/persistent features of trajectories informs us important characteristics
	of the individual's activity patterns.

	%{\bf Directional data correction. }
	%In this paper, we treat the time as if it distributed on the interval $[0,1]$.
	%However,
	%the time $0$ and time $1$ are actually both referring to 0 AM.
	%This could cause some bias if the time point of interest $t_0$ is close to $0$ or $1$.
	%A better way to account for this is to use techniques from directional statistics \citep{mardia2009directional}
	%and treat the time coordinate as a circle.
	%Also, in our analysis, we separately study weekends and weekdays. 
	%The night time of Friday is close to the beginning of Saturday and 
	%the night time of Sunday is close to the beginning of Monday.
	%To obtain a precise analysis, some additional corrections may be needed. 

	{\small
		\bibliographystyle{apalike}
		\bibliography{AMS}
	}

	\pagebreak
	\bigskip
	\begin{center}
		{\large\bf SUPPLEMENTARY MATERIAL}
	\end{center}
	
	\appendix
	
	\begin{description}
		
		\item[Numerical techniques (Section \ref{sec::numerics}):]
		We provide some useful numerical techniques for computing our estimators.
		
		\item[Simulations (Section \ref{sec::simulation}):] We provide a simulation study to investigate the performance of our method.
		
		\item[Assumptions (Section \ref{sec::assum}):]  All technical assumptions for the asymptotic theory are included in this section.
		
		\item[Proofs (Section \ref{sec::proof}):] The proofs of the theoretical results are given in this section.
		
		\item[Data and R scripts:] See the online supplementary file.
		
	\end{description}

	\section{Numerical techniques}	\label{sec::numerics}
	
	\subsection{Conditional estimator as a weighted estimator} \label{sec::ckde::comp}
	Recall that the estimator $\hat f_c(x)$
	in equation \eqref{eq::int_ckde} is
	\begin{equation*}
		\begin{aligned}
			\hat f_c(x) = \int_0^1 \hat f(x|t) dt &= \frac{1}{nh^2}\sum_{i=1}^n \sum_{j=1}^{m_i}  \tilde{W}_{i,j} \cdot K\left(\frac{X_{i,j}-x}{h}\right),\\
			\tilde{W}_{i,j} &=  n\cdot \int_0^1\frac{K_T\left(\frac{d_T(t_{i,j},t)}{h_T}\right)}{\sum_{i'=1}^n\frac{1}{m_{i'}}\sum_{j'=1}^{m_{i'}}K_T\left(\frac{d_T(t_{i',j'},t)}{h_T}\right)}dt
		\end{aligned}
	\end{equation*}
	and the conditional PDF in equation \eqref{eq::ckde},
	$$
	\hat f(x|t) =  \frac{\sum_{i=1}^n\frac{1}{m_i} \sum_{j=1}^{m_i} K\left(\frac{X_{i,j}-x}{h_X}\right)K_T\left(\frac{d_T(t_{i,j},t)}{h_T}\right)}{
		h_X^2 \cdot \sum_{i'=1}^n \frac{1}{m_{i'}}\sum_{j'=1}^{m_{i'}} K_T\left(\frac{d_T(t_{i',j'},t)}{h_T}\right)} = 
	\frac{1}{nh^2}\sum_{i=1}^n \sum_{j=1}^{m_i}  \omega_{i,j}(t) \cdot K\left(\frac{X_{i,j}-x}{h}\right),
	$$
	where 
	$$
	\omega_{i,j}(t) =n\cdot  \frac{K\left(\frac{X_{i,j}-x}{h_X}\right)K_T\left(\frac{d_T(t_{i,j},t)}{h_T}\right)}{\sum_{i'=1}^n \frac{1}{m_{i'}}\sum_{j'=1}^{m_{i'}} K_T\left(\frac{d_T(t_{i',j'},t)}{h_T}\right)}.
	$$
	is a local weight at time $t $.
	So both estimators are essentially the weighted 2D KDE with different weights. 
	This allows us to quickly compute the estimator using existing kernel smoothing library such as the \texttt{ks} library in \texttt{R}.
	
	Moreover, the weights are connected by the integral
	$$
	\tilde{W}_{i,j} = \int_0^1  \omega_{i,j}(t). 
	$$
	If we are interested in only the time interval $A\subset [0,1]$,
	we just need to adjust the weight to be
	$$
	\tilde{W}_{A, i,j} = \int_{A}  \omega_{i,j}(t)
	$$
	and rerun the weighted 2D KDE.
	In practice, we often perform a numerical approximation to the integral based on a dense grid of $[0,1]$.
	Computing the weights $\tilde{W}_{A, i,j}$ can be done easily by
	using only the grid points inside the interval $A$.

	\subsection{A fast algorithm for the activity space}	\label{sec::num::as}
	
	Recall that the $\rho$-activity space is
	\begin{equation*}
		\begin{aligned}
			\hat Q_\rho &= \min_{\lambda}\left\{ \hat{ \mathcal{L}}^*_\lambda:  \frac{1}{n}\sum_{i=1}^n \sum_{j=1}^{m_i} W^\dagger_{i,j} I(X_{i,j}\in  \hat{ \mathcal{L}}^*_\lambda ) \geq \rho \right\},\\
			\hat{ \mathcal{L}}^*_\lambda& = \left\{x: \hat {f}(x)\geq\lambda \right\}.
		\end{aligned}
	\end{equation*}

	Numerically, the minimization in $\hat Q_\rho$ is not needed. 
	Since the time-weight EDF $\hat F_w(x)$
	is a distribution function with a probability mass of $\frac{1}{n} W^\dagger_{i,j}$ on $X_{i,j}$.
	Therefore, the summation $\frac{1}{n}\sum_{i=1}^n\sum_{j=1}^{m_i} W^\dagger_{i,j} I(X_{i,j}\in  \hat{ \mathcal{L}}^*_\lambda )$
	remains the same when we vary $\lambda$ without including any observation $X_{i,j}$.
	Using this insight, 
	let $\hat p_{i,j} = \hat f(X_{i,j})$ be the estimated density at $X_{i,j}$.
	We then ordered all observations (across every day) according to the estimated density evaluated at each point
	$$
	\hat p_{(1)}<\hat p_{(2)}<\ldots\hat p_{(N_n)}
	$$
	and let $(X_{(1)}, W^\dagger_{(1)}),\ldots, (X_{(N_n)}, W^\dagger_{(N_n)})$
	be the corresponding location and weight from $(X_{i,j}, W^\dagger_{i,j})$.
	Note that $N_n =  \sum_{i=1}^n m_i$
	is total number of observations.
	
	In this way, if we choose $\lambda = p_{(k)}$,
	the level set $\hat{ \mathcal{L}}^*_{p_{(k)}} $
	will cover all observations of indices $(k), (k+2),\ldots, (N_n)$,
	leading to the proportion of $\frac{1}{n}\left(W^\dagger_{(k)}+W^\dagger_{(k+1)}+\ldots W^\dagger_{(N_n)}\right)$
	being covered inside $\hat{ \mathcal{L}}^*_{p_{(k)}} $.
	Note that the division of $n$ is to account for the summation over $n$ days.
	As a result,
	for a given proportion $\rho$,
	we only need to find the largest index $k^*_\rho$
	such that the upper cumulative sum of $t_{(k)}$ is above $\rho$:
	\begin{equation}
		k^*_\rho = \max\left\{k: \frac{1}{n}\sum_{\ell = k}^{N_n} W^\dagger_{(\ell)}\geq \rho\right\}.
		\label{eq::Q3}
	\end{equation}
	With this, the choice $\lambda = p_{(k^*_\rho)}$
	leads to the set 
	\begin{equation}
		\hat{ \mathcal{L}}^*_{p_{(k^*_\rho)}} = \hat Q_\rho.
		\label{eq::Q4}
	\end{equation}

	\section{Simulation}	\label{sec::simulation}
	\subsection{Design of simulations}
	\begin{table}
		\centering
		\begin{tabular}{|c|c|c|c|c|c|}
			\hline
			Pattern                    & Probability            & \begin{tabular}[c]{@{}c@{}}Expected duration\\ (Hours)\end{tabular} & $\eta$ & $q$  & Actions              \\ \hline
			\multirow{7}{*}{Pattern 1} & \multirow{7}{*}{15/28} & 9                                                                  & 0.15   & 0.5  & At home              \\ \cline{3-6} 
			&                        & 0.5                                                                & 0.08   & 0.25 & Home to office       \\ \cline{3-6} 
			&                        & 8                                                                  & 0.15   & 0.5  & At office            \\ \cline{3-6} 
			&                        & 0.6                                                                & 0.08   & 0.25 & Office to home       \\ \cline{3-6} 
			&                        & 2                                                                  & 0.2    & 0.6  & At home              \\ \cline{3-6} 
			&                        & 1                                                                  & 0.06   & 0.15 & Home to park to home \\ \cline{3-6} 
			&                        & 2.9                                                                &        &      & Home                 \\ \hline
			\multirow{9}{*}{Pattern 2} & \multirow{9}{*}{5/28}  & 8.5                                                                & 0.15   & 0.5  & Home                 \\ \cline{3-6} 
			&                        & 0.5                                                                & 0.08   & 0.25 & Home to office       \\ \cline{3-6} 
			&                        & 8                                                                  & 0.15   & 0.5  & At office            \\ \cline{3-6} 
			&                        & 0.75                                                               & 0.08   & 0.25 & Office to restaurant \\ \cline{3-6} 
			&                        & 1                                                                  & 0.08   & 0.25 & At restaurant        \\ \cline{3-6} 
			&                        & 0.4                                                                & 0.08   & 0.25 & Restaurant to home   \\ \cline{3-6} 
			&                        & 0.95                                                               & 0.1    & 0.3  & At home              \\ \cline{3-6} 
			&                        & 1                                                                  & 0.06   & 0.15 & Home to park to home \\ \cline{3-6} 
			&                        & 2.9                                                                &        &      & At home              \\ \hline
			\multirow{5}{*}{Pattern 3} & \multirow{5}{*}{4/28}  & 11                                                                 & 0.3    & 1    & At home              \\ \cline{3-6} 
			&                        & 0.75                                                               & 0.15   & 0.45 & Home to supermarket  \\ \cline{3-6} 
			&                        & 2.5                                                                & 0.3    & 1    & At supermarket       \\ \cline{3-6} 
			&                        & 0.75                                                               & 0.15   & 0.45 & Supermarket to home  \\ \cline{3-6} 
			&                        & 9                                                                  &        &      & At home              \\ \hline
			\multirow{5}{*}{Pattern 4} & \multirow{5}{*}{1/28}  & 10                                                                 & 0.3    & 1    & At home              \\ \cline{3-6} 
			&                        & 0.8                                                                & 0.3    & 0.7  & Home to beach        \\ \cline{3-6} 
			&                        & 5.7                                                                & 0.35   & 1    & Beach                \\ \cline{3-6} 
			&                        & 0.8                                                                & 0.3    & 0.7  & Beach to home        \\ \cline{3-6} 
			&                        & 6.7                                                                &        &      & At home              \\ \hline
			Pattern 5                  & 3/28                   & 24                                                                 &        &      & At home              \\ \hline
		\end{tabular}
		\caption{An SMM model that we use for simulation. Specifically, there are totally five daily activity patterns used for simulation. 
			Each location’s duration follows a truncated normal distribution with mean 
			(Expected Duration), standard deviation \(\eta\), and half-width \(q\). 
			Times shown are in hours. The final row for each pattern (where \(\eta\) and \(q\) are blank) absorbs any remaining time up to 24 hours.}
			%% HY-final 7: Add more on caption of this description of each pattern
		\label{Table:pattern_detail}
	\end{table}
	%\begin{figure} \centering \includegraphics[width=0.85\linewidth]{Map and pattern.png} \caption{Intrinsic map and the fictive person's activity pattern.} \label{fig:map and pattern} \end{figure}
	
	To evaluate the performance of the proposed activity density estimators, we construct a simplified scenario involving a single fictitious individual residing in a small-scale world. This world contains five anchor points: home, office, restaurant, beach, and supermarket connected by several routes (Figure \ref{fig::map_pattern}, the first panel).
	
	{\bf Activity patterns.} 
	
	We design five distinct daily activity patterns for this individual, illustrated in the 2nd to 6th panel in Figure \ref{fig::map_pattern}. Specific details including the order of visits, the expected time spent at each location, and the probability of selecting each pattern are summarized in Table \ref{Table:pattern_detail}. Each day, the individual randomly chooses one of these five patterns.

	For each location in a chosen pattern, the duration is drawn from a truncated normal distribution:
	\[
	z_i \mid b \sim \mathcal{N}\bigl(\mu_{b,i},\,\eta_{b,i}^2\bigr) \quad\text{truncated to}\quad \bigl(\mu_{b,i}-q_{b,i},\;\mu_{b,i}+q_{b,i}\bigr),
	\]
	where \(\mu_{b,i}\) and \(\eta_{b,i}\) are the mean and standard deviation, and \(q_{b,i}\in [0,1]\) sets a reasonable range for the duration. The duration of final location in each pattern is computed as whatever time remains in the 24-hour day once all previous locations have been visited.
	
	There are two reasons we use a truncated Gaussian distribution for the duration of each position:
	\begin{itemize}
	\item First, controlling variance: By adjusting \(\eta_{b,i}\), we can reflect realistic scenarios; for example, weekday departures from home tend to be less variable than weekend departures for a beach visit. Hence, we set the variance of duration of the first position higher on weekends.
	\item Second, ensuring a reasonable time distribution: Truncation avoids unrealistic time of visiting for some places (for example, a restaurant stay should not exceed typical operating hours).
	\end{itemize}
	
	Our ultimate goal is to estimate \(f_{GPS}\) and \(f_{GPS,A}\) over the time interval \(A = \bigl[\tfrac{8}{24},\,\tfrac{10}{24}\bigr]\), which corresponds to the morning rush hour between 8~AM and 10~AM.
	
	%{\color{magenta}YC: please describe the truncated Gaussian! (done)}

	{\bf Timestamp generation.} For simplicity, we assume that every day has equal number of observations, i.e., $m_i = m$,
	and we choose $(n,m)$ as
	\[
	n \in \{7, 30, 90\} \quad
	m \in \{159, 479, 1439\},
	\]
	such choice of $n$ reflects one week, one month, and three months of GPS observation data. 
	The choices of \(m\) correspond to one observation every 9~minutes, 3~minutes, or 1~minute on average.

	To investigate estimator performance under both regular and irregular frequencies of observations, we generate timestamps in two different ways:
	\begin{enumerate}
		\item \textbf{Even-spacing:} In each simulated day, the \(m\) timestamps are evenly spaced at intervals of \(\tfrac{1440}{m+1}\) minutes. Then, the GPS observations are recorded every \(\frac{1440}{m+1}\) minutes.
		
		%    \item \textbf{Random (Uniform):} Within each day, the timestamps \(\{t_{k,i}\}_{i=1}^m\) are sampled i.i.d.\ from \(\mathrm{Uniform}(0,1)\).
		
		\item \textbf{Realistic:} To generate non-uniform timestamps which align with the realistic recording, We use real GPS data described in Section~\ref{sec:real-data} in the following procedure:
		\begin{enumerate}
			\item Randomly select one individual from the real dataset.
			\item For each  day, randomly choose one day from that individual.
			%        \item Calculate the proportion of observations in each hour of the  chosen day.
			\item 
			If the chosen day has fewer than \(m\) real timestamps, generate the additional timestamps by sampling from a Gaussian kernel density estimate of that day's timestamp distribution (using Silverman's rule of thumb for bandwidth).  
			If the chosen day has more than \(m\) real timestamps, randomly remove timestamps until \(m\) timestamps remain, where the probability of removal for each observation is the same.
			
		\end{enumerate}
		This process yields \(m\) timestamps per simulated day, preserving the original non-uniformity found in the selected individual's data. 
		The real data's timestamp distribution is possibly very skew. 
		This procedure preserves the original day-to-day skew and clustering patterns of timestamps found in real data. Figure~\ref{fig:p125_summary} illustrates an example of a highly skewed timestamp distribution for one individual.
		
		%Figure~\ref{fig:timestamps example} presents the histogram of timestamps for one individual over the first eight days of recordings in the real dataset, illustrating that the distribution of timestamps varies both within and across days.
	\end{enumerate}

	{\bf Measurement errors.}	We introduce measurement errors by drawing from a Gaussian noise distribution with \(\sigma = 0.1\) or \(\sigma = 0.2\).
	The corresponding $\log (f_{GPS}+\xi)$ is shown in Figure~\ref{fig:True density}, where $\xi$ is set as 0.0001.
	Note that we use a log-scale because the anchor locations (home and office) have a very high $f_{GPS}$, which would otherwise dominate the scale so that we cannot see the overall pattern well.
	In the top panels, we display $\log (f_{GPS}+\xi)$ while in the bottom panel,
	we focus on the density during the morning rush hour (8 AM to 10 AM). 
	
	Compared with the real-data analysis in Section~\ref{sec:real-data}, here we choose a smaller \(\xi\). This adjustment is caused by the scale of coordinates. The density value will be lower if the data is in a larger spatial region. This smaller \(\xi\) ensures low-frequency or low-duration regions, such as the route between home and the beach, remain distinguishable from truly unvisited areas in both visualization and the further analysis such as clustering. As a result, in subsequent clustering analyses, the reduced \(\xi\) value helps more clearly separate visited versus unvisited locations, thereby improving the distinctiveness among different mobility patterns.
	
	%Notice that the first density \(f\) is highly concentrated at anchor points. If an estimator accurately captures the proportion of time spent around these anchor points, its performance will be strong. For the second density \(g\), achieving lower estimation error requires accurately capturing the density around both the anchor points and the connecting routes.
	
	\begin{figure}
		\centering
		\includegraphics[width=0.35\linewidth]{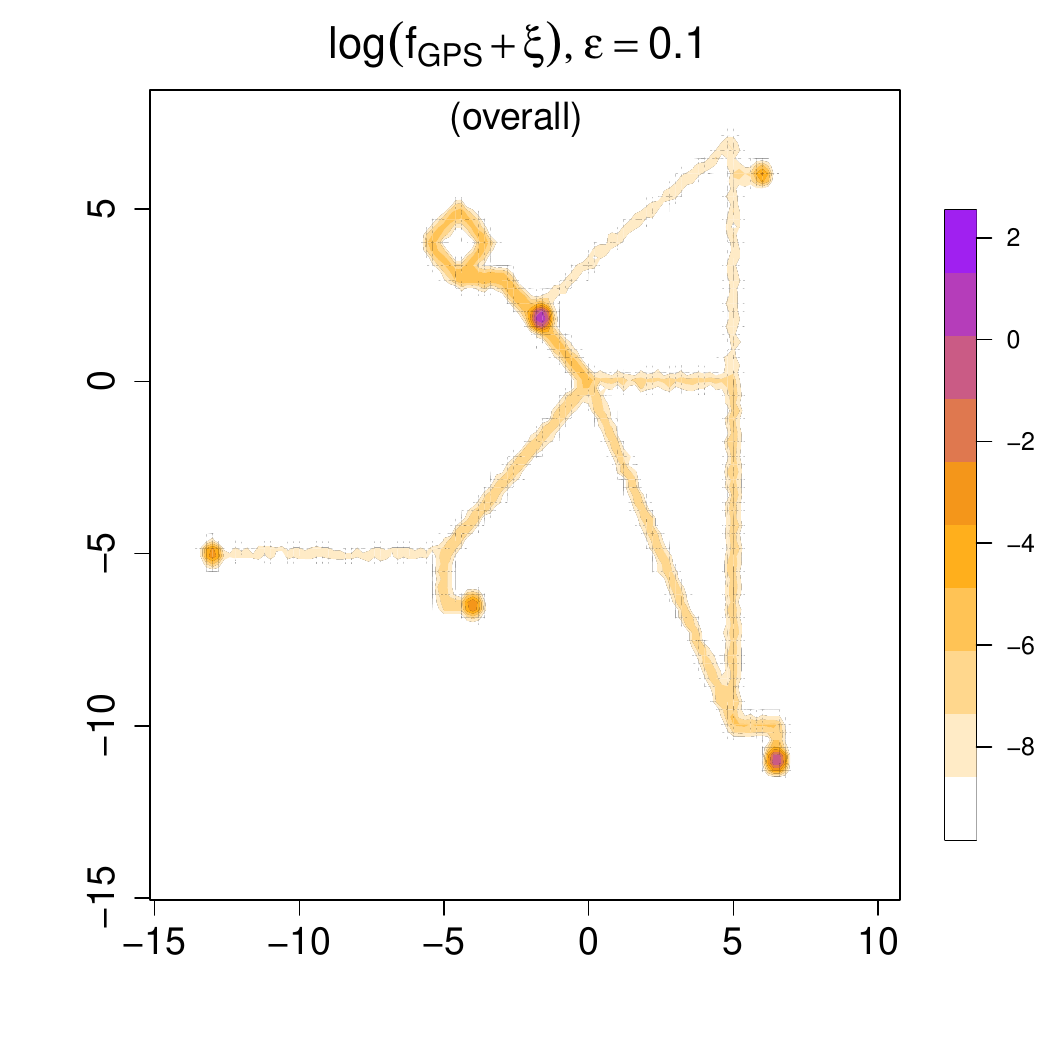}
		\includegraphics[width=0.35\linewidth]{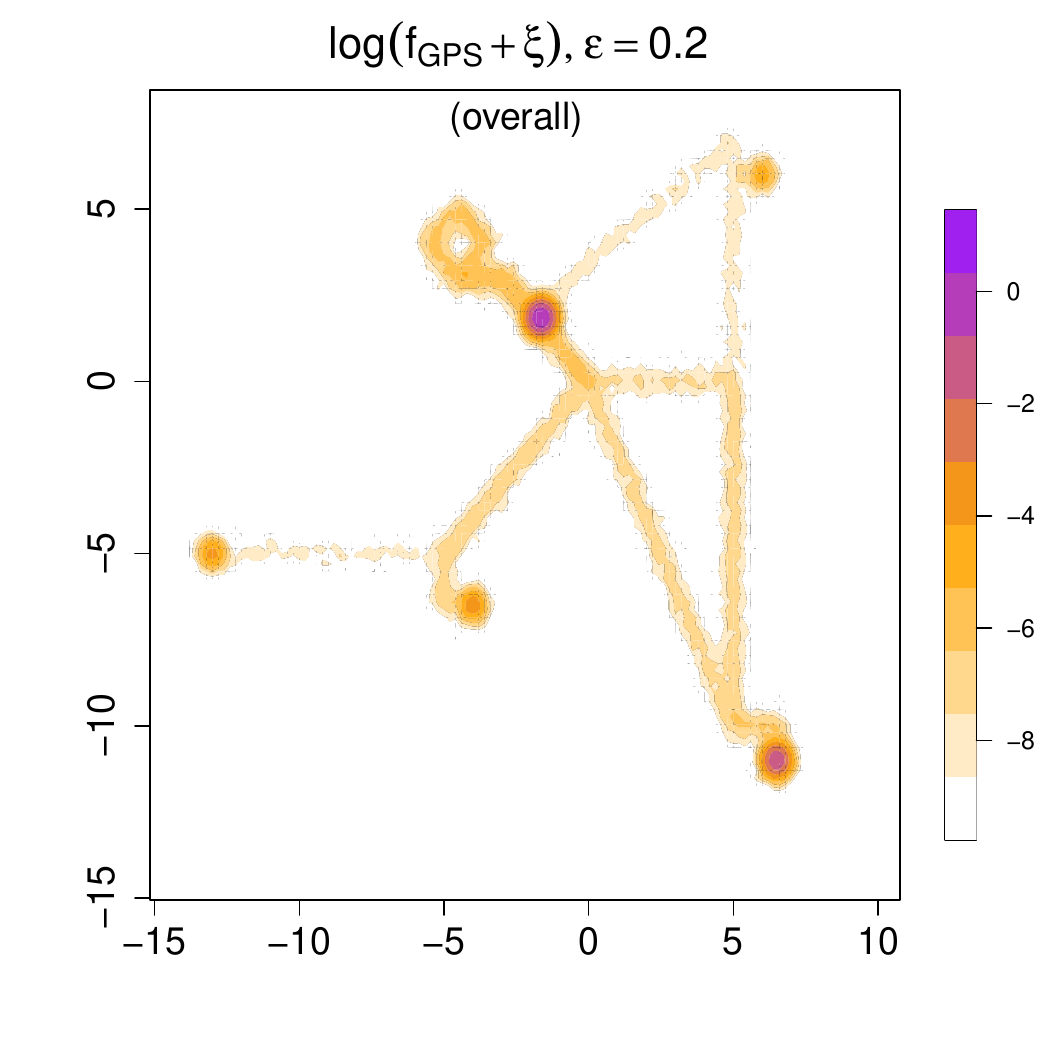}
		\includegraphics[width=0.35\linewidth]{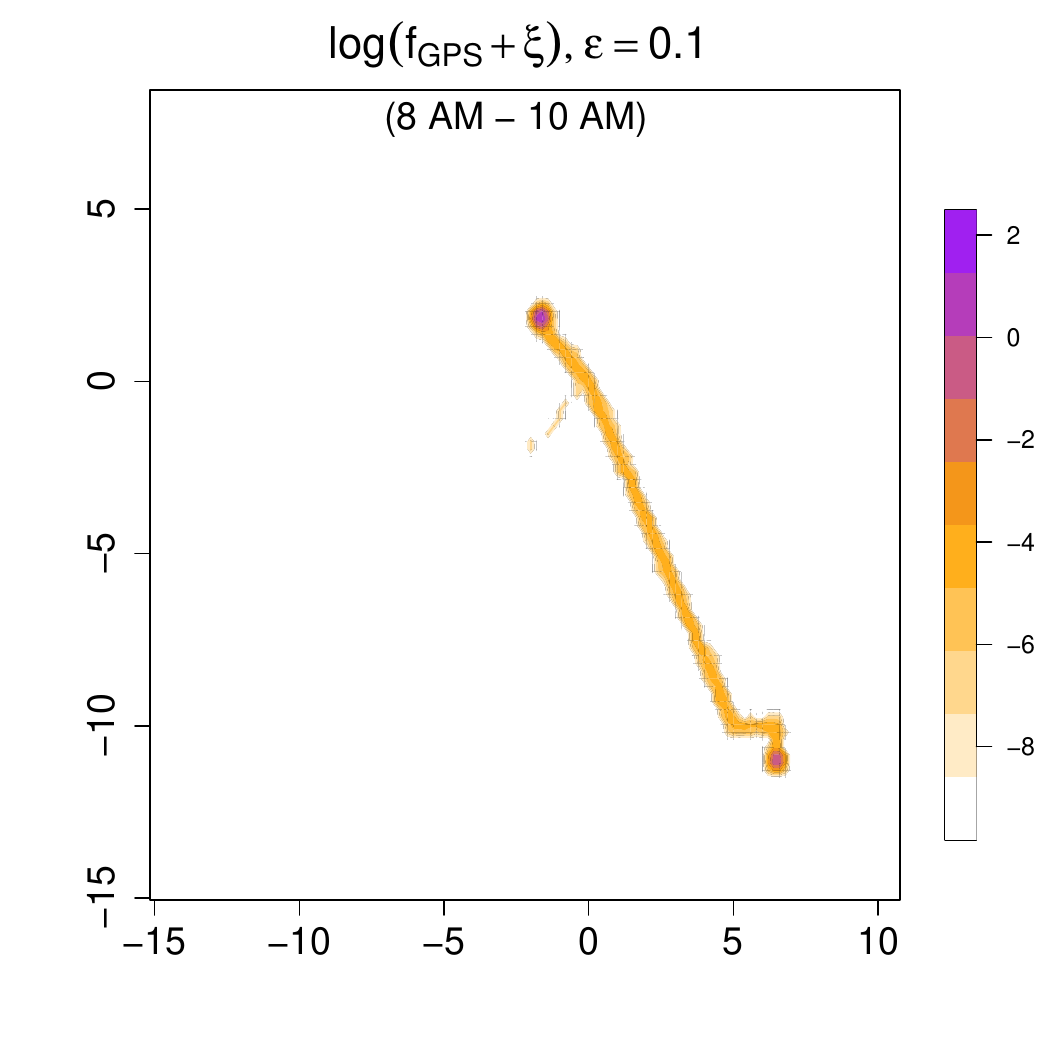}
		\includegraphics[width=0.35\linewidth]{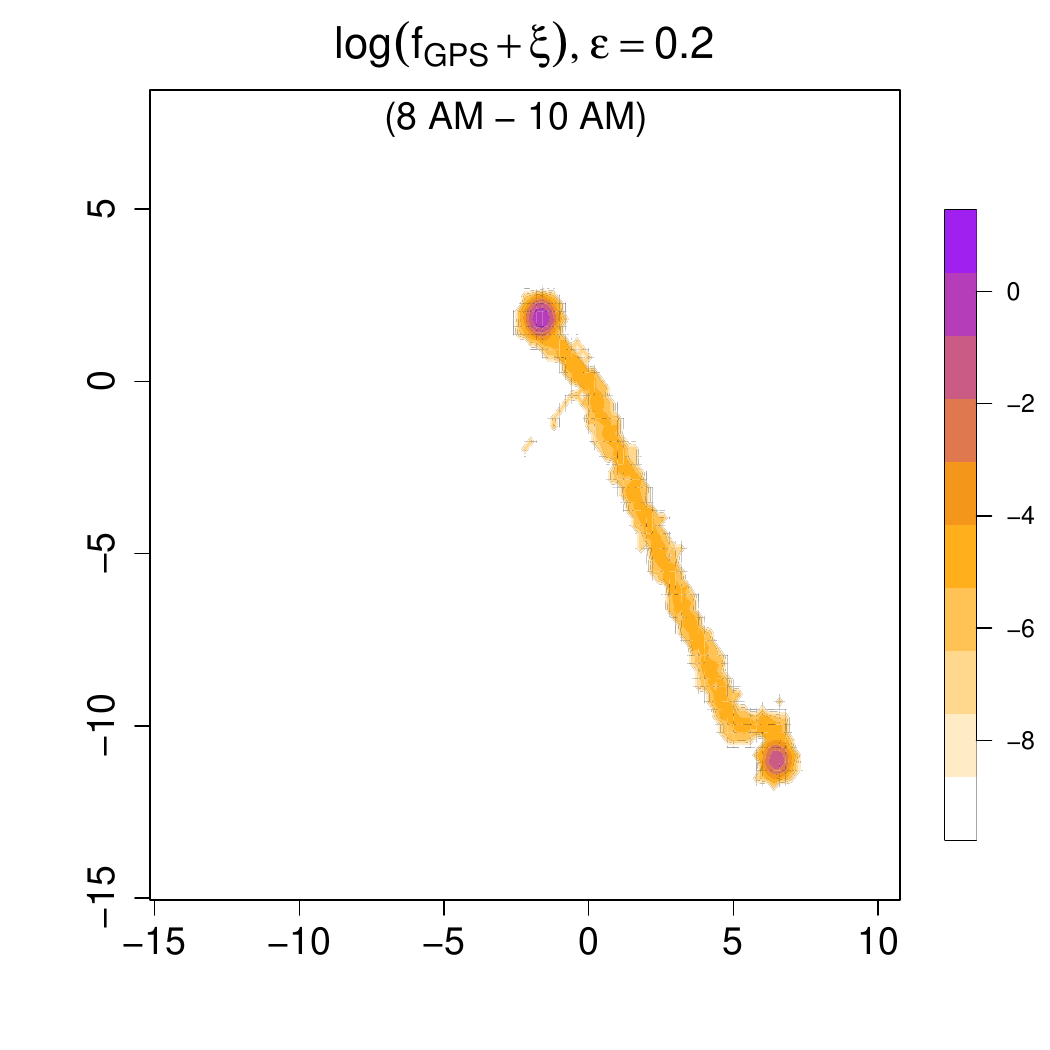}
		\caption{The log-density map for measurement errors under the level of $\sigma=0.1$ (left column) and $\sigma=0.2$ (right column)  for $\log (f_{GPS}+\xi)$ in the top
			and $\log (f_{GPS, A}+\xi)$ in the bottom with $A$ represents the interval of 8 AM to 10 AM. In each panel, $\xi$ is set as 0.0001.}
		\label{fig:True density}
	\end{figure}
	
	%\paragraph{Estimators.}
	%We compare four estimators for each of the two quantities of interest:
	%\begin{enumerate}
	%    \item The marginal weighted density estimator \(\mathring{f}_{GPS}(x)\).
	%    \item The weighted, day-wise integrated conditional density estimator \(\int_0^1 \hat{f}(x \mid t)\,dt\).
	%    \item The unweighted, day-wise integrated conditional density estimator \(\int_0^1 \tilde{f}(x \mid t)\,dt\).
	%    \item A generic kernel density estimator
	%    \[
	%      \frac{1}{h^2}\sum_{k=1}^{n_d}\sum_{i=1}^{m_k} K\Bigl(\frac{x-X_{k,i}}{h}\Bigr),
	%    \]
	%    which serves as a naive baseline for comparison and illustrates the feasibility of our proposed methods.
	%\end{enumerate}
	
	{\bf Bandwidth selection.}
	After preliminary experiments using mean integrated squared error (MISE) as a performance measure, we found 
	the following reference rule for the spatial and temporal smoothing bandwidth:
	\[
	h_X \;=\; 0.065\,\biggl(\frac{\sqrt{s_{x,1}^2 + s_{x,2}^2}}{\sum_{i=1}^n m_i}\,\biggr)^{\tfrac{1}{6}}, \quad h_T = 0.05\,\biggl(\frac{n}{\sum_{i=1}^n m_i}\,\biggr)^{\tfrac{1}{3}}
	\]
	where 
	\[
	s_{x,l} \;=\; \sqrt{\sum_{k=1}^{n_d}\sum_{i=1}^{m_k} w_{i,j}\,\bigl(x_{i,j,l} - \mu_l\bigr)^2}, l=1,2
	\quad 
	w_{i,j} \;=\; \frac{t_{i,j+1} - t_{i,j-1}}{2\,n},
	\quad
	\mu_1 \;=\; \sum_{i=1}^{n}\sum_{j=1}^{m_i} w_{i,j}\,x_{i,j,l}.
	\]
	Here, \((x_{i,j,1}, x_{i,j,2})\) denotes the spatial coordinates of the \(j\)-th observation on day \(i\), and \(t_{i,j}\) is its (normalized) timestamp within day \(i\). The weights \(w_{k,i}\) ensure proper normalization over the total number of days \(n\) and allow for variable sampling rates within a day.
	
	Compared with standard bandwidth heuristics such as Silverman's rule of thumb for multivariate kernels \cite{Chacon2011}, our scale constants are noticeably smaller. This discrepancy arises because Silverman's rule is often derived under assumptions of (approximately) unimodal or smooth data distributions. In contrast, our simulated environment exhibits strongly localized activity peaks due to prolonged stays at a small number of anchor points (e.g., home and office). As a result, the density near these anchor points is extremely high compared with other regions. A smaller bandwidth helps avoid oversmoothing these sharp peaks, thereby reducing bias and improving local accuracy of the estimated density. From a MISE perspective, capturing these density spikes precisely is critical, since errors around densely occupied points can contribute disproportionately to the integrated error. Consequently, a smaller scale constant not only accommodates the multimodal nature of the data but also minimizes MISE by enhancing resolution in areas with concentrated activity.
	
	%With this bandwidth choice, the density estimators tend to produce relatively lower error in practice.
	
	%{\color{magenta}YC: how about $h_T?$}

	\subsection{Results on the MISE}

	{\bf Estimating the average GPS density.}
	For each combination of \((n,m)\), we report the averaged mean integrated squared error (MISE) over a \(121\times 99\) grid averaged over 100 repetitions, where each cell has area \(0.2\times 0.2\). The results for estimating \(f_{GPS}\) appear in Tables \ref{Table:daily-1}--\ref{Table:daily-2}.
	%{\color{magenta}YC: please provide details on the grid.}
	
	%\paragraph{Estimator Evaluation.}
	%To assess the performance of different density estimators under various settings, we use the mean integrated square error (MISE) averaged over $100$ experiments. 
	%Specifically, 
	%let \(f(x)\) be the target density of interest
	%and $\hat f_j(x)$ be the estimator from $j$-th simulation.
	%The MISE is 
	%%. 
	%%After \(M\) simulation runs, we obtain estimators \(\hat{g}_j(x)\) for \(j = 1, \ldots, M\). We then compute the error as
	%\[
	%  \frac{1}{100}\sum_{j=1}^{100} \int_{\mathcal{W}} \bigl(\hat{f}_j(x) - f(x)\bigr)^2\,dx.
	%\]
	\begin{table}
		\center
		\begin{tabular}{cccccccc}
			\hline
			\multicolumn{2}{c}{Even-spacing}                    & \multicolumn{3}{c}{$\epsilon=0.2$}                                                                        & \multicolumn{3}{c}{$\epsilon=0.1$}                                                                         \\ \hline
			\multicolumn{1}{c}{$n$} & \multicolumn{1}{c}{$m$}   & \multicolumn{1}{c}{$\hat{f}_{w}$} & \multicolumn{1}{c}{$\hat f_c$} & \multicolumn{1}{c}{$\hat f_{naive}$} & \multicolumn{1}{c}{$\hat {f}_{w}$} & \multicolumn{1}{c}{$\hat f_c$} & \multicolumn{1}{c}{$\hat f_{naive}$} \\ \hline
			\multicolumn{1}{l|}{7}  & \multicolumn{1}{l|}{159}  & 0.0839                            & {\bf 0.0812}                         & \multicolumn{1}{l|}{0.0839}          & 0.8571                             & {\bf 0.8482}                         & 0.8571                               \\
			\multicolumn{1}{l|}{7}  & \multicolumn{1}{l|}{479}  & 0.0619                            & {\bf 0.0599}                         & \multicolumn{1}{l|}{0.0619}          & 0.6426                             & {\bf 0.6353}                         & 0.6426                               \\
			\multicolumn{1}{l|}{7}  & \multicolumn{1}{l|}{1439} & 0.0456                            & {\bf 0.0438}                         & \multicolumn{1}{l|}{0.0456}          & 0.4442                             & {\bf 0.4375}                         & 0.4442                               \\
			\multicolumn{1}{l|}{30} & \multicolumn{1}{l|}{159}  & 0.0473                            & {\bf 0.045}                          & \multicolumn{1}{l|}{0.0473}          & 0.5967                             & {\bf 0.5874}                         & 0.5967                               \\
			\multicolumn{1}{l|}{30} & \multicolumn{1}{l|}{479}  & 0.0359                            & {\bf 0.0341}                         & \multicolumn{1}{l|}{0.0359}          & 0.4321                             & {\bf 0.4249}                         & 0.4321                               \\
			\multicolumn{1}{l|}{30} & \multicolumn{1}{l|}{1439} & 0.0209                            & {\bf 0.0197}                         & \multicolumn{1}{l|}{0.0209}          & 0.268                              &{\bf  0.2623}                         & 0.268                                \\
			\multicolumn{1}{l|}{90} & \multicolumn{1}{l|}{159}  & 0.0368                            & {\bf 0.0344}                         & \multicolumn{1}{l|}{0.0368}          & 0.4592                             & {\bf 0.4495}                         & 0.4592                               \\
			\multicolumn{1}{l|}{90} & \multicolumn{1}{l|}{479}  & 0.0225                            & {\bf 0.021}                          & \multicolumn{1}{l|}{0.0225}          & 0.3109                             & {\bf 0.3041}                         & 0.3109                               \\
			\multicolumn{1}{l|}{90} & \multicolumn{1}{l|}{1439} & 0.0113                            & {\bf 0.0105}                         & \multicolumn{1}{l|}{0.0113}          & 0.1729                             & {\bf 0.1675}                         & 0.1729                               \\ \hline
		\end{tabular}
		\caption{The MISE of the three methods for estimating the daily density during under evenly spacing timestamps.}
		\label{Table:daily-1}
	\end{table}
	
	\begin{table}
		\center
		\begin{tabular}{cccccccc}
			\hline
			\multicolumn{2}{c}{Realistic}                       & \multicolumn{3}{c}{$\epsilon=0.2$}                       & \multicolumn{3}{c}{$\epsilon=0.1$}             \\ \hline
			$n$                     & $m$                       & $\hat{f}_{w}$ & $\hat f_c$ & $\hat f_{naive}$            & $\hat {f}_{w}$ & $\hat f_c$ & $\hat f_{naive}$ \\ \hline
			\multicolumn{1}{c|}{7}  & \multicolumn{1}{c|}{159}  & 0.113         & {\bf0.0844}     & \multicolumn{1}{c|}{0.1457} & 0.9321         & {\bf0.8747}     & 1.0424           \\
    			\multicolumn{1}{c|}{7}  & \multicolumn{1}{c|}{479}  & 0.097         & {\bf0.0622}     & \multicolumn{1}{c|}{0.1192} & 0.7135         & {\bf0.6473 }    & 0.819            \\
			\multicolumn{1}{c|}{7}  & \multicolumn{1}{c|}{1439} & 0.0925        & {\bf0.0489}     & \multicolumn{1}{c|}{0.1017} & 0.5616         & {\bf0.4736}     & 0.642            \\
			\multicolumn{1}{c|}{30} & \multicolumn{1}{c|}{159}  & 0.0708        & {\bf0.0472}     & \multicolumn{1}{c|}{0.0934} & 0.6512         & {\bf0.5873 }    & 0.7415           \\
			\multicolumn{1}{c|}{30} & \multicolumn{1}{c|}{479}  & 0.0575        & {\bf0.0325 }    & \multicolumn{1}{c|}{0.0787} & 0.4859         & {\bf0.4143  }   & 0.5773           \\
			\multicolumn{1}{c|}{30} & \multicolumn{1}{c|}{1439} & 0.0462        &{\bf 0.0219 }    & \multicolumn{1}{c|}{0.0642} & 0.349          & {\bf0.2801  }   & 0.4359           \\
			\multicolumn{1}{c|}{90} & \multicolumn{1}{c|}{159}  & 0.0562        & {\bf0.0337 }    & \multicolumn{1}{c|}{0.0778} & 0.5228         & {\bf0.4529   }  & 0.6108           \\
			\multicolumn{1}{c|}{90} & \multicolumn{1}{c|}{479}  & 0.0415        & {\bf0.0224  }   & \multicolumn{1}{c|}{0.0641} & 0.3734         & {\bf0.3119    } & 0.4677           \\
			\multicolumn{1}{c|}{90} & \multicolumn{1}{c|}{1439} & 0.0286        & {\bf0.0106  }   & \multicolumn{1}{c|}{0.0432} & 0.2358         & {\bf0.1751    } & 0.3172           \\ \hline
		\end{tabular}
		\caption{The MISE of the three methods for estimating the daily density during under realistic timestamp distributions.}
		\label{Table:daily-2}
	\end{table}

	In Tables \ref{Table:daily-1} (even-spacing),
	we see that all methods work well with the conditional method $\hat f_c$
	that performs slightly better than the others. And the marginal method $\hat f_w$ has almost same MISE as naive method $\hat f_{naive}$. 
	Because timestamps are recorded at uniform intervals, each observation naturally represents a comparable portion of the day, thus reducing the penalty from ignoring time differences. A similar conclusion holds if timestamps are generated uniformly in \([0,1]\).
	
	In contrast, when timestamps are drawn from the real dataset, the results change significantly. Real-world GPS records are often non-uniform in time, so the naive estimator which assumes each data point is equally representative suffers from a pronounced bias. This effect is especially evident in the last row of Table~\ref{Table:daily-2}, where \(\hat{f}_{\mathrm{naive}}\) exhibits a substantially higher MISE than either of the proposed estimators. Notably, both \(\hat{f}_w\) and \(\hat{f}_c\) handle irregular sampling much more effectively, resulting in significantly lower MISE.

	{\bf Estimating the interval-specific GPS density.}
	In addition to the average GPS densities,
	we also consider the case of estimating the GPS density during the morning rush hour $A = [\frac{8}{24}, \frac{10}{24}]$ (8 AM - 10 AM).
	Tables \ref{Table:daily-4} and \ref{Table:daily-6} present the results for this interval-specific density.
	
	The performance trends mirror those observed for daily averages (Table~\ref{Table:daily-4}): all three estimators remain consistent, and the conditional method \(\hat{f}_c\) has a slight edge over the other two. The naive and weighted methods \(\hat{f}_w\) show comparable MISE values. 
	
	Table \ref{Table:daily-6} shows the MISE of three estimators when we generate the timestamps from a realistic distribution. As with the daily density estimation, the real (irregular) distribution of timestamps reveals a clear advantage for \(\hat{f}_c\). The naive approach again shows higher MISE, though in this specific morning interval, the difference between \(\hat{f}_w\) and \(\hat{f}_{\mathrm{naive}}\) is smaller than in the full-day scenario. This arises because the distribution of the real data timestamps between 8~AM and 10~AM are not as severely skewed as the distribution of timestamps in the whole day, so the advantage of $\hat f_w$ is somewhat diminished.
	
	Overall, the results confirm that both \(\hat{f}_w\) and \(\hat{f}_c\) more robustly handle uneven timestamp distributions compared to the naive approach, and that \(\hat{f}_c\) typically offers the best performance in terms of MISE across a variety of sampling settings.
	\begin{table}
		\center
		\begin{tabular}{cccccccc}
			\hline
			\multicolumn{2}{c}{Realistic}                       & \multicolumn{3}{c}{$\epsilon=0.2$}                       & \multicolumn{3}{c}{$\epsilon=0.1$}             \\ \hline
			$n$                     & $m$                       & $\hat{f}_{w}$ & $\hat f_c$ & $\hat f_{naive}$            & $\hat {f}_{w}$ & $\hat f_c$ & $\hat f_{naive}$ \\ \hline
			\multicolumn{1}{c|}{7}  & \multicolumn{1}{c|}{159}  & 0.1802        &{\bf 0.1212 }    & \multicolumn{1}{c|}{0.1794} & 1.2959         & {\bf0.8483}     & 1.2935           \\
			\multicolumn{1}{c|}{7}  & \multicolumn{1}{c|}{479}  & 0.1287        & {\bf0.0917  }   & \multicolumn{1}{c|}{0.1281} & 0.9944         & {\bf0.6345 }    & 0.9922           \\
			\multicolumn{1}{c|}{7}  & \multicolumn{1}{c|}{1439} & 0.1184        & {\bf0.0789  }   & \multicolumn{1}{c|}{0.1184} & 0.8261         & {\bf0.4762  }   & 0.8253           \\
			\multicolumn{1}{c|}{30} & \multicolumn{1}{c|}{159}  & 0.0971        & {\bf0.0676   }  & \multicolumn{1}{c|}{0.0964} & 0.864          & {\bf0.5967    } & 0.8615           \\
			\multicolumn{1}{c|}{30} & \multicolumn{1}{c|}{479}  & 0.0787        & {\bf0.0548   }  & \multicolumn{1}{c|}{0.0783} & 0.6974         & {\bf0.436 }     & 0.6955           \\
			\multicolumn{1}{c|}{30} & \multicolumn{1}{c|}{1439} & 0.0498        & {\bf0.0333   }  & \multicolumn{1}{c|}{0.0497} & 0.4488         & {\bf0.2693}     & 0.4482           \\
			\multicolumn{1}{c|}{90} & \multicolumn{1}{c|}{159}  & 0.0777        & {\bf0.0545    } & \multicolumn{1}{c|}{0.0772} & 0.7083         & {\bf0.4752  }   & 0.7058           \\
			\multicolumn{1}{c|}{90} & \multicolumn{1}{c|}{479}  & 0.049         & {\bf0.0341    } & \multicolumn{1}{c|}{0.0489} & 0.4616         & {\bf0.3144    } & 0.46             \\
			\multicolumn{1}{c|}{90} & \multicolumn{1}{c|}{1439} & 0.038         & {\bf0.0204    } & \multicolumn{1}{c|}{0.0379} & 0.3093         & {\bf0.1817    } & 0.3089           \\ \hline
		\end{tabular}
		\caption{The MISE of the three methods for estimating the density during the rush hour (8-10 AM) under evenly spacing timestamps.}
		\label{Table:daily-4}
	\end{table}
	
	\begin{table}
		\center
		\begin{tabular}{cccccccc}
			\hline
			\multicolumn{2}{c}{Realistic}                       & \multicolumn{3}{c}{$\epsilon=0.2$}                       & \multicolumn{3}{c}{$\epsilon=0.1$}             \\ \hline
			$n$                     & $m$                       & $\hat{f}_{w}$ & $\hat f_c$ & $\hat f_{naive}$            & $\hat {f}_{w}$ & $\hat f_c$ & $\hat f_{naive}$ \\ \hline
			\multicolumn{1}{c|}{7}  & \multicolumn{1}{c|}{159}  & 0.1489        & {\bf0.1099}     & \multicolumn{1}{c|}{0.1522} & 1.0886         & {\bf0.816 }     & 1.1085           \\
			\multicolumn{1}{c|}{7}  & \multicolumn{1}{c|}{479}  & 0.1064        & {\bf0.0812 }    & \multicolumn{1}{c|}{0.1123} & 0.8289         & {\bf0.582  }    & 0.8495           \\
			\multicolumn{1}{c|}{7}  & \multicolumn{1}{c|}{1439} & 0.0909        & {\bf0.0749 }    & \multicolumn{1}{c|}{0.1006} & 0.6751         & {\bf0.4664}     & 0.7054           \\
			\multicolumn{1}{c|}{30} & \multicolumn{1}{c|}{159}  & 0.0894        & {\bf0.065   }   & \multicolumn{1}{c|}{0.0902} & 0.7808         & {\bf0.5702  }   & 0.7857           \\
			\multicolumn{1}{c|}{30} & \multicolumn{1}{c|}{479}  & 0.0637        & {\bf0.0463  }   & \multicolumn{1}{c|}{0.0651} & 0.5714         & {\bf0.3954  }   & 0.5776           \\
			\multicolumn{1}{c|}{30} & \multicolumn{1}{c|}{1439} & 0.0542        & {\bf0.0424  }   & \multicolumn{1}{c|}{0.0573} & 0.4103         & {\bf0.3121  }   & 0.4225           \\
			\multicolumn{1}{c|}{90} & \multicolumn{1}{c|}{159}  & 0.0616        & {\bf0.0482   }  & \multicolumn{1}{c|}{0.0627} & 0.5631         & {\bf0.4508    } & 0.568            \\
			\multicolumn{1}{c|}{90} & \multicolumn{1}{c|}{479}  & 0.046         & {\bf0.0357    } & \multicolumn{1}{c|}{0.0486} & 0.3906         & {\bf0.3136     }& 0.4013           \\
			\multicolumn{1}{c|}{90} & \multicolumn{1}{c|}{1439} & 0.0286        & {\bf0.0162   }  & \multicolumn{1}{c|}{0.0304} & 0.2301         & {\bf0.1587  }   & 0.2373           \\ \hline
		\end{tabular}
		
		\caption{The MISE of the three methods for estimating the density during the rush hour (8-10 AM) under the realistic timestamp distribution.}
		\label{Table:daily-6}
	\end{table}

	\subsection{Applications based on the simple movement model}
	
	In our simulation setup, Pattern~1 and Pattern~2 correspond to typical weekday schedules (containing the office as an anchor point), whereas Pattern~3, Pattern~4, and Pattern~5 represent potential weekend routines (not containing office as anchor point). 
	We separately analyze the data in weekdays and weekends using the framework of Section~\ref{sec::smm}.
	%To isolate weekday or weekend behavior, we therefore filter the simulated data by the respective pattern(s) and analyze each subset using the framework of Section~\ref{sec::smm}.
	Throughout this subsection, we fix the measurement error at \(\sigma=0.2\) and assume evenly spaced timestamps with 479 GPS observations per day. We generate a total of 90 simulated days under these conditions.

	\subsubsection{Anchor location recovering}
	From Table~\ref{Table:pattern_detail}, we identify three anchor locations for weekdays (home, office, restaurant) and another set of three for weekends (home, supermarket, beach). Across the entire 90 days, these yield five un anchor points overall. 
	
	Using the method described in Section~\ref{sec::AL} with \(\lambda = 0.0055\), we recover all anchor locations accurately, whether we consider:
	
	1. All 90 days combined,  
	
	2. Weekdays only (Patterns~1 and~2), or 
	
	3. Weekends only (Patterns~3,~4, and~5).
	
	Figure~\ref{fig:sim-anchor points} compares the true anchor points with the detected ones in each of these scenarios. Under a suitably chosen threshold, and assuming sufficient time spent at each anchor point, the technique in Section~\ref{sec::AL} locates the simulated individual's anchor sites with high accuracy.
	
	\begin{figure}
		\centering
		\includegraphics[width=0.25\linewidth]{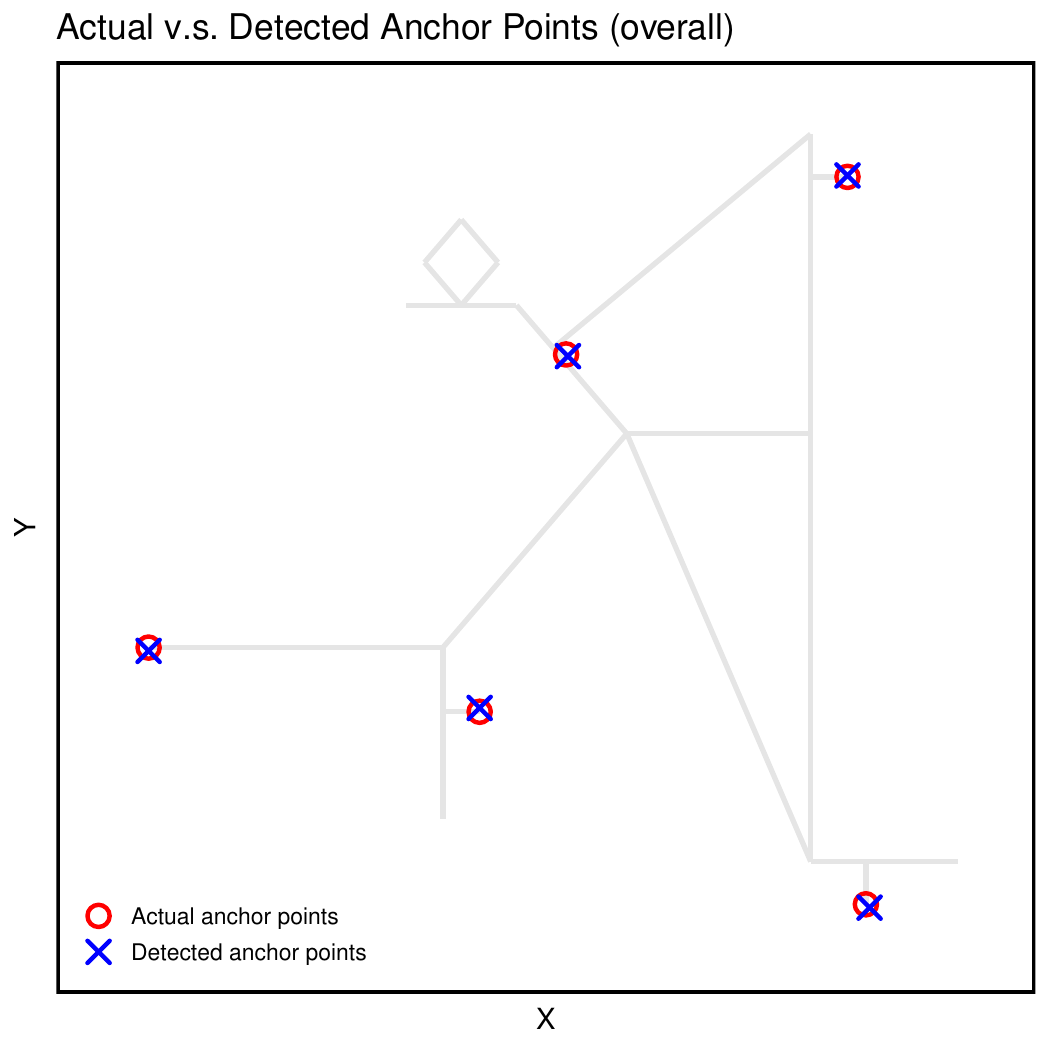}
		\includegraphics[width=0.25\linewidth]{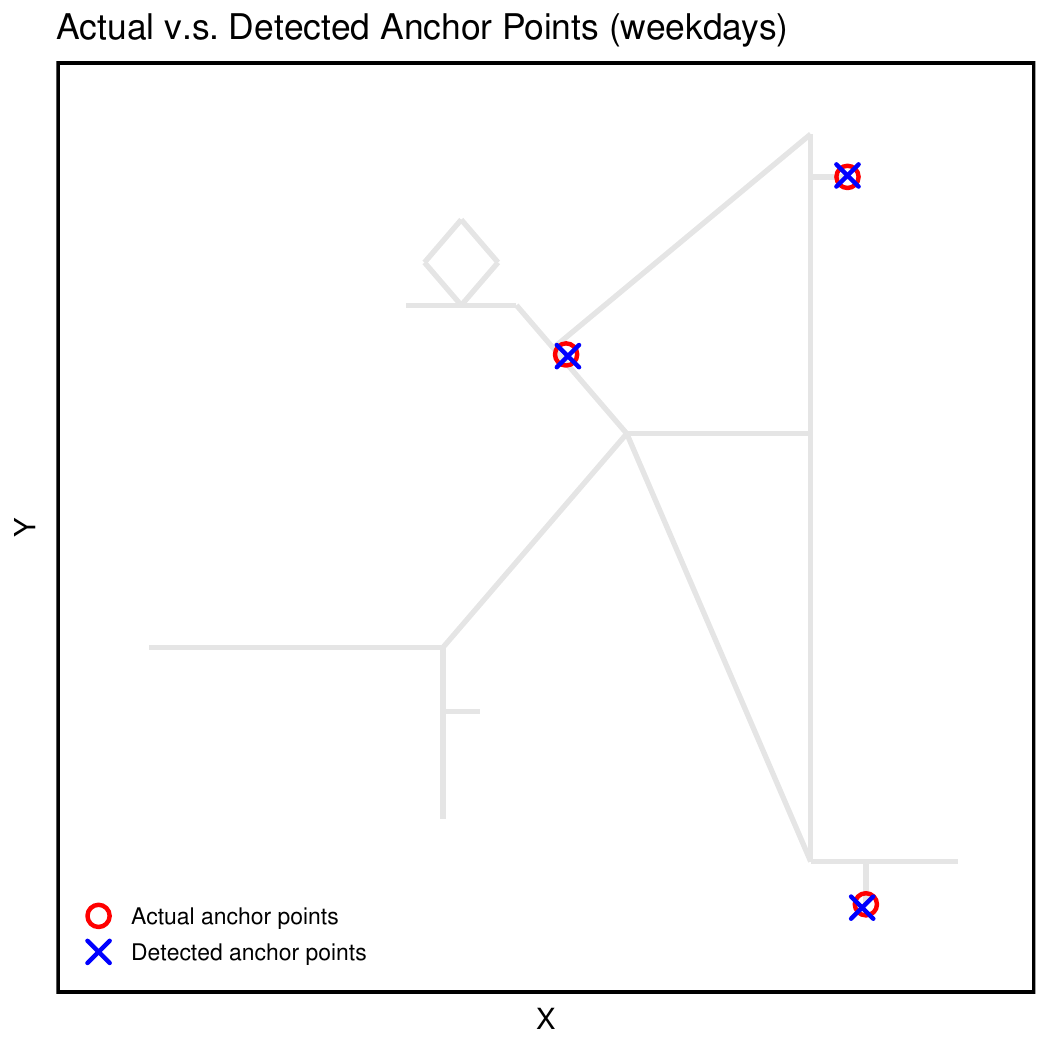}
		\includegraphics[width=0.25\linewidth]{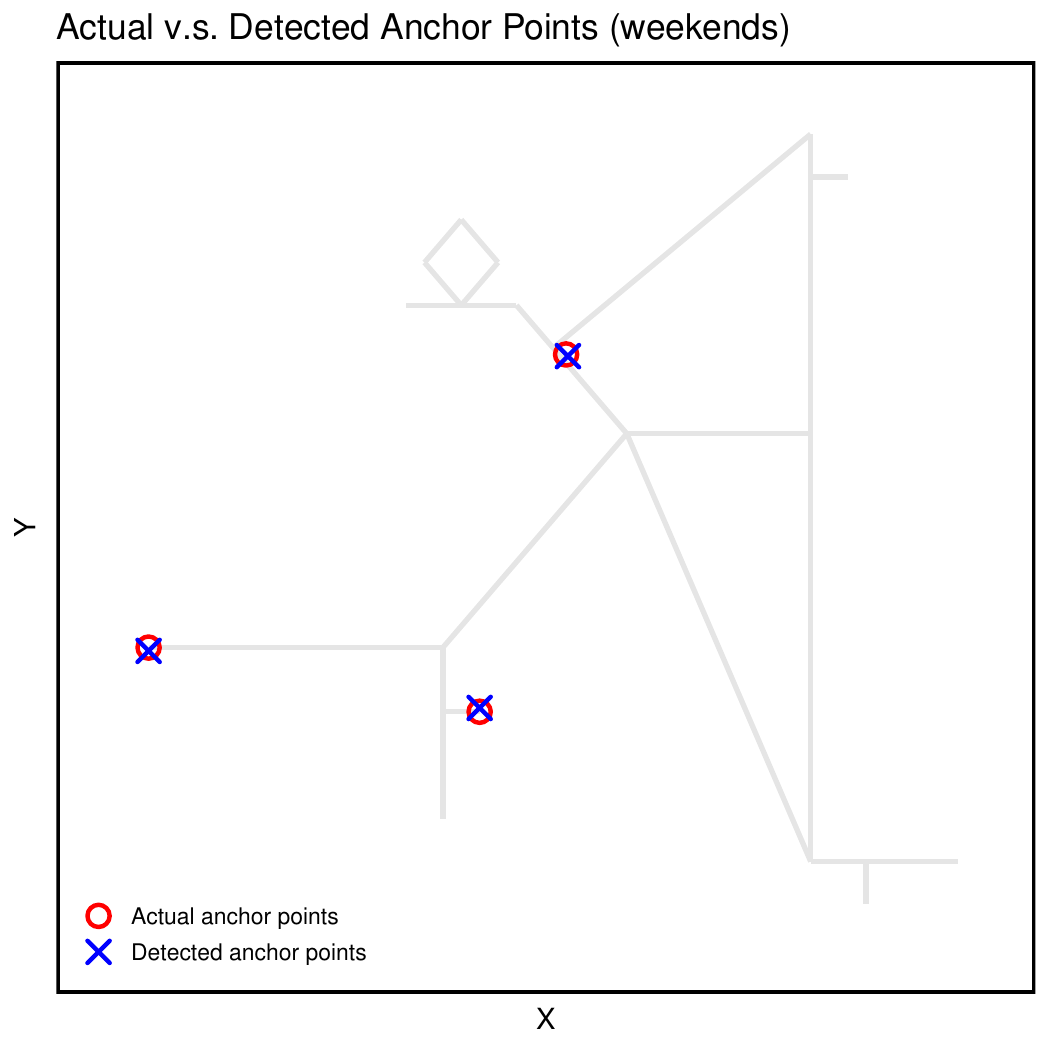}
		\caption{Detected anchor points versus true anchor points for the simulated individual. 
			\textbf{Left}: All days combined. 
			\textbf{Center}: Weekdays only. 
			\textbf{Right}: Weekends only.}
		\label{fig:sim-anchor points}
	\end{figure}
	
	\subsubsection{Analysis on activities}
	Figure~\ref{fig:sim-activity space} depicts the contour region \(Q_{\rho}\), which encloses the spatial area where the individual spends a proportion \(\rho\) of their time. We show contours for \(\rho \in \{0.999, 0.99, 0.90, 0.70, 0.50\}\) in three scenarios:
	\begin{enumerate}
		\item All Days (Left Panel): Over 90~days, 90\% of the individual's time (orange contour) is concentrated at home and office, consistent with Patterns~1 and~2 being selected more frequently.
		\item Weekdays (Middle Panel): On weekdays alone, the same 90\% region again focuses heavily on home and office, indicating most weekday hours are spent in these two anchor locations.
		\item Weekends (Right Panel): On weekends (Patterns~3 to 5), more than 90\% of time is spent at home, reflecting fewer trips away from the house (with occasional visits to the supermarket or beach).
	\end{enumerate}	
%	The 99.9\% (gray) contours in each panel capture nearly all traveled routes and all anchor sites. 
%	However, some routes (particularly those used infrequently or at high travel speeds) may not be fully highlighted. 
%	For example:
%	1. Less Common Paths: The seldom-used beach route might lie mostly outside the higher-density contours.  
%	2. High-Speed Segments: Traveling quickly leaves less time on the path thus contributing minimally to the spatial density.
	
	\begin{figure}
		\centering
		\includegraphics[width=0.25\linewidth]{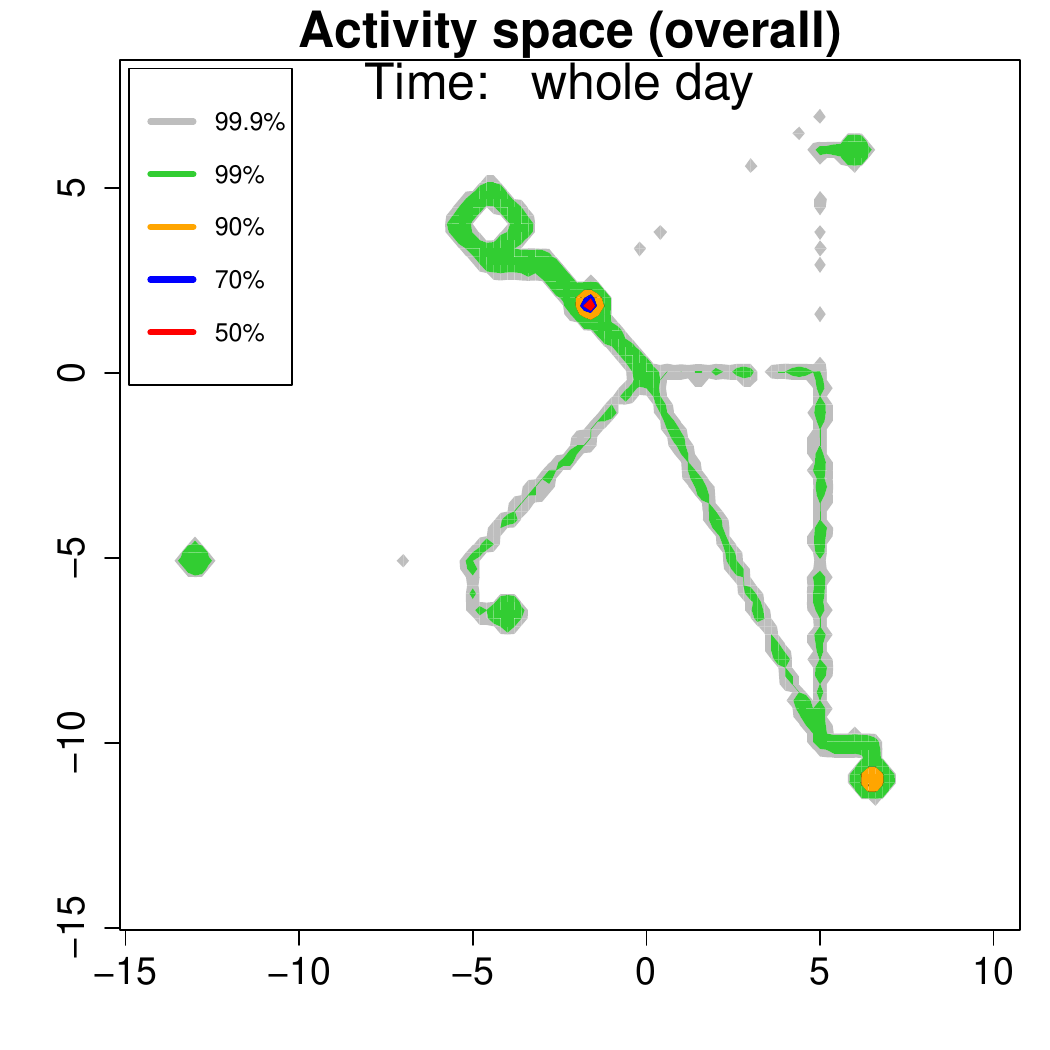}
		\includegraphics[width=0.25\linewidth]{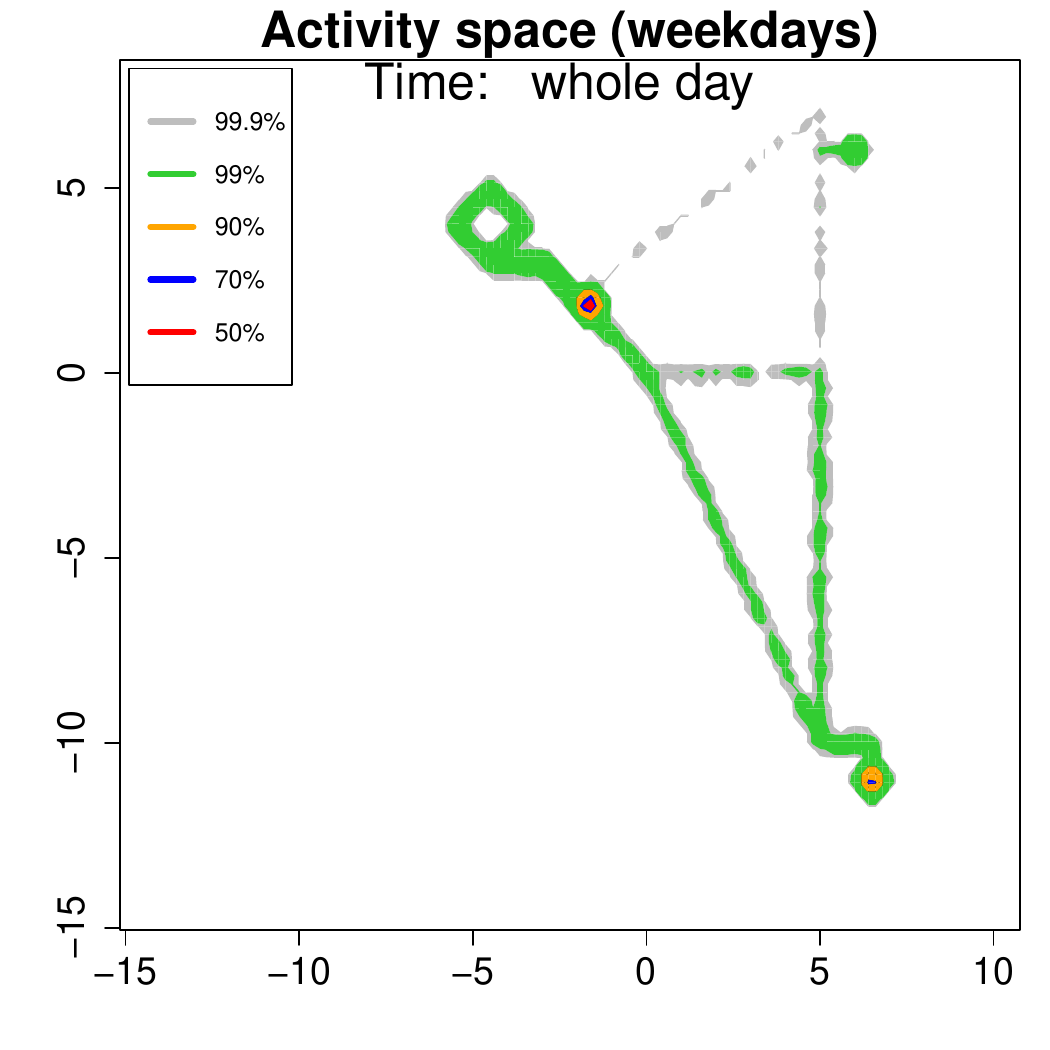}
		\includegraphics[width=0.25\linewidth]{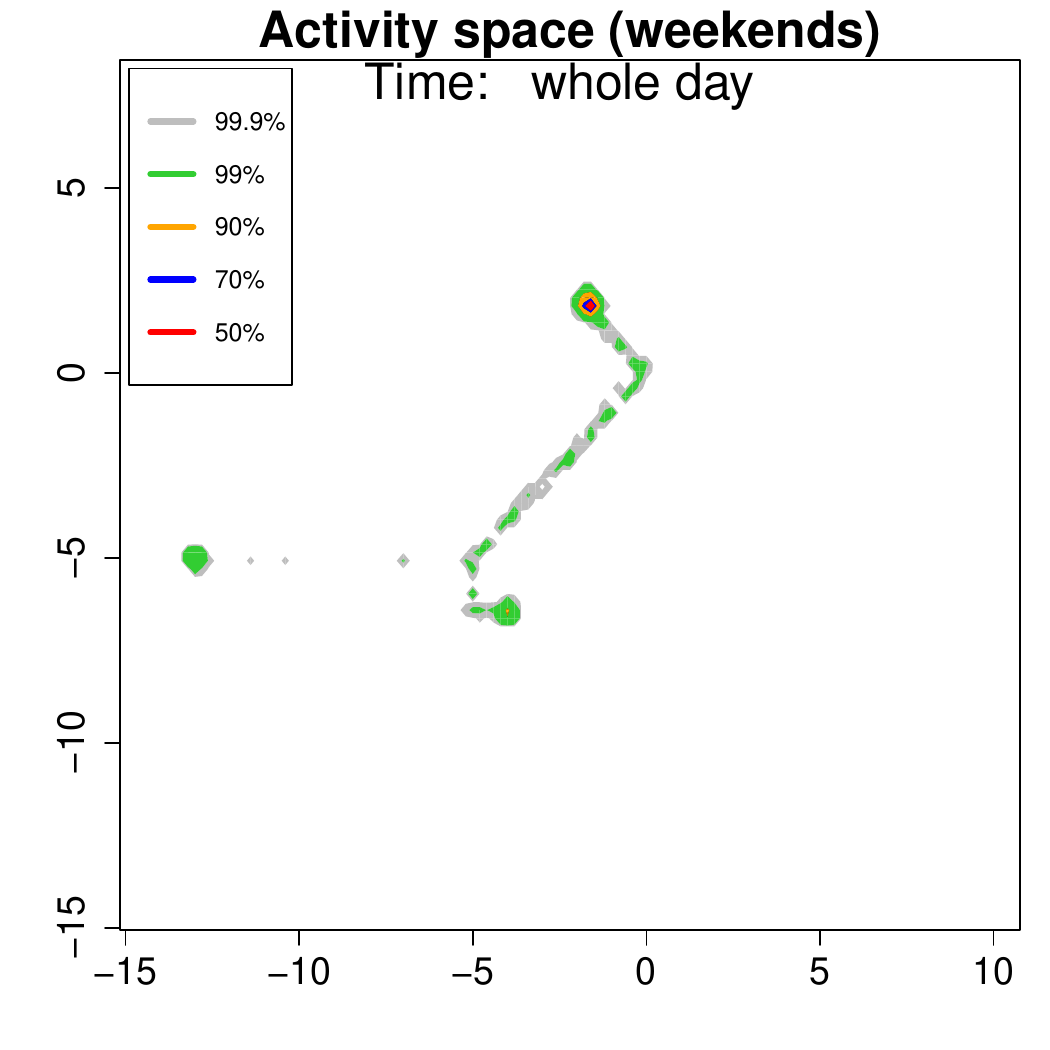}
		\caption{Activity space of the simulated individual for \textbf{all days} (left), \textbf{weekdays} (middle), and \textbf{weekends} (right). Each panel shows contour regions for 
			99.9\% (gray), 99\% (green), 90\% (orange), 70\% (blue), and 50\% (red) of total time spent.
		}
		\label{fig:sim-activity space}
	\end{figure}

	\subsubsection{Clustering of trajectories}	
	\textbf{Clustering of days.} A key aspect of our study is to cluster the simulated individual's weekday days (Patterns 1 and 2) and weekend days (Patterns 3, 4, and 5) based on their GPS trajectories. To do so, we separately analyze the weekday observations and the weekend observations.
	
	For each day, we compute the smoothed conditional density \(\hat{f}_{c}\) (using the bandwidths \(h_X\) and \(h_T\) determined earlier) and measure pairwise distances via the integrated squared difference between log-densities:
	\[
	D_{ij} 
	\;=\;
	\int \Bigl[\log\bigl(\hat f_{c,i}(x)+\xi\bigr) 
	\;-\; 
	\log\bigl(\hat f_{c,j}(x)+\xi\bigr)\Bigr]^{2}\,dx,
	\]
	where \(\hat f_{c,i}\) is the estimated conditional density for day \(i\), and \(\hat f_{c,j}\) for day \(j\). Same as the log-density visualization in Figure~\ref{fig:True density}, here, $\xi$ is set as 0.0001.
	
	%We then apply spectral clustering to these pairwise distances.
	%{\color{magenta}YC: describe what type of spectral clustering you are using! Unnormalized or normalized? How do you construct similarity? How do you measure the distance?}

	 Figure~\ref{fig:sim-affinity} further illustrates the structure of the clusters by showing the re-ordered distance matrices for weekdays (left panel) and weekends (right panel). In both cases, we see a pronounced block structure, indicating clear separations between the different activity patterns measured based on our proposed density estimation method.
	\begin{figure}
		\centering
		\includegraphics[width=0.25\linewidth]{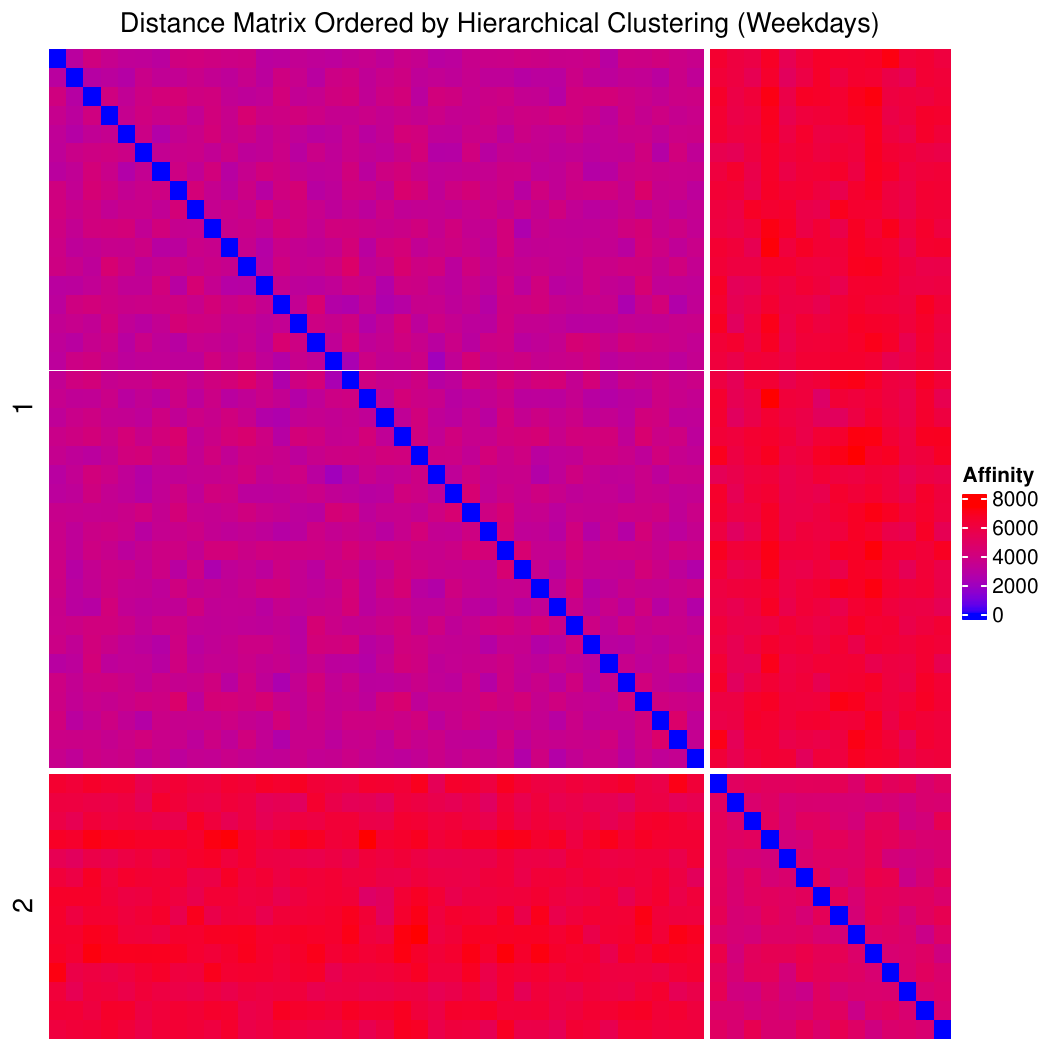}
		\includegraphics[width=0.25\linewidth]{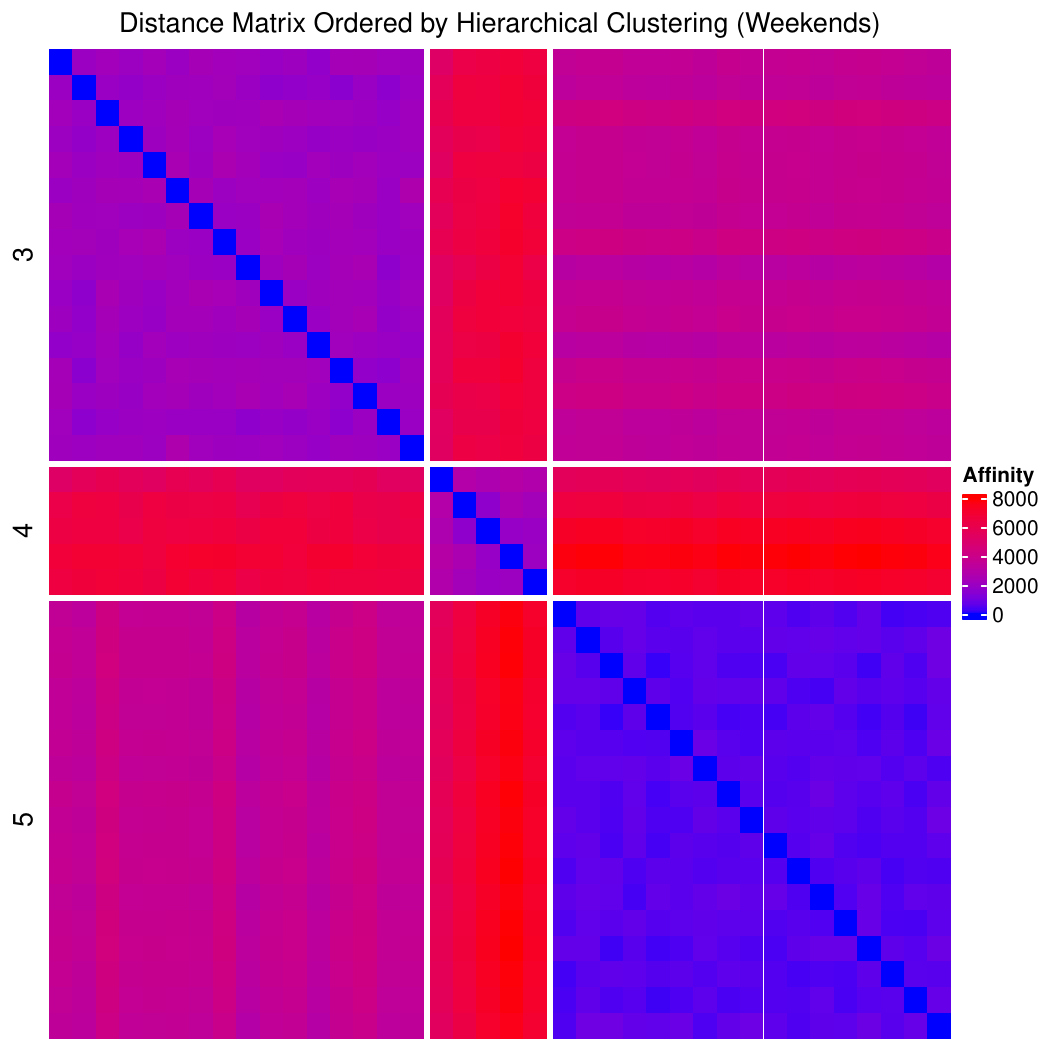}

		\caption{
			Re-ordered distance matrices by hierarchical clustering for \textbf{weekday} observations (left) and \textbf{weekend} observations (right). Each block corresponds to one of the simulated patterns, reflecting clear clustering boundaries.}
		\label{fig:sim-affinity}
	\end{figure}

		We perform single-linkage hierarchical clustering, producing dendrograms of weekday and weekend data (Figure~\ref{fig:sim-dendrogram}). After comparing with the known activity pattern for each day, we find that hierarchical clustering achieves 100\% accurate day-level classifications, perfectly matching the original activity patterns used to generate the data.
	\begin{figure}
		\centering
		\includegraphics[width=0.25\linewidth]{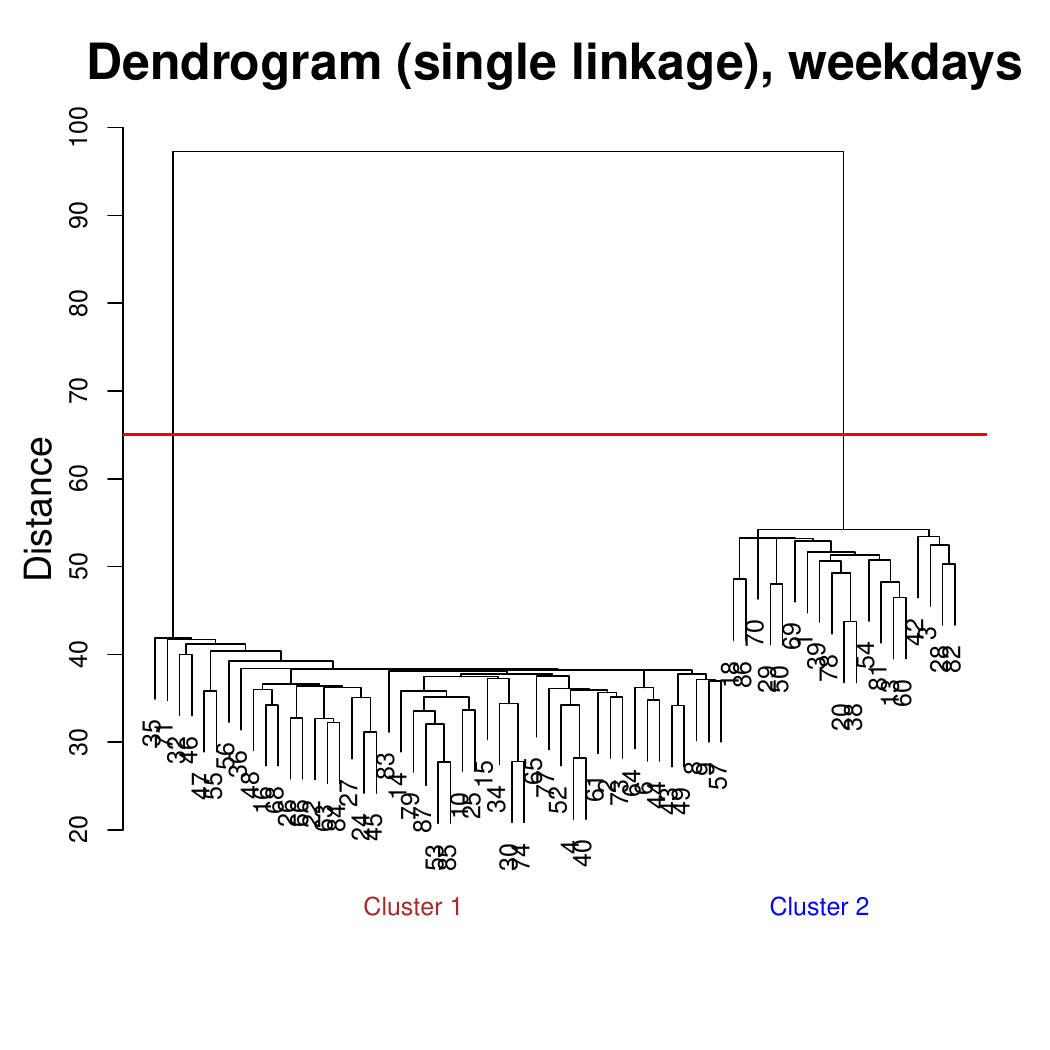}
		\includegraphics[width=0.25\linewidth]{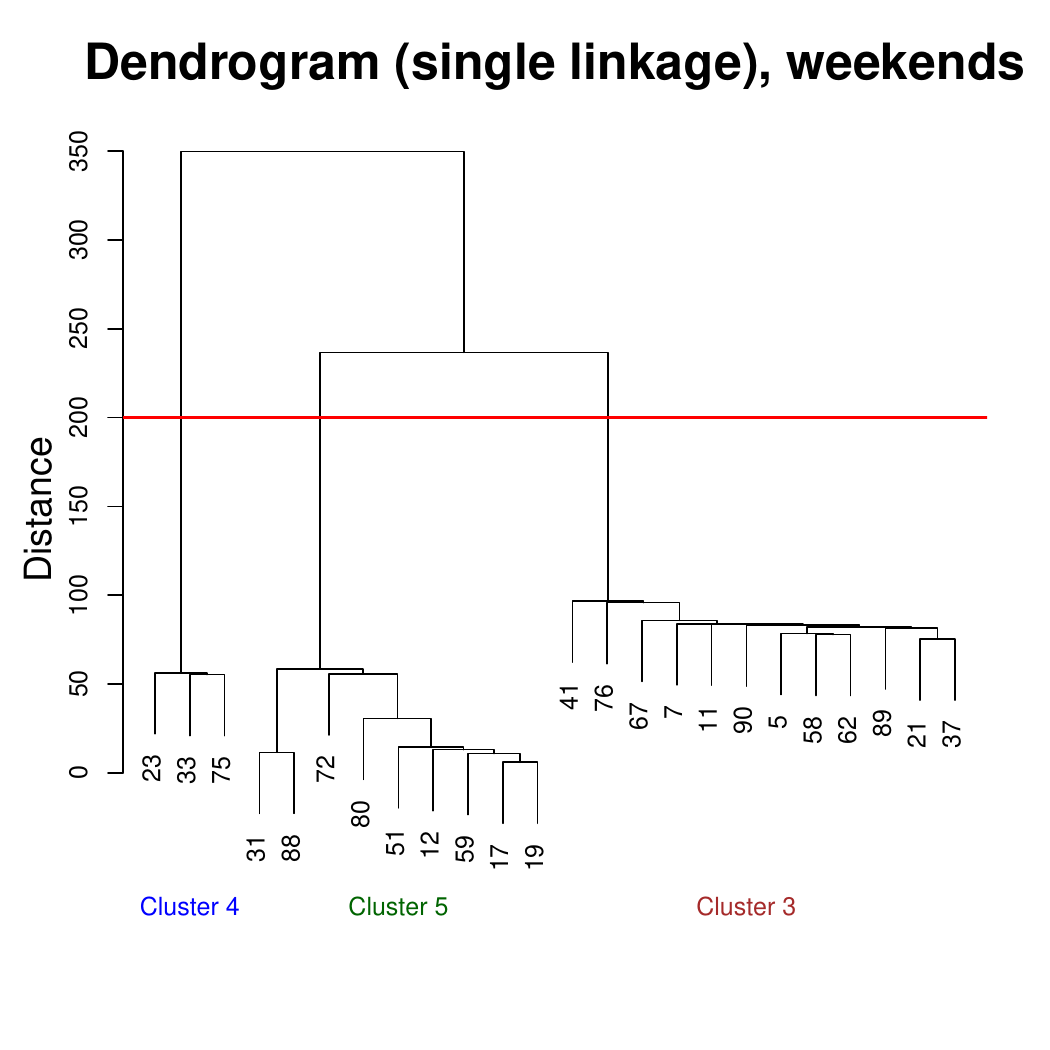}
		\caption{Dendrograms for \textbf{weekday} observations (left) and \textbf{weekend} observations (right). Each branch corresponds to a distinct activity pattern.}
		\label{fig:sim-dendrogram}
	\end{figure}

	\textbf{Conditional density on clustered days.} 
	To further illustrate how the discovered clusters differ in their spatiotemporal patterns, we compare conditional densities at specific times of day.
	
	Figure~\ref{fig:sim-con density-weekdays} shows the estimated conditional density at 6:00~PM for each weekday cluster (i.e., Patterns~1 and~2). The densities differ markedly: in Pattern 1, the individual leaves the office and goes directly home, whereas in Pattern 2, they stop at a restaurant (top-right anchor) before returning home.
	\begin{figure}
		\centering
		\includegraphics[width=0.25\linewidth]{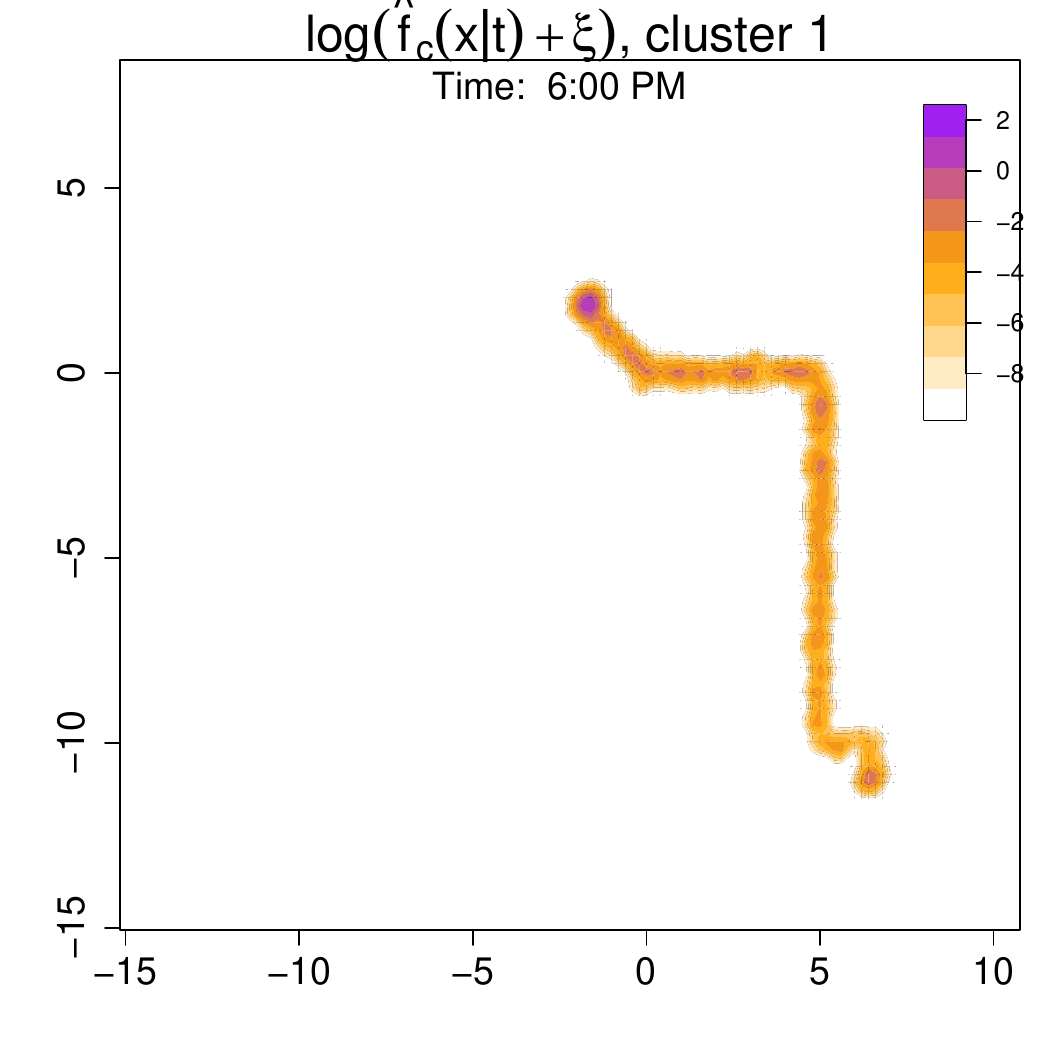}
		\includegraphics[width=0.25\linewidth]{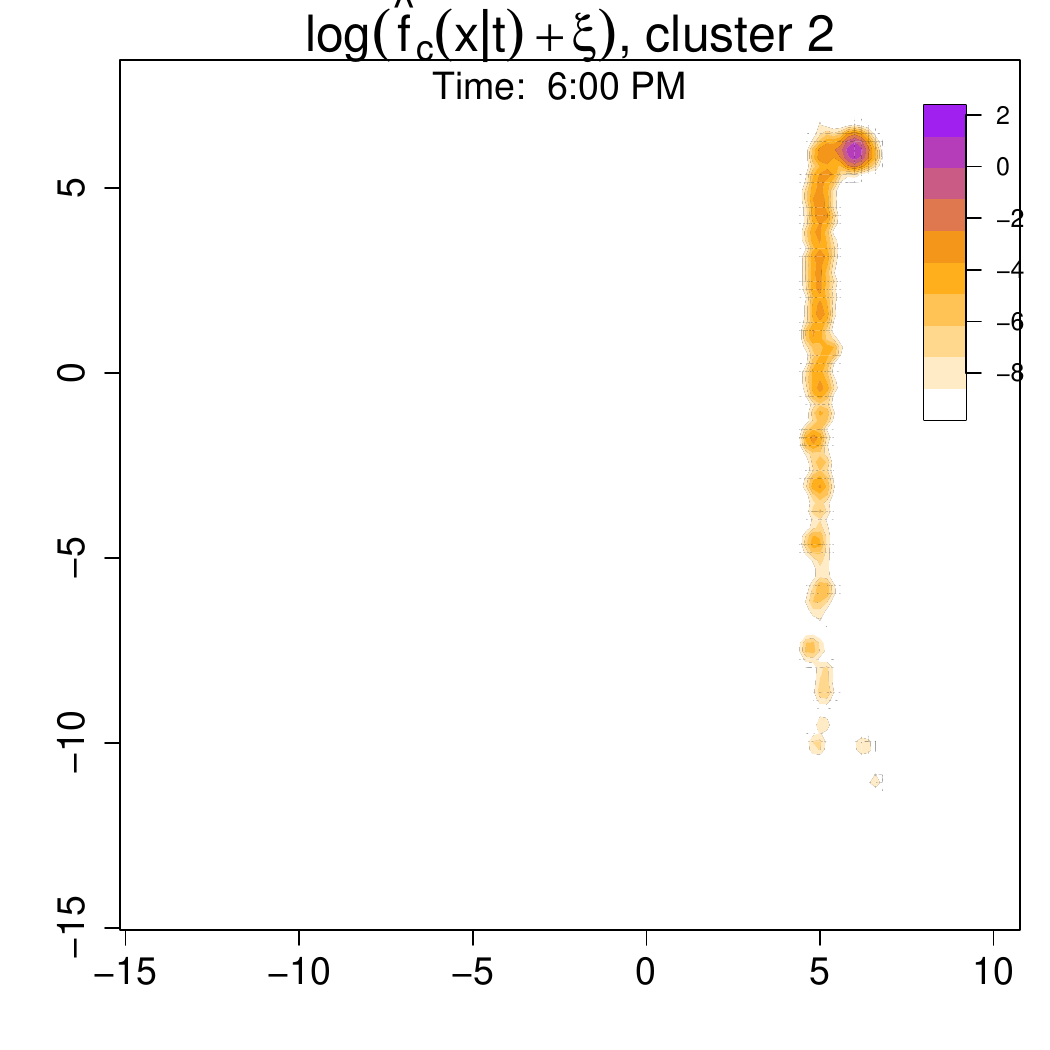}
		\caption{Estimated conditional density at \textbf{6:00~PM} for the two weekday clusters (Patterns 1 and 2). The cluster in the left panel shows a high density around home, while the right panel indicates a high density near the restaurant.}
		\label{fig:sim-con density-weekdays}
	\end{figure}
	Figure~\ref{fig:sim-con density-weekends} displays the conditional density at 12:00~PM for each weekend cluster. In Pattern 3, the individual is often near the supermarket around noon; in Pattern 4, they might be found at the beach; and in Pattern 5, they remain at home throughout the day.
	
	\begin{figure}
		\centering
		\includegraphics[width=0.25\linewidth]{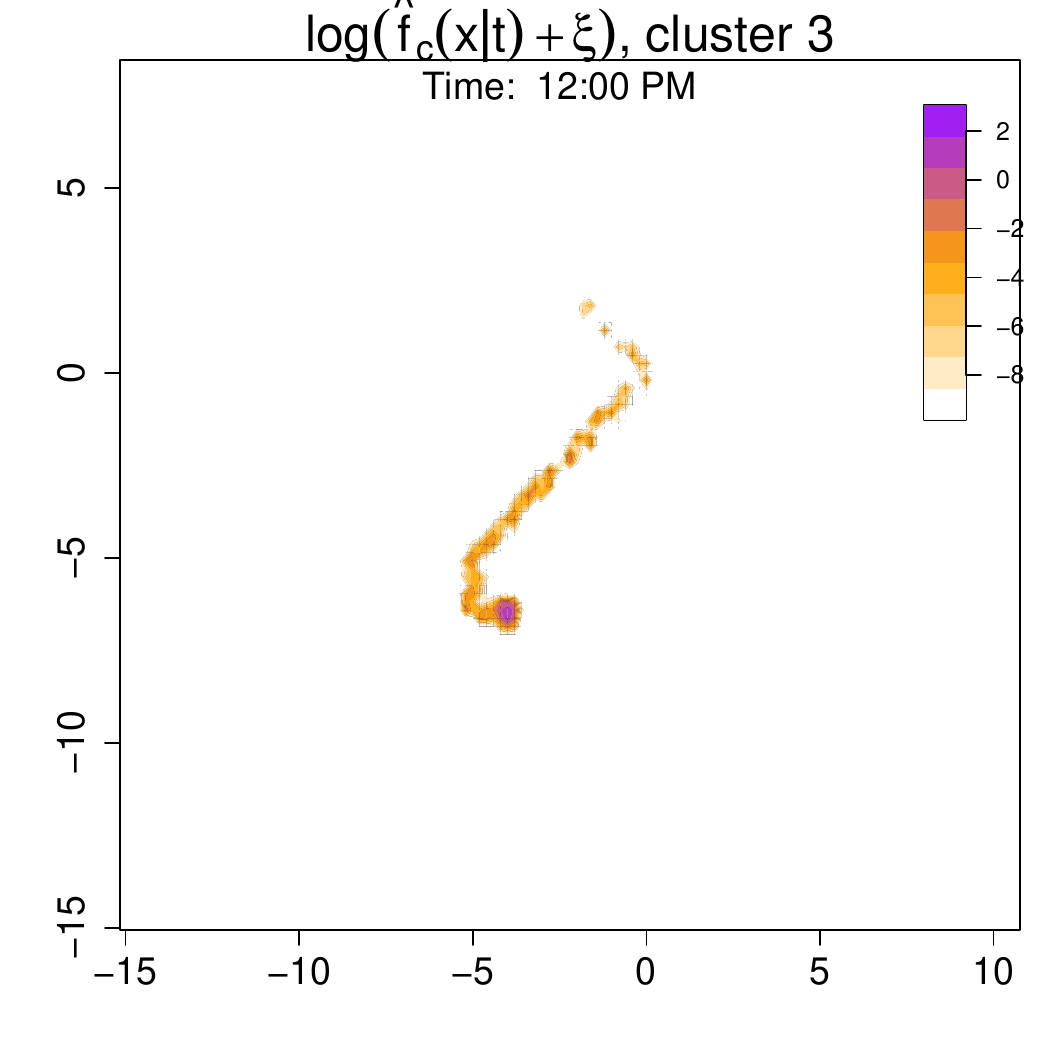}
		\includegraphics[width=0.25\linewidth]{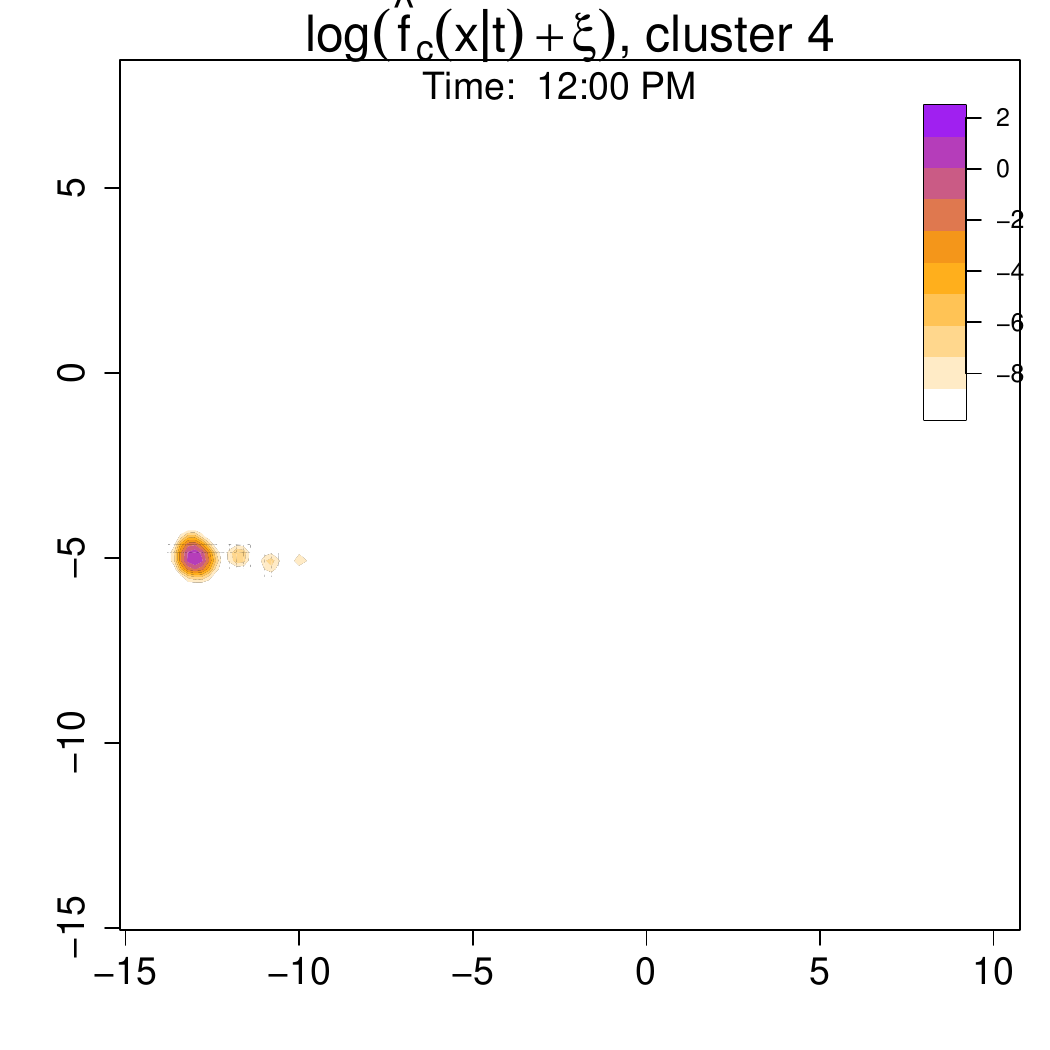}
		\includegraphics[width=0.25\linewidth]{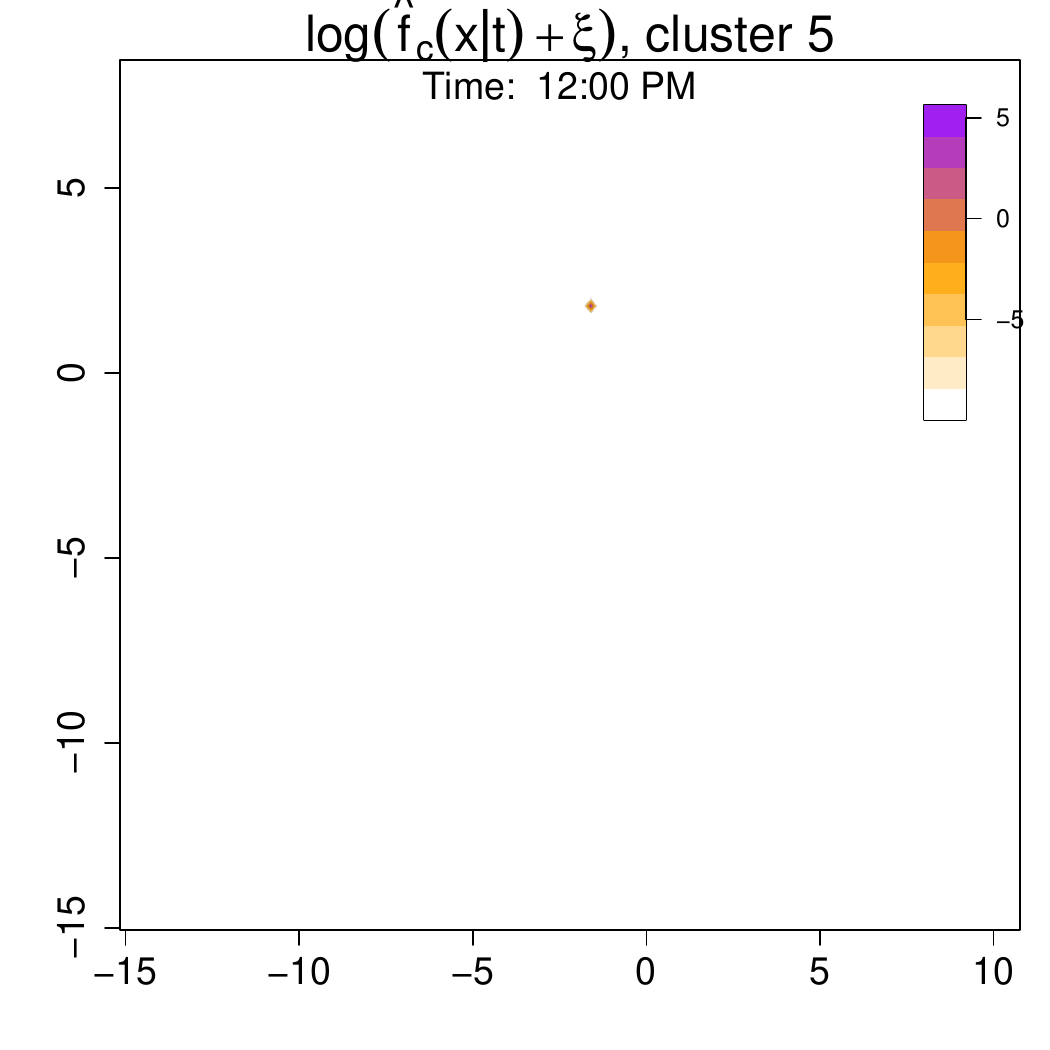}
		\caption{Estimated conditional density at \textbf{12:00~PM} for the weekend clusters (Patterns 3, 4, 5). From left to right: frequent midday location near the supermarket, near the beach, and at home.}
		\label{fig:sim-con density-weekends}
	\end{figure}
	
	\textbf{Centers on each cluster.}
	Given these day-level clusters, we also estimate a cluster-specific center \(\hat \mu_{g}(t)\) representing the average location for each cluster at time \(t\). Figure~\ref{fig:sim-centers} shows the computed centers at six time points (from 6:00~AM to 9:00~PM) for each cluster. The first (top-left) panel shows the combined data (all days), and the remaining panels correspond to each of the five clusters.
	
	\begin{figure}
		\centering
		\includegraphics[width=0.25\linewidth]{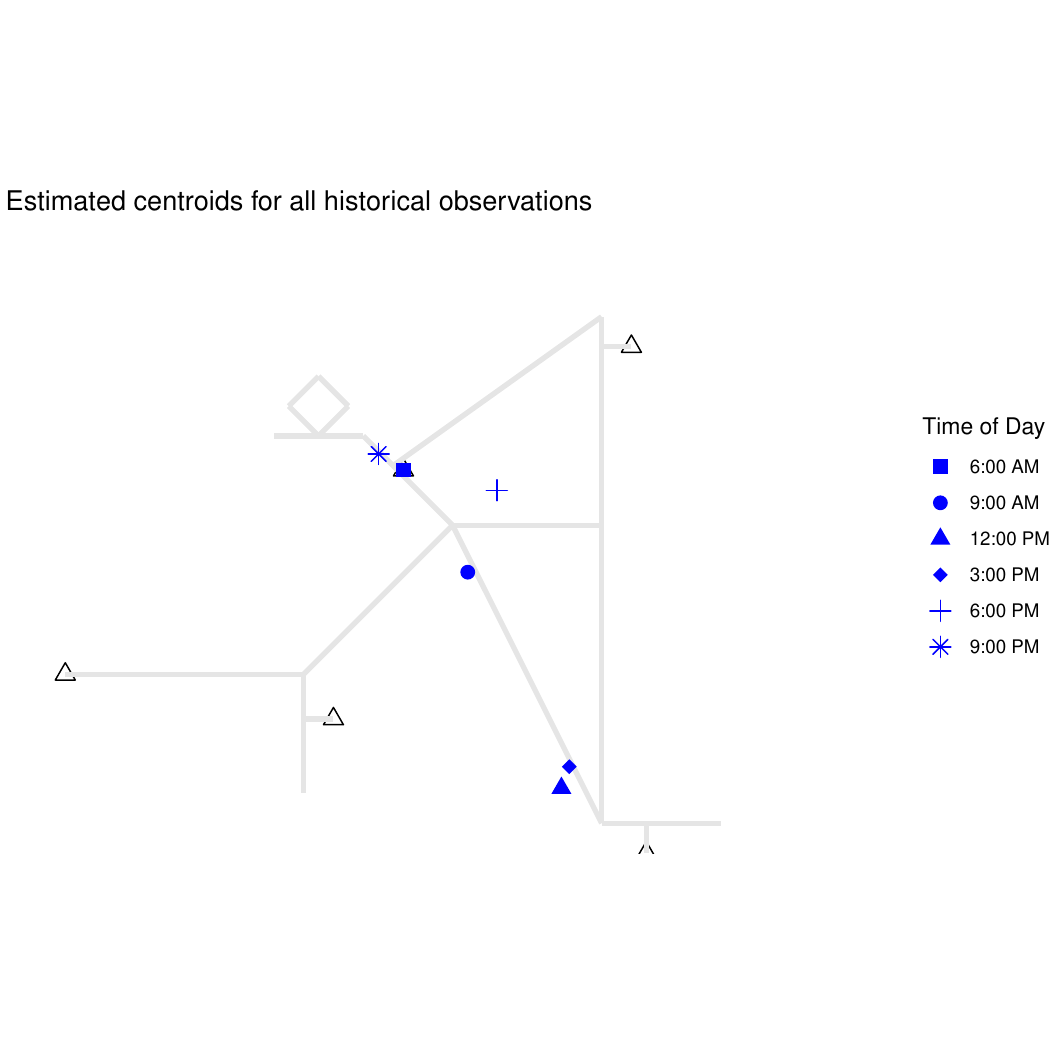}
		\includegraphics[width=0.25\linewidth]{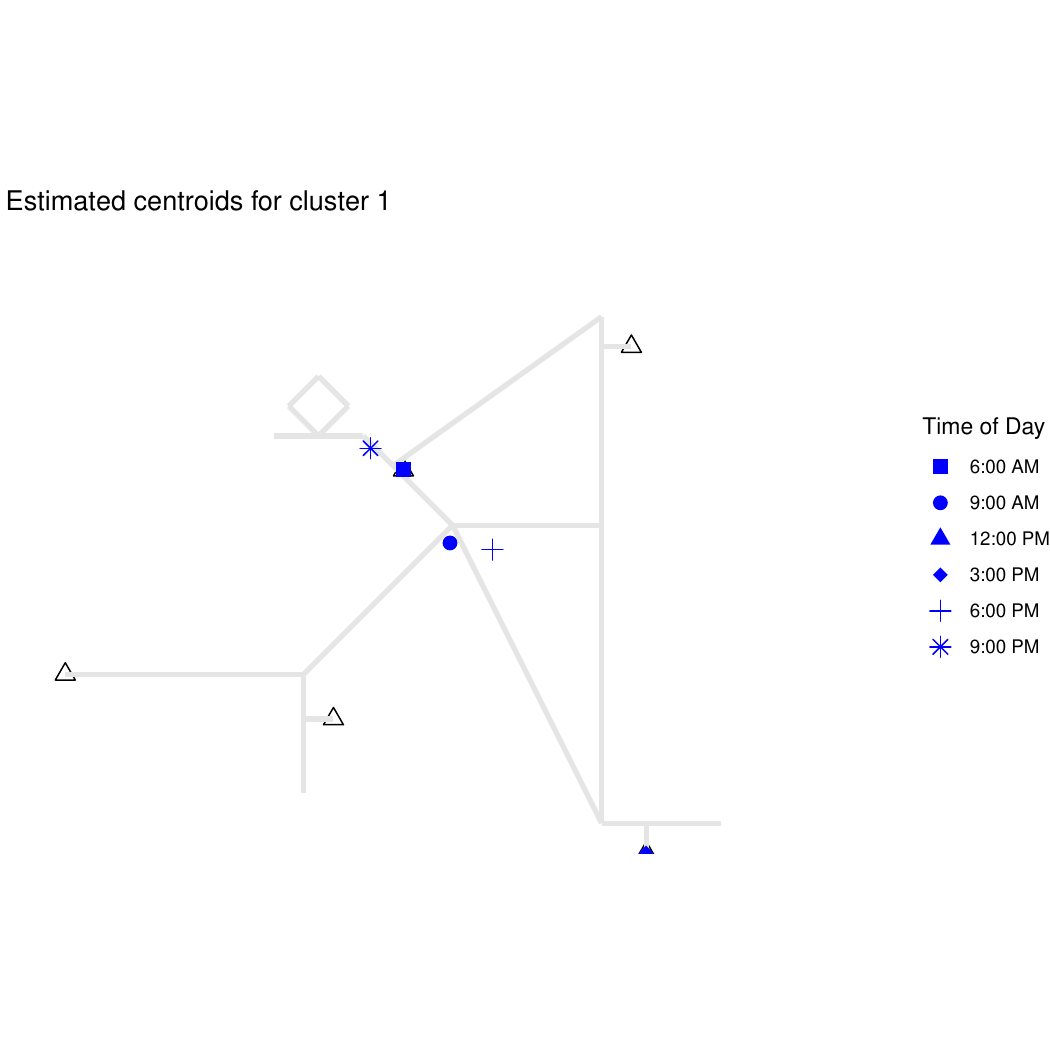}
		\includegraphics[width=0.25\linewidth]{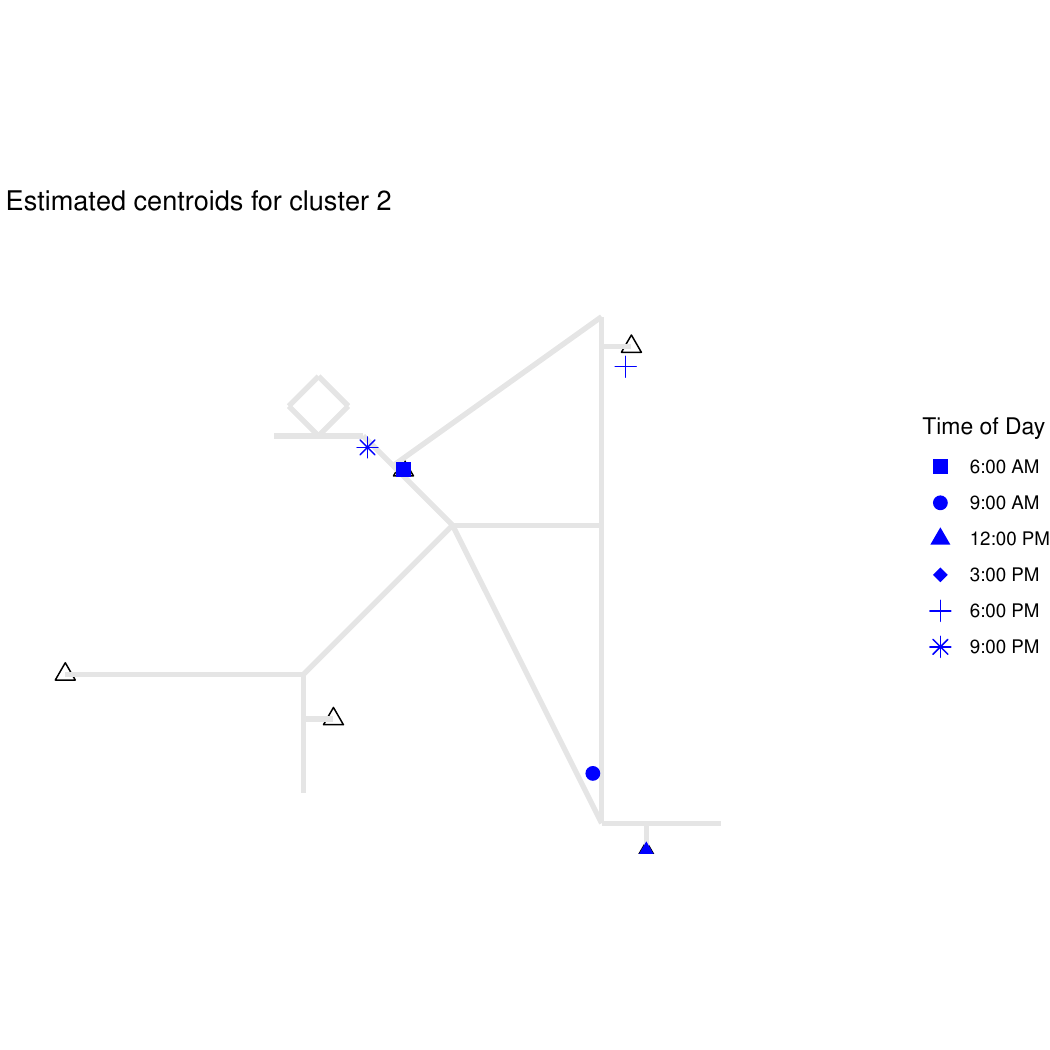}
		\includegraphics[width=0.25\linewidth]{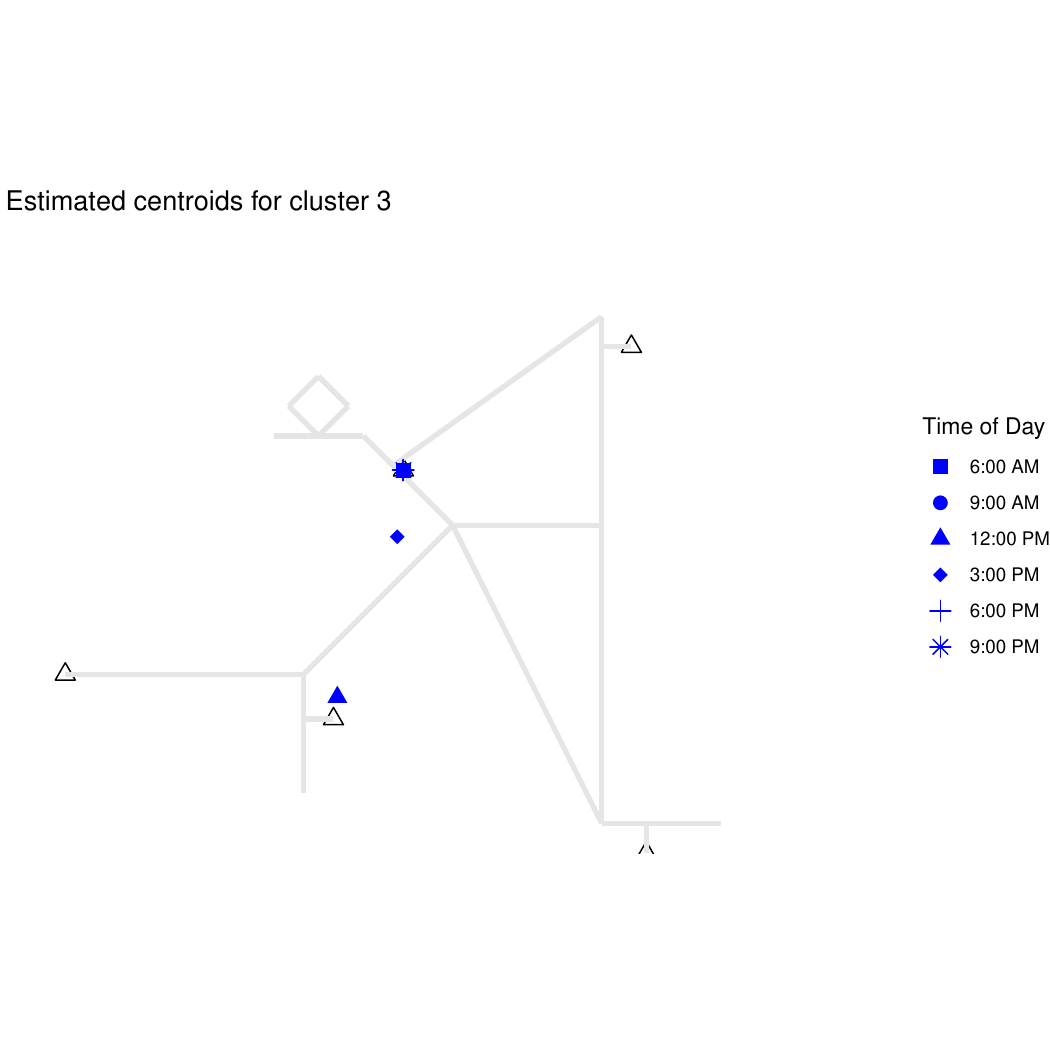}
		\includegraphics[width=0.25\linewidth]{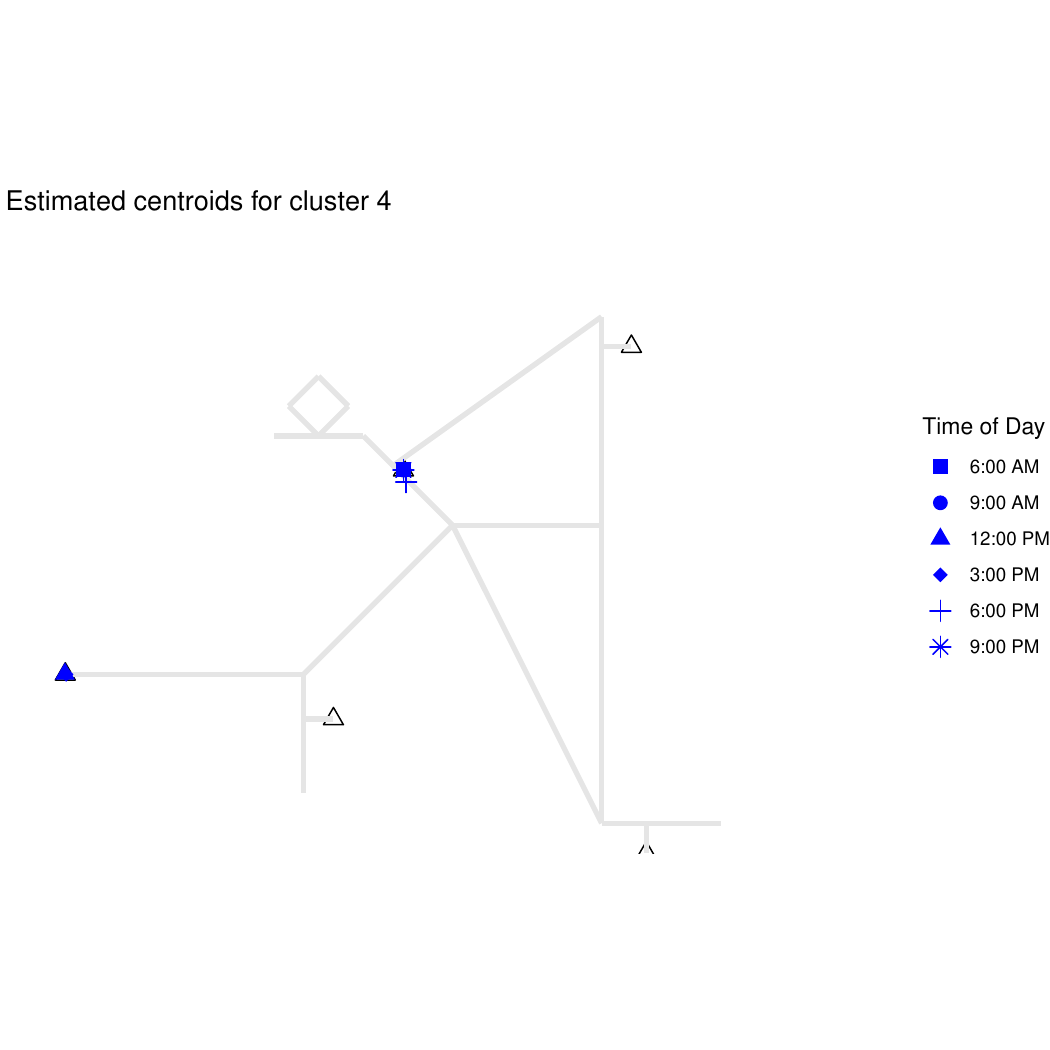}
		\includegraphics[width=0.25\linewidth]{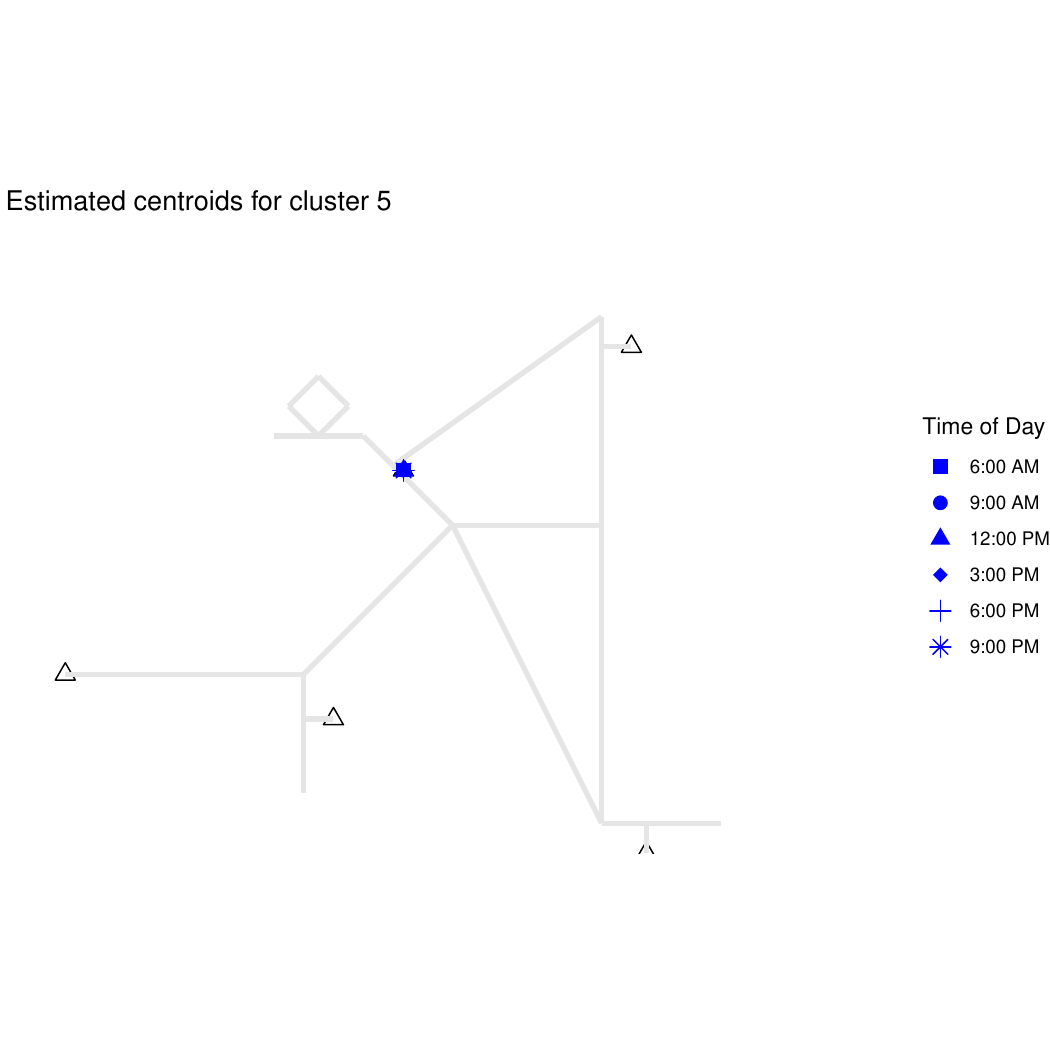}
		\caption{Estimated centers at six times of day (6:00~AM to 9:00~PM) across all days (\textbf{top-left} panel) and for each of the five activity clusters (subsequent panels).}
		\label{fig:sim-centers}
	\end{figure}
	On weekdays, for four of the six time points 6:00~AM, 12:00~PM, 3:00~PM, and 9:00~PM, both clusters have roughly the same center (reflecting time spent at home or in the office). However, at 9:00~AM, the Cluster 1 center is closer to home, whereas the Cluster 2 center is closer to the office, consistent with the expected earlier departure in the weekdays' morning under Pattern 2 in Table~\ref{Table:pattern_detail}.  
	At 6:00~PM, Cluster 1 centers near home, while Cluster 2 centers near the restaurant, mirroring the aforementioned difference in post-work behavior.
	
	Each weekend pattern displays a distinct midday center, aligning with the anchor points for each pattern on weekends: supermarket for Pattern 3, beach for Pattern 4, and home for Pattern 5.
	
	%	\subsubsection{Velocity estimation}
	%		\begin{figure}
		%		\centering
		%    \includegraphics[width=0.6\linewidth]{figures/true_density.png}
		%\includegraphics[width=0.25\linewidth]{p125/p125_hist}
		%		\includegraphics[width=0.25\linewidth]{sims/more/velocity/cluster_0}
		%		\includegraphics[width=0.25\linewidth]{sims/more/velocity/cluster_1}
		%		\includegraphics[width=0.25\linewidth]{sims/more/velocity/cluster_2}
		%		\includegraphics[width=0.25\linewidth]{sims/more/velocity/cluster_3}
		%		\includegraphics[width=0.25\linewidth]{sims/more/velocity/cluster_4}
		%		\includegraphics[width=0.25\linewidth]{sims/more/velocity/cluster_5}
		%\includegraphics[width=0.25\linewidth]{More simulations after SMM/p125_hist2}
		%		\caption{The estimated centers for 6 timestamps on all days' observation (top left panel) and on observations of each clustered day set on a simulation data.}
		%		\label{fig:sim-velocity}
		%	\end{figure}
	%	We also estimate the average moving velocity at any time for each cluster.

	\section{Assumptions}	\label{sec::assum}
	
	\begin{assumption}[Property of the kernel function]
		\label{ass:K-basic}
		The 2-dimensional kernel function $K(x)$ satisfies the following properties:
		
		(a) $K(x)\geq 0$ for any $x\in \mathbb{R}^2$;
		
		(b) $K(x)$ is symmetric, i.e. for any $x_1,x_2\in \mathbb{R}^2$ satisfying $\|x_1\|_2 = \|x_2\|_2$, we have $K(x_1) = K(x_2)$;
		
		(c) $\int K(x) dx=1$;
		
		(d) $\int x^Tx K(x)dx<\infty$.
		
		(e) $K$ have Lipchitz property:  there exists constant $L_{K}$ such that, for any  $x_1, x_2\in\mathbb{R}^2$, we have
		\begin{align*}
			|K(x_1)-K(x_2)|\leq L_{K}\|x_1-x_2\|.
		\end{align*}
	\end{assumption}
	
	Assumption \ref{ass:K-basic} are standard assumptions on the kernel functions
	\citep{wasserman2006all}.
	The additional requirement is the Lipchitz condition (e) but it is a very mild condition.

	\begin{assumption}
		\label{ass:XR-2-derivative}
		Any random trajectory $S(t)$
		is Lipschitz, i.e., there exists a $\bar v < \infty$ such that $\|S(t_1)-S(t_2)\|\leq \bar v d_T(t_1,t_2)$
		for any $t_1,t_2$.
	\end{assumption}
	
	The Lipschitz condition in Assumption \ref{ass:XR-2-derivative}
	allows the case where the individual suddenly start moving.
	It also implies that the velocity of the individual is finite (the Lipschitz constant).

	\begin{assumption}
		\label{ass:measurement deri bounded}
		The density of measurement error $f_{\epsilon}$ is bounded in the sense that $\|f_\epsilon(\eta)\|$, $\|\nabla f_{\epsilon}(\eta)\|$, and $\|\nabla\nabla f_{\epsilon}(\eta)\|_2$ are uniformly bounded by a constant $M_{\epsilon}$ for every $\eta\in\mathbb{R}^2$.
	\end{assumption}
	The Assumption \ref{ass:K-basic} is the common assumption for kernel density estimation, and most common kernel functions can satisfy it. When the distribution of measurement error is fixed, then the Assumption \ref{ass:measurement deri bounded} is automatically satisfied. 
	
	By the convolution theory, 
	Assumptions \ref{ass:measurement deri bounded} implies that the GPS density  $f_{GPS}$
	has bounded second derivative.

	\begin{assumption}
		\label{ass:K1,K2}
		The kernel function $K_T$ is a symmetric PDF and $\int t^2 K(t) dt<\infty$ and 
		is Lipschitz. 
	\end{assumption}
	
	Assumption \ref{ass:K1,K2} is for the kernel function on the time.
	It will be used when we analyze the asymptotic properties of the conditional estimator.

	\section{Proofs}	\label{sec::proof}
	%\subsection{Proves in Section~\ref{section: GPS data}}
	\subsection{Proof of Proposition \ref{prop::GPS}}
	%\propGPS*
	\begin{proof}
		By definition of $f_{GPS}$ in \eqref{def:GPS density},
		\begin{align*}
			f_{GPS}(x) =  \lim_{r\rightarrow 0} \frac{P(X^* \in B(x,r))}{\pi r^2} 
			=  \lim_{r\rightarrow 0} \frac{P(S(U) + \epsilon \in B(x,r))}{\pi r^2}.
		\end{align*}    
		Since $U\sim {\sf Uniform}([0,1])$, we have
		\begin{align*}
			\lim_{r\rightarrow 0} \frac{P(S(U) + \epsilon \in B(x,r))}{\pi r^2}  &= \lim_{r\rightarrow 0} \frac{\int P(S(U) + \epsilon \in B(x,r)|U=t) dt}{\pi r^2} \\
			&= \lim_{r\rightarrow 0} \frac{\int_{t\in[0,1]} P(S(t) + \epsilon \in B(x,r)) dt}{\pi r^2} \\
			&=\lim_{r\rightarrow 0} \frac{\int_{t\in[0,1]} P(X^*_t\in B(x,r)) dt}{\pi r^2}.
		\end{align*}
		Then, by the dominated convergence theorem, 
		\begin{align*}
			\lim_{r\rightarrow 0} \frac{\int_{t\in[0,1]} P(X^*_t\in B(x,r)) dt}{\pi r^2} = \int_{t\in[0,1]}\lim_{r\rightarrow 0} \frac{ P(X^*_t\in B(x,r))}{\pi r^2} dt=  \int_0^1 f_{GPS}(x|t)dt.
		\end{align*}
		By definition of $f_{GPS,A}$ in \eqref{def:GPS density in area},
		\begin{align*}
			f_{GPS,A}(x) = \lim_{r\rightarrow 0} \frac{P(X^*_A \in B(x,r))}{\pi r^2} = \lim_{r\rightarrow 0} \frac{P(S(U_A)+\epsilon \in B(x,r))}{\pi r^2}.
		\end{align*}
		Since $U_A\sim {\sf Uniform}(A)$, then by the same reasoning,
		\begin{align*}
			\lim_{r\rightarrow 0} \frac{P(S(U_A)+\epsilon \in B(x,r))}{\pi r^2}  =  \lim_{r\rightarrow 0} \frac{\int_{t\in A} P(S(t) + \epsilon \in B(x,r)) \frac{1}{|A|}dt}{\pi r^2} =\lim_{r\rightarrow 0}  \frac{1}{|A|}\frac{\int_{t\in A} P(X^*_t\in B(x,r)) dt}{\pi r^2}.
		\end{align*}
		Similarly, by the dominated convergence theorem, we have
		\begin{align*}
			\lim_{r\rightarrow 0}  \frac{1}{|A|}\frac{\int_{t\in A} P(X^*_t\in B(x,r)) dt}{\pi r^2}  =  \int_{t\in A} \lim_{r\rightarrow 0}\frac{1}{|A|}\frac{ P(X^*_t\in B(x,r)) }{\pi r^2}dt = \frac{1}{|A|}\int_A f_{GPS}(x|t)dt.
		\end{align*}
	\end{proof}

	\subsection{Proof of Proposition \ref{prop::AP}}
	%\propAP*
	\begin{proof}
		By definition of $f_{GPS}$ in \eqref{def:GPS density},
		\begin{align*}
			f_{GPS}(a) = \lim_{r\rightarrow 0} \frac{P(X^* \in B(a,r))}{\pi r^2}= \lim_{r\rightarrow 0} \frac{P(S(U)+\epsilon \in B(a,r))}{\pi r^2}.
		\end{align*}
		Since $U\sim {\sf Uniform}([0,1])$
		\begin{align*}
			& P(S(U)+\epsilon \in B(a,r))\\
			\geq & P(S(U)+\epsilon\in B(a,r), S(U)=a) \\
			= & P(S(U)+\epsilon\in B(a,r)| S(U)=a)P(S(U)=a)\\
			= &  P(S(U)=a)P(\epsilon \in B(0,r))\\
			\geq & \rho_a P(\epsilon \in B(0,r)).
		\end{align*}
		Note that 
		\begin{align*}
			\lim_{r\rightarrow 0}\frac{P(\epsilon \in B(y,r))}{\pi r^2} = \frac{1}{2\pi\sigma^2} \exp\left\{-\frac{y^Ty}{2\sigma^2}\right\},
		\end{align*}
		for any $y \in \mathbb{R}^2$. Hence, $\lim_{r\rightarrow 0}\frac{P(\epsilon \in B(0,r))}{\pi r^2} = \frac{1}{2\pi\sigma^2}$. Then,
		\begin{align*}
			\lim_{r\rightarrow 0} \frac{P(S(U)+\epsilon \in B(a,r))}{\pi r^2} \geq  \frac{\rho_a}{2\pi\sigma^2}.
		\end{align*}
		
		For regular GPS density $f_{GPS}(x)$, we have 
		\begin{align*}
			\int_{B(a, r)} f_{GPS}(x)dx = P(S(U)+\epsilon\in B(a, r)) \geq  P(S(U)=a,S(U)+\epsilon\in B(a, r)).
		\end{align*}
		Moreover, the right-hand-side in the above inequality can further be bound by
		\begin{align*}
			&  P(S(U)=a,S(U)+\epsilon\in B(a, r)) \\
			=&  P(S(U)+\epsilon\in B(a, r)|S(U)=a)P(S(U)=a)\\
			\geq & \rho_a P(\epsilon\in B(0, r)|S(U)=a)\\
			=& {\rho_a}F_{\chi^2_2}\left(\frac{r^2}{\sigma^2}\right).
		\end{align*}
		Note that we use the fact that $\epsilon$ is an isotropic Gaussian with a variance $\sigma^2$, so 
		$$
		P(\epsilon \in B(0, r)) =P(\epsilon^2 \leq r^2) = P\left(\frac{\epsilon^2}{\sigma^2} \leq \frac{r^2}{\sigma^2}\right)=F_{\chi^2_2}\left(\frac{r^2}{\sigma^2}\right).
		$$
		
	\end{proof}
	\subsection{Proof of Proposition \ref{prop::HA}}
	%\propHA*
	
	\begin{proof}
		We can write $\int_{C\oplus r} f_{GPS}(x)dx$ as
		\begin{align*}
			&\int_{C\oplus r} f_{GPS}(x)dx= P(S(U)+\epsilon\in C\oplus r)\\
			\geq & P(S(U)\in C, S(U)+\epsilon\in C\oplus r) \\
			\geq & P(S(U)\in C,S(U)+\epsilon\in B(S(U), r)) \\
			= & P(S(U)\in C,\epsilon\in B(0, r)) \\
			= & P(S(U)\in C)P(\epsilon\in B(0, r)) \\
			\geq & \rho_C  F_{\chi^2_2}\left(\frac{r^2}{\sigma^2}\right).
		\end{align*}
	\end{proof}

	\subsection{Proof of Proposition \ref{prop::GA}}
	%\propGA*
	\begin{proof}
		A straight forward calculation shows that 
		\begin{align*}
			G_C  \leq\int_{C} f_{GPS}(x)dx  &= P(S(U)+\epsilon\in C)\\ 
			& = P(S(U)+\epsilon\in C, \epsilon\leq r) + P(S(U)+\epsilon\in C, \epsilon >r)\\
			&\leq P(S(U)+\epsilon\in C, \epsilon\leq r)  + P(\epsilon>r)\\
			&\leq P(S(U)+\epsilon\in C| \epsilon\leq r) P(\epsilon \leq r)  + 1-F_{\chi^2_2}\left(\frac{r^2}{\sigma^2}\right)\\
			&\leq P(S(U)+\epsilon\in C| \epsilon\leq r)  F_{\chi^2_2}\left(\frac{r^2}{\sigma^2}\right) + 1-F_{\chi^2_2}\left(\frac{r^2}{\sigma^2}\right).
		\end{align*}
		Given $\epsilon \leq r$, a necessary condition to $S(U)+\epsilon\in C$ is that $S(U)\in C\oplus r$.
		Thus, 
		$$
		P(S(U)+\epsilon\in C| \epsilon\leq r) \leq P(S(U)\in C\oplus r).
		$$
		Putting this into the long inequality shows that 
		$$
		G_C \leq P(S(U)\in C\oplus r)F_{\chi^2_2}\left(\frac{r^2}{\sigma^2}\right) + 1-F_{\chi^2_2}\left(\frac{r^2}{\sigma^2}\right),
		$$
		which is the desired result after rearrangements.

	\end{proof}
	
	%\subsection{Proves in Section~\ref{section:asymptotic}}
	
	\subsection{Proof of Theorem \ref{thm::TW}}
	Let $\hat f_{w,i}(x)$ is the estimated density of $i-$th day observations,  i.e.
	\begin{align*}
		\hat f_{w,i}(x) = \frac{1}{h^2}\sum_{i=1}^{m_{i}}\frac{(t_{i,j+1}-t_{i,j-1})}{2}K\left(\frac{x-X_{i,j}}{h}\right).
	\end{align*}
	
	Since the latent trajectories across different days are IID, 
	this estimator will be IID quantities in different days if the timestamps $\{t_{i,j}: j=1,\ldots, m_i\}$
	are the same for every $i$.
	Since the time-weighted estimator can be written as the mean of each day, i.e.,
	$$
	\hat f_w (x) = \frac{1}{n}\sum_{i=1}^n     \hat f_{w,i}(x) , 
	$$
	we can investigate the bias of $\hat f_w$
	by investigating the bias of $\hat f_{w,i}$ versus $f_{GPS}$.
	The following lemma will be useful for this purpose. 
	
	\begin{lemma}[Riemannian integration approximation of GPS density]\label{lemma:Riemannian approx}
		Assume assumptions ~\ref{ass:K-basic} to \ref{ass:measurement deri bounded}.
		For any $i\in\{1,...,n\}$ and a trajectory $s\in \mathcal{S}$, there are constants $C_{\bar v}, M_\epsilon$ such that
		$C_{\bar v}$ depends on the maximal velocity of $s$ and $M_\epsilon$ depends on the distribution of $\epsilon$ 
		with

			\begin{align}
				\mathbb{E}\left[\hat f_{w,i}(x)\mid S_i=s\right] - \mathbb{E}\left[\frac{1}{h^2}K\left(\frac{x-X^*}{h}\right)\mid S_i=s\right] 
				&\leq    C_{\bar v}\left(\frac{1}{nh^3}\sum_{j=0}^{m_i}(t_{i,j+1}-t_{i,j})^2\right)	\label{eq:Term2, decompose}
			\end{align}

		for every $x\in\mathbb{R}^2$. %In 
		%}
	% \begin{align*}
		%     \mathbb{E}_{\epsilon}\left[\hat f_{GPS}(x)\right] - \mathbb{E}_{\epsilon}\left[\frac{1}{h^2}K\left(\frac{x-X^*}{h}\right)\right]  \leq  C_{\bar v}\left(\frac{1}{n_dh}\sum_{k=1}^{n_d}\sum_{i=0}^{m_k}(t_{k,i+1}-t_{k,i})^2\right)+M_{\epsilon}\left(\frac{1}{n_d}\sum_{k=1}^{n_d}(1-T_k)\right).
		% \end{align*}
\end{lemma}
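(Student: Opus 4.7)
The plan is to recognize that, conditionally on $S_i=s$, the quantity $\mathbb{E}[\hat f_{w,i}(x)\mid S_i=s]$ is a midpoint-style Riemann sum for the integral that defines $\mathbb{E}[\frac{1}{h^2}K(\frac{x-X^*}{h})\mid S_i=s]$, and bound the quadrature error using a Lipschitz estimate on the integrand. Concretely, define
$$
g(t) \;=\; \mathbb{E}_{\epsilon}\!\left[\frac{1}{h^2}K\!\left(\frac{x-s(t)-\epsilon}{h}\right)\right], \qquad t\in[0,1].
$$
Since the $\epsilon_{i,j}$ are independent of $S_i$, the identity $X_{i,j}=s(t_{i,j})+\epsilon_{i,j}$ gives $\mathbb{E}[\hat f_{w,i}(x)\mid S_i=s]=\sum_{j=1}^{m_i}W_{i,j}\,g(t_{i,j})$, while conditioning on $U$ in $X^*=s(U)+\epsilon$ and integrating $U\sim\mathsf{Uniform}[0,1]$ yields $\mathbb{E}[\frac{1}{h^2}K(\frac{x-X^*}{h})\mid S_i=s]=\int_0^1 g(t)\,dt$. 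The problem therefore reduces to bounding $\bigl|\sum_j W_{i,j}g(t_{i,j})-\int_0^1 g\bigr|$.

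Next I would establish a Lipschitz control on $g$. Using Assumption~\ref{ass:K-basic}(e) to Lipschitz the kernel and Assumption~\ref{ass:XR-2-derivative} to Lipschitz the trajectory, for any $t_1,t_2\in[0,1]$,
$$
|g(t_1)-g(t_2)| \;\le\; \mathbb{E}_{\epsilon}\!\left[\frac{L_K}{h^3}\,\|s(t_1)-s(t_2)\|\right]
\;\le\; \frac{L_K\bar v}{h^3}\,d_T(t_1,t_2).
$$
The factor $h^{-3}$ arises from the prefactor $h^{-2}$ in $g$ combined with the Lipschitz constant $L_K/h$ of $K((\cdot)/h)$; it is this route, rather than convolving the smoothness of $f_\epsilon$ through the kernel, that produces the bound stated in the lemma.

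Then I would treat the weights as a mid-point partition. With the cyclic convention $t_{i,0}=t_{i,m_i}-1$ and $t_{i,m_i+1}=1+t_{i,1}$, the sets $I_{i,j}:=\bigl[\tfrac{t_{i,j-1}+t_{i,j}}{2},\tfrac{t_{i,j}+t_{i,j+1}}{2}\bigr]$ have length $W_{i,j}$ and tile $[0,1]$ (cyclically). Writing $\sum_j W_{i,j}g(t_{i,j})-\int_0^1 g=\sum_j\int_{I_{i,j}}[g(t_{i,j})-g(t)]\,dt$ and using the Lipschitz bound together with $\sup_{t\in I_{i,j}}d_T(t,t_{i,j})\le W_{i,j}$ gives a per-interval error of at most $\frac{L_K\bar v}{h^3}W_{i,j}^2$. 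Summing and using $W_{i,j}\le\max(t_{i,j}-t_{i,j-1},t_{i,j+1}-t_{i,j})$ to rewrite $\sum_j W_{i,j}^2\le\sum_{j=0}^{m_i}(t_{i,j+1}-t_{i,j})^2$ delivers the claimed bound, with $C_{\bar v}$ a constant multiple of $L_K\bar v$.

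\textbf{Main obstacle.} The principal nuisance is bookkeeping around the cyclic boundary: one must verify that the intervals $\{I_{i,j}\}$ really partition $[0,1]$ under the wraparound convention so that $\sum_j W_{i,j}=1$ and no stray boundary terms survive, and one must apply the Lipschitz bound with the cyclic metric $d_T$ (which matches Assumption~\ref{ass:XR-2-derivative}) rather than the Euclidean distance on $[0,1]$. The other mild subtlety is interpreting the $\frac{1}{n}$ factor in the stated bound, which does not arise from the single-day estimate $\hat f_{w,i}$ itself but from averaging $\hat f_w=\frac{1}{n}\sum_i\hat f_{w,i}$ when this lemma is aggregated across days in the proof of Theorem~\ref{thm::TW}; the lemma's content is therefore the per-day estimate $\frac{C_{\bar v}}{h^3}\sum_{j=0}^{m_i}(t_{i,j+1}-t_{i,j})^2$.
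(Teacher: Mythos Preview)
Your proposal is correct and follows essentially the same approach as the paper: both reduce the conditional expectation to a Riemann-sum approximation of $\int_0^1 g(t)\,dt$ with $g(t)=\mathbb{E}_\epsilon[h^{-2}K((x-s(t)-\epsilon)/h)]$, obtain the Lipschitz bound $|g(t_1)-g(t_2)|\le L_K\bar v\,h^{-3}\,d_T(t_1,t_2)$ from Assumptions~\ref{ass:K-basic}(e) and~\ref{ass:XR-2-derivative}, and then bound the quadrature error interval by interval.

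The only noteworthy difference is organizational. The paper partitions $[0,1]$ by the observation times $[t_{i,j},t_{i,j+1}]$, which forces it to split into an interior piece (term~III) and two cyclic boundary pieces (terms~IV and~V) handled separately. Your midpoint partition $I_{i,j}=[\tfrac{t_{i,j-1}+t_{i,j}}{2},\tfrac{t_{i,j}+t_{i,j+1}}{2}]$ has length exactly $W_{i,j}$ and, with the wraparound convention, tiles $[0,1]$ uniformly, so the boundary terms are absorbed into the same sum without a separate case analysis; this is a bit cleaner. Your observation about the stray $1/n$ in the lemma statement is also correct: the per-day bound is $C_{\bar v}\,h^{-3}\sum_{j=0}^{m_i}(t_{i,j+1}-t_{i,j})^2$, and the paper itself uses it in that form when plugging into Theorem~\ref{thm::TW}.
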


%Lemma \ref{lemma:Riemannian approx} shows that, after the activity pattern is determined, the weighted density estimation with non-i.i.d. generated timestamps (and corresponding GPS observations) could approximate the density estimation by i.i.d. generated timestamps (and corresponding GPS observations).

\begin{proof}[Proof of Lemma \ref{lemma:Riemannian approx}]
	Since $S_1,...,S_n\sim_{i.i.d} P_S$, without loss of generality, we only need to derive \eqref{eq:Term2, decompose} under $i=1$. Then, for $i=1$, we can decompose ~\eqref{eq:Term2, decompose} to the following two terms:
	\begin{equation}
		\begin{aligned}
			\mathbb{E}\left[\hat f_{w,1}(x)\mid S_1=s\right] &- \mathbb{E}\left[\frac{1}{h^2}K\left(\frac{x-X^*}{h}\right)\mid S_1=s\right]  \\
			&=\underbrace{\frac{1}{h^2}\sum_{j=1}^{m_{1}}\frac{(t_{1,j+1}-t_{1,j-1})}{2}\int_{\epsilon_{1,j}}K\left(\frac{x-s(t_{1,j})+\epsilon_{1,j}}{h}\right) dF_{\epsilon}(\epsilon_{1,j})}_{(I)}\\
			&-\underbrace{\frac{1}{h^2} \int_{\nu}\int_{u\in[0,1]} K\left(\frac{x-s(u)+\nu}{h}\right) dudF_{\epsilon}(\nu)}_{(II)} .
		\end{aligned}
		\label{eq: Lemma 11-1}
	\end{equation}
	
	To simplify in \eqref{eq: Lemma 11-1}, we use $I_h(t_{1,j};s,x)$ to denote $\int_{\epsilon_{1,j}}K\left(\frac{x-s(t_{1,j})+\epsilon_{1,j}}{h}\right) dF_{\epsilon}$, i.e. the expection of $K\left(\frac{x-X_{1,j}}{h}\right)$ at time $t_{1,j}$ given latent trajectory $S_1 = s$. 
	%% HY-final 8: since $u$ can only take value in $[0,1]$, but the t_{1,0}, t_{1,m_1+1} can take values outside $[0,1]$, we need to decompose the integral by [0,t_{1,1}, [t_{1,j},t_{1,j+1}]], [t_{1,m_i},1]
	
	\textbf{Term (I): }
	Then, by $t_{1,j+1}-t_{1,j-1} = (t_{1,j+1}-t_{1,j}) + (t_{1,j}-t_{1,j-1}) = \int_{t_{1,j}}^{t_{1,j+1}}du+\int_{t_{1,j-1}}^{t_{1,j}}du$, Term (I) in \eqref{eq: Lemma 11-1} can be written as
	\begin{align}
		&\frac{1}{h^2}\sum_{j=1}^{m_{1}}\frac{(t_{1,j+1}-t_{1,j-1})}{2}I_h(t_{1,j};s,x) \nonumber\\
		=&\frac{(t_{1,1}-t_{1,0})}{2}I_h(t_{1,1};s,x)  + \frac{(t_{1,m_1+1}-t_{1,m})}{2}I_h(t_{1,m};s,x)  + \frac{1}{h^2}\sum_{j=1}^{m_{1}-1}\frac{(t_{1,j+1}-t_{1,j})}{2}I_h(t_{1,j};s,x).\nonumber
	\end{align}
	
	For the weights of first and last observation, note that,
	\begin{align*}
		\frac{(t_{1,1}-t_{1,0})}{2}=\frac{(t_{1,m_1+1}-t_{1,m_1})}{2}=\frac{1-(t_{1,m_1}-t_{1,1})}{2}=\frac{1}{2}\left[\int_{0}^{t_{1,1}}1du + \int_{t_{1,m_1}}^{1}1du\right].
	\end{align*}
	We have
	\begin{align}
		&\frac{1}{h^2}\sum_{j=1}^{m_{1}}\frac{(t_{1,j+1}-t_{1,j-1})}{2}I_h(t_{1,j};s,x) \nonumber\\
		= & \frac{1}{h^2}\sum_{j=1}^{m_{1}-1}\int_{t_{1,j}}^{t_{1,j+1}}I_h(t_{1,j};s,x)du +\frac{1}{2h^2}\left[\int_{0}^{t_{1,1}}I_h(t_{1,1};s,x)du + \int_{t_{1,m_1}}^{1}I_h(t_{1,1};s,x)du\right]+\nonumber\\
		&\frac{1}{2h^2}\left[\int_{0}^{t_{1,1}}I_h(t_{1,m_1};s,x)du + \int_{t_{1,m_1}}^{1}I_h(t_{1,m_1};s,x)du\right].\label{eq:term2: 1st term}
	\end{align}
%{\color{magenta}YC: Please change all the notations i, j, k to be consistent with the main paper...}
	
	\textbf{Term (II): }
	Note that $I_h(u;s,x)=\int_{\epsilon}K\left(\frac{x-s(u)+\nu)}{h}\right)dF_{\epsilon}(\nu)$, then Term (II) in the \eqref{eq: Lemma 11-1} can be decomposed to
		\begin{align}
			&\frac{1}{h^2} \int_{\nu}\int_{u\in[0,1]} K\left(\frac{x- s(u)+ \nu}{h}\right) dudF_{\epsilon}(\nu)\nonumber\\
			=&\frac{1}{h^2} \int_{u\in[0,1]}I_h(u;s,x)du \nonumber\\
			=&  \frac{1}{h^2}\left[\int_{0}^{t_{1,1}}I_h(u;s,x)du +\int_{t_{1,1}}^{t_{1,m_1}}I_h(u;s,x)du +  \int_{t_{1,m_1}}^{1}I_h(u;s,x)du \right]\nonumber\\
			=& \frac{1}{h^2}\sum_{j=1}^{m_{1}-1} \int_{t_{1,j}}^{t_{1,j+1}}I_h(u;s,x)du+  \frac{1}{h^2}\left[\int_{0}^{t_{1,1}}I_h(u;s,x)du +\int_{t_{1,m_1}}^{1}I_h(u;s,x)du \right]. 		\label{eq:term2: 2nd term}
	\end{align}
	Namely, we decompose (II) into each interval $[t_{i,j}, t_{i,j+1}]$
	so that we can easily compare it with term (I).
	
	Combining \eqref{eq:term2: 1st term} and \eqref{eq:term2: 2nd term}, \eqref{eq: Lemma 11-1} can be written as:
	\begin{align}
		&\mathbb{E}\left[\hat f_{GPS,1}(x)\mid S_1=s\right] - \mathbb{E}\left[\frac{1}{h^2}K\left(\frac{x-X^*}{h}\right)\mid S_1=s\right] \nonumber \\
		=& \underbrace{\frac{1}{h^2}\sum_{j=1}^{m_{1}-1} \left[\int_{t_{1,j}}^{t_{1,j+1}}I_h(t_{1,j};s,x)-I_h(u;s,x)du+\right]}_{(III)}+\nonumber\\
		&\underbrace{\frac{1}{2h^2} \left[\int_{0}^{t_{1,1}}I_h(t_{1,1};s,x)-I_h(u;s,x)du+\int_{t_{1,m_1}}^{1}I_h(t_{1,m_1};s,x)-I_h(u;s,x)du\right]}_{(IV)}+\nonumber\\
		&\underbrace{\frac{1}{2h^2} \left[\int_{0}^{t_{1,1}}I_h(t_{1,m_1};s,x)-I_h(u;s,x)du+\int_{t_{1,m_1}}^{1}I_h(t_{1,1};s,x)-I_h(u;s,x)du\right]}_{(V)}\label{eq:Term2: 3terms}
	\end{align}
	In RHS of \eqref{eq:Term2: 3terms}, the term (III) represents the error caused by the approximation using GPS observations. The source of the term (IV) and (V) is due to our design of circular weights for the first and last observation in each day. 
	
	%\textcolor{red}{Riemannian appro}
	%% HY-final 9: this is the main key of lemma proving after integrating the circular structure of time slots:
	Before deriving the convergence rate of term (III), (IV) and (V) in \eqref{eq:Term2: 3terms}, note that, by Assumption \ref{ass:XR-2-derivative} (velocity assumption) and Assumption~\ref{ass:K-basic} (Lipchitz property of the kernel function $K$), for every $j\in \{1,...,m_1\}, u\in [0,1]$ the term $\left| K\left(\frac{x-s(t_{1,j})+\epsilon_{1,j}}{h}\right)-K\left(\frac{x-s(u)+ \epsilon_{1,j}}{h}\right)  \right|$ in the further analysis can be bounded in the following way:
	\begin{align}
		\left| K\left(\frac{x-s(t_{1,j})+ \epsilon_{1,j}}{h}\right)-K\left(\frac{x-s(u)+ \epsilon_{1,j}}{h}\right) \right| \leq  L_K\left\|\frac{s(t_{1,j})}{h}-\frac{s(u)}{h}  \right\| \leq \frac{L_K\bar v}{h}d_T(t_{1,j}, u), \label{eq: Riem-velocity0}
	\end{align}
where $d_T(t_{1,j}, u)= \min\left\{|t_{1,j}-u|, 1-|t_{1,j}-u|\right\}$
	
	\textbf{Term (III):}
	For the convergence rate of the term (III) in \eqref{eq:Term2: 3terms}, we find that
	\begin{equation}
		\begin{aligned}
			&\sum_{j=1}^{m_{1}-1} \left[\int_{t_{1,j}}^{t_{1,j+1}}I_h(u;s,x)-I_h(t_{1,j};s,x)du\right] \\
			=&\frac{1}{h^2}\sum_{j=1}^{m_{1}}\int_{t_{1,j}}^{t_{1,j+1}}\int_{\epsilon_{1,j}}K\left(\frac{x-s(t_{1,j})+\epsilon_{1,j}}{h}\right)dF_{\epsilon}(\epsilon_{1,j})du -\frac{1}{h^2}\sum_{j=1}^{m_{1}}\int_{t_{1,j}}^{t_{1,j+1}}\int_{\epsilon_{1,j}}K\left(\frac{x-s(u)
				+ \epsilon_{1,j}}{h}\right)dF_{\epsilon}(\epsilon_{1,j})du \\
			\leq & \frac{1}{h^2}\sum_{j=1}^{m_{1}}\int_{t_{1,j}}^{t_{1,j+1}}\int_{\epsilon_{1,j}}\left| K\left(\frac{x-s(t_{1,j})+\epsilon_{1,j}}{h}\right)-K\left(\frac{x-s(u)+ \epsilon_{1,j}}{h}\right)  \right| dF_{\epsilon}(\epsilon_{1,j})du  
		\end{aligned}
		\label{eq:K-K}
	\end{equation}
	In \eqref{eq:K-K},  for every $j\in\{1,...,m_i\}$ and $u\in[t_{1,j-1},t_{1,j}]$ corresponding to each integral of the summation, we have $d_T(u,t_{1,j})\leq t_{1,j+1}-t_{1,j}$
	Hence, by applying~\eqref{eq: Riem-velocity0}, the term (III) in \eqref{eq:Term2: 3terms} can be bounded by
	\begin{align}
		&\frac{1}{h^2}\frac{1}{T_1}\sum_{j=1}^{m_{1}-1} \left[\int_{t_{1,j}}^{t_{1,j+1}}I_h(u;s,x)-I_h(t_{1,j};s,x)du\right]\nonumber \\
		\leq &\frac{1}{h^3}\sum_{j=1}^{m_{1}-1}\int_{t_{1,j}}^{t_{1,j+1}}\int L_K\bar v d_T(t_{1,j},u)dF_{\epsilon}(\epsilon_{1,j}) du \quad\quad \text{[By \eqref{eq: Riem-velocity0}]}\nonumber\\
		\leq & \frac{1}{2h^3}\sum_{j=1}^{m_{1}-1} {\left(t_{1,j+1}-t_{1,j}\right)^2}L_K\bar v\quad\quad \text{[By }d_T(u,t_{1,j})\leq t_{1,j+1}-t_{1,j}]\nonumber\\
		\leq & C_{\bar v}\frac{\sum_{j=1}^{m_{1}-1} \left(t_{1,j+1}-t_{1,j}\right)^2}{h^3}, \label{eq: Riem-term1}
	\end{align}   
	for some constant $C_{\bar v}$.
	
	\textbf{Term (IV)}: For term (IV): 
	\begin{align}
		&\frac{1}{2h^2} \left[\int_{0}^{t_{1,1}}I_h(t_{1,1};s,x)-I_h(u;s,x)du+\int_{t_{1,m_1}}^{1}I_h(t_{1,m_1};s,x)-I_h(u;s,x)du\right]\nonumber \\
		=&\frac{1}{2h^2}\int_{0}^{t_{1,1}}\int_{\epsilon_{1,1}}K\left(\frac{x-s(t_{1,1})+\epsilon_{1,1}}{h}\right)-K\left(\frac{x-s(u)+ \epsilon_{1,j}}{h}\right)dF_{\epsilon}(\epsilon_{1,1})du+ \nonumber\\
	&\frac{1}{2h^2}\int_{t_{1,m_1}}^{1}\int_{\epsilon_{1,m_1}}K\left(\frac{x-s(t_{1,m_1})+\epsilon_{1,m_1}}{h}\right)-K\left(\frac{x-s(u)
				+ \epsilon_{1,m_1}}{h}\right)dF_{\epsilon}(\epsilon_{1,m_1})du \label{eq:decom of 1st and last-IV}
	\end{align}
	Since two parts in \eqref{eq:decom of 1st and last-IV} have same form, we only need to focus on the first term for RHS of \eqref{eq:decom of 1st and last-IV}. Recall $t_{1,0}=t_{1,m}-1$, then, for $u\in [0,t_{1,1}]$, $d_T(t_{1,1},u)\leq t_{1,1}\leq t_{1,1}-t_{1,0}$. Thus,
	\begin{align}
	&\frac{1}{2h^2}\int_{0}^{t_{1,1}}\int_{\epsilon_{1,1}}K\left(\frac{x-s(t_{1,1})+\epsilon_{1,1}}{h}\right)-K\left(\frac{x-s(u)
		+ \epsilon_{1,1}}{h}\right)dF_{\epsilon}(\epsilon_{1,1})du \nonumber\\
	\leq &\frac{1}{2h^2}\int_{0}^{t_{1,1}}\int_{\epsilon_{1,1}}\left|K\left(\frac{x-s(t_{1,1})+\epsilon_{1,1}}{h}\right)-K\left(\frac{x-s(u)
		+ \epsilon_{1,1}}{h}\right)\right|dF_{\epsilon}(\epsilon_{1,1})du\nonumber\\
		\leq &\frac{1}{2h^3}\int_{0}^{t_{1,1}}\int L_K\bar v d_T(t_{1,1},u)dF_{\epsilon}(\epsilon_{1,1}) du \quad\quad \text{[By \eqref{eq: Riem-velocity0}]}\nonumber\\
		\leq &C_{\bar v}\frac{\left(t_{1,1}-t_{1,0}\right)^2}{2h^3}.\label{eq: Riem-term2-1}
	\end{align}
	Similarly,  for the second term of RHS of \eqref{eq:decom of 1st and last-IV}, we have
	\begin{align}
		&\frac{1}{2h^2}\int_{t_{1,m_1}}^{1}\int_{\epsilon_{1,m_1}}K\left(\frac{x-s(t_{1,m_1})+\epsilon_{1,m_1}}{h}\right)-K\left(\frac{x-s(u)
			+ \epsilon_{1,m_1}}{h}\right)dF_{\epsilon}(\epsilon_{1,m_1})du \leq C_{\bar v}\frac{\left(t_{1,m_1+1}-t_{1,m_1}\right)^2}{2h^3}.\label{eq: Riem-term2-2}
	\end{align}
	
	\textbf{Term (V):} 
	For term (V): 
	\begin{align}
		&\frac{1}{2h^2} \left[\int_{0}^{t_{1,1}}I_h(t_{1,m_1};s,x)-I_h(u;s,x)du+\int_{t_{1,m_1}}^{1}I_h(t_{1,1};s,x)-I_h(u;s,x)du\right]\nonumber \\
		=&\frac{1}{2h^2}\int_{0}^{t_{1,1}}\int_{\epsilon_{1,m_1}}K\left(\frac{x-s(t_{1,m_1})+\epsilon_{1,m_1}}{h}\right)-K\left(\frac{x-s(u)+ \epsilon_{1,m_1}}{h}\right)dF_{\epsilon}(\epsilon_{1,m_1})du+ \nonumber\\
		&\frac{1}{2h^2}\int_{t_{1,m_1}}^{1}\int_{\epsilon_{1,1}}K\left(\frac{x-s(t_{1,1})+\epsilon_{1,1}}{h}\right)-K\left(\frac{x-s(u)
			+ \epsilon_{1,1}}{h}\right)dF_{\epsilon}(\epsilon_{1,1})du \label{eq:decom of 1st and last-V}
	\end{align}
		Similar as the procedure of deriving convergence rate of term (IV), the two parts in RHS of \eqref{eq:decom of 1st and last-V} have same form, we only need to focus on the first term for RHS of \eqref{eq:decom of 1st and last-V}, for $u\in [0,t_{1,1}]$, $d_T(t_{1,m_1},u)\leq 1-(t_{1,m_1}-t_{1,1})= t_{1,m_1+1}-t_{1,m_1}$. Thus,
	\begin{align}
		&\frac{1}{2h^2}\int_{0}^{t_{1,1}}\int_{\epsilon_{1,1}}K\left(\frac{x-s(t_{1,m_1})+\epsilon_{1,m_1}}{h}\right)-K\left(\frac{x-s(u)
			+ \epsilon_{1,m_1}}{h}\right)dF_{\epsilon}(\epsilon_{1,m_1})du \nonumber\\
		\leq &\frac{1}{2h^2}\int_{0}^{t_{1,1}}\int_{\epsilon_{1,1}}\left|K\left(\frac{x-s(t_{1,m_1})+\epsilon_{1,m_1}}{h}\right)-K\left(\frac{x-s(u)
			+ \epsilon_{1,m_1}}{h}\right)\right|dF_{\epsilon}(\epsilon_{1,m_1})du\nonumber\\
		\leq &\frac{1}{2h^3}\int_{0}^{t_{1,1}}\int L_K\bar v d_T(t_{1,m_1},u)dF_{\epsilon}(\epsilon_{1,1}) du \quad\quad \text{[By \eqref{eq: Riem-velocity0}]}\nonumber\\
		\leq &C_{\bar v}\frac{\left(t_{1,m_1+1}-t_{1,m_1}\right)^2}{2h^3}.\label{eq: Riem-term3-1}
	\end{align}
	Similarly, 
	\begin{align}
		\frac{1}{2h^2}\int_{t_{1,m_1}}^{1}\int_{\epsilon_{1,1}}K\left(\frac{x-s(t_{1,1})+\epsilon_{1,1}}{h}\right)-K\left(\frac{x-s(u)
			+ \epsilon_{1,1}}{h}\right)dF_{\epsilon}(\epsilon_{1,1})du\leq C_{\bar v}\frac{\left(t_{1,1}-t_{1,0}\right)^2}{2h^3} \label{eq: Riem-term3-2}
	\end{align}
	
	Hence, combining \eqref{eq: Riem-term1}, \eqref{eq: Riem-term2-1}, \eqref{eq: Riem-term2-2}  and \eqref{eq: Riem-term3-1}, \eqref{eq: Riem-term3-2}, for every $x\in\mathcal{W}$, there is constant $C_{\bar v}$ such that
	\begin{align*}
		\mathbb{E}\left[\hat f_{w,i}(x)\mid S=s\right] - \mathbb{E}\left[\frac{1}{h^2}K\left(\frac{x-X^*}{h}\right)\mid S=s\right]
		\leq   C_{\bar v}\left(\frac{1}{n h^3}\sum_{i=1}^{n}\sum_{j=0}^{m_k}(t_{i,j+1}-t_{i,j})^2\right)
	\end{align*}
	
\end{proof}
%{\color{brown}YC: To be consistent with previous proofs, you should include the theorem statement here again. }
%{\color{brown}YC: Let me polish the proof. You should do the same for the rest. }
%{\color{brown}YC: You miss quiet a few period (.) after equations.}
%{\color{brown}YC: Is there any particular reason why you want to write (B, Z|B)? You can elegantly write it as S since the joint implies the conditional.}

% We define
% $$
% X^*_s = s(U) + \epsilon,
% $$
% for given $s\in \mathcal{S}$.
% $X^*_s$ can be viewed as the position $s(U)$ given the trajectory $s$ corrupted by the measurement error $\epsilon$.
% With this, we define
% the conditional average GPS density is the PDF of $X^*_s$:

%\thmTW*
\begin{proof}
	For any $x$, $f_{GPS}(x)-\hat f_{w}(x)$ can be decomposed to the following three terms:
	\begin{align*}
		f_{GPS}(x)-\hat f_{w}(x) =& \underbrace{f_{GPS}(x)-\mathbb{E}\left[\frac{1}{h^2}K\left(\frac{x-X^*}{h}\right)\right]}_{(I)} + \underbrace{\mathbb{E}\left[\frac{1}{h^2}K\left(\frac{x-X^*}{h}\right)\right] - \mathbb{E}[\hat f_{w}(x)]}_{(II)} +\underbrace{\mathbb{E}[\hat f_{w}(x)]  - \hat f_{w}(x)}_{(III)}.
	\end{align*}

	{\bf Term (I):} $f_{GPS}(x)-\mathbb{E}\left[\frac{1}{h^2}K\left(\frac{x-X^*}{h}\right)\right]$;  For any $x\in\mathbb{R}^2$, we have
	\begin{align*}
		&f_{GPS}(x)-\mathbb{E}\left[\frac{1}{h^2}K\left(\frac{x-X^*}{h}\right)\right]\\
		=& \frac{1}{h^2} \int_{y\in\mathbb{R}^2} K\left(\frac{x-y}{h}\right)f_{GPS}(y)dy-f_{GPS}(x)\\
		=&\int_{z\in\mathbb{R}^2} K\left(z\right)f_{GPS}(x-hz)dz-f_{GPS}(x) 
	\end{align*}
	By Assumption \ref{ass:measurement deri bounded}, $f_{GPS}$ is twice differentiable,
	\begin{align*}
		\int_{z\in\mathbb{R}^2} K\left(z\right)f_{GPS}(x-hz)dz-f_{GPS}(x)  =& \int_{y\in\mathbb{R}^2} K\left(z\right)f_{GPS}(x)+hz^T\nabla f_{GPS}(x)\\
		&+h^2z^T\nabla\nabla f_{GPS}(x)zdz-f_{GPS}(x) +o(h^2).
	\end{align*}
	By Assumption \ref{ass:K-basic} and Assumption~\ref{ass:measurement deri bounded}, $\int_{z\in\mathbb{R}^2}K(z)dz= 1$, $\int_{z\in\mathbb{R}^2}zK(z)dz= 0$, $\nabla\nabla f_{GPS}$ is bounded. 
	Hence,
	\begin{align}
		\int_{z\in\mathbb{R}^2} K\left(z\right)f_{GPS}(x-hz)dz-f_{GPS}(x) = O(h^2).\label{eq: thm9: term1}
	\end{align}
	
	{\bf Term (II): $\mathbb{E}\left[\frac{1}{h^2}K\left(\frac{x-X^*}{h}\right)\right] - \mathbb{E}[\hat f_{w}(x)]$.} 
	Recall that $\hat f_{w,i}(x)$ to is the estimated density contributed by the observations in $i-$th day, i.e.
	\begin{align*}
		\hat f_{w,i}(x) = \frac{1}{h^2}\sum_{j=1}^{m_{i}}\frac{(t_{i,j+1}-t_{i,j-1})}{2}K\left(\frac{x-X_{i,j}}{h}\right).
	\end{align*}
	Then, we can decompose the expectation of the density estimator:
	\begin{align*}
		&\mathbb{E}\left[\hat f_{w}(x)\right] \\
		= &\frac{1}{2nh^2}\sum_{i=1}^{n}\mathbb{E}\left[\sum_{j=1}^{m_{i}}{(t_{i,j+1}-t_{i,j-1})}K\left(\frac{x-X_{i,j}}{h}\right) \right]\\
		= &\frac{1}{2nh^2}\sum_{i=1}^{n}\mathbb{E}\left\{\mathbb{E}\left[\sum_{j=1}^{m_{i}}{(t_{i,j+1}-t_{i,j-1})}K\left(\frac{x-X_{i,j}}{h}\right)\mid S_i \right]\right\}\\
		=&\frac{1}{2nh^2}\sum_{i=1}^{n}\sum_{j=1}^{m_{i}}{(t_{i,j+1}-t_{i,j-1})}\int_{s\in\mathcal{S}}\int_{\epsilon_{i,j}}K\left(\frac{x-s(t_{i,j}) +\epsilon_{i,j}}{h}\right) dF_{S}(s)dF_{\epsilon}(\epsilon_{i,j}) \\
		\triangleq &\frac{1}{n}\sum_{i=1}^{n}\int_{s\in\mathcal{S}}\mathbb{E}\left[\hat f_{w,i}(x)\mid S=s\right]dF_{S}(s),
	\end{align*}
	
	%Recall that $f_{GPS} = \sum_{j=1}^{n_b}w_jf_{GPS,j}$

	% {\color{brown}YC: You miss a term here!}
	%{\color{red}
		
		Note that 
		$$
		\mathbb{E}\left[\frac{1}{h^2}K\left(\frac{x-X^*}{h}\right)\right] =\int_{s\in\mathcal{S}}\mathbb{E}\left[\frac{1}{h^2}K\left(\frac{x-X^*}{h}\right)|S=s\right]dF_{S}(s).
		$$
		Hence, term (II) can be written as
		\begin{align*}
			\mathbb{E}[\hat f_{w}(x)]  - \mathbb{E}\left[\frac{1}{h^2}K\left(\frac{x-X^*}{h}\right)\right] = \frac{1}{n}\sum_{i=1}^{n}\int_{s\in\mathcal{S}}\mathbb{E}\left[\hat f_{w,i}(x)\mid S=s\right]-\mathbb{E}\left[\frac{1}{h^2}K\left(\frac{x-X^*}{h}\right)|S=s\right] dF_{S}(s).
		\end{align*}
		Then, by Lemma \ref{lemma:Riemannian approx}, for any $i\in\{1,...,n\}$ and $s\in \mathcal{S}$, there is uniform constant $C_{\bar v}$  such that 
		\begin{align*}
		\mathbb{E}\left[\hat f_{w,i}(x)\mid S=s\right] - \mathbb{E}\left[\frac{1}{h^2}K\left(\frac{x-X^*}{h}\right)\mid S=s\right]  \leq &C_{\bar v}\left(\frac{1}{nh^3}\sum_{i=1}^{n}\sum_{j=0}^{m_i}(t_{i,j+1}-t_{i,j})^2\right)
		\end{align*}
		for every $x\in\mathcal{W}$. Hence,
		\begin{align*}
			\mathbb{E}\left[\hat f_{w}(x)\right] - \mathbb{E}\left[\frac{1}{h^2}K\left(\frac{x-X^*}{h}\right)\right]  = O\left(\frac{1}{nh^3}\sum_{i=1}^{n}\sum_{j=0}^{m}(t_{i,j+1}-t_{i,j})^2\right).
		\end{align*}

		{\bf Term (III).}
		%By our notation, in the $n_d$ many days, the person's pattern sequence are $\{B_1,...,B_{n_d}\}$. Let $N_{j}$ as the number of days with $j-$th pattern. ($\sum_{j=1}^{n_b}N_j=n_d$) Then, for any $x\in \mathcal{W}$, we have %{\color{brown}YC: To be honest, I cannot understand this part. Please polish it. Why and what is $d1((B, Z|B) = s)$? (Revised)}
		Note that the term (III) is the variance of $\hat f_{w}(x)$:
		\begin{align*}
			& \text{Var}[\hat f_{w}(x)] = \text{Var}\left[\frac{1}{nh^2}\sum_{i=1}^{n}\sum_{j=1}^{m_i} W_{i,j}K\left(\frac{x-S_i(t_{i,j})+\epsilon_{i,j}}{h}\right)\right] = \frac{1}{n^2h^4}\sum_{i=1}^{n}\text{Var}\left[\sum_{j=1}^{m_i}W_{i,j} K\left(\frac{x-S_i(t_{i,j})+\epsilon_{i,j}}{h}\right)\right].
		\end{align*}
		by the independence among $\{S_1,...,S_{n}\}$. To simplify, we only need to consider $i=1$.  Then we calculate 
		\begin{equation}
			\begin{aligned}
				&\text{Var}\left[\sum_{j=1}^{m_1} W_{i,j}K\left(\frac{x-S(t_{1,j})+\epsilon_{1,j}}{h}\right)\right] \\
				=& \mathbb{E}\left\{\text{Var}\left[\sum_{j=1}^{m_1} W_{i,j}K\left(\frac{x-S(t_{1,j})
					+\epsilon_{1,j}}{h}\right)|S_1\right]\right\}+ \text{Var}\left\{\mathbb{E}\left[\sum_{j=1}^{m_1} W_{i,j}K\left(\frac{x-S(t_{1,j})+\epsilon_{1,j}}{h}\right)|S_1\right]\right\}.
			\end{aligned}
			\label{eq:E+Var}
		\end{equation}
		For the first term of the RHS of \eqref{eq:E+Var}, note that $\epsilon_{1,1},\ldots, \epsilon_{1,m_1}$ are IID,
		\begin{align*}
			& \mathbb{E}\left\{\text{Var}\left[\sum_{j=1}^{m_1} W_{i,j}K\left(\frac{x-S(t_{1,j}))+\epsilon_{1,j}}{h}\right)|S_1\right]\right\} \\
			=& \mathbb{E}\left\{\sum_{i=1}^{m_1} {W_{i,j}^2}\text{Var}\left[K\left(\frac{x-S(t_{1,j})+\epsilon_{1,j}}{h}\right)|S_1\right]\right\} \\
			\leq & \mathbb{E}\left\{\sum_{j=1}^{m_1} {W_{i,j}^2}\mathbb{E}\left[K\left(\frac{x-S(t_{1,j})+\epsilon_{1,j}}{h}\right)^2|S_1\right]\right\} .
		\end{align*}
		Note that, for any $s\in\mathcal{S}$, $\mathbb{E}\left[K\left(\frac{x-S(t_{1,j})+\epsilon_{1,j}}{h}\right)^2|S=s\right]$ only depends on the measurement error that is independent at each timestamp. Then, 
		\begin{align*}
			&\mathbb{E}\left\{\sum_{j=1}^{m_1} {W_{i,j}^2}\mathbb{E}\left[K\left(\frac{x-S(t_{1,j})+\epsilon_{i,j}}{h}\right)^2|S\right]\right\}\\
			= & \mathbb{E}\left\{\sum_{j=1}^{m_1} {W_{i,j}^2}\int_{\epsilon_{1,j}}K\left(\frac{x-S(t_{1,j})+ \epsilon_{1,j}}{h}\right)^2f_{\epsilon}(\epsilon_{1,j})d\epsilon_{1,j}\right\}\\
			= &\int_{s}\sum_{j=1}^{m_1} {W_{i,j}^2}\int_{\epsilon_{1,j}}K\left(\frac{x-s(t_{1,j})+ \epsilon_{1,j}}{h}\right)^2f_{\epsilon}(\epsilon_{1,j})d\epsilon_{1,j}dF_S(s)\\
			= & h^2\sum_{j=1}^{m_1} {W_{i,j}^2}\int_{s}\int_{u_{1,j}}K\left(u_{1,j}\right)^2f_{\epsilon}(u_{1,j}h-x+s(t_{1,j}))du_{1,j}dF_S(s).
		\end{align*}
		By Assumption~\ref{ass:K-basic} and~\ref{ass:measurement deri bounded}, $f_{\epsilon}$ and $\int_{x\in \mathbb{R}^2}K(x)^2dx$ is bounded, we have
		\begin{align}
			& h^2\sum_{j=1}^{m_1} {W_{i,j}^2}\int_{s}\int_{\epsilon_{1,j}}K\left(u_{1,j}\right)^2f_{\epsilon}(u_{1,j}h-x+s(t_{1,j}))du_{1,j}dF_S(s)=O\left({h^2}\sum_{j=1}^{m_1}W_{i,j}^2\right).\label{eq:1st term in (29)}
		\end{align}
		For the second term in RHS of \eqref{eq:E+Var}, by
		\begin{equation}
			\begin{aligned}
				& \text{Var}\left\{\mathbb{E}\left[\sum_{j=1}^{m_1} W_{i,j}K\left(\frac{x-S(t_{1,j})+\epsilon_{i,j}}{h}\right)|S\right]\right\}\\
				\leq & \mathbb{E}\left\{\left\{\mathbb{E}\left[\sum_{j=1}^{m_1} W_{i,j}K\left(\frac{x-S(t_{1,j})+\epsilon_{i,j}}{h}\right)|S\right]\right\}^2\right\},
			\end{aligned}
			\label{eq:E(E)}
		\end{equation}
		we consider $\mathbb{E}\left[\sum_{j=1}^{m_1} W_{i,j}K\left(\frac{x-S(t_{1,j})+\epsilon_{i,j}}{h}\right)|S\right]$ in \eqref{eq:E(E)} first. For any $s\in \mathcal{S}$:
		\begin{align*}
			&  \mathbb{E}\left[\sum_{j=1}^{m_1} W_{i,j}K\left(\frac{x-S(t_{1,j})+\epsilon_{i,j}}{h}\right)|S=s\right]\\
			= & \sum_{j=1}^{m_1}W_{i,j} \int_{\epsilon_{1,j}}K\left(\frac{x-s(t_{1,j})+\epsilon_{1,j}}{h}\right)f_{\epsilon}(\epsilon_{1,j})d\epsilon_{1,j}\\
			= & h^2\sum_{j=1}^{m_1} W_{i,j} \int_{\epsilon_{1,j}}K\left(u_{1,j}\right)f_{\epsilon}(u_{1,j}h-x+s(t_{1,j}))du_{1,j}.
		\end{align*}
		For any $s\in \mathcal{S}$, $\int_{\epsilon_{1,j}}K\left(u_{1,j}\right)f_{\epsilon}(u_{1,j}h-x+s(t_{1,j}))du_{1,j}$ is upped-bounded by $\max_y f_{\epsilon}(y)$, then the RHS of \eqref{eq:E(E)} can be upper-bounded by
		\begin{align*}
			&\mathbb{E}\left\{\left\{\mathbb{E}\left[\sum_{j=1}^{m_1} W_{i,j}K\left(\frac{x-S(t_{1,j})+\epsilon_{i,j}}{h}\right)|S\right]\right\}^2\right\}\\
			= & \mathbb{E}\left[\left\{h^2\sum_{j=1}^{m_1} W_{i,j} \int_{\epsilon_{1,j}}K\left(u_{1,j}\right)f_{\epsilon}(u_{1,j}h-x+S(t_{1,j}))du_{1,j}\right]^2 \right\}\\
			\leq & h^4\mathbb{E}\left\{\left[\sum_{j=1}^{m_1} W_{i,j} \max_y f_{\epsilon}(y)\right]^2\right\}\leq {h^4}\max_y f_{\epsilon}(y)^2.
		\end{align*}
		By Assumption \ref{ass:measurement deri bounded}, $f_{\epsilon}$ is uniformly bounded by some constant. Hence,
		\begin{align}
			\text{Var}\left\{\mathbb{E}\left[\sum_{j=1}^{m_1} W_{i,j}K\left(\frac{x-S(t_{1,j})+\epsilon_{i,j}}{h}\right)|S\right]\right\} = O(h^4).\label{eq:2nd term in (29)}
		\end{align}
		By \eqref{eq:1st term in (29)} and \eqref{eq:2nd term in (29)}, we have
		\begin{align*}
			\text{Var}[\hat f_{w}(x)]  =& \frac{1}{4n^2h^4}\sum_{i=1}^{n}\text{Var}\left[\sum_{j=1}^{m_i} W_{i,j}K\left(\frac{x-S_i(t_{i,j})+\epsilon_{i,j}}{h}\right)\right]=O\left(\frac{1}{n^2h^2}\sum_{i=1}^{n}\sum_{j=1}^{m_i} W_{i,j}^2\right)+O\left(\frac{1}{n}\right).
		\end{align*}
		Hence, the convergence rate of \textbf{Term (III)} is
		\begin{align*}
			\mathbb{E}[\hat f_{w}(x)]  - \hat f_{w}(x)
			=& O_p\left(\frac{1}{nh}
			\sqrt{\sum_{i=1}^{n}\sum_{j=1}^{m_i} W_{i,j}^2}\right)+O_p\left(n^{-\frac{1}{2}}\right) 
		\end{align*}
		
		%{\color{brown}YC: The above inequality  may not be correct. There will be correlations between close timestamps due to the fact that $\xi_R$ is a curve. (revised) }
		
		Combining the convergence rate of 3 terms derived above, we have
		\begin{align*}
			\hat f_{w}(x)-f_{w}(x) =& O(h^2)+  O\left(\frac{1}{nh^3}\sum_{i=1}^{n}\sum_{j=0}^{m}(t_{i,j+1}-t_{i,j})^2\right)+ O_p\left(\frac{1}{nh}
			\sqrt{\sum_{i=1}^{n}\sum_{j=1}^{m_i} W_{i,j}^2}\right)+O_p\left(n^{-\frac{1}{2}}\right).
		\end{align*}  
	\end{proof}

	\subsection{Proof of Proposition \ref{prop::TW::bw}}
	\begin{proof}
		From equation \eqref{eq::TW::mise},
		we have 
		\begin{align*}
			\frac{(S)}{(B)} = o(1)
			&\Longleftrightarrow \frac{m^2h^6}{nmh^2} = o(1)\\
			&\Longleftrightarrow  \frac{mh^4}{n} = o(1)\\
			&\Longleftrightarrow mh^4=o(n).
		\end{align*}
		
		Since the temporal resolution bias (B) is the dominating term, 
		we should choose $h$ to optimize with respect to this,
		which leads to 
		$h \asymp m^{-1/5}$
		when $mh^4 = o(n)$. 
		This leads to the inequality $m^{1/5}=o(n)$
		when we put $h$ into the inequality $mh^4 = o(n)$.

		On the other hand, (S) dominates (B)
		when $n=o(mh^4)$ and in this case, the optimal smoothing bandwidth is 
		$h\asymp (nm)^{-1/3}$.
		By putting this into $n=o(mh^4)$, we obtain the inequality $n=o(m^{1/5})$. 
		
		The convergence rates are obtained by simply plug-in these optimal smoothing bandwidth. 
		
		Note that at the critical point $n \asymp m^{1/5}$,
		we have
		$m^{1/5} \asymp (nm)^{1/3}$,
		so the two choices coincide. 
		
	\end{proof}
	
	\subsection{Proof of Theorem \ref{thm::ckde}}
	We use $\hat f_{i}(x|t)$ to denote the $i-$th day's contribution to $ \hat f(x|t)$, i.e.  
	\begin{align*}
		\hat f_{i}(x|t) = \frac{1}{h_X^2}\frac{\sum_{j=1}^{m_i} K\left(\frac{d_T(t_{i,j},t)}{h_T}\right)K\left(\frac{X_{i,j}-x}{h_X}\right)}{\sum_{j'=1}^{m_i} K\left(\frac{d_T(t_{i,j'},t)}{h_T}\right)}.
	\end{align*}
	Recall that $X_{i,j}=S_i(t_{i,j})+\epsilon_{i,j}$, then, for any $s\in\mathcal{S}$, we define $\hat f_{i}(x|t;s)$ in the following way:  
	\begin{align*}
		\hat f_{i}(x|t;s) = \frac{1}{h_X^2}\frac{\sum_{j=1}^{m_i} K\left(\frac{d_T(t_{i,j},t)}{h_T}\right)K\left(\frac{s(t_{i,j})+\epsilon_{i,j}-x}{h_X}\right)}{\sum_{j'=1}^{m_i} K\left(\frac{d_T(t_{i,j'},t)}{h_T}\right)}.
	\end{align*}
	For any $s\in\mathcal{S}$, we define $f_{GPS}(x|t; s)$ as
	\begin{align}
		f_{GPS}(x|t;s) &= \lim_{r\rightarrow 0} \frac{P(s(t)+\epsilon\in B(x,r))}{\pi r^2} = f_{\epsilon}(x-s(t))\label{def:con-den-GPS t,s}
	\end{align}
	
	With the above definitions,  $\hat f_{i}(x|t)$ and $f_{G P S}(x|t)$ can be written as
	\begin{align}
		\hat f_{i}(x|t)=\int_{s\in\mathcal{S}}\hat f_{i}(x|t; s)dF_{S}(s), \quad f_{G P S}(x|t) = \int_{s\in\mathcal{S}}f_{GPS}(x|t; s)dF_{S}(s).\label{def: cond-den single-day}
	\end{align}
	
	Our analysis begins with a lemma controlling the error
	for a given trajectory $s$.

	\begin{lemma}\label{thm:single day, conditional, deter}
		Under Assumption \ref{ass:XR-2-derivative} to Assumption \ref{ass:K1,K2}, for any $i\in \{1,...,n\}$ and $s\in\mathcal{S}$, we have
		\begin{align}
			\mathbb{E}\left( \hat f_{i}(x|t;s)\right) - f_{G P S}(x|t; s) = O(h_X^2)+O\left(\sum_{j=1}^{m_i} \frac{K\left(d_T(\frac{t_{i,j},t)}{h_T}\right)}{\sum_{j'=1}^{m_i} K\left(\frac{d_T(t_{i,j'},t)}{h_T}\right)} f_{\epsilon}(x-s(t_{i,j}))-f_{\epsilon}(x-s(t))\right),\label{eq:lemma-ckd:bias}
		\end{align}
		and
		\begin{align}
			\text{Var}\left(\hat f_{i}(x|t;s)\right) = \frac{M_{\epsilon}}{h_X^2}\sum_{j=1}^{m_i} \left[\frac{K\left(\frac{d_T(t_{i,j},t)}{h_T}\right)}{\sum_{j'=1}^{m_i} K\left(\frac{d_T(t_{i,j'},t)}{h_T}\right)}\right]^2 \int_{x\in \mathbb{R}^2}K_1^2(x)dx,\label{eq:lemma-ckd:var}
		\end{align}
		for any $t\in [0,1]$, and $x\in \mathcal{W}$ as $h_X\rightarrow 0, m_i\rightarrow \infty$ for any $i\in\{1,...,n\}$.
	\end{lemma}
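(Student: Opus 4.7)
The plan is to view $\hat f_i(x\mid t;s)$ as a convex combination of single-observation kernel density terms and reduce both the bias and variance computations to standard one-point KDE calculations, with an additional term accounting for the trajectory evaluation at $t_{i,j}$ versus $t$. Since $s$ is conditioned on (non-random), the only randomness in $\hat f_i(x\mid t;s)$ comes from the independent noises $\epsilon_{i,1},\ldots,\epsilon_{i,m_i}$, which simplifies matters considerably. Let
$$
w_{i,j}(t) \;=\; \frac{K\!\left(\frac{d_T(t_{i,j},t)}{h_T}\right)}{\sum_{j'=1}^{m_i} K\!\left(\frac{d_T(t_{i,j'},t)}{h_T}\right)},
$$
so $\sum_j w_{i,j}(t)=1$ and $\hat f_i(x\mid t;s)=\sum_{j=1}^{m_i} w_{i,j}(t)\,\tfrac{1}{h_X^2}K\!\left(\frac{s(t_{i,j})+\epsilon_{i,j}-x}{h_X}\right)$.

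For the bias, I would compute the expectation term-by-term. For a single $j$, a change of variables $u=(s(t_{i,j})+\epsilon-x)/h_X$ (with Jacobian $h_X^2$ in $\mathbb{R}^2$) yields
$$
\mathbb{E}\!\left[\frac{1}{h_X^2}K\!\left(\frac{s(t_{i,j})+\epsilon_{i,j}-x}{h_X}\right)\right]
\;=\;\int K(u)\,f_\epsilon\!\bigl(x-s(t_{i,j})+h_X u\bigr)\,du.
$$
Taylor-expanding $f_\epsilon$ to second order around $x-s(t_{i,j})$ and invoking Assumption~\ref{ass:K-basic}(b)--(d) (symmetry, unit mass, finite second moment of $K$) together with Assumption~\ref{ass:measurement deri bounded} (bounded Hessian of $f_\epsilon$), the first-order term vanishes and the remainder is uniformly $O(h_X^2)$. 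Summing against $w_{i,j}(t)$ and subtracting $f_{GPS}(x\mid t;s)=f_\epsilon(x-s(t))$ (using $\sum_j w_{i,j}(t)=1$) gives
$$
\mathbb{E}\hat f_i(x\mid t;s)-f_{GPS}(x\mid t;s)
\;=\;\sum_{j=1}^{m_i} w_{i,j}(t)\bigl[f_\epsilon(x-s(t_{i,j}))-f_\epsilon(x-s(t))\bigr]+O(h_X^2),
$$
which is exactly \eqref{eq:lemma-ckd:bias}.

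For the variance, since $\epsilon_{i,1},\ldots,\epsilon_{i,m_i}$ are independent and $s,t$ are fixed, the weights $w_{i,j}(t)$ are deterministic, so
$$
\mathrm{Var}\bigl(\hat f_i(x\mid t;s)\bigr)
\;=\;\sum_{j=1}^{m_i} w_{i,j}(t)^2\,\mathrm{Var}\!\left[\frac{1}{h_X^2}K\!\left(\frac{s(t_{i,j})+\epsilon_{i,j}-x}{h_X}\right)\right].
$$
Bounding each variance by the corresponding second moment and applying the same change of variables produces $\frac{1}{h_X^2}\int K^2(u)\,f_\epsilon(x-s(t_{i,j})+h_X u)\,du$, which is at most $\frac{M_\epsilon}{h_X^2}\int K^2(u)\,du$ by Assumption~\ref{ass:measurement deri bounded}. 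Collecting the factors $w_{i,j}(t)^2$ yields \eqref{eq:lemma-ckd:var}.

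No step here is genuinely hard: the calculations are textbook one-dimensional KDE bias/variance analyses applied pointwise in $j$, and the convex-combination structure takes care of the rest. The only mild subtlety is making sure the Taylor remainder is controlled uniformly in $(x,t_{i,j})$, which follows from the uniform bound on $\|\nabla\nabla f_\epsilon\|$ in Assumption~\ref{ass:measurement deri bounded} together with Assumption~\ref{ass:K-basic}(d); keeping the resolution-type term $f_\epsilon(x-s(t_{i,j}))-f_\epsilon(x-s(t))$ unsimplified (rather than bounding it crudely by the velocity assumption) is important, since in the proof of Theorem~\ref{thm::ckde} this term will be linearized using Assumption~\ref{ass:XR-2-derivative} to produce the $d_T(t_{i,j},t)$ factor that appears in the final statement.
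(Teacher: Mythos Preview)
Your proposal is correct and follows essentially the same approach as the paper: write $\hat f_i(x\mid t;s)$ as a weighted sum in $j$, compute the per-$j$ expectation via the change of variables $u=(s(t_{i,j})+\epsilon-x)/h_X$ and a second-order Taylor expansion of $f_\epsilon$ (the first-order term vanishing by kernel symmetry), and for the variance use independence of the $\epsilon_{i,j}$, bound by the second moment, and apply the same substitution to extract the factor $h_X^{-2}\int K^2$. Your explicit introduction of the convex weights $w_{i,j}(t)$ and your remark that the $f_\epsilon(x-s(t_{i,j}))-f_\epsilon(x-s(t))$ term is deliberately left unsimplified for later use in Theorem~\ref{thm::ckde} match the paper's structure exactly.
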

	\begin{proof}
		By the independence among $S_1,...,S_{n}$, WLOG, we can just let $i=1$.
		
		\textbf{Bias:} We first derive the bias of $\hat f_{G P S,1}(x|t)$, for any $s\in \mathcal{S}$, by variable changing $y_{1,j}=\frac{s(t_{1,j})+\epsilon_{1,j}-x}{h_X}, j=1,...,m_1$, and \eqref{def:con-den-GPS t,s}, we have
		\begin{align*}
			&\mathbb{E}\left( \hat f_{1}(x|t;s)\right) - f_{G P S}(x|t; s) \\
			=& \mathbb{E}\left[\frac{1}{h_X^2}\frac{\sum_{j=1}^{m_1} K\left(\frac{d_T(t_{1,j},t)}{h_T}\right)K\left(\frac{s(t_{1,j})+\epsilon_{1,j}-x}{h_X}\right)}{\sum_{j'=1}^{m_1} K\left(\frac{d_T(t_{1,j'},t)}{h_T}\right)} \right]- f_{\epsilon}(x-s(t))  \\
			=& \frac{1}{h_X^2}\sum_{j=1}^{m_1} \frac{K\left(\frac{d_T(t_{1,j},t)}{h_T}\right)}{\sum_{j'=1}^{m_1} K\left(\frac{d_T(t_{1,j},t)}{h_T}\right)} \int_{\epsilon_{1,j}\in\mathbb{R}^2} K\left(\frac{s(t_{1,j})+\epsilon_{1,j}-x}{h_X}\right)f_{\epsilon}(\epsilon_{1,j})d\epsilon_{1,j} -f_{\epsilon}(x-s(t))\\
			=&\sum_{j=1}^{m_1} \frac{K\left(\frac{d_T(t_{1,j},t)}{h_T}\right)}{\sum_{j=1}^{m_1} K\left(\frac{d_T(t_{1,j},t)}{h_T}\right)} \int_{y_{1,j}\in\mathbb{R}^2} K_1\left(y_{1,j}\right)f_{\epsilon}(h_Xy_{1,j}+x-s(t_{1,j}))dy_{1,j} - f_{\epsilon}(x-s(t)).
		\end{align*}
		Then, by Assumption \ref{ass:measurement deri bounded}, we apply Taylor expansion on $f_{\epsilon}(h_Xy_{1,j}+x-s(t_{1,j}))$:
		\begin{align*}
			&\sum_{j=1}^{m_1} \frac{K_T\left(\frac{d_T(t_{1,j},t)}{h_T}\right)}{\sum_{j=1}^{m_1} K_T\left(\frac{d_T(t_{1,j},t)}{h_T}\right)} \int_{y_{1,j}\in\mathbb{R}^2} K_1\left(y_{1,j}\right)f_{\epsilon}(h_Xy_{1,j}+x-s(t_{1,j}))dy_{1,j} -f_{\epsilon}(x-s(t))\\
			=&\sum_{j=1}^{m_1} \frac{K\left(\frac{d_T(t_{1,j},t)}{h_T}\right)}{\sum_{j'=1}^{m_1} K\left(\frac{d_T(t_{1,j'},t)}{h_T}\right)} \int_{y_{1,j}\in\mathbb{R}^2} K_1\left(y_{1,j}\right) [f_{\epsilon}(x-s(t_{1,j}))+h_X\nabla f_{\epsilon}(x-s(t_{1,j}))^Ty_{1,j}+\\
			&h_X^2y_{1,j}^T\nabla\nabla f_{\epsilon}(x-s(t_{1,j}))y_{1,j}-f_{\epsilon}(x-s(t))]dy_{1,j}+o(h_X^2) .
		\end{align*}
		By Assumption \ref{ass:K1,K2}, $\int_{y\in\mathbb{R}^2} K(y)ydy=0$, then
		\begin{align*}
			&\sum_{j=1}^{m_1} \frac{K\left(\frac{d_T(t_{1,j},t)}{h_T}\right)}{\sum_{j'=1}^{m_1} K\left(\frac{d_T((t_{1,j'},t)}{h_T}\right)} \int_{y_{1,j}\in\mathbb{R}^2} K\left(y_{1,j}\right) [f_{\epsilon}(x-s(t_{1,j}))+h_X\nabla f_{\epsilon}(x-s(t_{1,j}))^Ty_{1,j}\\
			&+h_X^2y_{1,j}^T\nabla\nabla  f_{\epsilon}(x-s(t_{1,j}))y_{1,j}-f_{\epsilon}(x-s(t))]dy_{1,j}+o(h_X^2)\\
			=&\sum_{j=1}^{m_1} \frac{K\left(\frac{d_T(t_{1,j},t)}{h_T}\right)}{\sum_{j'=1}^{m_1} K\left(\frac{d_T(t_{1,j},t)}{h_T}\right)} \int_{y_{1,j}\in\mathbb{R}^2} K\left(y_{1,j}\right)[f_{\epsilon}(x-s(t_{1,j}))-f_{\epsilon}(x-s(t))+\\
			&h_X^2y_{1,j}^T\nabla \nabla f_{\epsilon}(x-s(t_{1,j}))y_{1,j}] dy_{1,j}+o(h_X^2) \\
			\leq & M_{\epsilon}h_X^2+\sum_{j=1}^{m_1} \frac{K\left(\frac{d_T(t_{1,j},t)}{h_T}\right)}{\sum_{j'=1}^{m_1} K\left(\frac{d_T(t_{1,j'},t)}{h_T}\right)} f_{\epsilon}(x-s(t_{1,j}))-f_{\epsilon}(x-s(t)).
		\end{align*}
		Hence,
		\begin{align*}
			\mathbb{E} \hat f_{i}(x|t;s) - f_{G P S}(x|t; s) = O(h_X^2)+O\left(\sum_{j=1}^{m_i} \frac{K\left(\frac{d_T(t_{i,j},t)}{h_T}\right)}{\sum_{j'=1}^{m_i} K\left(\frac{d_T(t_{i,j'},t)}{h_T}\right)} f_{\epsilon}(x-s(t_{i,j}))-f_{\epsilon}(x-s(t))\right).
		\end{align*}
		for any $i\in\{1,...,n\}$
		
		\textbf{Step 2 (Variance):} 
		As for the variance, for any $s\in\mathcal{S}$,
		\begin{align}
			\text{Var}\left(\hat f_{1}(x|t;s)\right) =& \text{Var}\left(\frac{1}{h_X^2}\frac{\sum_{j=1}^{m_1} K\left(\frac{d_T(t_{1,j},t)}{h_T}\right)K\left(\frac{s(t_{1,j})+\epsilon_{1,j}-x}{h_X}\right)}{\sum_{j'=1}^{m_1} K\left(\frac{d_T(t_{1,j},t)}{h_T}\right)} \right)\nonumber\\
			=&\text{Var}\left(\frac{1}{h_X^2}\frac{\sum_{j=1}^{m_1} K\left(\frac{d_T(t_{1,j},t)}{h_T}\right)K\left(\frac{s(t_{1,j})+\epsilon_{1,j}-x}{h_X}\right)}{\sum_{j'=1}^{m_1} K_T\left(\frac{d_T(t_{1,j'},t)}{h_T}\right)} \right)\nonumber\\
			=& \frac{1}{h_X^4}\sum_{j=1}^{m_1} \left[\frac{K\left(\frac{d_T(
					,t)}{h_T}\right)}{\sum_{j=1}^{m_1} K\left(\frac{d_T(t_{1,j},t)}{h_T}\right)}\right]^2 \text{Var}\left(K\left(\frac{s(t_{1,j})+\epsilon_{1,j}-x}{h_X}\right) \right),\label{eq:variance-condition-1}
		\end{align}
		In \eqref{eq:variance-condition-1}, by Assumption \ref{ass:measurement deri bounded}, $f_{\epsilon}$ is upper-bounded by some constant $M_{\epsilon}$, then for every $j\in\{1,...,m_1\}$
		\begin{align}
			\text{Var}\left(K\left(\frac{s(t_{1,j})+\epsilon_{1,j}-x}{h_X}\right)\right)\leq &\mathbb{E}_{\epsilon_{1,j}}\left[K\left(\frac{s(t_{1,j})+\epsilon_{1,j}-x}{h_X}\right)^2\right]\nonumber\\
			=&\int_{\epsilon_{1,j}\in \mathbb{R}^2}K\left(\frac{s(t_{1,j})+\epsilon_{1,j}-x}{h_X}\right)^2 f_{\epsilon}(\epsilon_{1,j}) d\epsilon_{1,j}\nonumber \\
			=& h_X^2 \int_{y_{1,j}\in \mathbb{R}^2}K\left(y_{1,j}\right)^2 f_{\epsilon}(h_Xy_{1,j}+x-s(t_{1,j})) dy_{1,j} \nonumber\\
			\leq & M_{\epsilon}h_X^2\int_{x\in \mathbb{R}^2}K^2(x)dx,\label{eq:lemma-ckd:var-one obs}
		\end{align}
		for any $s\in\mathcal{S}$, where the last inequality is derived by Assumption \ref{ass:measurement deri bounded} that $f_{\epsilon}$ is bounded. Plugging \eqref{eq:lemma-ckd:var-one obs} to \eqref{eq:variance-condition-1}, we obtain
		\begin{align*}
			\text{Var}\left(\hat f_{i}(x|t;s)\right) = \frac{M_{\epsilon}}{h_X^2}\sum_{j=1}^{m_i} \left[\frac{K_T\left(\frac{d_T(t_{i,j},t)}{h_T}\right)}{\sum_{j'=1}^{m_i} K_T\left(\frac{d_T(t_{i,j'},t)}{h_T}\right)}\right]^2 \int_{x\in \mathbb{R}^2}K_1^2(x)dx
		\end{align*}
		for any $s\in\mathcal{S}$.  
	\end{proof}

	Now we formally state our proof of Theorem \ref{thm::ckde}.
	
	%\thmCkde*
	\begin{proof}
		Recall that we use $\hat f_{i}(x|t)$ to denote the $i-$th day's contribution to $ \hat f(x|t)$, i.e.  
		\begin{align*}
			\hat f_{k}(x|t) = \frac{1}{h_X^2}\frac{\sum_{j=1}^{m_i} K\left(\frac{d_T(t_{i,j},t)}{h_T}\right)K\left(\frac{X_{i,j}-x}{h_X}\right)}{\sum_{j'=1}^{m_i} K\left(\frac{d_T(t_{i,j'},t)}{h_T}\right)}.
		\end{align*}
		
		Then, $ \hat f(x|t)$ can be written as 
		\begin{align*}
			\hat f(x|t)=&\frac{1}{h_X^2}\frac{\sum_{i=1}^{n} \frac{1}{m_i}\sum_{j=1}^{m_i} K\left(\frac{d_T(t_{i,j},t)}{h_T}\right)K\left(\frac{X_{i,j}-x}{h_X}\right)}{\sum_{i=1}^{n} \frac{1}{m_i}\sum_{j'=1}^{m_i} K\left(\frac{d_T(t_{i,j'},t)}{h_T}\right)}     = \sum_{i=1}^{n} \alpha_i(t) \hat f_{i}(x|t).
		\end{align*}
		where $\alpha_i(t)=\frac{\frac{1}{m_i}\sum_{j=1}^{m_i} K\left(\frac{d_T(t_{i,j},t)}{h_T}\right)}{\sum_{i=1}^{n} \frac{1}{m_i}\sum_{j'=1}^{m_i} K\left(\frac{d_T(t_{i,j'},t)}{h_T}\right)}$ represents the kernel-sum weight for each day. Note that $\sum_{i=1}^{n}\alpha_i(t)=1$ for every $t\in[0,1]$.
		
		\textbf{Bias:} The expectation of $\hat f(x|t)$ is
		\begin{align*}
			\mathbb{E}\left(\hat f(x|t)\right)= \sum_{i=1}^{n}\alpha_i(t)\mathbb{E}\left[\hat f_{i}(x|t)\right],
		\end{align*}
		where, for any $i\in\{1,...,n\}$:
		\begin{align*}
			&\mathbb{E}\left[\hat f_{i}(x|t)\right] = \int_{s\in\mathcal{S}} \mathbb{E}\left[\hat f_{i}(x|t)|S_i=s\right] dF_{S}(s).
		\end{align*}
		And $f_{G P S}(x|t) $ can be written as
		\begin{align*}
			f_{G P S}(x|t) = \int_{s\in\mathcal{S}}f_{GPS}(x|t; s)dF_{S}(s).
		\end{align*}
		
		By the \eqref{eq:lemma-ckd:bias} in Lemma \ref{thm:single day, conditional, deter}, for any $s\in\mathcal{S}$,
		\begin{align*}
			& \mathbb{E}\left[\hat f_{i}(x|t;s)\right]-f_{GPS}(x|t; s)= O(h_X^2)+O\left(\sum_{j=1}^{m_i} \frac{K\left(\frac{d_T(t_{i,j},t)}{h_T}\right)}{\sum_{j'=1}^{m_i} K\left(\frac{d_T(t_{i,j'},t)}{h_T}\right)} f_{\epsilon}(x-s((t_{i,j}))-f_{\epsilon}(x-s(t))\right).
		\end{align*}
		So,
		\begin{align*}
			&\mathbb{E}\left[\hat f(x|t)\right] - f_{GPS}(x|t)\\
			=&\sum_{i=1}^{n}\alpha_i(t)\int_{S\in\mathcal{s}} \mathbb{E}\left[\hat f_{i}(x|t)|S_i=s\right] -f_{GPS}(x|t; s)dF_{S}(s)\\
			=& O(h_X^2)+O\left(\sum_{i=1}^{n}\alpha_i(t)\int_{s\in\mathcal{S}}\left[\sum_{j=1}^{m_i} \frac{K\left(\frac{d_T(t_{i,j},t)}{h_T}\right)}{\sum_{j'=1}^{m_i} K\left(\frac{d_T(t_{i,j'},t)}{h_T}\right)} f_{\epsilon}(x-s(t_{i,j}))-f_{\epsilon}(x-s(t))\right]dF_{S}(s)\right).
		\end{align*}
		Furthermore, by Assumption \ref{ass:XR-2-derivative} and Assumption \ref{ass:measurement deri bounded}, note that there exists constant $L_{\epsilon}$ and $\bar v$ such that
		\begin{align*}
			f_{\epsilon}(x-s(t_{i,j}))-f_{\epsilon}(x-s(t)) \leq L_{\epsilon}|s(t_{i,j})-s(t)|\leq L_{\epsilon}\bar v d_T(t_{i,j},t),
		\end{align*}
		hold for any $t_{i,j}\in [0,1]$, $s\in\mathcal{S}$ and $x\in\mathcal{W}$. Then,
		% \begin{align*}
			%     \mathbb{E}\left[\frac{1}{h_S^2}\frac{\sum_{i=1}^{m_k} K_T\left(\frac{t_{k,i}-t}{h_T}\right)K_1\left(\frac{X_{k,i}-x}{h_S}\right)}{\sum_{i=1}^{m_k} K_T\left(\frac{t_{k,i}-t}{h_T}\right)}|Z_k=z\right]-f_{GPS}(x|t; z) = O(h_S^2)+O\left(\sum_{i=1}^{m_k} \frac{K_T\left(\frac{t_{k,i}-t}{h_T}\right)}{\sum_{j=1}^{m_k} K_T\left(\frac{t_{k,j}-t}{h_T}\right)}|t_{k,i}-t|\right).
			% \end{align*}
		% as $h_S\rightarrow 0$.
		% Thus,
		\begin{align}
			\mathbb{E}\hat f(x|t)- f_{G P S}(x|t) = O(h_X^2)+O\left(\sum_{i=1}^{n}\alpha_i(t)\sum_{j=1}^{m_i} \frac{K_T\left(\frac{d_T(t_{i,j},t)}{h_T}\right)}{\sum_{j'=1}^{m_i} K_T\left(\frac{d_T(t_{i,j'},t)}{h_T}\right)}d_T(t_{i,j},t)\right) .\label{eq:bias-mul-con-deter1}
		\end{align}
		
		\textbf{Variance:} By the independence of mobility among each day, 
		\begin{align*}
			\text{Var}\left[\hat f(x|t)\right] =& \text{Var}\left[\sum_{i=1}^{n} \alpha_i(t) \hat f_{i}(x|t)\right]=\sum_{i=1}^{n} \alpha_i(t)^2 \text{Var}\left[\hat f_{i}(x|t)\right].
		\end{align*}
		By $\operatorname{Var}(Y)=\mathbb{E}(\operatorname{Var}(Y \mid X))+\operatorname{Var}(\mathbb{E}(Y \mid X))$, for every $i\in \{1,...,n\}$, we have
		\begin{align}
			&\text{Var}\left[\hat f_{i}(x|t)\right] =\mathbb{E}\left\{\text{Var}\left[\hat f_{i}(x|t)\mid S_i\right] \right\}+\text{Var}\left\{\mathbb{E}\left[\hat f_{i}(x|t)\mid S_i\right] \right\}.\label{eq:mul-cond-1}
		\end{align}
		WLOG, we can simply let $i=1$. 
		
		For the term $\mathbb{E}\left\{\text{Var}\left[\hat f_{1}(x|t)\mid S_1\right] \right\}$ in RHS of \eqref{eq:mul-cond-1}, for any $s\in \mathcal{S}$, by \eqref{eq:lemma-ckd:var} in Lemma \ref{thm:single day, conditional, deter}, we have
		\begin{align*}
			&\text{Var}\left[\hat f_{1}(x|t)\mid S_1=s\right] =\text{Var}\left[\hat f_{1}(x|t;s)\right]    \leq \frac{M_{\epsilon}}{h_X^2}\sum_{j=1}^{m_1} \left[\frac{K\left(\frac{d_T(t_{1,j},t)}{h_T}\right)}{\sum_{j'=1}^{m_1} K\left(\frac{d_T(t_{1,j'},t)}{h_T}\right)}\right]^2\int_{x\in \mathbb{R}^2}K^2(x)dx.%\\
			% =&O\left(\frac{1}{h_S^2}\sum_{i=1}^{m_k} \left[\frac{K_T\left(\frac{t_{k,i}-t}{h_T}\right)}{\sum_{j=1}^{m_k} K_T\left(\frac{t_{k,j}-t}{h_T}\right)}\right]^2\right)
		\end{align*}
		So, by Assumption \ref{ass:K1,K2} (boundness of $\int_{x\in \mathbb{R}^2}K_1^2(x)dx$), we have
		\begin{equation}
			\begin{aligned}
				\sum_{i=1}^{n} \alpha_i(t)^2\mathbb{E}\left\{\text{Var}\left[\hat f_{i}(x|t)\mid S_k\right] \right\}&=\sum_{i=1}^{n} \alpha_i(t)^2\int_{z\in\mathcal{Z}} \text{Var}\left[\hat f_{i}(x|t;s)\right] dF_{S}(s)\\
				&=O\left(\frac{1}{h_X^2}\sum_{i=1}^{n} \alpha_i(t)^2\sum_{j=1}^{m_i} \left[\frac{K\left(\frac{d_T(t_{i,j},t)}{h_T}\right)}{\sum_{j'=1}^{m_i} K\left(\frac{d_T(t_{i,j'},t)}{h_T}\right)}\right]^2\right).
			\end{aligned}
			\label{eq:mul-Evar1}
		\end{equation}
		
		For the term $\text{Var}\left\{\mathbb{E}\left[\hat f_{1}(x|t)\mid S_1\right] \right\}$ in the RHS of \eqref{eq:mul-cond-1}, for any $s\in \mathcal{S}$, by Assumption \ref{ass:measurement deri bounded}, $f_{\epsilon}$ is bounded by some constant $M_{\epsilon}$, we have
		\begin{align}
			&\mathbb{E}\left[\hat f_{1}(x|t)\mid S_1=s\right] \nonumber\\
			=& \frac{1}{h_X^2}\sum_{j=1}^{m_1} \frac{K\left(\frac{d_T(t_{1,j},t)}{h_T}\right)}{\sum_{j'=1}^{m_1} K\left(\frac{d_T(t_{1,j'},t)}{h_T}\right)} \int_{\epsilon_{1,j}\in\mathbb{R}^2} K\left(\frac{s(t_{1,j})+\epsilon_{1,j}-x}{h_X}\right)f_{\epsilon}(\epsilon_{1,j})d\epsilon_{1,j} \nonumber \\
			=&\sum_{j=1}^{m_1} \frac{K\left(\frac{d_T(t_{1,j},t)}{h_T}\right)}{\sum_{j'=1}^{m_1} K\left(\frac{d_T(t_{1,j'},t)}{h_T}\right)} \int_{y_{1,j}\in\mathbb{R}^2} K\left(y_{1,j}\right)f_{\epsilon}(h_Xy_{1,j}+x-s(t_{1,j}))dy_{1,j} \nonumber \\
			\leq & M_{\epsilon}.\nonumber
		\end{align}
		Hence,
		\begin{align}
			\sum_{i=1}^{n} \alpha_i(t)^2\text{Var}\left\{\mathbb{E}\left[\hat f_{i}(x|t)\mid Z_i\right] \right\} = O\left({\sum_{i=1}^{n} \alpha_i(t)^2}\right).\label{eq:mul-varE1}
		\end{align}
		Combining \eqref{eq:mul-Evar1} and \eqref{eq:mul-varE1}, we have
		\begin{align}
			\hat f(x|t)-\mathbb{E}\left(\hat f(x|t)\right) =&O_p\left(\sqrt{\sum_{i=1}^{n} \alpha_i(t)^2}\right)+O_p\left( \frac{1}{h_X}\sqrt{\sum_{i=1}^{n} \alpha_i(t)^2\sum_{j=1}^{m_i} \left[\frac{K\left(\frac{d_T(t_{i,j},t)}{h_T}\right)}{\sum_{j'=1}^{m_k} K\left(\frac{d_T(t_{k,j'},t)}{h_T}\right)}\right]^2}\right).\label{eq:var-mul-con-deter1}
		\end{align}
		Combining \eqref{eq:bias-mul-con-deter1} and \eqref{eq:var-mul-con-deter1}, we get
		\begin{align*}
			\hat f(x|t)- f_{G P S}(x|t) = &O(h_X^2)+O\left(\sum_{i=1}^{n}\alpha_i(t)\sum_{j=1}^{m_i} \frac{K\left(\frac{d_T(t_{i,j},t)}{h_T}\right)}{\sum_{j'=1}^{m_i} K\left(\frac{d_T(t_{i,j'},t)}{h_T}\right)}d_T(t_{i,j},t)\right)+O_p\left(\sqrt{\sum_{i=1}^{n} \alpha_i(t)^2}\right)+\\
			&O_p\left( \frac{1}{h_X}\sqrt{\sum_{i=1}^{n} \alpha_i(t)^2\sum_{j=1}^{m_i} \left[\frac{K\left(\frac{d_T(t_{i,j},t)}{h_T}\right)}{\sum_{j'=1}^{m_i} K\left(\frac{d_T(t_{i,j'},t)}{h_T}\right)}\right]^2}\right).
		\end{align*}
		
	\end{proof}

\end{document}